\definecolor{edge-color}{rgb}{0.85,0.85,0.85}
\definecolor{edge-on-color}{rgb}{0.8,0.2,0.2}
\definecolor{edge-bnd-color}{rgb}{0.4,0.4,0.9}
\definecolor{domain-path-color}{rgb}{0.4,0.4,0.9}
\tikzstyle edge=[color=edge-color]
\tikzstyle edge-on=[color=edge-on-color, thick]
\tikzstyle edge-off=[color=white,thick,dotted]
\tikzstyle edge-bnd=[color=edge-bnd-color, ultra thick]
\tikzstyle domain-path=[color=domain-path-color, thick, dotted]
\tikzstyle node0=[circle, fill=gray!50]
\tikzstyle node1=[circle, fill]
\tikzstyle node2=[circle, draw]
\def\hexagonEdge[#1][#2][#3][#4]{ % #3 = -1 for left edge, 0 for top edge, 1 for right edge
 \def \h {0.866cm}
 \begin{scope}[xshift=(#1) * 1.5cm, yshift=(#1+2*(#2)) *\h]
	\def \dd { \intcalcMod{\intcalcAdd{#3}{6}}{6} }
	\ifnumequal{\dd}{1}{ \draw [#4] {(0:1) -- (60:1)}; } {}
	\ifnumequal{\dd}{0}{ \draw [#4] {(60:1) -- (120:1)}; } {}
	\ifnumequal{\dd}{5}{ \draw [#4] {(120:1) -- (180:1)}; } {}
	\ifnumequal{\dd}{4}{ \draw [#4] {(180:1) -- (240:1)}; } {}
	\ifnumequal{\dd}{3}{ \draw [#4] {(240:1) -- (300:1)}; } {}
	\ifnumequal{\dd}{2}{ \draw [#4] {(300:1) -- (0:1)}; } {}
 \end{scope}
}
\def\hexagonEdges[#1][#2][#3][#4]{
 	\foreach \x/\y/\d in {#4} {
 		\hexagonEdge[(#2)+\x][(#3)+\y][\d][#1];
}}
\def\hexagonEdgesFullRow[#1][#2][#3][#4]{
 	\foreach \x/\d in {#4} {
 		\hexagonEdge[(#2)+\x][(#3)-\intcalcDiv{\x}{2}][\d][#1];
}}
\def\hexagonFullRow[#1][#2]{
	\foreach \x in {#2} {
 		\hexagon[\x][(#1)-\intcalcDiv{\x}{2}][1];
}}
\def\hexagonFullColumn[#1][#2]{
	\foreach \y in {#2} {
 		\hexagon[#1][\y-\intcalcDiv{#1}{2}][1];
}}
\def\hexagon[#1][#2]{
 \def \h {0.866cm}
 \begin{scope}[xshift=(#1) * 1.5cm, yshift=(#1+2*(#2)) *\h]
    \draw [edge] {(0:1) -- (60:1) -- (120:1) -- (180:1) -- (240:1) -- (300:1) -- cycle};
 \end{scope}
}
\def\hexagonNode[#1][#2][#3][#4]{
 \def \h {0.866cm}
 \begin{scope}[xshift=(#1) * 1.5cm, yshift=(#1+2*(#2)) *\h]
    \def\c{\intcalcMod{(#1)-(#2)}{3}}
	\def\d{#3}
	\ifnumequal{\d}{-1}{}{\ifnumequal{\d}{\c}{}{\def\c{9}}}
    \ifnumequal{\c}{0}{ \node [node0] at (0,0) { }; } {}
    \ifnumequal{\c}{1}{ \node [node1] at (0,0) { }; } {}
    \ifnumequal{\c}{2}{ \node [node2] at (0,0) { };} {}
 \end{scope}
}
\def\hexagonGrid[#1][#2][#3]{
  \foreach \i in {0,...,#1} {
   \foreach \j in {0,...,#2} {
 	\def\y{\j - \intcalcDiv{\i}{2}}
  	\hexagon[\i][\y];
	\ifnumequal{#3}{1}{\hexagonNode[\i][\y][-1][0]}{}
}}}
\def\hexagonGridOneClass[#1][#2][#3][#4]{
\begin{scope}
	\tikzstyle node0=[circle, fill=gray!50, minimum size=5pt,inner sep=0]
	\tikzstyle node1=[circle, fill=gray!50, minimum size=5pt,inner sep=0]
	\tikzstyle node2=[circle, fill=gray!50, minimum size=5pt,inner sep=0]

  \foreach \i in {0,...,#1} {
   \foreach \j in {0,...,#2} {
 	\def\y{\j - \intcalcDiv{\i}{2}}
  	\hexagon[\i][\y];
	\hexagonNode[\i][\y][#3][#4];
}}
\end{scope}}
\def\groundState[#1][#2][#3]{
  \hexagonGrid[#1][#2][#3];
  \foreach \i in {0,...,#1} {
   \foreach \j in {0,...,#2} {
	\ifnumequal{ \intcalcMod{\i- \j + \intcalcDiv{\i}{2}}{3} }{1}{\trivialLoop[\i][\j - \intcalcDiv{\i}{2}]}{}
}}}
\def\trivialLoop[#1][#2]{ \hexagonEdges[edge-on][#1][#2][0/0/0,0/0/1,0/0/2,0/0/3,0/0/4,0/0/5];}
\def\doubleLoopUp[#1][#2]{ \hexagonEdges[edge-on][#1][#2][0/1/-2,0/1/-1,0/1/0,0/1/1,0/1/2,0/0/1,0/0/2,0/0/3,0/0/4,0/0/5]; }
\def\doubleLoopRight[#1][#2]{ \hexagonEdges[edge-on][#1][#2][0/0/0,0/0/5,0/0/4,0/0/3,0/0/2,1/0/3,1/0/2,1/0/1,1/0/0,1/0/-1]; }
\def\doubleLoopLeft[#1][#2]{ \hexagonEdges[edge-on][#1][#2][0/1/0,0/1/1,1/0/0,1/0/1,1/0/2,1/0/3,1/0/4,0/1/3,0/1/4,0/1/5]; }
\def\R{\mathbb{R}}
\def\C{\mathbb{C}}
\def\P{\mathbb{P}}
\def\E{\mathbb{E}}
\def\Z{\mathbb{Z}}
\def\T{\mathbb{T}}
\def\cS{\mathcal{S}}
\def\one{\mathbf{1}}
\renewcommand\c[1]{\emph{#1}}
\renewcommand{\S}{\mathbb{S}}
\DeclareMathOperator{\var}{Var}
\DeclareMathOperator{\Cov}{Cov}
\DeclareMathOperator{\dist}{dist}
\newtheorem{definition}{Definition}[section]
\newtheorem{theorem}{Theorem}[section]
\newtheorem{proposition}[theorem]{Proposition}
\newtheorem{lemma}[theorem]{Lemma}
\newtheorem{remark}[theorem]{Remark}
\newtheorem{fact}[theorem]{Fact}
\newtheorem{cor}[theorem]{Corollary}
\def\eps{\varepsilon}
\renewcommand{\Pr}{\mathbb{P}}
\newcommand{\HH}{\mathbb{H}}
\newcommand{\VH}{V(\HH)}
\newcommand{\EH}{E(\HH)}
\newcommand{\LC}{\mathsf{LoopConf}}
\newcommand{\ground}{\omega_{\operatorname{gnd}}}
\newcommand{\sfT}{\mathsf{T}}
\newcommand{\reffig}[1] {\textsc{\ref{#1}}}
\newcommand{\Int}[1]{\mathrm{Int}(#1)}
\newcommand{\Ext}[1]{\mathrm{Ext}(#1)}
\newcommand{\IntEdge}[1]{\mathrm{Int}^\mathrm{E}(#1)}
\newcommand{\IntVert}[1]{\mathrm{Int}^\mathrm{V}(#1)}
\newcommand{\ExtEdge}[1]{\mathrm{Ext}^\mathrm{E}(#1)}
\newcommand{\ExtVert}[1]{\mathrm{Ext}^\mathrm{V}(#1)}
\newcommand{\IntHex}[1]{\mathrm{Int}^{\mathrm{hex}}(#1)}
\newcommand{\BadEdges}{E^{\operatorname{bad}}}
\newcommand{\BadEdgesBefore}{\overline E}
\newcommand{\clr}{{{\mathsf{c}}}}
\newcommand{\breakup}{{{\cC}}}
\newcommand{\shiftFunc}{R}
\newcommand{\shift}[1]{{\shiftFunc(#1)}}
\newcommand{\DIRup}{\,\uparrow\,}
\newcommand{\DIRdown}{\,\downarrow\,}
\newcommand{\cC}{\mathcal{C}}
\newcommand{\cF}{\mathcal{F}}
\begin{document}

\title{Lectures on the Spin and Loop $O(n)$ Models}
\author{Ron Peled\thanks{School of Mathematical Sciences, Tel Aviv University, Tel Aviv, Israel. Research supported by Israeli Science Foundation grant 861/15 and the European Research Council starting grant 678520 (LocalOrder). The research of Y.S. was also supported by the Adams Fellowship Program of the Israel Academy of Sciences and Humanities. E-mails: {\tt peledron@post.tau.ac.il},\ \ {\tt yinonspi@post.tau.ac.il}.}\and Yinon Spinka\footnotemark[1]}

\date{\small{\today}}

\maketitle

\begin{center}
	\vspace{-20pt}
	\footnotesize{\textit{Dedicated to Chuck Newman on the occasion of his 70th birthday}}
	\vspace{20pt}\\
	\includegraphics[width=8cm, height=7.5cm]{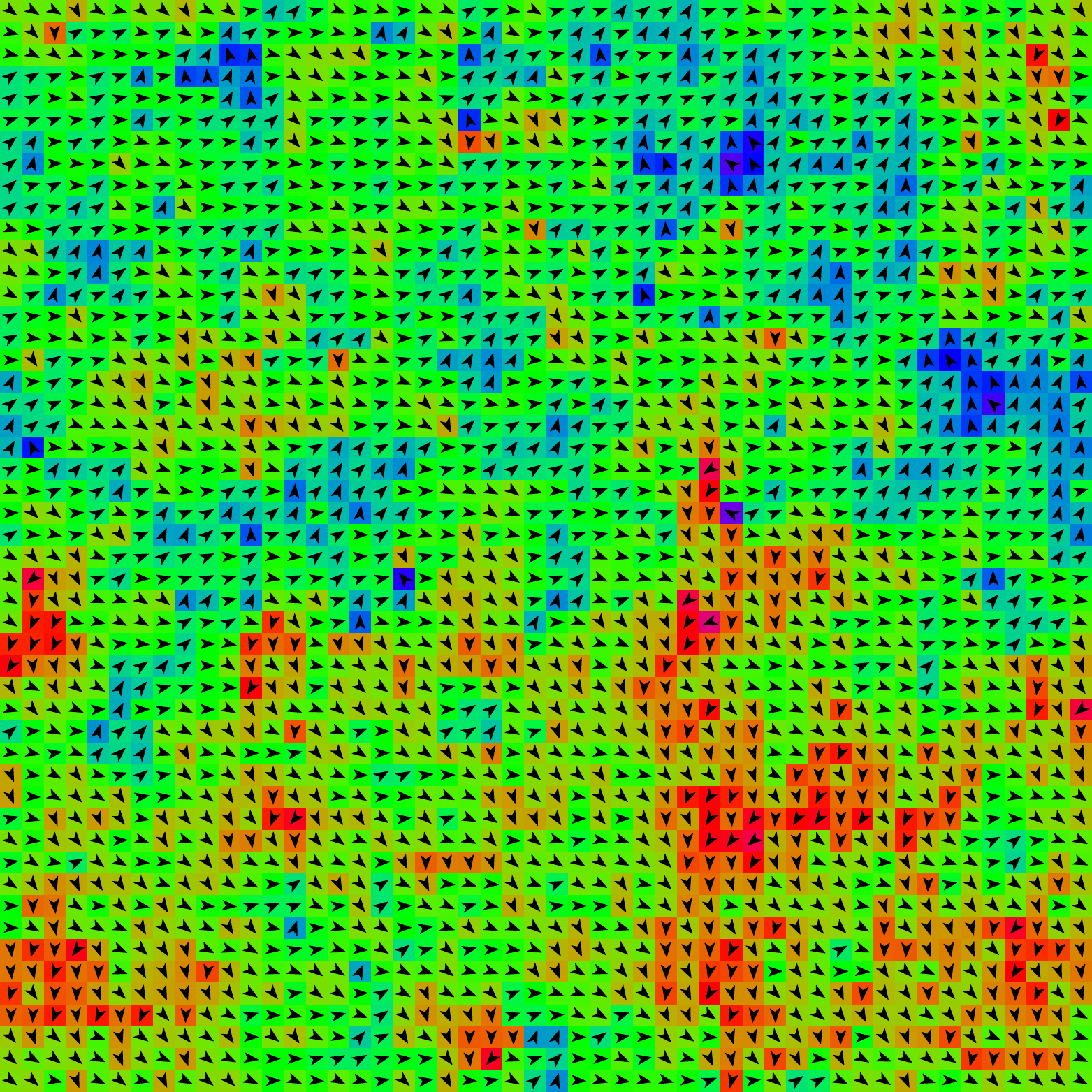}\quad\includegraphics[width=8cm, height=7.5cm]{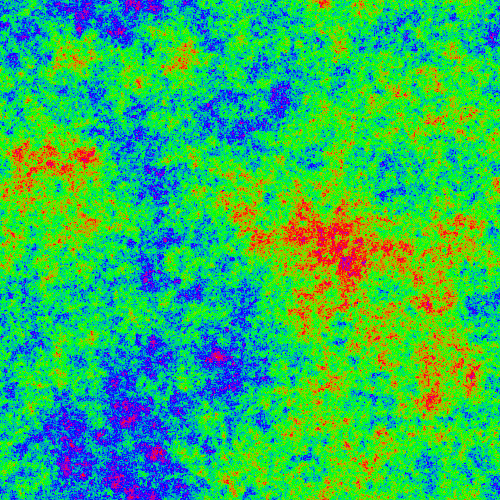}\\\bigbreak
	\includegraphics[width=8cm, height=7.2cm]{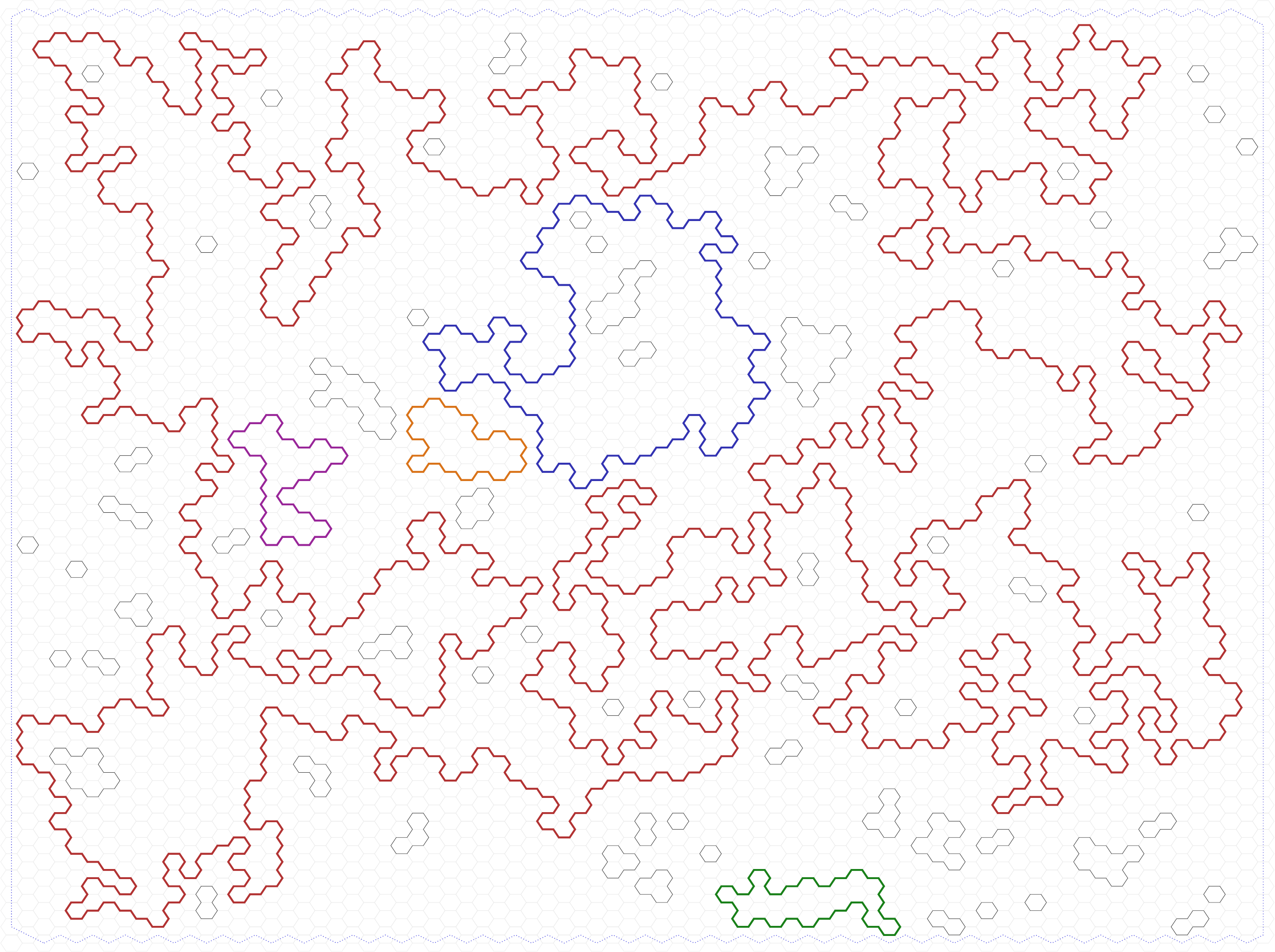}\quad\includegraphics[width=8cm, height=7.2cm]{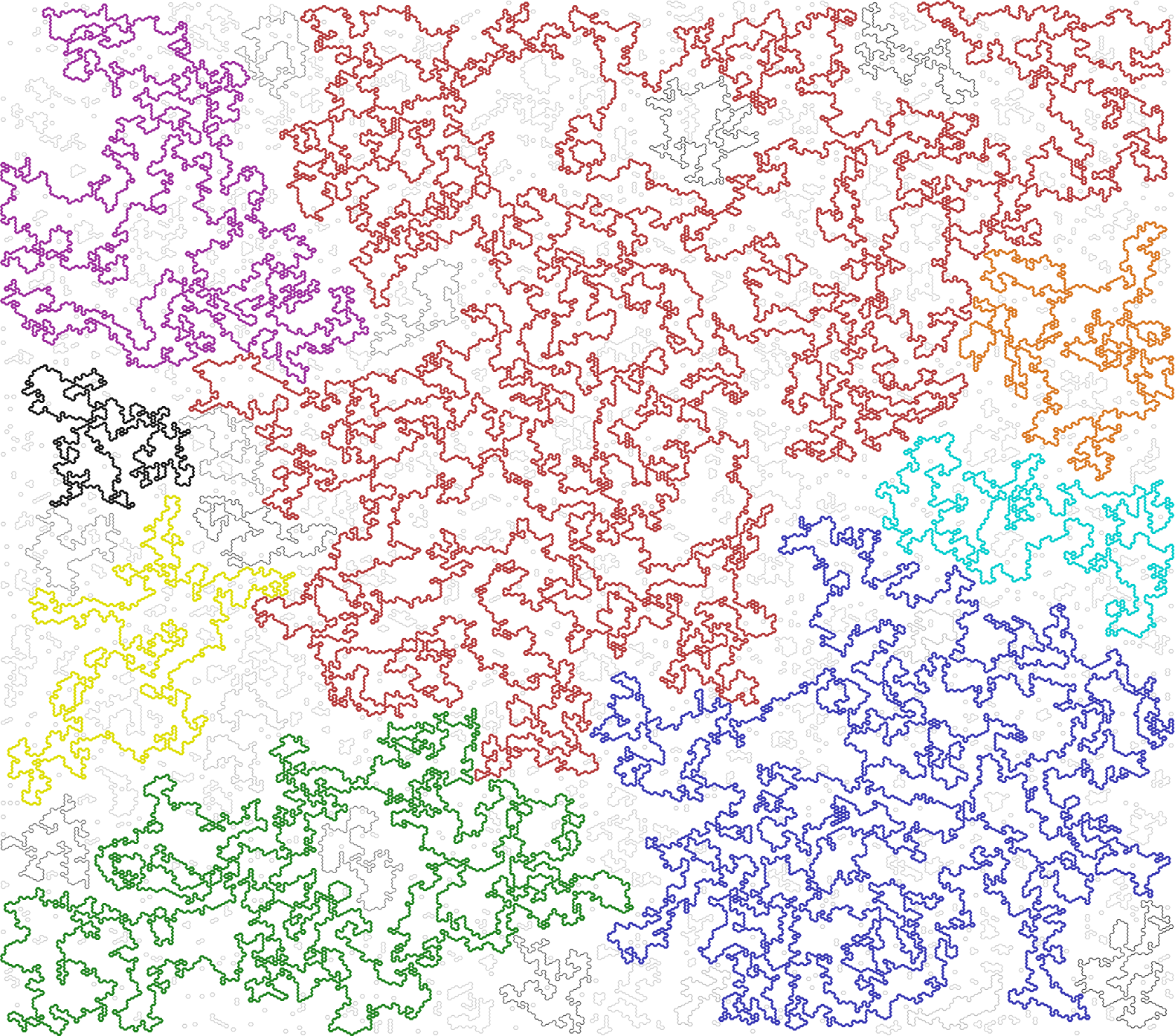}
\end{center}

\newpage

\tableofcontents

\section{Introduction}
The classical spin $O(n)$ model is a model on a $d$-dimensional
lattice in which a vector on the $(n-1)$-dimensional sphere is
assigned to every lattice site and the vectors at adjacent sites
interact ferromagnetically via their inner product. Special cases
include the Ising model ($n=1$), the XY model ($n=2$) and the
Heisenberg model ($n=3$). We discuss questions of long-range order
(spontaneous magnetization) and decay of correlations in the spin
$O(n)$ model for different combinations of the lattice dimension $d$
and the number of spin components~$n$. Among the topics presented
are the Mermin--Wagner theorem, the Berezinskii--Kosterlitz--Thouless
transition, the infra-red bound and Polyakov's conjecture on the
two-dimensional Heisenberg model.

The loop $O(n)$ model is a model for a random configuration of
disjoint loops. In these notes we discuss its properties on the
hexagonal lattice. The model is parameterized by a loop weight $n\ge
0$ and an edge weight $x\ge 0$.
Special cases include self-avoiding walk ($n=0$), the Ising model
($n=1$), critical percolation ($n=x=1$), dimer model ($n=1,
x=\infty$), proper $4$-coloring ($n=2, x=\infty)$, integer-valued
($n=2$) and tree-valued (integer $n>=3$) Lipschitz functions and the
hard hexagon model ($n=\infty$). The object of study in the model is
the typical structure of loops. We will review the connection of the
model with the spin $O(n)$ model and discuss its conjectured phase
diagram, emphasizing the many open problems remaining. We then
elaborate on recent results for the self-avoiding walk case and for
large values of $n$.

The first version of these notes was written for a series of lectures given at the School and
Workshop on Random Interacting Systems at Bath, England in June
2016. The authors are grateful to Vladas Sidoravicius and Alexandre
Stauffer for the organization of the school and for the opportunity
to present this material there. It is a pleasure to thank also the
participants of the meeting for various comments which greatly
enhanced the quality of the notes.

Our discussion is aimed at giving a relatively short and accessible introduction to the topics of the spin $O(n)$ and loop $O(n)$ models. The selection of topics naturally reflects the authors' specific research interests and this is perhaps most noticeable in the sections on the Mermin--Wagner theorem (Section \ref{sec:Mermin-Wagner}), the infra-red bound (Section \ref{sec:infra-red_bound}) and the chapter on the loop $O(n)$ model (Section \ref{sec:loop-model}).
The interested reader may find additional information in the recent books of Friedli and Velenik \cite{friedli2016statistical} and Duminil-Copin \cite{duminil2013parafermionic} and in the lecture notes of Bauerschmidt \cite{Bauerschmidt2016}, Biskup \cite{biskup2009reflection} and Ueltschi \cite{Ueltschi2013}.

\section{The Spin $O(n)$ model}
\label{sec:spin-O-n}

\begin{figure}[!ht]
	\centering
	\begin{subfigure}[t]{.49\textwidth}
		\includegraphics[width=\textwidth]{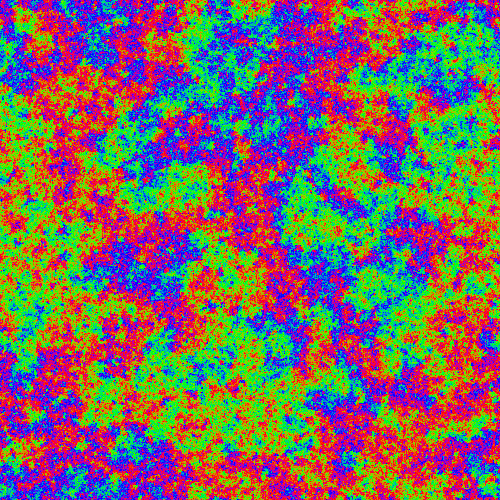}
		\caption{$\beta=1$}
		\label{fig:xy-sample-500x500-beta1}
	\end{subfigure}%
	\begin{subfigure}{10pt}
		\quad
	\end{subfigure}%
	\begin{subfigure}[t]{.49\textwidth}
		\includegraphics[width=\textwidth]{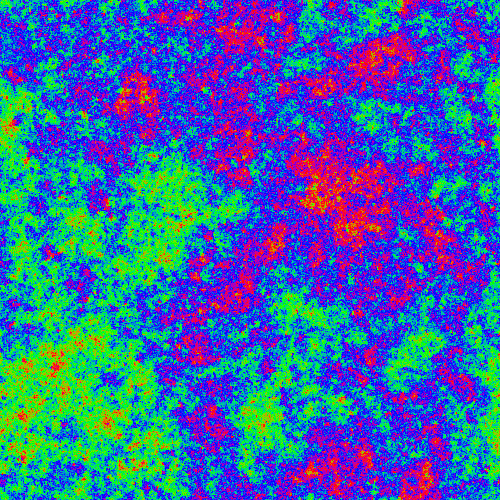}
		\caption{$\beta=1.12$}
		\label{fig:xy-sample-500x500-beta1_12}
	\end{subfigure}
	\medbreak
	\begin{subfigure}[t]{.49\textwidth}
		\includegraphics[width=\textwidth]{xy-sample-500x500-beta1_5.png}
		\caption{$\beta=1.5$}
		\label{fig:xy-sample-500x500-beta1_5}
	\end{subfigure}%
	\begin{subfigure}{10pt}
		\quad
	\end{subfigure}%
	\begin{subfigure}[t]{.49\textwidth}
		\includegraphics[width=\textwidth]{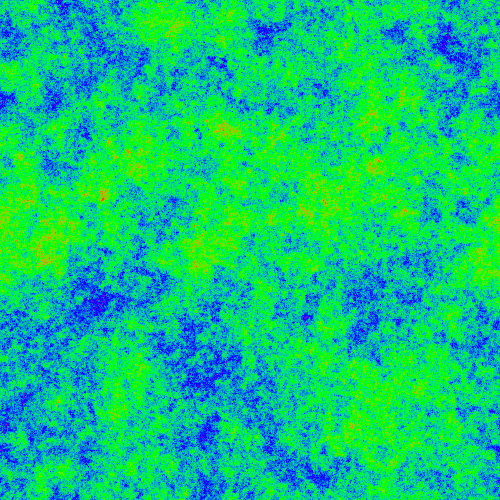}
		\caption{$\beta=3$}
		\label{fig:xy-sample-500x500-beta3}
	\end{subfigure}
	\caption{Samples of random spin configurations in the two-dimensional XY model ($n=2$) at and near the conjectured critical inverse temperature $\beta_c \approx 1.1199$ \cite{hasenbusch2005two, komura2012large}. Configurations are on a $500\times500$ torus. The angles of the spins are encoded by colors, with 0, 120 and 240 degrees having colors green, blue and red, and interpolating in between. The samples are generated using Wolff's cluster algorithm \cite{wolff1989collective}.}
	\label{fig:xy-samples}
\end{figure}

\subsection{Definitions}\label{sec:definitions}
Let $n\ge 1$ be an integer and let $G = (V(G), E(G))$ be a finite
graph. A \emph{configuration} of the \emph{spin $O(n)$ model}, sometimes
called the \emph{$n$-vector model}, on $G$ is an assignment
$\sigma:V(G)\to \mathbb S^{n-1}$ of spins to each vertex of $G$,
where $\mathbb S^{n-1} \subseteq \R^n$ is the $(n-1)$-dimensional
unit sphere (simply $\{-1,1\}$ if $n=1$). We write
\begin{equation*}
\Omega := (\S^{n-1})^{V(G)}
\end{equation*}
for the space of configurations. At inverse temperature $\beta\in
[0,\infty)$, configurations are randomly chosen from the probability
measure $\mu_{G,n,\beta}$ given by
\begin{equation}\label{eq:spin_O_n_def}
  d\mu_{G,n,\beta}(\sigma) := \frac{1}{Z^{\text{spin}}_{G,n,\beta}} \exp \left[\beta \sum_{\{u,v\}\in
E(G)}\left\langle\sigma_u,\sigma_v\right\rangle\right]
  d\sigma,
\end{equation}
where $\langle \cdot, \cdot\rangle$ denotes the standard inner product in
$\R^n$, the \emph{partition function} $Z^{\text{spin}}_{G,n,\beta}$
is given by
\begin{equation}\label{eq:Z_def}
  Z^{\text{spin}}_{G,n,\beta}:=\int_{\Omega} \exp \left[\beta \sum_{\{u,v\}\in
E(G)}\left\langle\sigma_u,\sigma_v\right\rangle\right]
  d\sigma
\end{equation}
and $d\sigma$ is the uniform probability measure on $\Omega$ (i.e.,
the product measure of the uniform distributions on $\mathbb
S^{n-1}$ for each vertex in $G$).

Special cases of the model have names of their own:
\begin{itemize}[noitemsep,topsep=0.5em]
\setlength\itemsep{0.25em}
  \item When $n=1$, spins take values in $\{-1,1\}$ and the model becomes
the famous \emph{Ising model}. See Figure~\reffig{fig:ising-samples} for samples from this model.
  \item When $n=2$, spins take values in the
unit circle and the model is called the \emph{XY model} or the
\emph{plane rotator model}. See Figure~\reffig{fig:xy-samples} for samples from this model. See also the two top figures on the cover page which show samples of the XY model with $\beta=1.5$.
  \item When $n=3$, spins take values in the
two-dimensional sphere $\mathbb S^2$ and the model is called the
\emph{Heisenberg model}. See Figure~\reffig{fig:heisenberg-samples} for samples from this model.

  \item In a sense, as $n$ tends to infinity the model approaches the \emph{Berlin--Kac spherical model} (which will not be discussed in these notes), see \cite{BerKac52, ThompsonKac71, Sta68-2} and \cite[Chapter 5]{Bax89}.
\end{itemize}

We will sometimes discuss a more general model, in which we replace
the inner product in \eqref{eq:spin_O_n_def} by a function of that
inner product. In other words, when the energy of a configuration is
measured using a more general pair interaction term. Precisely,
given a measurable function $U:[-1,1]\to\R\cup\{\infty\}$, termed the
\emph{potential function}, we define the \emph{spin $O(n)$ model
with potential $U$} to be the probability measure $\mu_{G,n,U}$ over
configurations $\sigma:V(G)\to \mathbb S^{n-1}$ given by
\begin{equation}\label{eq:spin_O_n_potential_U}
  d\mu_{G,n,U}(\sigma) := \frac{1}{Z^{\text{spin}}_{G,n,U}} \exp \left[-\sum_{\{u,v\}\in
E(G)}U(\left\langle\sigma_u,\sigma_v\right\rangle)\right]
  d\sigma,
\end{equation}
where the partition function $Z^{\text{spin}}_{G,n,U}$ is defined analogously to \eqref{eq:Z_def} and where we set $\exp(-\infty):=0$. Of course, for this to be well defined (i.e., to have finite $Z^{\text{spin}}_{G,n,U}$) some restrictions need to be placed on $U$ but this will always be the case in the models discussed in these notes.

The spin $O(n)$ model defined in \eqref{eq:spin_O_n_def} with $\beta\in[0,\infty)$ is called \emph{ferromagnetic}. If $\beta$ is taken negative in \eqref{eq:spin_O_n_def}, equivalently $U(r) = \beta r$ for $\beta>0$ in \eqref{eq:spin_O_n_potential_U}, the model is called \emph{anti-ferromagnetic}. On bipartite graphs, the ferromagnetic and anti-ferromagnetic versions are isomorphic through the map which sends $\sigma_v$ to $-\sigma_v$ for all $v$ in one of the partite classes. The two versions are genuinely different on non-bipartite graphs; see Section~\ref{sec:loop-model-def} and Section~\ref{sec:loop_O_n_phase_diagram} for a discussion of the Ising model on the triangular lattice.

The model admits many extensions and generalizations. One may impose \emph{boundary conditions} in which
the values of certain spins are pre-specified. An \emph{external magnetic
field} can be applied by taking a vector $s\in\mathbb \R^n$ and adding a term
of the form $\sum_{v\in V(G)} \left\langle \sigma_v,s\right\rangle$ to the exponent in the definition of the densities
\eqref{eq:spin_O_n_def} and \eqref{eq:spin_O_n_potential_U}. The model can be made \emph{anisotropic} by replacing the standard inner product $\langle\cdot,\cdot\rangle$ in \eqref{eq:spin_O_n_def} and \eqref{eq:spin_O_n_potential_U} with a different inner product. A different \emph{single-site distribution} may be imposed, replacing the measure $d\sigma$ in \eqref{eq:spin_O_n_def} and \eqref{eq:spin_O_n_potential_U} with another product measure on the vertices of $G$, thus allowing spins to take values in all of $\R^{n}$ (e.g., taking the single-site density $\exp(-|\sigma_v|^4)$). We will, however, focus on the versions of
the model described above.

The graph $G$ is typically taken to be a portion of a
$d$-dimensional lattice, possibly with periodic boundary
conditions. When discussing the spin $O(n)$ model in these notes we
mostly take
\begin{equation*}
  G = \T_L^d,
\end{equation*}
where $\T_{L}^d$ denotes the \emph{$d$-dimensional discrete torus of
side length $2L$} defined as follows: The vertex set of $\T_L^d$ is
\begin{equation}\label{eq:vertex_set_of_T_L_d}
V(\T_L^d):=\{-L+1,-L+2,\ldots,L-1,L\}^{d}
\end{equation}
and a pair $u,v\in V(\T_L^d)$ is adjacent, written $\{u,v\}\in
E(\T_L^d)$, if $u$ and $v$ are equal in all but one coordinate and
differ by exactly $1$ modulo $2L$ in that coordinate. We write
$\|x-y\|_1$ for the \emph{graph distance in $\T_L^d$} of two
vertices $x,y\in V(\T_L^d)$ (for brevity, we suppress the dependence
on $L$ in this notation).

The results presented below should admit analogues if the graph $G$
is changed to a different $d$-dimensional lattice graph with
appropriate boundary conditions. However, the presented proofs
sometimes require the presence of symmetries in the graph $G$.

\begin{figure}[!ht]
	\centering
	\begin{subfigure}[t]{.49\textwidth}
		\includegraphics[width=\textwidth]{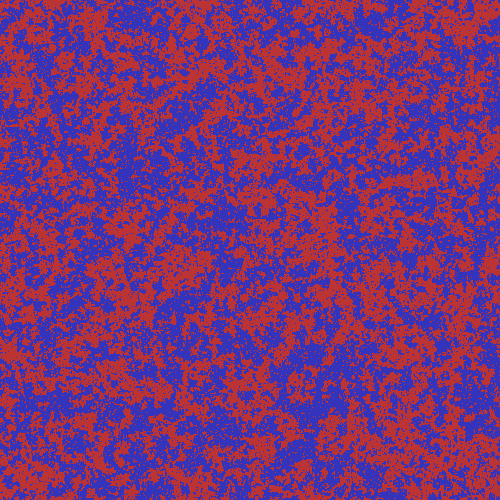}
		\caption{$\beta=0.4<\beta_c$}
		\label{fig:ising-sample-500x500-beta1}
	\end{subfigure}%
	\begin{subfigure}{10pt}
		\quad
	\end{subfigure}%
	\begin{subfigure}[t]{.49\textwidth}
		\includegraphics[width=\textwidth]{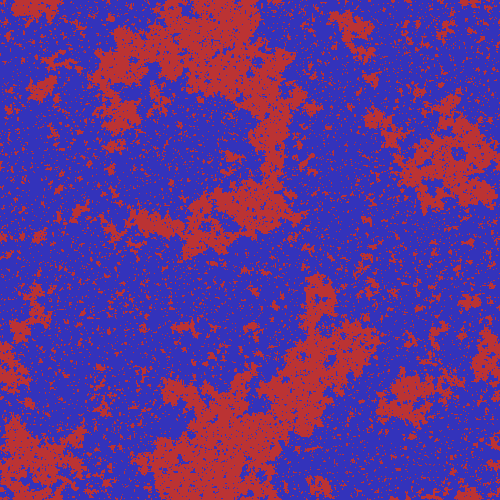}
		\caption{$\beta=\beta_c \approx 0.4407$}
		\label{fig:ising-sample-500x500-beta1_12}
	\end{subfigure}
	\medbreak
	\begin{subfigure}[t]{.49\textwidth}
		\includegraphics[width=\textwidth]{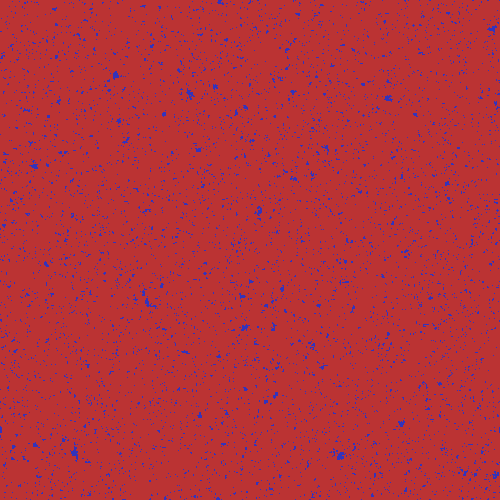}
		\caption{$\beta=0.5>\beta_c$}
		\label{fig:ising-sample-500x500-beta1_5}
	\end{subfigure}%
	\begin{subfigure}{10pt}
		\quad
	\end{subfigure}%
	\begin{subfigure}[t]{.49\textwidth}
		\includegraphics[width=\textwidth]{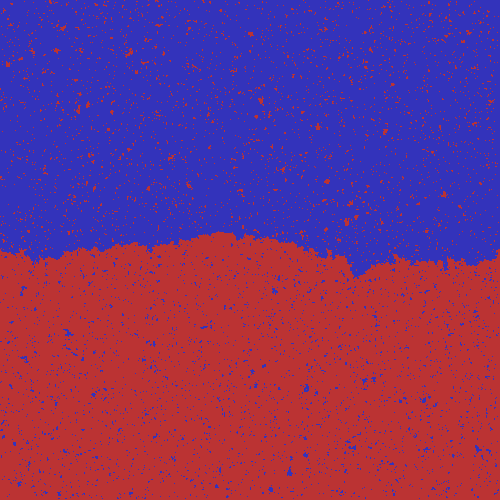}
		\caption{$\beta=0.5$ with Dobrushin boundary conditions}
		\label{fig:ising-sample-500x500-beta3}
	\end{subfigure}
	\caption{Samples of random configurations in the two-dimensional Ising model ($n=1$) at and near the critical inverse temperature $\beta_c = \tfrac12 \log(1+\sqrt{2})$. Configurations are on a $500\times500$ torus and are generated using Wolff's cluster algorithm \cite{wolff1989collective}. Dobrushin boundary conditions corresponds to fixing the top and bottom halves of the boundary to have different spins.}
	\label{fig:ising-samples}
\end{figure}

\begin{figure}[!ht]
	\centering
	\begin{subfigure}[t]{.49\textwidth}
		\includegraphics[width=\textwidth]{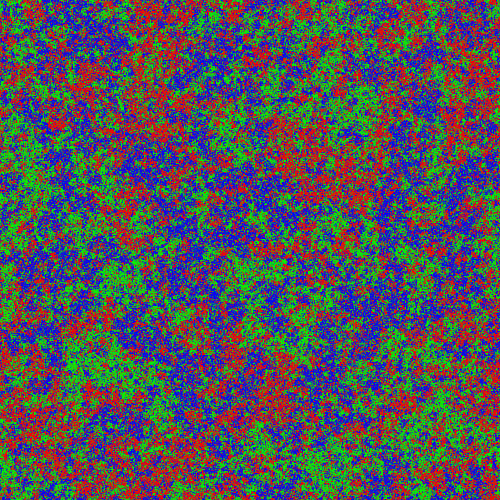}
		\caption{$\beta=2$}
		\label{fig:heisenberg-sample-500x500-beta2}
	\end{subfigure}%
	\begin{subfigure}{10pt}
		\quad
	\end{subfigure}%
	\begin{subfigure}[t]{.49\textwidth}
		\includegraphics[width=\textwidth]{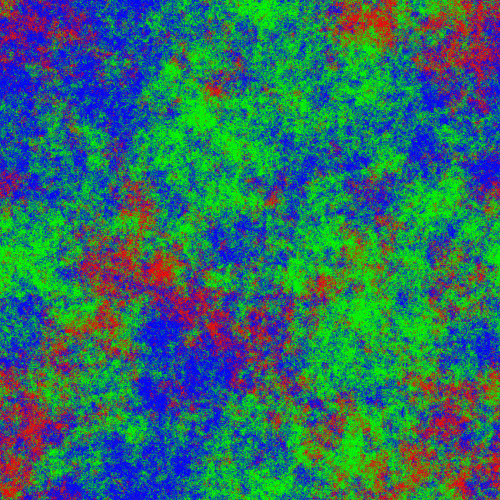}
		\caption{$\beta=10$}
		\label{fig:heisenberg-sample-500x500-beta10}
	\end{subfigure}
	\caption{Samples of random configurations in the two-dimensional Heisenberg model ($n=3$). Configurations are on a $500\times500$ torus and are generated using Wolff's cluster algorithm \cite{wolff1989collective}. It is predicted \cite{polyakov1975interaction} that there is no phase transition for $d=2$ and $n \ge 3$ so that correlations decay exponentially at any inverse temperature.}
	\label{fig:heisenberg-samples}
\end{figure}

\subsection{Main results and conjectures}\label{sec:spin_O_n_results_and_conjectures}
We will focus on the questions of existence of long-range order and decay of correlations in the spin $O(n)$ model. To this end we shall study the correlation
\begin{equation*}
  \rho_{x,y}:=\E(\left\langle\sigma_x,\sigma_y\right\rangle)
\end{equation*}
for a configuration $\sigma$ randomly chosen from
$\mu_{\T_L^d,n,\beta}$, the (ferromagnetic) spin $O(n)$ model at
inverse temperature $\beta\in[0,\infty)$, and two vertices $x,y\in
V(\T_L^d)$ with large graph distance $\|x-y\|_1$. The magnitude of
this correlation behaves very differently for different combinations
of the spatial dimension $d$, number of spin components $n$ and
inverse temperature $\beta$. The following list summarizes the main
results and conjectures regarding $\rho_{x,y}$. Most of the claims
in the list are elaborated upon and proved in the subsequent
sections. We use the notation $c_{\beta}, C_{\beta}, c_{n,\beta},
\ldots$ to denote positive constants whose value depends only on the
parameters given in the subscript (and is always independent of the
lattice size $L$) and may change from line to line.

\medbreak \noindent{\bf Non-negativity and monotonicity.} The
correlation is always non-negative, that is,
\begin{equation*}
  d,n\ge 1,\; \beta\in [0,\infty)\colon\quad \rho_{x,y} \ge 0\quad\text{for all $x,y\in V(\T_L^d)$}.
\end{equation*}
As we shall discuss, this result is a special case of an inequality
of Griffiths \cite{Gri67}. It is also natural to expect
the correlation to be monotonic non-decreasing in $\beta$. A second
inequality of Griffiths \cite{Gri67} implies this for the Ising model and was
later extended by Ginibre \cite{Gin70} to include the XY model and more general
settings. Precisely,
\begin{equation*}
  d\ge 1,\; n\in\{1,2\}\colon\quad \text{for all $x,y\in V(\T_L^d)$, $\rho_{x,y}$ is
  non-decreasing as $\beta$ increases in $[0,\infty)$}.
\end{equation*}
It appears to be unknown whether this monotonicity holds also for
$n\ge 3$. Counterexamples exist for related inequalities in
certain quantum \cite{hurst1969griffiths} and classical \cite{Syl80} spin systems.

\medbreak
\noindent{\bf High temperatures and spatial dimension $d=1$.} All the models
exhibit \emph{exponential decay of correlations} at high
temperature. Precisely, there exists a $\beta_0(d,n)>0$ such that
\begin{equation}\label{eq:exponential_decay_high_temperatures}
  d,n\ge 1,\; \beta<\beta_0(d,n)\colon\quad \rho_{x,y} \le C_{d,n,\beta}\exp(-c_{d,n,\beta}\|x-y\|_1)\quad\text{for all $x,y\in V(\T_L^d)$}.
\end{equation}
This is a relatively simple fact and the main interest is in
understanding the behavior at low temperatures. In one spatial
dimension ($d=1$) the exponential decay persists at all positive
temperatures. That is,
  \begin{equation}\label{eq:exponential_decay_1d}
    d=1,\; n\ge 1,\; \beta\in[0,\infty)\colon\quad \rho_{x,y} \le C_{n,\beta}\exp(-c_{n,\beta}\|x-y\|_1)\quad\text{for all $x,y\in V(\T_L^1)$}.
  \end{equation}

\medbreak
\noindent{\bf The Ising model $n=1$.}
The Ising model exhibits a \emph{phase
transition} in all dimensions $d\ge 2$ at a critical inverse temperature
$\beta_c(d)$. The transition is from a regime with exponential decay
of correlations \cite{aizenman1985absence,Aiz85Rig,AizBarFer87, duminil2016new,duminil2017sharp}\footnote{Exponential decay is stated in these references in the infinite-volume limit, but is derived as a consequence of a finite-volume criterion and is thus implied, as the infinite-volume measure is unique, also in finite volume.},
\begin{equation*}
  d\ge 2,\; n=1,\; \beta<\beta_c(d)\colon\quad \rho_{x,y} \le C_{d,\beta}\exp(-c_{d,\beta}\|x-y\|_1)\quad\text{for all $x,y\in V(\T_L^d)$}
\end{equation*}
to a regime with \emph{long-range order}, or
\emph{spontaneous magnetization}, which is characterized by
\begin{equation*}
  d\ge 2,\; n=1,\; \beta>\beta_c(d)\colon\quad\rho_{x,y} \ge c_{d,\beta}\quad\text{for all $x,y\in V(\T_L^d)$}.
\end{equation*}

The behavior of the model at the critical temperature, when $\beta =
\beta_c(d)$, is a rich source of study with many mathematical
features. For instance, the two-dimensional model is exactly solvable, as discovered by Onsager \cite{Ons44}, and has a conformally-invariant scaling limit, features of which were first established by Smirnov \cite{Smi06, Smi10}; see \cite{CheSmi12, CheDumHon14, CHI15, benoist2016scaling} and references within for recent progress.
We mention that it is proved (see Aizenman, Duminil-Copin,
Sidoravicius \cite{AizDumSid15} and references within) that the
model does not exhibit long-range order at its critical point in all
dimensions $d\ge 2$. Moreover, in dimension $d=2$ it is known \cite{McCWu73, pinson2012rotational} (see also \cite{CHI15}) that
correlations \emph{decay as a power-law} with exponent $1/4$ at the critical point, whose exact value is $\beta_c(2)=\tfrac12 \log(1+\sqrt{2})$ as first determined by Kramers--Wannier \cite{KraWan41} and Onsager \cite{Ons44},
\begin{equation*}
  d=2,\; n=1,\; \beta=\beta_c(2)\colon\quad\E^{\Z^2}(\sigma_x \sigma_y) \sim C\|x-y\|_2^{-\frac{1}{4}},\quad\text{$x,y\in \Z^2$, $\|x-y\|_2\to\infty$},
\end{equation*}
where we write $\E^{\Z^2}$ for the expectation in the (unique) infinite-volume measure of the two-dimensional critical Ising model, and $\|\cdot\|_2$ denotes the standard Euclidean norm. Lastly, in dimensions higher than some threshold $d_0$, Sakai \cite{sakai2007lace} proved that
\begin{equation*}
  d\ge d_0,\; n=1,\; \beta=\beta_c(d)\colon\quad\E^{\Z^d}(\sigma_x \sigma_y) \sim C_d\|x-y\|_2^{-(d-2)},\quad\text{$x,y\in \Z^d$, $\|x-y\|_2\to\infty$},
\end{equation*}
where, as before, $\E^{\Z^d}$ is the expectation in the (unique) infinite-volume measure of the $d$-dimensional critical Ising model.

The study of the model at or near its critical temperature is beyond the scope of these notes.

\medbreak
\noindent{\bf The Mermin--Wagner theorem: No continuous symmetry breaking in
$2d$.} Perhaps surprisingly, the behavior of the two-dimensional
model when $n\ge2$, so that the spin space $\mathbb S^{n-1}$ has a
continuous symmetry, is quite different from that of the Ising
model. The \emph{Mermin--Wagner theorem} \cite{mermin1966absence, mermin1967absence} asserts that
in this case there is no phase with long-range order at any inverse
temperature $\beta$. Quantifying the rate at which correlations
decay has been the focus of much research along the years \cite{hohenberg1967existence,jasnow1969broken,polyakov1975interaction,dobrushin1975absence,shlosman1977absence, shlosman1978decrease,pfister1981symmetry,simon1981rigorous,FroSpe81,ito1982clustering,bonato1982mermin,messager1984upper,naddaf1997decay,ioffe20022d,gagnebin2014upper}
and is still not completely understood. Improving on earlier bounds, McBryan and Spencer~\cite{McBSpe77} showed in 1977 that the decay occurs at
least at a power-law rate,
\begin{equation}\label{eq:algebraic_decay_Mermin_Wagner}
  d=2,\; n\ge 2,\; \beta\in[0,\infty)\colon\quad \rho_{x,y}\le C_{n,\beta}\|x-y\|_1^{-c_{n,\beta}}\quad\text{for all $x,y\in V(\T_L^2)$}.
\end{equation}
The sharpness of this bound is discussed in the next paragraphs.

\medbreak
\noindent{\bf The Berezinskii--Kosterlitz--Thouless transition for the $2d$ XY
Model.} It was predicted by Berezinskii \cite{Ber72} and by
Kosterlitz and Thouless \cite{KosTho72, KosTho73} that the XY model
($n=2$) in two spatial dimensions should indeed exhibit power-law
decay of correlations at low temperatures. Thus the model undergoes
a phase transition (of a different nature than that of the Ising
model) from a phase with exponential decay of correlations to a
phase with power-law decay of correlations. This transition is
called the Berezinskii--Kosterlitz--Thouless transition. The existence
of the transition has been proved mathematically in the celebrated
work of Fr\"ohlich and Spencer \cite{FroSpe81}, who show that there
exists a $\beta_1$ for which
\begin{equation}\label{eq:polynomial_decay_lower_bound}
  d=2,\; n=2,\; \beta>\beta_1\colon\quad \E^{\Z^2}(\left\langle\sigma_x,\sigma_y\right\rangle)\ge c_\beta\|x-y\|_1^{-C_\beta}\quad\text{for all distinct $x,y\in \Z^2$},
\end{equation}
where $\E^{\Z^2}$ denotes expectation in the unique \cite{bricmont1977uniqueness} translation-invariant infinite-volume Gibbs measure of the two-dimensional XY model at inverse temperature $\beta$.

A rigorous proof of the bound
\eqref{eq:polynomial_decay_lower_bound} is beyond the scope of these
notes (see~\cite{kharash2017fr} for a recent presentation of the proof). In Section~\ref{sec:Berezinskii-Kosterlitz-Thouless} we
present a heuristic discussion of the transition highlighting the
role of \emph{vortices} - cycles of length $4$ in $\T_L^2$ on which
the configuration completes a full rotation. We then proceed to
present a beautiful result of Aizenman \cite{Aiz94}, following
Patrascioiu and Seiler \cite{PatSei92}, who showed that correlations
decay at most as fast as a power-law in the spin $O(2)$ model with
potential $U$, for certain potentials $U$ for which vortices are
deterministically excluded.

\medbreak
\noindent{\bf Polyakov's conjecture for the $2d$ Heisenberg model.} Polyakov
\cite{polyakov1975interaction} predicted in 1975 that the spin $O(n)$
model with $n\ge 3$ should exhibit \emph{exponential} decay of correlations
in two dimensions at \emph{any} positive temperature. That is, that there is no
phase transition of the Berezinskii--Kosterlitz--Thouless type in the
Heisenberg model and in the spin $O(n)$ models with larger $n$. On the torus, this
prediction may be stated precisely as
\begin{equation*}
  d=2,\; n\ge 3,\; \beta\in[0,\infty)\colon\quad \rho_{x,y}\le C_{n,\beta}\exp(-c_{n,\beta}\|x-y\|_1)\quad\text{for all $x,y\in V(\T_L^2)$}.
\end{equation*}
Giving a mathematical proof of this statement (or its analog in infinite volume) remains one of the
major challenges of the subject. The best known results in this
direction are by Kupiainen~\cite{Kup80} who performed a $1/n$-expansion as $n$ tends to infinity.

\medbreak
\noindent{\bf The infra-red bound: Long-range order in dimensions $d\ge3$.}
In three and higher spatial dimensions, the spin $O(n)$ model
exhibits long-range order at sufficiently low temperatures for all
$n$. This was established by Fr\"ohlich, Simon and Spencer
\cite{FroSimSpe76} in 1976 who introduced the powerful method of the
\emph{infra-red bound}, and applied it to the analysis of the spin
$O(n)$ and other models. They
prove that correlations do not decay at temperatures below a
threshold $\beta_1(d,n)^{-1}$, at least in the following averaged sense,
\begin{equation*}
  d\ge 3,\; n\ge 1,\; \beta>\beta_1(d,n)\colon\quad\frac{1}{|V(\T_L^d)|^2}\sum_{x,y\in V(\T_L^d)}\rho_{x,y} \ge c_{d,n,\beta}.
\end{equation*}
The proof uses the reflection symmetries of the underlying lattice,
relying on the tool of \emph{reflection positivity}.

\subsection{Non-negativity and monotonicity of correlations}
\label{spin_nonnegativity_of_correlations}
In this section we discuss the non-negativity and monotonicity in temperature of the correlations $\rho_{x,y}=\E(\left\langle\sigma_x,\sigma_y\right\rangle)$. To remain with a unified presentation, our discussion is restricted to the simplest setup with nearest-neighbor interactions. Many extensions are available in the literature. Recent accounts can be found in the book of Friedli and Velenik \cite[Sections~3.6, 3.8 and 3.9]{friedli2016statistical} and in the review of Benassi--Lees--Ueltschi \cite{benassi2016correlation}.

We start our discussion by introducing the spin $O(n)$ model with general
non-negative coupling constants. Let $N\ge 1$ be an integer and let $J = (J_{\{i,j\}})_{1\le i<j\le N}$
be non-negative real numbers. The spin $O(n)$ model with coupling
constants $J$ is the probability measure on $(\S^{n-1})^{N}$
defined by
\begin{equation}\label{eq:spin_O_n_general_couplings_def}
  d\mu_{n,J}(\sigma) := \frac{1}{Z^{\text{spin}}_{n,J}} \exp \left[\sum_{1\le i<j\le N} J_{\{i,j\}} \left\langle\sigma_i,\sigma_j\right\rangle\right] d\sigma,
\end{equation}
where, as before, $d\sigma$ is the uniform probability measure on
$(\S^{n-1})^{N}$, $Z^{\text{spin}}_{n,J}$ is chosen to
normalize $\mu_{n,J}$ to be a probability measure and we refer to the case $n=1$ as the Ising model.
When we speak about the spin $O(n)$ model on a finite graph $G = (V(G), E(G))$ with coupling constants $J=(J_{\{u,v\}})_{\{u,v\} \in E(G)}$, it should be understood that $N = |V(G)|$, that the vertex-set $V(G)$ is identified with $\{1,\ldots, N\}$ and that $J_{\{i,j\}}=0$ for $\{i,j\} \notin E(G)$. Thus, the standard spin $O(n)$ model \eqref{eq:spin_O_n_def} on $G$ at inverse temperature $\beta$ is obtained as the special case in which $J_{\{u,v\}} = \beta$ for $\{u,v\}\in E(G)$.

The following non-negativity result is a special case of Griffiths' first
inequality \cite{Gri67}.
\begin{theorem}\label{thm:Griffiths_first_inequality}
  Let $N\ge 1$ be an integer and let $J = (J_{\{i,j\}})_{1\le i<j\le N}$ be non-negative.
  If $\sigma$ is sampled from the Ising model with coupling constants $J$ then
  \begin{equation*}
     \E\left(\prod_{x \in A}\sigma_x \right)\ge 0\quad\text{for all $A \subset \{1,\dots,N\}$}.
  \end{equation*}
\end{theorem}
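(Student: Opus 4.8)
The plan is to prove Griffiths' first inequality for the Ising model by the standard \emph{high-temperature expansion} of the exponential, which reduces the claim to a purely combinatorial positivity statement. First I would expand $\exp\big[\sum_{i<j} J_{\{i,j\}}\langle\sigma_i,\sigma_j\rangle\big] = \exp\big[\sum_{i<j} J_{\{i,j\}}\sigma_i\sigma_j\big]$ as a Taylor series (recall that for $n=1$ spins are in $\{-1,1\}$ so $\langle\sigma_i,\sigma_j\rangle = \sigma_i\sigma_j$), obtaining a sum over multi-indices $(m_{\{i,j\}})_{i<j}$ of nonnegative integers of terms $\prod_{i<j} \tfrac{J_{\{i,j\}}^{m_{\{i,j\}}}}{m_{\{i,j\}}!}\,\sigma_i^{m_{\{i,j\}}}\sigma_j^{m_{\{i,j\}}}$. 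Then $\E\big(\prod_{x\in A}\sigma_x\big) = \tfrac{1}{Z}\int \prod_{x\in A}\sigma_x \cdot (\text{that expansion})\, d\sigma$, and interchanging sum and integral (justified since all terms are eventually positive and the integrand is bounded — everything lives on a product of spheres, here a finite product of two-point spaces, so convergence is trivial) expresses the numerator as a sum, with \emph{nonnegative} coefficients $\prod_{i<j} \tfrac{J_{\{i,j\}}^{m_{\{i,j\}}}}{m_{\{i,j\}}!} \ge 0$, of integrals of the form $\int_{\{-1,1\}^N} \prod_{x\in A}\sigma_x \prod_{i<j}\sigma_i^{m_{\{i,j\}}}\sigma_j^{m_{\{i,j\}}}\,d\sigma$ against the uniform product measure.

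The key step is then to observe that each such integral is itself nonnegative. Collecting powers, the integrand is $\prod_{v=1}^N \sigma_v^{p_v}$ for certain exponents $p_v \ge 0$, and since $d\sigma$ is a product measure with each $\sigma_v$ uniform on $\{-1,1\}$, we get $\int \prod_v \sigma_v^{p_v}\, d\sigma = \prod_v \E(\sigma_v^{p_v})$, which equals $1$ if every $p_v$ is even and $0$ otherwise. In particular it is always $\ge 0$. Hence the numerator is a nonnegative combination of nonnegative quantities, so it is $\ge 0$; since $Z = Z^{\text{spin}}_{n,J} > 0$, the ratio is $\ge 0$, which is the claim. The same computation with $A = \emptyset$ shows $Z$ itself is a positive sum of such terms, confirming positivity of the denominator.

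The main obstacle — really the only point requiring care — is the interchange of summation and integration, i.e.\ justifying that $\E\big(\prod_{x\in A}\sigma_x \exp[\cdots]\big)$ equals the termwise sum. Here this is painless because the state space $\{-1,1\}^N$ is finite, so the exponential is a finite-dimensional object and the series $\sum_k \tfrac1{k!}(\sum_{i<j}J_{\{i,j\}}\sigma_i\sigma_j)^k$ converges absolutely and uniformly on the (finite) state space; one may simply expand inside the finite sum/integral. (If one wished to run the argument for general $n$ directly with the inner product, the expansion of $\langle\sigma_i,\sigma_j\rangle^{m}$ in coordinates is more involved and the resulting single-site integrals $\int_{\S^{n-1}}\sigma^{\alpha}\,d\sigma$ must be checked to be nonnegative, which holds by symmetry but is messier; restricting to $n=1$ as in the statement avoids this entirely.) A remark worth including is that the same expansion, applied to $\E(\prod_{x\in A}\sigma_x\prod_{y\in B}\sigma_y) - \E(\prod_{x\in A}\sigma_x)\E(\prod_{y\in B}\sigma_y)$, is the route to Griffiths' \emph{second} inequality, via a duplicated-system trick.
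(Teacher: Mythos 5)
Your proposal is essentially identical to the paper's proof: both expand the exponential in its Taylor series, interchange summation with integration (trivial here since the state space is finite), and observe that each resulting moment $\int \prod_v \sigma_v^{p_v}\,d\sigma$ is $0$ or $1$, hence nonnegative, while the Taylor coefficients $\prod J^m/m!$ are nonnegative by hypothesis. Your additional remarks on the convergence justification and on positivity of $Z$ are correct and in the same spirit as the paper's brief parenthetical comments.
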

\begin{proof}
  By definition,
  \begin{equation*}
    \E\left(\prod_{x \in A}\sigma_x\right) =
    \frac{1}{2^{N}Z^{\text{spin}}_{1,J}}\sum_{\sigma\in
    \{-1,1\}^{N}} \left(\prod_{x \in A}\sigma_x\right) \exp \left[\sum_{1\le i<j\le N} J_{\{i,j\}} \sigma_i \sigma_j\right].
  \end{equation*}
  Using the Taylor expansion $e^t = \sum_{m=0}^{\infty}
  \frac{t^m}{m!}$, we conclude that $\E(\sigma_x \sigma_y)$ is an absolutely convergent series
  with \emph{non-negative} coefficients of products of
  the values of $\sigma$ on various vertices. That is,
  \begin{equation*}
    \E\left(\prod_{x \in A}\sigma_x\right) = \sum_{\sigma\in
    \{-1,1\}^{N}}\, \sum_{{m\in \{0,1,2,\ldots\}}^{N}} C_m \prod_{1\le i\le N} \sigma_i^{m_i},
  \end{equation*}
  where each $C_m=C_m(A)\ge 0$ and the series is absolutely convergent (in addition, one may, in fact, restrict to $m\in\{0,1\}^{N}$ as when $\eps\in\{-1,1\}$
  we have $\eps^k = \eps$ or $\eps^k = 1$ according to the parity of
  $k$). The non-negativity of $\E(\prod_{x \in A}\sigma_x)$ now follows as, for each
  $m\in \{0,1,2,\ldots\}^{N}$,
  \begin{equation*}
    \sum_{\sigma\in\{-1,1\}^{N}} \prod_{1\le i\le N} \sigma_i^{m_i} = \prod_{1\le i \le N} \left((-1)^{m_i} + 1^{m_i}\right) = \begin{cases}2^{N} & \text{$m_i$ is
    even for all $i$}\\0&\text{otherwise}\end{cases}.\qedhere
  \end{equation*}
\end{proof}
\medbreak
\noindent {\bf Exercise.}
Give an alternative proof of Theorem~\ref{thm:Griffiths_first_inequality} by extending the derivation of the Edwards--Sokal coupling in Section~\ref{sec:high-temperature_expansion} below to the Ising model with general non-negative coupling constants and arguing similarly to Remark~\ref{rem:edwards_sokal}.
\medbreak
We now deduce non-negativity of correlations for the spin $O(n)$
models with $n\ge 2$ by showing that conditioning on $n-1$ spin components induces an Ising model with non-negative coupling constants on the sign of the remaining spin component. The argument applies to spin $O(n)$ models
with potential $U:[-1,1]\to\mathbb{R}\cup\{\infty\}$ (see
\eqref{eq:spin_O_n_potential_U}) as long as the potential is
\emph{non-increasing} in the sense that
\begin{equation}\label{eq:ferromagnetic_def}
  \text{$U(r_1)\ge U(r_2)$ when $r_1\le r_2$.}
\end{equation}
This property implies that configurations in which adjacent spins
are more aligned (i.e., have larger inner product) have higher
density, a characteristic of ferromagnets.

To state the above precisely, we embed $\mathbb S^{n-1}$ into $\R^n$ so as to allow writing the
coordinates of a configuration $\sigma:V(G)\to \mathbb S^{n-1}$ explicitly as
\begin{equation*}
	\sigma_v = (\sigma_v^1, \sigma_v^2, \ldots, \sigma_v^n)\quad\text{at each vertex $v\in V(G)$}.
\end{equation*}
For $1\le j\le n$, we write $\sigma^j$ for the function
$(\sigma^j_v)$, $v\in V(G)$.
We also introduce a function $\eps: V(G) \to \{-1,1\}$ defined uniquely by $\sigma_v^1 = |\sigma_v^1| \eps_v$ (when $\sigma_v^1=0$, we arbitrarily set $\eps_v := 0$).
We note that $\sigma$ is determined by $(\eps, \sigma^2, \dots, \sigma^n)$ since $\sigma^1_v = \eps_v |\sigma^1_v|$ and $|\sigma_v^1|$ is determined from $(\sigma_v^j)_{2\le j\le n}$ as $\sigma_v\in \mathbb S^{n-1}$.

\begin{theorem}\label{thm:conditioning_gives_Ising}
	Let $n\ge 2$, $G = (V(G), E(G))$ be a finite graph and let $U:[-1,1]\to\R\cup\{\infty\}$
	be \emph{non-increasing}. If $\sigma$ is sampled from the spin $O(n)$ model on $G$ with potential $U$, then, conditioned on $(\sigma^2, \sigma^3,\ldots, \sigma^n)$, the random signs $\eps$ are distributed as an Ising model on $G$
	with coupling constants $J$ given by
	\[ J_{\{u,v\}} := - \frac12 U\left(|\sigma_u^1|\cdot |\sigma_v^1| + \sum_{j=2}^n\sigma_u^j \sigma_v^j\right) + \frac12 U\left(-|\sigma_u^1|\cdot |\sigma_v^1| + \sum_{j=2}^n\sigma_u^j \sigma_v^j\right) .\]
	In particular, the coupling constants are non-negative so that for all $x,y\in V(G)$,
	\begin{equation*}
	\E(\left\langle\sigma_x,\sigma_y\right\rangle)\ge 0 \qquad\text{and}\qquad \E\left(\eps_x \eps_y\; |\; (\sigma^j)_{2\le j\le n}\right)\ge
	0\quad\text{almost surely}.
	\end{equation*}
\end{theorem}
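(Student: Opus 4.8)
The plan is to compute the conditional law of $\eps$ given $(\sigma^j)_{2\le j\le n}$ directly from the density \eqref{eq:spin_O_n_potential_U}, recognize it as an Ising model with the stated coupling constants, then verify non-negativity of those constants using monotonicity of $U$, and finally invoke Theorem~\ref{thm:Griffiths_first_inequality} to conclude. First I would write the inner product $\langle \sigma_u,\sigma_v\rangle = \sigma_u^1 \sigma_v^1 + \sum_{j=2}^n \sigma_u^j \sigma_v^j$ and substitute $\sigma_v^1 = \eps_v |\sigma_v^1|$, so that $\sigma_u^1 \sigma_v^1 = \eps_u \eps_v \, |\sigma_u^1|\,|\sigma_v^1|$. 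Writing $a_{\{u,v\}} := |\sigma_u^1|\,|\sigma_v^1|$ and $b_{\{u,v\}} := \sum_{j=2}^n \sigma_u^j \sigma_v^j$, the interaction term on each edge becomes $U(\eps_u\eps_v \, a_{\{u,v\}} + b_{\{u,v\}})$, which depends on $(\eps_u,\eps_v)$ only through the sign $\eps_u\eps_v \in \{-1,1\}$ (ignoring the measure-zero event that some $\sigma_v^1=0$, where $\eps_v$ is set to $0$ and plays no role after conditioning). Since conditioning on $(\sigma^j)_{2\le j\le n}$ fixes all the $a_{\{u,v\}}$ and $b_{\{u,v\}}$, the conditional density of $\eps$ is proportional to $\exp\left[-\sum_{\{u,v\}\in E(G)} U(\eps_u\eps_v a_{\{u,v\}} + b_{\{u,v\}})\right]$.

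Next I would convert this into the standard Ising form $\exp\left[\sum_{\{u,v\}} J_{\{u,v\}} \eps_u\eps_v\right]$. For each edge, a function of $\eps_u\eps_v\in\{-1,1\}$ can be written as $\alpha + J\eps_u\eps_v$ for suitable constants; matching the two values gives $-U(a+b)$ at $\eps_u\eps_v=+1$ and $-U(-a+b)$ at $\eps_u\eps_v=-1$, so $J = \tfrac12\big(-U(a+b) + U(-a+b)\big) = -\tfrac12 U(a+b) + \tfrac12 U(-a+b)$, which is exactly the stated formula for $J_{\{u,v\}}$; the additive constants $\alpha$ merge into the normalization. This shows the conditional law of $\eps$ is the Ising model with coupling constants $J$. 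To see $J_{\{u,v\}}\ge 0$: since $a_{\{u,v\}} = |\sigma_u^1||\sigma_v^1|\ge 0$, we have $-a_{\{u,v\}}+b_{\{u,v\}} \le a_{\{u,v\}}+b_{\{u,v\}}$, and $U$ non-increasing \eqref{eq:ferromagnetic_def} gives $U(-a+b)\ge U(a+b)$, hence $J_{\{u,v\}}\ge 0$. (One should note that $a\pm b = \langle\sigma_u,\sigma_v\rangle$ with the sign of the first coordinate product flipped or not, and both lie in $[-1,1]$ by Cauchy–Schwarz, so $U$ is evaluated in its domain.)

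Then Theorem~\ref{thm:Griffiths_first_inequality} applied to this Ising model yields $\E\big(\eps_x\eps_y \mid (\sigma^j)_{2\le j\le n}\big)\ge 0$ almost surely (taking $A=\{x,y\}$), which is the second displayed inequality. For the first, I would write $\E(\langle\sigma_x,\sigma_y\rangle) = \E\big(|\sigma_x^1||\sigma_y^1|\,\eps_x\eps_y\big) + \sum_{j=2}^n \E(\sigma_x^j\sigma_y^j)$; conditioning the first term on $(\sigma^j)_{2\le j\le n}$ and using $\E(\eps_x\eps_y\mid\cdots)\ge 0$ together with $|\sigma_x^1||\sigma_y^1|\ge 0$ shows it is non-negative, and by symmetry of the measure under negating the $j$-th coordinate at all vertices simultaneously (for each fixed $j\ge 2$), each term $\E(\sigma_x^j\sigma_y^j)$ equals $\E(\sigma_x^1\sigma_y^1)$ after relabeling — more carefully, by the rotational symmetry of $\mathbb S^{n-1}$ and the fact that $U$ depends only on inner products, all coordinate pairs have the same correlation, so in fact $\E(\langle\sigma_x,\sigma_y\rangle) = n\,\E(\sigma_x^1\sigma_y^1) \ge 0$. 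Even without that symmetry remark, it suffices to observe that the same conditioning argument applies with coordinate $1$ replaced by any coordinate $j$, giving $\E(\sigma_x^j\sigma_y^j)\ge 0$ for every $j$, and summing gives the claim.

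The main obstacle I anticipate is bookkeeping rather than any deep difficulty: one must handle the null event $\{\sigma_v^1 = 0\}$ cleanly (it has measure zero since the pushforward of uniform measure on $\mathbb S^{n-1}$ to the first coordinate is absolutely continuous for $n\ge 2$, so it can be ignored), and one must be slightly careful that "conditioned on $(\sigma^2,\ldots,\sigma^n)$" is a well-defined regular conditional distribution — but since $\eps$ takes finitely many values and the joint density is explicit, the conditional density is just the ratio of densities and no measure-theoretic subtlety arises. The only genuinely substantive input is Griffiths' first inequality, which is already available as Theorem~\ref{thm:Griffiths_first_inequality}.
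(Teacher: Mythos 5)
Your proposal is correct and follows essentially the same route as the paper's proof: compute the conditional density of $\eps$ given $(\sigma^j)_{2\le j\le n}$, recognize it as an Ising model by writing the per-edge term as $J_{\{u,v\}}\eps_u\eps_v$ plus an $\eps$-independent constant, deduce $J_{\{u,v\}}\ge 0$ from $U$ being non-increasing, apply Theorem~\ref{thm:Griffiths_first_inequality}, and then pass back to $\E(\langle\sigma_x,\sigma_y\rangle)$ via rotational invariance and the tower property. The extra remarks you make (explicit matching to get the formula for $J$, the alternative of running the conditioning argument coordinate-by-coordinate rather than invoking $O(n)$-invariance, and the measure-zero treatment of $\{\sigma_v^1=0\}$) are all sound and merely spell out details that the paper leaves implicit.
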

\begin{proof}
	Observe that the density of $\eps$ conditioned on
	$(\sigma^j)_{2\le j\le n}$ (with respect to the uniform measure on
	$\{-1,1\}^{V(G)}$) is proportional to
	\begin{align*}
	\exp \Bigg[-\sum_{\{u,v\}\in E(G)}U(\left\langle\sigma_u,\sigma_v\right\rangle)\Bigg]
	 &= \exp\Bigg[-\sum_{\{u,v\}\in E(G)}U\Bigg(|\sigma_u^1|\cdot |\sigma_v^1|
	\eps_u \eps_v + \sum_{j=2}^n\sigma_u^j \sigma_v^j\Bigg)\Bigg] \\
	 &= \exp\Bigg[\sum_{\{u,v\}\in E(G)} \big (J_{\{u,v\}} \eps_u \eps_v + I_{\{u,v\}} \big) \Bigg] \\
	 &= I \cdot \exp\Bigg[\sum_{\{u,v\}\in E(G)} J_{\{u,v\}} \eps_u \eps_v \Bigg] ,
	\end{align*}
	where $I_{\{u,v\}}$ and $I$ are measurable with respect to $(\sigma^j)_{2\le j\le n}$.
	We conclude that, conditioned on $(\sigma^j)_{2\le
		j\le n}$, the signs $\eps$ are distributed as an Ising model on $G$
	with coupling constants $J=(J_{\{u,v\}})_{\{u,v\} \in E(G)}$.

	By the assumption that $U$ is non-increasing, the coupling constants are almost surely non-negative. Thus, Theorem~\ref{thm:Griffiths_first_inequality} implies that
	\begin{equation*}
	\E\left(\eps_x \eps_y\; |\; (\sigma^j)_{2\le j\le n}\right)\ge
	0\quad\text{almost surely for every }x,y \in V(G).
	\end{equation*}
	Finally, to see that $\E(\left\langle\sigma_x,\sigma_y\right\rangle) \ge 0$, note that
	\begin{equation*}
	\E(\left\langle\sigma_x,\sigma_y\right\rangle) = \E\left(\sum_{j=1}^n\sigma_x^j \sigma_y^j\right) = n\,\E\left(\sigma_x^1 \sigma_y^1\right),
	\end{equation*}
	as the distribution of $\sigma$ is invariant to global rotations
	(that is, for any $n\times n$ orthogonal matrix $O$, $\sigma$ has
	the same distribution as $(O\sigma_v)$, $v\in V(G)$, by the choice
	of density \eqref{eq:spin_O_n_potential_U}). In particular,
	\[ \E(\left\langle\sigma_x,\sigma_y\right\rangle) =
	n\,\E\left(\E\left(\sigma_x^1 \sigma_y^1\; |\; (\sigma^j)_{2\le j\le n}\right)\right) =
	n\,\E\left(|\sigma_x^1|\cdot |\sigma_y^1|\cdot\E\left(\eps_x \eps_y\; |\; (\sigma^j)_{2\le j\le n}\right)\right) \ge 0 . \qedhere \]
\end{proof}

We remark that Theorem~\ref{thm:conditioning_gives_Ising}
and its proof may be extended in a straightforward manner to the
case that different non-increasing potentials are placed on
different edges of the graph.

As another remark, we note that the non-negativity of $\E(\left\langle\sigma_x,\sigma_y\right\rangle)$ asserted by
Theorem~\ref{thm:conditioning_gives_Ising} may fail for
potentials which are not non-increasing. For instance, the
discussion of the anti-ferromagnetic spin $O(n)$ model in
Section~\ref{sec:definitions} shows that, on bipartite
graphs $G$ and with $x$ and $y$ on different bipartition classes,
the sign of $\E(\left\langle\sigma_x,\sigma_y\right\rangle)$ in the
spin $O(n)$ model is reversed when replacing $\beta$ by $-\beta$ in
\eqref{eq:spin_O_n_def}. A similar remark applies to the assertion
of Theorem~\ref{thm:Griffiths_first_inequality} when some of the
coupling constants are negative.

Lastly, we mention that the assumptions of
Theorem~\ref{thm:conditioning_gives_Ising} imply a stronger conclusion than the non-negativity of $\E(\left\langle\sigma_x,\sigma_y\right\rangle)$. In \cite{CohenAlloro2018} it is shown
that conditioned on $\sigma_x$, there is a version of the density of
$\sigma_y$ (with respect to the uniform measure on $\S^{n-1}$) which is a non-decreasing function of $\left\langle\sigma_x,\sigma_y\right\rangle$.

We move now to discuss the monotonicity of correlations with the
inverse temperature $\beta$ in the spin $O(n)$ model. This was first established by Griffiths
for the Ising case \cite{Gri67} and is sometimes referred to as Griffiths' second inequality. It was established by Ginibre \cite{Gin70} for the XY case (the case $n=2$) and in more general settings. Establishing or refuting such monotonicity when $n\ge 3$ is an open problem of significant interest.

We again work in the generality of the spin $O(n)$ model with non-negative coupling constants.
\begin{theorem}\label{thm:second_Griffiths_inequality}
  Let $n\in \{1,2\}$, let $N\ge 1$ be an integer and let $J = (J_{\{i,j\}})_{1\le i<j\le N}$ be non-negative.
  If $\sigma$ is sampled from the spin $O(n)$ model with coupling constants $J$ then
  \begin{equation}\label{eq:non-negative_correlation_of_correlations}
     \E\left(\left\langle\sigma_x,\sigma_y\right\rangle\cdot\left\langle\sigma_z,\sigma_w\right\rangle\right)\ge \E\left(\left\langle\sigma_x,\sigma_y\right\rangle\right)\cdot \E\left(\left\langle\sigma_z,\sigma_w\right\rangle\right)\quad\text{for all $1\le x,y,z,w\le N$}.
  \end{equation}
  In other words, the random variables $\left\langle\sigma_x,\sigma_y\right\rangle$ and $\left\langle\sigma_z,\sigma_w\right\rangle$ are non-negatively correlated.
\end{theorem}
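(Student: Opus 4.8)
The plan is to reduce the $O(2)$ case to the Ising case with general non-negative couplings, and then prove the Ising statement by a direct computation analogous to the proof of Theorem~\ref{thm:Griffiths_first_inequality}. For $n=1$, note that $\langle\sigma_x,\sigma_y\rangle = \sigma_x\sigma_y$, so \eqref{eq:non-negative_correlation_of_correlations} is exactly the claim that $\E(\sigma_x\sigma_y\sigma_z\sigma_w) \ge \E(\sigma_x\sigma_y)\E(\sigma_z\sigma_w)$, i.e.\ Griffiths' second inequality. For $n=2$, I would parametrize $\sigma_v = (\cos\theta_v,\sin\theta_v)$ and use a standard doubling trick: introduce an independent copy $\tau$ of the configuration, set $\phi_v := \theta_v + \theta'_v$ and $\psi_v := \theta_v - \theta'_v$ where $\theta'$ is the angle of $\tau$, and observe that $\langle\sigma_x,\sigma_y\rangle\langle\sigma_z,\sigma_w\rangle$ expressed in the doubled system and averaged over the product measure will be comparable to a correlation in a single system with doubled couplings. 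More precisely, I would write $\langle\sigma_x,\sigma_y\rangle = \cos(\theta_x-\theta_y)$ and expand the product of two cosines into a sum of cosines of sums/differences; the difference variables see an Ising-like (or XY-like) Gibbs weight with non-negative couplings after the change of variables, and the inequality follows from positivity of the resulting Fourier-type coefficients together with the $n=1$ case applied edgewise.

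For the Ising case, here is the concrete route. Writing $E_A := \prod_{x\in A}\sigma_x$ for $A\subseteq\{1,\dots,N\}$ and using $\E(\sigma_x\sigma_y) = \E(E_{\{x,y\}})$ (with the convention that repeated indices cancel, so $E_{\{x,x\}}\equiv 1$), the inequality to prove is $\E(E_A E_B) \ge \E(E_A)\E(E_B)$ for $A,B$ two-element sets. As in Theorem~\ref{thm:Griffiths_first_inequality}, expand $\exp[\sum_{i<j} J_{\{i,j\}}\sigma_i\sigma_j] = \sum_{m} c_m \prod \sigma^{(\text{stuff})}$ with all $c_m \ge 0$, so that $\E(E_A) = \sum_{S} d_S(A)$ where the sum is over subsets $S$ of edges, $d_S(A)\ge 0$, and $d_S(A)$ is nonzero only when the edge-set $S$, viewed as a multigraph, has odd degree exactly at the vertices of $A$. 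Then $\E(E_A E_B) - \E(E_A)\E(E_B) = \sum_{S,T} (\text{coeff})$ over pairs $(S,T)$ with $S\triangle(\text{support of }A)$ matching structure $\dots$; the key algebraic identity is that $\E(E_A)\E(E_B) = \sum_{S,T} d_S(A)d_T(B)$ while $\E(E_A E_B) = \sum_{U} d_U(A\triangle B)$, and one sets up a sign-reversing-free (i.e.\ injective, coefficient-dominating) map $(S,T)\mapsto S\triangle T$ showing every term on the right is matched by a term on the left with at least as large a coefficient. This is the classical ``two-copy'' or ``random-current''-style argument for GKS-II.

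The main obstacle is the $n=2$ reduction: unlike Theorem~\ref{thm:conditioning_gives_Ising}, which handles non-negativity of a single correlation by conditioning, here one has a product of two correlations and conditioning does not immediately linearize the problem. The cleanest fix I anticipate is Ginibre's framework: work in the ``duplicated'' configuration space $(\S^{n-1}\times\S^{n-1})^N$ with the change of variables to sum and difference spins, verify that the measure factorizes appropriately and that $\langle\sigma_x,\sigma_y\rangle - \langle\sigma_x,\sigma_y\rangle'$ and similar differences have non-negative ``Ginibre-admissible'' expansions, then invoke the abstract Ginibre inequality (whose proof is again a non-negative-coefficient expansion, parallel to the Ising computation above). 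I would state this as: the XY model with non-negative couplings belongs to Ginibre's class of systems for which $\E(fg)\ge\E(f)\E(g)$ whenever $f,g$ are in the cone generated by $\{\langle\sigma_i,\sigma_j\rangle\}$, and then \eqref{eq:non-negative_correlation_of_correlations} is the special case $f = \langle\sigma_x,\sigma_y\rangle$, $g = \langle\sigma_z,\sigma_w\rangle$. Verifying the cone membership and the positivity of the relevant integrals over $\S^1$ (which amounts to checking that certain integrals of products of $\cos$ and $\sin$ against the Gibbs weight are non-negative) is the routine-but-essential technical core.
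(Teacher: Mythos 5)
Your high-level framework is the right one, and it essentially matches the paper: you correctly identify that the reduction is to Ginibre's ``second copy'' argument. The paper introduces two independent copies $\sigma,\sigma'$, writes $2\Cov\bigl(\langle\sigma_x,\sigma_y\rangle,\langle\sigma_z,\sigma_w\rangle\bigr)$ as an expectation of a product of differences $S^-_{\{i,j\}}=\langle\sigma_i,\sigma_j\rangle-\langle\sigma'_i,\sigma'_j\rangle$, Taylor-expands the doubled Gibbs weight in powers of the sums $S^+_{\{i,j\}}$, and then reduces everything to a single integral-positivity statement, Lemma~\ref{lem:Ginibre_inequality}: integrals of arbitrary products of the $S^-$'s and $S^+$'s against $d\sigma\, d\sigma'$ are non-negative. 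Your fourth paragraph describes exactly this plan.

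The genuine gap is your characterization of Lemma~\ref{lem:Ginibre_inequality} as ``routine'' and as ``again a non-negative-coefficient expansion, parallel to the Ising computation.'' That is not what makes the $n=2$ case work, and a coefficient-expansion approach does not obviously close it. The actual argument in the paper is geometric: using the sum-to-product identities one writes $S^-_{\{i,j\}}=-2\sin(\tfrac{\theta_i+\theta'_i}{2}-\tfrac{\theta_j+\theta'_j}{2})\sin(\tfrac{\theta_i-\theta'_i}{2}-\tfrac{\theta_j-\theta'_j}{2})$ and $S^+_{\{i,j\}}=2\cos(\cdots)\cos(\cdots)$, so the whole integrand factors as $F(\theta+\theta')F(\theta-\theta')$ for a single real function~$F$ (after disposing of the odd-total-degree case by the swap $\sigma\leftrightarrow\sigma'$). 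The nontrivial step is then a rearrangement of the domain of integration: $[-2\pi,2\pi]\times[-\pi,\pi]$ per coordinate must be cut and translated by multiples of $2\pi$ to become the $45^\circ$-rotated square $\{|\theta\pm\theta'|\le 2\pi\}$, after which the change of variables $(\theta,\theta')\mapsto(\theta+\theta',\theta-\theta')$ makes the integral literally a square $\bigl(\int F\bigr)^2\ge 0$. Your suggestion to ``expand the product of two cosines into a sum'' goes in the wrong algebraic direction (sum instead of product) and would not produce this factorization; without the factorization and the domain rearrangement there is no square to exhibit.

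Two smaller remarks. First, for $n=1$ your sketch via a random-current/switching-style injection is a valid and well-known alternative (Ginibre's 1969 reduction of GKS-II to GKS-I is of this type), whereas the paper treats $n=1$ and $n=2$ uniformly through Lemma~\ref{lem:Ginibre_inequality}; the $n=1$ case of that lemma has its own subtlety (the discrete domain rearrangement on $\pi\Z$ must be chosen so that overlaps have measure zero). Second, your intermediate suggestion that the doubled system is ``comparable to a correlation in a single system with doubled couplings'' is not something the proof uses and is not accurate; the doubled Gibbs weight couples the two copies only through $S^+$ terms, which is precisely what lets the Taylor expansion hand things to Ginibre's lemma.
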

The theorem implies that each correlation $\E\left(\left\langle\sigma_x,\sigma_y\right\rangle\right)$ is a monotone non-decreasing function of each coupling constant $J_{\{z,w\}}$. Indeed, in the setting of the theorem, one checks in a straightforward manner that, for all $1\le x,y\le N$ and $1\le z<w\le N$,
\begin{equation*}
  \frac{\partial}{\partial J_{\{z,w\}}} \E\left(\left\langle\sigma_x,\sigma_y\right\rangle\right) = \E\left(\left\langle\sigma_x,\sigma_y\right\rangle\cdot\left\langle\sigma_z,\sigma_w\right\rangle\right)-\E\left(\left\langle\sigma_x,\sigma_y\right\rangle\right)\cdot \E\left(\left\langle\sigma_z,\sigma_w\right\rangle\right)\stackrel{\eqref{eq:non-negative_correlation_of_correlations}}{\ge} 0.
\end{equation*}
This monotonicity property is exceedingly useful as it allows to compare the correlations of the spin $O(n)$ model on different graphs by taking limits as various coupling constants tend to zero or infinity (corresponding to deletion or contraction of edges of the graph). For instance, one may use it to establish the existence of the infinite-volume (thermodynamic) limit of correlations in the spin $O(n)$ model ($n\in \{1,2\}$) on $\Z^d$, or to compare the behavior of the model in different spatial dimensions $d$.

The following lemma, introduced by Ginibre~\cite{Gin70}, is a key step in the proof of Theorem~\ref{thm:second_Griffiths_inequality}. Sylvester \cite{Syl80} has found counterexamples to the lemma when $n\ge 3$.
\begin{lemma}\label{lem:Ginibre_inequality}
Let $n\in\{1,2\}$ and let $N\ge 1$ be an integer. Then for every choice of non-negative integers $(k_{\{i,j\}}), (\ell_{\{i,j\}})$, $1\le i<j\le N$, we have
\begin{equation}\label{eq:Ginibre_inequality}
   \int\int \prod_{1\le i<j\le N} (\left\langle\sigma_i,\sigma_j\right\rangle - \left\langle\sigma'_i,\sigma'_j\right\rangle)^{k_{\{i,j\}}}\cdot (\left\langle\sigma_i,\sigma_j\right\rangle + \left\langle\sigma'_i,\sigma'_j\right\rangle)^{\ell_{\{i,j\}}} d\sigma d\sigma' \ge 0,
\end{equation}
where, as before, $d\sigma$ and $d\sigma'$ denote the uniform probability measure on $(\S^{n-1})^{N}$.
\end{lemma}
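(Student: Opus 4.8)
\emph{Overall approach.} Both cases will rest on the \emph{replica trick}: the double integral is the expectation over two independent copies $\sigma,\sigma'$, and the plan is to perform a measure-preserving change of variables on the pair $(\sigma,\sigma')$ that simultaneously turns $\langle\sigma_i,\sigma_j\rangle-\langle\sigma'_i,\sigma'_j\rangle$ and $\langle\sigma_i,\sigma_j\rangle+\langle\sigma'_i,\sigma'_j\rangle$ into products of a factor depending only on one new variable and a factor depending only on the other. After the substitution the integrand factorizes, and the integral becomes --- up to an explicit sign $(-1)^{\sum_{i<j}k_{\{i,j\}}}$ --- a perfect square when $n=2$ and a product of two manifestly non-negative integrals when $n=1$; it then remains only to rule out a harmful sign.

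\emph{The case $n=2$.} Write $\sigma_v=(\cos\theta_v,\sin\theta_v)$. The integrand is a trigonometric polynomial in the $\theta_v,\theta'_v$ that is $2\pi$-periodic in each variable, so the value of the integral is unchanged if each $\theta_v$ and each $\theta'_v$ is instead taken uniform on $\R/4\pi\Z$; this enlargement of the circle is the technical device that makes the half-angle expressions below single-valued. Using $\cos A-\cos B=-2\sin\tfrac{A+B}{2}\sin\tfrac{A-B}{2}$ and $\cos A+\cos B=2\cos\tfrac{A+B}{2}\cos\tfrac{A-B}{2}$ with $A=\theta_i-\theta_j$ and $B=\theta'_i-\theta'_j$, and writing $\phi_v:=\theta_v+\theta'_v$, $\psi_v:=\theta_v-\theta'_v$, the integrand becomes
\[ (-1)^{\sum_{i<j}k_{\{i,j\}}}\;2^{\sum_{i<j}(k_{\{i,j\}}+\ell_{\{i,j\}})}\;G(\phi)\,G(\psi),\qquad G(x):=\prod_{i<j}\Bigl(\sin\tfrac{x_i-x_j}{2}\Bigr)^{k_{\{i,j\}}}\Bigl(\cos\tfrac{x_i-x_j}{2}\Bigr)^{\ell_{\{i,j\}}}, \]
which is now a genuine function of $x\in(\R/4\pi\Z)^{N}$. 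Since $(\theta,\theta')\mapsto(\phi,\psi)$ is a surjective endomorphism of the compact group $(\R/4\pi\Z)^{2N}$, it pushes Haar measure to Haar measure, so $\phi$ and $\psi$ are again independent and uniform and the integral equals $(-1)^{\sum_{i<j}k_{\{i,j\}}}\,2^{\sum_{i<j}(k_{\{i,j\}}+\ell_{\{i,j\}})}\bigl(\int G\bigr)^2$. Finally $G(-x)=(-1)^{\sum_{i<j}k_{\{i,j\}}}G(x)$ and $x\mapsto -x$ preserves the measure, so $\int G=0$ whenever $\sum_{i<j}k_{\{i,j\}}$ is odd; in every case $(-1)^{\sum_{i<j}k_{\{i,j\}}}\bigl(\int G\bigr)^2\ge 0$, proving the claim for $n=2$.

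\emph{The case $n=1$.} Here $\sigma,\sigma'$ are independent and uniform on $\{-1,1\}^N$ and $\langle\sigma_i,\sigma_j\rangle=\sigma_i\sigma_j$. Substitute $\tau_v:=\sigma_v\sigma'_v$; then $(\sigma,\sigma')\mapsto(\sigma',\tau)$ is a bijection of $\{-1,1\}^{2N}$ under which $\sigma'$ and $\tau$ are again independent and uniform, and $\langle\sigma_i,\sigma_j\rangle\pm\langle\sigma'_i,\sigma'_j\rangle=\sigma'_i\sigma'_j\,(\tau_i\tau_j\pm1)$, so the integral factorizes as
\[ \E\Bigl[\prod_{i<j}(\sigma'_i\sigma'_j)^{k_{\{i,j\}}+\ell_{\{i,j\}}}\Bigr]\cdot\E\Bigl[\prod_{i<j}(\tau_i\tau_j-1)^{k_{\{i,j\}}}(\tau_i\tau_j+1)^{\ell_{\{i,j\}}}\Bigr]. \]
The first factor equals $\prod_v\E[(\sigma'_v)^{d_v}]\in\{0,1\}$ with $d_v:=\sum_{j\ne v}(k_{\{v,j\}}+\ell_{\{v,j\}})$ --- a special case of Theorem~\ref{thm:Griffiths_first_inequality} --- hence non-negative. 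In the second factor, the edge factor $(\tau_i\tau_j-1)^{k}(\tau_i\tau_j+1)^{\ell}$ vanishes identically whenever $k_{\{i,j\}}\ge1$ and $\ell_{\{i,j\}}\ge1$ (as $\tau_i\tau_j\in\{-1,1\}$), so we may assume no edge has both; then the product over edges is non-zero precisely on the event $\mathcal E$ that $\tau_i\tau_j=-1$ on every edge with $k_{\{i,j\}}\ge1$ and $\tau_i\tau_j=1$ on every edge with $\ell_{\{i,j\}}\ge1$, where it takes the constant value $(-1)^{\sum_{i<j}k_{\{i,j\}}}2^{\sum_{i<j}(k_{\{i,j\}}+\ell_{\{i,j\}})}$, so the second factor equals $(-1)^{\sum_{i<j}k_{\{i,j\}}}2^{\sum_{i<j}(k_{\{i,j\}}+\ell_{\{i,j\}})}\Pr(\mathcal E)$. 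If $\Pr(\mathcal E)=0$ or the first factor vanishes the product is $0$; otherwise all $d_v$ are even and $\mathcal E\ne\emptyset$, say $\tau_v=(-1)^{s_v}$, and then, modulo $2$, $\sum_{i<j}k_{\{i,j\}}\equiv\sum_{i<j}(k_{\{i,j\}}+\ell_{\{i,j\}})(s_i+s_j)=\sum_v s_v d_v\equiv0$, so the sign is $+1$ and the product is strictly positive.

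\emph{The main obstacle.} The decoupling substitution is easy to write down, and once it is in place the factorization and the \emph{perfect square} (respectively \emph{product of non-negatives}) structure is immediate; I expect the two genuine points of care to be (i)~verifying that the substitution is honestly measure-preserving --- this is what forces the detour through $\R/4\pi\Z$ when $n=2$, since $\sin\tfrac x2$ and $\cos\tfrac x2$ fail to be $2\pi$-periodic --- and (ii)~the bookkeeping showing that the global sign $(-1)^{\sum_{i<j}k_{\{i,j\}}}$ can only equal $-1$ in parameter configurations for which the companion integral already vanishes (by the $x\mapsto-x$ symmetry of $G$ when $n=2$, and by the parity identity $\sum_v s_v d_v\equiv0$ above when $n=1$).
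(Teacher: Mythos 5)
Your proof is correct, and while it rests on the same underlying change of variables $(\theta,\theta')\mapsto(\theta+\theta',\theta-\theta')$ as the paper's for $n=2$, both implementations differ from the paper's in ways worth noting. For $n=2$, the paper works directly on the product of circles, observes that the half-angle expressions $\sin\tfrac{x}{2}$, $\cos\tfrac{x}{2}$ are only $4\pi$-periodic, and copes by literally cutting the rectangular fundamental domain $[-2\pi,2\pi]\times[-\pi,\pi]$ into pieces and translating them to reassemble a $45^\circ$-rotated square $\{|\theta\pm\theta'|\le 2\pi\}$; you instead pass to the doubled circle $\R/4\pi\Z$ where the half-angle functions are honestly single-valued and the sum/difference map is a surjective endomorphism of a compact abelian group, so measure-preservation is immediate --- arguably cleaner, as it replaces the geometric rearrangement by a one-line group-theoretic fact, and it dispenses with the preliminary reduction to $\sum_{i<j}k_{\{i,j\}}$ even since the sign is absorbed into $(\int G)^2\ge 0$ and the vanishing of $\int G$ for odd $\sum k$. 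For $n=1$, your argument is genuinely different: the paper runs the \emph{same} $n=2$ argument on the discrete two-point circle $\{0,\pi\}$, which forces a careful (and slightly delicate) discrete version of the domain rearrangement where boundary overlaps have positive measure, whereas you bypass the circle picture entirely via the substitution $\tau_v=\sigma_v\sigma'_v$, factor the expectation into a product of two terms, and close with an elementary parity identity $\sum_{i<j}k_{\{i,j\}}\equiv\sum_v s_v d_v\equiv 0\pmod 2$ showing the offending sign is never $-1$ when the factors are nonzero. This gives a fully self-contained $n=1$ proof that does not need the discrete-circle bookkeeping at all. Both of your technical flags --- measure-preservation of the substitution and bookkeeping of the global sign --- are indeed the two places where something must be checked, and you handle both correctly.
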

\begin{proof}
The change of variables $(\sigma, \sigma')\mapsto(\sigma', \sigma)$ preserves the measure $d\sigma d\sigma'$ and reverses the sign of each term of the form $\left\langle\sigma_i,\sigma_j\right\rangle - \left\langle\sigma'_i,\sigma'_j\right\rangle$ while keeping terms of the form $\left\langle\sigma_i,\sigma_j\right\rangle + \left\langle\sigma'_i,\sigma'_j\right\rangle$ fixed. The lemma thus follows in the case that $\sum_{1\le i<j\le N} k_{\{i,j\}}$ is odd as the integral in \eqref{eq:Ginibre_inequality} evaluates to zero. Let us then assume that
\begin{equation}\label{eq:parity_condition}
  \sum_{1\le i<j\le N}  k_{\{i,j\}}\;\;\;\text{is even}.
\end{equation}
  Identifying $\S^1$ with the unit circle in the complex plane and using that $n\in\{1,2\}$, we may express the spins as $\sigma_j = e^{i\theta_j}$ and $\sigma'_j = e^{i\theta'_j}$. With this notation, we have
  \begin{equation}\label{eq:first_trig_equality}
  \begin{split}
    \left\langle\sigma_i,\sigma_j\right\rangle - \left\langle\sigma'_i,\sigma'_j\right\rangle &= \cos(\theta_i - \theta_j) - \cos(\theta'_i - \theta'_j)\\
     &= -2\sin\left(\frac{\theta_i + \theta'_i}{2} - \frac{\theta_j + \theta'_j}{2}\right)\sin\left(\frac{\theta_i - \theta'_i}{2} - \frac{\theta_j - \theta'_j}{2}\right) ,
  \end{split}
  \end{equation}
  and similarly,
  \begin{equation*}
    \left\langle\sigma_i,\sigma_j\right\rangle + \left\langle\sigma'_i,\sigma'_j\right\rangle = 2\cos\left(\frac{\theta_i + \theta'_i}{2} - \frac{\theta_j + \theta'_j}{2}\right)\cos\left(\frac{\theta_i - \theta'_i}{2} - \frac{\theta_j - \theta'_j}{2}\right).
  \end{equation*}
  Thus, using \eqref{eq:parity_condition} to cancel the minus sign in the right-hand side of \eqref{eq:first_trig_equality}, we may write
  \begin{multline*}
    \int\int \prod_{1\le i<j\le N} (\left\langle\sigma_i,\sigma_j\right\rangle - \left\langle\sigma'_i,\sigma'_j\right\rangle)^{k_{\{i,j\}}}\cdot (\left\langle\sigma_i,\sigma_j\right\rangle + \left\langle\sigma'_i,\sigma'_j\right\rangle)^{\ell_{\{i,j\}}} d\sigma d\sigma'\\
    = \int \int F(\theta + \theta')F(\theta-\theta')d\sigma d\sigma' =: I
  \end{multline*}
  for a real-valued function $F$, satisfying the condition that $F(\theta + \theta')F(\theta-\theta')$ remains invariant when adding integer multiplies of $2\pi$ to any of the coordinates of $\theta$ or to any of the coordinates of $\theta'$. We now consider the cases $n=1$ and $n=2$ separately.

  Suppose first that $n=2$. Writing $d\theta, d\theta'$ for Lebesgue measure on $\R^N$, and using the above invariance property of $F$, we have
  \[ I = \frac{1}{(8\pi^2)^N}\int_{[-2\pi,2\pi]^N} \int_{[-\pi,\pi]^N} F(\theta + \theta')F(\theta-\theta')d\theta d\theta' .\]
  One may regard the domain of integration above as $([-2\pi,2\pi] \times [-\pi,\pi])^N$. Consider $E_0 := [-2\pi,2\pi] \times [-\pi,\pi]$, the projection of this domain onto one coordinate of $(\theta,\theta')$.
We shall split this domain into pieces and then rearrange them so as to obtain a square domain with side-length $2\sqrt{2}\pi$ rotated by 45 degrees and symmetric about the origin, i.e., the domain defined by $E_1 := \{ (\theta,\theta') \in \R^2 : |\theta \pm \theta'|\le 2\pi \}$. Indeed, each of the differences $E_0 \setminus E_1$ and $E_1 \setminus E_0$ consists of four triangular pieces, each being an isosceles right triangle with side-length $\pi$ and sides parallel to the axis, so that these pieces can be rearranged to obtain $E_1$ from $E_0$. In fact, the only operations involved in this procedure are translations by multiples of $2\pi$ in the direction of the axes. Thus, using the above invariance property of $F$, we conclude that $I$ can be written as
\[ I = \frac{1}{(8\pi^2)^{N}}\int \int_{(E_1)^N} F(\theta + \theta')F(\theta-\theta') d\theta d\theta'. \]
  The change of variables $(\theta,\theta')\mapsto(\theta + \theta', \theta - \theta')$ now shows that $I$ is the square of an integral of a real-valued function and hence is non-negative.

  The case $n=1$ is treated similarly, though one must take extra care in handling boundaries between domains of integration, as these no longer need to have measure zero. Writing $d\theta,d\theta'$ for the counting measure on $(\pi\Z)^N$, we have
\[ I = \frac{1}{8^N} \int_{\{-\pi,0,\pi,2\pi\}^N} \int_{\{0,\pi\}^N} F(\theta + \theta')F(\theta-\theta')d\theta d\theta' .\]
As before, we consider a single coordinate of $(\theta,\theta')$.
Observe that there is quite some freedom in changing the domain of integration $E_0 := \{-\pi,0,\pi,2\pi\} \times \{0,\pi\}$ without effecting the integral. Consider for instance the domain $E'_0$ obtained from $E_0$ by removing the points $\{(-\pi,\pi),(2\pi,\pi)\}$ and adding $\{(0,-\pi),(\pi,-\pi)\}$ instead. By the invariance property of $F$, the integral on $E'_0$ is the same as on $E_0$. To conclude as before that $I$ is non-negative, it suffices to find a domain of integration $E_1$, which coincides with $E'_0$ on $(\pi\Z)^2$, and which is a 45-degree rotated square (i.e., the product of an interval with itself in the $(\theta+\theta',\theta-\theta')$ coordinates). Indeed, one may easily verify that $E_1 := \{ (\theta,\theta') : -3/2 \le \theta \pm \theta' \le 5/2  \}$ is such a domain.
\end{proof}

\begin{proof}[Proof of Theorem~\ref{thm:second_Griffiths_inequality}]
	Let $\sigma$ and $\sigma'$ be two independent samples from the spin $O(n)$ model with coupling constants $J$. Then
\[ 2 \Cov\left(\left\langle\sigma_x,\sigma_y\right\rangle, \left\langle\sigma_z,\sigma_w\right\rangle\right) = \E\big[ \left(\left\langle\sigma_x,\sigma_y\right\rangle - \left\langle\sigma'_x,\sigma'_y\right\rangle\right)\cdot\left(\left\langle\sigma_z,\sigma_w\right\rangle-\left\langle\sigma'_z,\sigma'_w\right\rangle\right)\big] .\]
Thus, it suffices to show that the expectation on the right-hand side is non-negative.
Indeed, denoting $S^{\pm}_{\{i,j\}} := \left\langle\sigma_i,\sigma_j\right\rangle \pm \left\langle\sigma'_i,\sigma'_j\right\rangle$, this expectation is equal to
\[ \frac{1}{\big(Z^{\text{spin}}_{n,J}\big)^2} \int \int S^-_{\{x,y\}} \cdot S^-_{\{z,w\}} \cdot \exp \left[\sum_{1\le i<j\le N} J_{\{i,j\}} S^+_{\{i,j\}}\right] d\sigma d\sigma' ,\]
which, by expanding the exponent into a Taylor's series, is equal to
\[ \frac{1}{\big(Z^{\text{spin}}_{n,J}\big)^2} \sum_{m \in \{0,1,2,\dots\}^{\{\{i,j\}:1 \le i<j \le N\}}} C_m \int \int S^-_{\{x,y\}} \cdot S^-_{\{z,w\}} \cdot \prod_{1\le i<j\le N} \left(S^+_{\{i,j\}}\right)^{m_{\{i,j\}}} \, d\sigma d\sigma' ,\]
where each $C_m$ is non-negative and the series is absolutely convergent. The desired non-negativity now follows from Lemma~\ref{lem:Ginibre_inequality}.
\end{proof}

We are not aware of other proofs for Griffiths' second inequality, Theorem~\ref{thm:second_Griffiths_inequality}, for the XY model ($n=2$). The above proof may also be adapted to treat clock models, models of the XY type in which the spin is restricted to roots of unity of a given order (the ticks of the clock), see \cite{Gin70}. Alternative approaches are available in the Ising case ($n=1$): One proof relies on positive association (FKG) for the corresponding random-cluster model (see also Remark~\ref{rem:edwards_sokal}). A different argument of Ginibre~\cite{ginibre1969simple} deduces Theorem~\ref{thm:second_Griffiths_inequality} directly from Theorem~\ref{thm:Griffiths_first_inequality}.

\subsection{High-temperature expansion}\label{sec:high-temperature_expansion}

At infinite temperature ($\beta=0$) the models are completely
disordered, having all spins independent and uniformly distributed on $\S^{n-1}$. In this section we show that the disordered phase extends to high, but finite, temperatures (small positive $\beta$). Specifically, we show that the models exhibit
exponential decay of correlations in this regime, as stated in \eqref{eq:exponential_decay_high_temperatures} and \eqref{eq:exponential_decay_1d}.

We begin by expanding the partition function of the model on an arbitrary finite graph $G = (V(G), E(G))$ in the following manner. Denoting $f_\beta(s,t) := \exp \big[\beta \big(\langle s,t
\rangle + 1\big)\big] - 1$ for $s,t \in \S^{n-1}$, we have
\begin{equation}\label{eq:high-temperature_expansion}
\begin{split}
Z^{\text{spin}}_{G,n,\beta}
 &= \int_{\Omega} \prod_{\{u,v\}\in
    E(G)} \exp \left[\beta \left\langle\sigma_u,\sigma_v\right\rangle\right] d\sigma
 = e^{-\beta |E(G)|} \int_{\Omega} \prod_{\{u,v\}\in
    E(G)} \exp \left[\beta \left(\left\langle\sigma_u,\sigma_v\right\rangle + 1\right)\right] d\sigma \\
 &= e^{-\beta |E(G)|} \int_{\Omega} \prod_{\{u,v\}\in
    E(G)} \big( 1 + f_\beta(\sigma_u,\sigma_v) \big) d\sigma
 = e^{-\beta |E(G)|} \sum_{E \subset E(G)} \int_{\Omega} \prod_{\{u,v\}\in
    E} f_\beta(\sigma_u,\sigma_v) d\sigma.
\end{split}
\end{equation}

\medbreak
\noindent {\bf Exercise.}
Verify the last equality in the above expansion by showing that for any $(x_e)_{e \in \mathcal{E}}$,
\[ \prod_{e \in \mathcal{E}} (1+x_e) = \sum_{E \subset \mathcal{E}} \prod_{e \in E} x_e .\]
\medbreak
Thus, we have
\begin{equation}\label{eq:Z_high_temp_expansion}
Z^{\text{spin}}_{G,n,\beta} = e^{-\beta |E(G)|} \sum_{E \subset
E(G)} Z(E) ,
\end{equation}
where
\begin{equation}\label{eq:Z_of_E_def}
Z(E) := \int_{\Omega} \prod_{\{u,v\}\in E}
f_\beta(\sigma_u,\sigma_v) d\sigma .
\end{equation}
Since $f_\beta$ is non-negative, we may interpret~\eqref{eq:Z_high_temp_expansion} as prescribing a
probability measure on (spanning) subgraphs of $G$, where the
subgraph $(V(G),E)$ has probability proportional to $Z(E)$.
Furthermore, given such a subgraph, we may
interpret~\eqref{eq:Z_of_E_def} as prescribing a probability measure
on spin configurations $\sigma$, whose density with respect to
$d\sigma$ is proportional to
\[ Z(E,\sigma) := \prod_{\{u,v\}\in E} f_\beta(\sigma_u,\sigma_v) .\]

\begin{remark}\label{rem:edwards_sokal}
  For the Ising model ($n=1$), the above joint distribution on the graph $(V(G), E)$ and spin configuration $\sigma$ is called the \emph{Edwards--Sokal coupling} \cite{EdwSok88}. Here, the marginal probability of $E$ is proportional to
  \begin{equation}\label{eq:FK_model}
    q^{N(E)} p^{|E|} \left(1-p\right)^{|E(\T_L^d)|\setminus |E|}\quad\text{with $q = 2$ and $p = 1 - \exp(-2\beta)$},
  \end{equation}
  where $N(E)$ stands for the number of connected components in $(V(G),E)$. Moreover, given~$E$, the spin configuration $\sigma$ is sampled by independently assigning to the vertices in each connected component of $(V(G),E)$ the same spin value, picked uniformly from $\{-1,1\}$. The marginal distribution~\eqref{eq:FK_model} of $E$ is the famous \emph{Fortuin--Kasteleyn (FK) random-cluster model}, which makes sense also for other values of $p$ and $q$ \cite{Gri06}. Both the Edwards--Sokal coupling and the FK model are available also for the more general \emph{Potts models}.

  The Edwards--Sokal coupling immediately implies that, for the Ising model, the correlation $\rho_{x,y}=\E(\sigma_x \sigma_y)$ equals the probability that $x$ is connected to $y$ in the graph $(V(G), E)$. In particular, $\rho_{x,y}$ is non-negative (as in Theorem~\ref{thm:Griffiths_first_inequality}) and, as connectivity probabilities in the FK model (with $q\ge 1$) are non-decreasing with $p$ \cite[Theorem 3.21]{Gri06}, it follows also that $\rho_{x,y}$ is non-decreasing with the inverse temperature $\beta$ (as in Theorem~\ref{thm:second_Griffiths_inequality}).
\end{remark}

\begin{remark}
Conditioned on $E$, the spin configuration $\sigma$ may be seen as a sample from the spin $O(n)$ model on the graph $(V(G),E)$ with potential $U(x) := -\log(\exp(\beta(1+x))-1)$. That is, conditioned on $E$, the distribution of $\sigma$ is given by $\mu_{(V(G),E),n,U}$.
\end{remark}

It follows from the last remark that, conditioned on $E$,
\begin{align*}
&\text{If $x \in V(G)$ then $\sigma_x$ is distributed uniformly on $\S^{n-1}$} .\\
&\text{If $x,y \in V(G)$ are not connected in $(V(G),E)$ then
$\sigma_x$ and $\sigma_y$ are independent}.
\end{align*}
Hence, we deduce that $\E(\left\langle\sigma_x,\sigma_y\right\rangle \mid E)=0$
when $x$ and $y$ are not connected in $(V(G),E)$. Since
$|\langle\sigma_x,\sigma_y\rangle| \le 1$, we obtain
\[ |\rho_{x,y}| \le \Pr(\text{$x$ and $y$ are connected in $(V(G),E)$}) ,\]
where $E$ is a random subset of $E(G)$ chosen according to the above
probability measure. Thus, to establish the decay of correlations,
it suffices to show that long connections in $E$ are very unlikely. We
first show that
\begin{equation}\label{eq:prob_of_edge}
\text{for any $e \in E(G)$ and $E_0 \subset E(G) \setminus \{e\}$,
}\quad\Pr(e \in E \mid E \setminus \{e\} = E_0) \le 1 - e^{-2\beta} .
\end{equation}
Indeed,
\[ \Pr(e \in E \mid E \setminus \{e\} = E_0)
 = \frac{Z(E_0 \cup \{e\})}{Z(E_0 \cup \{e\}) + Z(E_0)} = \frac{1}{1 + \frac{Z(E_0)}{Z(E_0 \cup \{e\})}} ,\]
and denoting $e=\{u,v\}$ and noting that $f_\beta(s,t) \le \exp(2\beta) - 1$,
\begin{align*}
\frac{Z(E_0 \cup \{e\})}{Z(E_0)}
 = \frac{\int_{\Omega} Z(E_0 \cup \{e\},\sigma) d\sigma}{\int_{\Omega} Z(E_0,\sigma) d\sigma}
 = \frac{\int_{\Omega} Z(E_0,\sigma) f_\beta(\sigma_u,\sigma_v) d\sigma}{\int_{\Omega} Z(E_0,\sigma) d\sigma}
 \le e^{2\beta} - 1 .
\end{align*}
Repeated application of~\eqref{eq:prob_of_edge} now yields that the
probability that $E$ contains any fixed $k$ edges is exponentially
small in $k$. Namely,
\[ \text{for any }e_1,\dots,e_k \in E(G),\quad \Pr(e_1,\dots,e_k \in E) \le \big(1 - e^{-2\beta} \big)^k .\]

We now specialize to the case $G = \T_L^d$ (in fact, the only property of $\T_L^d$ we use is that its maximum degree is $2d$).
Since the event that $x$ and $y$ are connected in $(V(G),E)$ implies
the existence of a simple path in $E$ of some length $k \ge \|x-y\|_1$ starting at
$x$, and since the number of such paths is at most
$2d(2d-1)^{k-1} \le 2(2d-1)^k$, we obtain
\begin{align*}
\Pr(\text{$x$ and $y$ are connected in $(V(G),E)$})
 &\le \sum_{k=\|x-y\|_1}^\infty 2(2d-1)^k (1 - e^{-2\beta})^k \\
 &\le C_{d,\beta} \Big((2d-1) (1 - e^{-2\beta})\Big)^{\|x-y\|_1} ,
\end{align*}
when $(2d-1) (1 - e^{-2\beta}) < 1$. Thus, we have established that
\[ |\rho_{x,y}| \le C_{d,\beta} \exp\left( -c_{d,\beta} \|x-y\|_1 \right) \quad\text{when }\beta < \frac{1}{2} \log \left( \frac{2d-1}{2d-2} \right) .\]

\begin{remark}\label{rem:Fisher_bound}
This gives exponential decay in dimension $d\ge 2$ whenever $\beta \le 1/4d$ and in one dimension for all finite $\beta$.

Fisher \cite{fisher1967critical} established an improved lower bound for the critical inverse temperature $\beta_c(d)$ for long-range order in the $d$-dimensional Ising model, showing that $\tanh(\beta_c(d))\ge \frac{1}{\mu(d)}$, where $\mu(d)$ is the connective constant of $\Z^d$ (the exponential growth rate of the number of self-avoiding walks of length $n$ on $\Z^d$ as $n\to\infty$). Since there are fewer self-avoiding walks than non-backtracking walks, we have the simple bound $\mu(d)\le 2d-1$, which implies that $\beta_c(d) \ge \frac{1+o(1)}{2d}$ as $d\to\infty$. A similar bound was proved by Griffiths \cite{griffiths1967correlations}. Simon \cite{simon1980mean} establishes a bound of the same type for spin $O(n)$ models with $n\ge 2$, proving the absence of spontaneous magnetization when $\beta<\frac{n}{2d}$. An upper bound with matching asymptotics as $d\to\infty$ is proved via the so-called infra-red bound in Section~\ref{sec:infra-red_bound} below.

Fisher's technique is based on a Kramers--Wannier \cite{KraWan41} expansion of the Ising model partition function. This expansion, different from \eqref{eq:high-temperature_expansion}, relates the model to a probability distribution over even subgraphs (subgraphs in which the degrees of all vertices are even). A special case of the expansion is described in Section~\ref{sec:loop-spin-relation} (see remark there).
\end{remark}

\subsection{Low-temperature Ising model - the Peierls argument}\label{sec:low_temperature_Ising}
One can approach the low-temperature Ising model using the Kramers--Wannier expansion mentioned in Remark~\ref{rem:Fisher_bound} and Section~\ref{sec:loop-spin-relation}.
Here, however, we follow a slightly different route, presenting the classical Peierls argument \cite{Pei36} which is useful in many similar contexts.

Let $G$ be a finite connected graph and let $x,y \in V(G)$ be two vertices.
We begin by noting that in the Ising model, since spins take values
in $\{-1,1\}$, we may write the correlations in the following form:
\[ \rho_{x,y} = \E(\sigma_x \sigma_y) = \Pr(\sigma_x = \sigma_y) - \Pr(\sigma_x \neq \sigma_y) = 1 - 2 \Pr(\sigma_x \neq \sigma_y) .\]
Thus, to establish a lower bound on the correlation, we must provide
an upper bound on the probability that the spins at $x$ and $y$ are
different. To this end, we require some definitions.
Given a set of vertices $A \subset V(G)$, we denote the \emph{edge-boundary}
of $A$, the set of edges in $E(G)$ with precisely one endpoint
in $A$, by $\partial A$. A \emph{contour} is a set of edges $\gamma
\subset E(G)$ such that $\gamma = \partial A$ for some $A
\subset V(G)$ satisfying that both $A$ and $A^c := V(G)
\setminus A$ are induced connected (non-empty) subgraphs of $G$. Thus, a
contour can be identified with a partition of the set of vertices of
$G$ into two connected sets. We say that $\gamma$ \emph{separates} two
vertices $x$ and $y$ if they belong to different sets of the
corresponding partition.
The \emph{length of a contour} is the number of
edges it contains.

\medbreak
\noindent{\bf Exercise.}
A set of edges $\gamma$ is a contour if and only if $\gamma$ is cutset (i.e., the removal of $\gamma$ disconnects the graph) which is minimal with respect to inclusion (i.e., no proper subset of $\gamma$ is also a cutset).
\medbreak

Let $\sigma$ be a spin configuration. We say that $\gamma$ is an
\emph{interface} (with respect to $\sigma$) if $\gamma$ is a contour
separating $x$ and $y$ such that
\[ \sigma_u \neq \sigma_v \quad\text{ for all }\{u,v\} \in \gamma .\]
The first step in the proof is the following observation:
\begin{equation}\label{eq:low_temp_Ising_interface_exists}
\text{if $\sigma_x \neq \sigma_y$ then there exists an interface.}
\end{equation}
Indeed, if $\sigma_x \neq \sigma_y$ then the connected component of
$\{ u \in V(G) : \sigma_u = \sigma_x \}$ containing $x$, which
we denote by $B$, does not contain $y$. Hence, if we denote the
connected component of $B^c$ containing $y$ by $A$, then $\gamma :=
\partial A$ is a contour separating $x$ and $y$. Moreover, it is
easy to check that $\sigma_u = \sigma_x$ and $\sigma_v = \sigma_y$
for all $\{u,v\} \in \gamma$ such that $u \in A^c$ and $v \in A$, so
that $\gamma$ is an interface.

Next, we show that for any fixed contour $\gamma$ of length $k$,
\begin{equation}\label{eq:low_temp_Ising_prob_of_interface}
\Pr(\gamma\text{ is an interface}) \le e^{-2\beta k} .
\end{equation}
To see this, let $\{A,A^c\}$ be the partition corresponding to
$\gamma$ and, given a spin configuration $\sigma$, consider the
modified spin configuration $\sigma'$ in which the spins in $A$ are
flipped, i.e.,
\[ \sigma'_u := \begin{cases}
 -\sigma_u &\text{if } u \in A\\
 \sigma_u &\text{if } u \in A^c
\end{cases} .\]
Observe that if $\gamma$ is an interface with respect to $\sigma$
then
\[ \sum_{\{u,v\}\in E(G)}\sigma'_u \sigma'_v - \sum_{\{u,v\}\in E(G)}\sigma_u \sigma_v = \sum_{\{u,v\}\in \gamma} (\sigma'_u \sigma'_v - \sigma_u \sigma_v) = 2 |\gamma| . \]
Thus, denoting $F := \{ \sigma \in \Omega : \gamma\text{ is an interface with
respect to }\sigma \}$ and noting that $\sigma \mapsto \sigma'$ is
injective on $\Omega$ (in fact, an involution of $\Omega$), we have
\begin{align*}
\Pr(\gamma\text{ is an interface})
 &= \frac{\sum_{\sigma \in F} \exp \left[\beta \sum_{\{u,v\}\in E(G)}\sigma_u \sigma_v\right]}{\sum_{\sigma \in \Omega} \exp \left[\beta \sum_{\{u,v\}\in E(G)}\sigma_u \sigma_v\right]} \\
 &\le \frac{\sum_{\sigma \in F} \exp \left[\beta \sum_{\{u,v\}\in E(G)}\sigma_u \sigma_v\right]}{\sum_{\sigma \in F} \exp \left[\beta \sum_{\{u,v\}\in E(G)}\sigma'_u \sigma'_v\right]} = e^{-2\beta |\gamma|} .
\end{align*}

The final ingredient in the proof is an upper bound on the number of contours of a given length.
For this, we henceforth restrict ourselves to the case $G = \T_L^d$, for which we use the following fact:
\begin{equation}\label{eq:low_temp_Ising_number_of_contours}
\text{The number of contours of length $k$ separating two given vertices is at most }e^{C_d k} .
\end{equation}
The proof of this fact consists of the following two lemmas.

\begin{lemma}
    Let $\gamma$ be a set of edges and consider the graph $\mathcal{G}_\gamma$ on $\gamma$ in which two edges $e,f \in \gamma$ are adjacent if the $(d-1)$-dimensional faces corresponding to $e$ and $f$ share a common $(d-2)$-dimensional face. If $\gamma$ is a contour then either $\mathcal{G}_\gamma$ is connected or every connected component of $\mathcal{G}_\gamma$ has size at least $L^{d-1}$.
\end{lemma}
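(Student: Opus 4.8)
The plan is to establish a dichotomy for the adjacency graph $\mathcal{G}_\gamma$ of a contour $\gamma = \partial A$ in $\T_L^d$: either it is connected, or every connected component is large. The guiding intuition is that a contour is a ``closed surface'' separating two connected regions, and locally it has no boundary — if one walks along the faces of $\gamma$, one can always continue. More precisely, I would first verify the key local fact: for any $(d-2)$-dimensional face $g$ of the lattice, the number of $(d-1)$-dimensional faces of $\gamma$ that contain $g$ is \emph{even}. This follows because the $(d-1)$-faces of the full lattice containing $g$ form a cycle of $2(d-1)$ faces around $g$, alternately separating and not separating the two parts of $A$ as one goes around; the faces belonging to $\partial A$ are exactly the transitions, and there is an even number of transitions in a cyclic sequence of two values. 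In particular this number is $0$ or at least $2$, so every $(d-1)$-face of $\gamma$ meeting $g$ has at least one $\mathcal{G}_\gamma$-neighbor through $g$; consequently $\mathcal{G}_\gamma$ has no ``dead ends'' coming from incidence to a lower face.

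Next I would use this to show that any connected component $\mathcal{C}$ of $\mathcal{G}_\gamma$, viewed as a union of $(d-1)$-faces, forms a set without boundary in the appropriate cellular sense — for each $(d-2)$-face $g$, the number of faces of $\mathcal{C}$ containing $g$ is also even (since a component, by definition, contains either all or none of the $\mathcal{G}_\gamma$-neighbors of a given face through $g$, and we can argue these come in pairs). This means $\mathcal{C}$ is itself a ``surface,'' and if $\mathcal{C} \ne \gamma$ then $\gamma \setminus \mathcal{C}$ is a nonempty union of faces that is also boundaryless and is $\mathcal{G}_\gamma$-disconnected from $\mathcal{C}$. Geometrically, a nonempty boundaryless set of $(d-1)$-faces must ``enclose'' something: projecting onto a coordinate hyperplane, or using a direct homological/winding argument, one sees that any line in a fixed coordinate direction that meets $\mathcal{C}$ does so an even number of times, so $\mathcal{C}$ separates the torus (or winds around it). Either way, $\mathcal{C}$ must span at least $L$ in $d-1$ of the coordinate directions, forcing $|\mathcal{C}| \ge L^{d-1}$.

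The cleanest route to the size bound is probably this: since $\mathcal{C}$ is boundaryless, the set of lattice hypercubes (cells) can be $2$-colored — ``inside'' and ``outside'' $\mathcal{C}$ — so that $\mathcal{C}$ is precisely the set of faces between opposite colors, exactly as $\gamma$ relates to $A$. If this coloring is nonconstant, then in each of the $d$ coordinate directions the torus $\Z/(2L)$ gets partitioned into color-classes; because the torus is a cycle, any nonconstant coloring of it has at least $2$ color-changes, but more importantly, since $\mathcal{C}$ is connected and nonconstant, I can show that the ``inside'' region, restricted to a line in some direction where the coloring varies, has positive length in every one of the other $d-1$ directions — this gives at least one face of $\mathcal{C}$ in each of the $2L \times \cdots \times 2L$ ``rows'' of a hyperplane cross-section, hence $|\mathcal{C}| \ge (2L)^{d-1} \ge L^{d-1}$. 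Alternatively, one avoids coordinates entirely by noting that a boundaryless connected set of $(d-1)$-faces is a cycle in the relevant cellular chain complex, and cycles that are small are boundaries of small cell-sets, which would make $\mathcal{C}$ a coboundary $\partial B$ with $B$ a \emph{connected} small set disjoint from or contained in $A$ — contradicting that $\gamma = \partial A$ with both $A$ and $A^c$ connected, unless $B$ is comparable in size to a full slab.

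The main obstacle I anticipate is the last implication — turning ``boundaryless'' into the quantitative size bound $L^{d-1}$ — because it genuinely uses the topology of the torus and requires either a careful projection/counting argument or an invocation of cellular homology of $(\Z/2L)^d$. The even-incidence fact itself is a routine local computation (the link of a $(d-2)$-cell is a $2(d-1)$-cycle of $(d-1)$-cells), and the reduction from $\gamma$ to an arbitrary component $\mathcal{C}$ is bookkeeping; but one must be careful that a component of $\mathcal{G}_\gamma$ inherits the even-incidence property, which relies on the fact that the two neighbors of a face $e \in \gamma$ through a given $(d-2)$-face $g$ need not themselves be $\mathcal{G}_\gamma$-adjacent to each other, so a priori only one of them need lie in $\mathcal{C}$. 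Resolving this is what forces the argument to go through the ``inside/outside'' $2$-coloring of cells rather than working with the adjacency graph alone: once $\mathcal{C}$ is exhibited as $\partial B$ for a union of cells $B$, its even-incidence property and the lower bound on its size both follow from the corresponding statements for $B$.
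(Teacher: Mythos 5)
The paper does not actually prove this lemma (it defers to Tim\'ar \cite{Tim13} for $\Z^d$ and to \cite{Pel10} for the torus), so there is no in-paper argument to compare against. Assessing your proposal on its own terms: the route via $\Z/2$-cellular chains and the topology of $\T_L^d$ is reasonable, and the local even-incidence fact is correct (though the link of a $(d-2)$-cube in the cubical lattice is a $4$-cycle of $(d-1)$-faces for every $d\ge 2$, not a $2(d-1)$-cycle; also, your worry that a component of $\mathcal{G}_\gamma$ might not inherit even incidence is moot, since any two $(d-1)$-faces of $\gamma$ through a common $(d-2)$-face are by definition $\mathcal{G}_\gamma$-adjacent and so lie in the same component). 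The crux, however, is wrong. You claim that a nonempty $\mathcal{C}$ admitting a nonconstant $2$-coloring of cells (i.e.\ $\mathcal{C}=\partial B$) must satisfy $|\mathcal{C}|\ge (2L)^{d-1}$, e.g.\ via a face in every row of a cross-section. That is false: the boundary of a single unit $d$-cube is connected, null-homologous, and has size $2d$. No slicing or projection argument can rescue the inequality, because it is simply untrue in this case. You also conflate ``boundaryless'' (being a $\Z/2$-cycle) with ``bounding'' (being null-homologous); on a torus these differ, and the size bound holds precisely in the case where the $2$-coloring does \emph{not} exist.

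The correct dichotomy is on the class $[\mathcal{C}]\in H_{d-1}(\T_L^d;\Z/2)$. If $[\mathcal{C}]\neq 0$ then $\mathcal{C}$ has odd intersection with some coordinate circle, hence with each of its $(2L)^{d-1}$ parallel translates; these meet $\mathcal{C}$ in pairwise disjoint faces, so $|\mathcal{C}|\ge (2L)^{d-1}\ge L^{d-1}$. If $[\mathcal{C}]=0$ one must instead show $\mathcal{C}=\gamma$, and this is exactly where the connectedness of $A$ and $A^c$ enters --- a step your sketch (``$B$ a connected small set disjoint from or contained in $A$'') asserts but does not justify, and neither property is automatic. A workable version: write $\mathcal{C}=\partial B$ and, for each $f\in\mathcal{C}$, set $\epsilon(f):=+1$ if the $A$-side of $f$ equals its $B$-side, $-1$ otherwise. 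A local check around each shared $(d-2)$-face (using that all faces of $\gamma$ through it lie in $\mathcal{C}$, so the faces of $\gamma$ and of $\partial B$ through it coincide) shows $\epsilon$ is constant on the $\mathcal{G}_\gamma$-connected set $\mathcal{C}$; WLOG $\epsilon\equiv +1$. Let $D:=A\triangle B$, so $\partial D=\gamma\setminus\mathcal{C}$, and both cells incident to any $f\in\mathcal{C}$ lie in $D^c$ (one in $A\cap B$, the other in $A^c\cap B^c$). If $\mathcal{C}\neq\gamma$ then $D\neq\emptyset$, hence $D$ meets $A$ or $A^c$; but $A$ and $A^c$ each also meet $D^c$, and a path inside the connected set $A$ (resp.\ $A^c$) from a cell of $D$ to a cell of $D^c$ must cross $\partial D\subset\partial A$ --- a contradiction. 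Hence $\mathcal{C}=\gamma$ and $\mathcal{G}_\gamma$ is connected.
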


Although intuitively clear, the proof of the above lemma is not
completely straightforward. Tim{\'a}r gave a proof~\cite{Tim13} of the analogous statement in $\Z^d$ (in which case the graph $\mathcal{G}_\gamma$ is always connected) via elementary graph-theoretical methods. In our case, there is an additional complication due to the topology of the torus (indeed, the graph $\mathcal{G}_\gamma$ need not be connected - although it can have only two connected components - a fact for which we do not have a simple proof). We refer the reader to~\cite{Pel10} for a proof.

\begin{lemma}\label{lem:count_connected_subgraphs}
    Let $\mathcal{G}$ be a graph with maximum degree $\Delta$. The number of connected subsets of $V(\mathcal{G})$ which have size $k$ and contain a given vertex is at most $a(\Delta)^k$, where $a(\Delta)$ is a positive constant depending only on $\Delta$.
\end{lemma}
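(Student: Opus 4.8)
The plan is to encode each connected vertex subset by a walk of bounded length in $\mathcal{G}$ and then count walks. Fix a connected subset $S\subseteq V(\mathcal{G})$ with $|S|=k$ and a designated vertex $v\in S$. To make the encoding injective we first fix, once and for all, a linear order on $V(\mathcal{G})$ (the statement is local, so if $V(\mathcal{G})$ is infinite we may restrict attention to a finite ball around $v$ and order its vertices arbitrarily). Given $S$, let $T=T(S)$ be the spanning tree of the subgraph induced on $S$ produced by breadth-first search started at $v$, with ties broken according to the fixed vertex order, and root $T$ at $v$. This makes $T$ a deterministic function of $S$.

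Next, perform the depth-first traversal of $T$ from the root: starting at $v$, repeatedly move to the smallest-labelled not-yet-visited child of the current vertex, and when the current vertex has no unvisited child, backtrack along the tree edge toward the root; stop upon returning to $v$ after all vertices of $T$ have been visited. Since $T$ has exactly $k-1$ edges and each is traversed exactly twice (once going away from the root, once coming back), this produces a walk $W(S)=(v=w_0,w_1,\dots,w_{2(k-1)}=v)$ in $\mathcal{G}$ of length exactly $2(k-1)$. The map $S\mapsto W(S)$ is injective: every vertex of $T$ is visited, so the set of vertices appearing in $W(S)$ is exactly $S$, and hence $S$ can be read off from $W(S)$.

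Finally, count the walks. A walk of length $\ell$ starting at $v$ is determined by its sequence of steps, and at each step there are at most $\Delta$ choices for the next vertex, so the number of walks of length $2(k-1)$ starting at $v$ is at most $\Delta^{2(k-1)}\le \Delta^{2k}$. Combining with the injectivity of $S\mapsto W(S)$, the number of connected subsets of $V(\mathcal{G})$ of size $k$ containing $v$ is at most $\Delta^{2k}$, so the lemma holds with $a(\Delta):=\Delta^{2}$.

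The only point requiring any care — and it is not really an obstacle — is that the encoding must be genuinely injective, which is exactly why one fixes a canonical (tie-broken) spanning tree and a canonical child-ordering in the traversal before defining $W(S)$. Everything else is a routine walk count. If one cared about the optimal constant one could instead bound the number of rooted trees directly via their degree sequences to get $a(\Delta)=e\Delta$, but since only the existence of some constant depending on $\Delta$ is needed, the crude argument above suffices.
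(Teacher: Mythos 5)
Your proof is correct and is precisely the depth-first-search argument the paper alludes to: encode a connected $k$-set by the closed walk of length $2(k-1)$ obtained from a DFS traversal of a canonical spanning tree rooted at $v$, then bound walks by $\Delta^{2(k-1)}$. The paper does not spell out the details but names this exact approach and the resulting constant $a(\Delta)=\Delta^2$, and also points to a sharper bound $a(\Delta)=e(\Delta-1)$ in Bollob\'as; your last sentence correctly flags both points.
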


This lemma has several simple proofs. One may for instance use a depth-first-search algorithm to provide a proof with the constant $a(\Delta)=\Delta^2$. We refer the reader to~\cite[Chapter~45]{Bol06} for a proof yielding the constant $a(\Delta)=e(\Delta-1)$ (which is optimal when $\Delta \ge 3$ as can be seen by considering the case when $\mathcal{G}$ is a regular tree).

\medbreak
\noindent{\bf Exercise.}
Deduce fact~\eqref{eq:low_temp_Ising_number_of_contours} from the two lemmas.
\medbreak

Finally, putting together~\eqref{eq:low_temp_Ising_interface_exists}, \eqref{eq:low_temp_Ising_prob_of_interface} and~\eqref{eq:low_temp_Ising_number_of_contours}, when $\beta \ge C_d$, we obtain
\begin{align*}
\Pr(\sigma_x \neq \sigma_y)
 \le \Pr(\text{there exists an interface})
 &\le \sum_{\substack{\gamma\text{ contour}\\\text{separating $x$ and $y$}}} \Pr(\gamma\text{ is an interface}) \\
 &\le \sum_{k=1}^\infty e^{C_d k} e^{-2\beta k} \le C_d e^{-2\beta} .
\end{align*}
Thus, in terms of correlations, we have established that
\[ \rho_{x,y} \ge 1 - C_d e^{-2\beta} \ge c_{d,\beta} \quad\text{when }\beta \ge C_d .\]

\begin{remark}
Specializing Lemma~\ref{lem:count_connected_subgraphs} to the
relevant graph in our situation, one may obtain an improved and explicit bound of $\exp(C k \log (d+1)/d)$
on the number of contours of length $k$ separating two given
vertices~\cite{LebMaz98, BalBol07}. This gives that $\beta_c(d) \le
C \log(d+1)/d$. In fact, the correct asymptotic value is $\beta_c(d)
\sim 1/2d$ as $d\to\infty$, as follows by combining Fisher's bound in Remark~\ref{rem:Fisher_bound} with Theorem~\ref{thm:long_range_order} below.

Aizenman, Bricmont and Lebowitz \cite{aizenman1987percolation} point out that a gap between the true value of $\beta_c(d)$ and the bound on $\beta_c(d)$ obtained from the Peierls argument is unavoidable in high dimensions. They point out that the Peierls argument, when it applies, excludes the possibility of  \emph{minority percolation}. That is, the possibility to have an infinite connected component of $-1$ spins in the infinite-volume limit obtained with $+1$ boundary conditions. However, as they show, such minority percolation does occur in high dimensions when $\beta\le c\frac{\log d}{d}$, yielding a lower bound on the minimal inverse temperature at which the Peierls argument applies.
\end{remark}

\subsection{No long-range order in two dimensional models with continuous symmetry - the Mermin--Wagner theorem}\label{sec:Mermin-Wagner}
In this section, we establish power-law decay of correlations for
the two-dimensional spin~$O(n)$ model with $n\ge 2$ at any positive
temperature. The proof applies in the generality of the spin $O(n)$
model with potential $U$, where $U$ satisfies certain assumptions,
and it is convenient to present it in this context, to highlight the
core parts of the argument. The fact that there is no long-range order was first established by Mermin and Wagner~\cite{mermin1966absence, mermin1967absence}\footnote{A related intuition was mentioned earlier by Herring and Kittel \cite[Footnote 8a]{herring1951theory}.}, with later works providing upper bounds on the rate of decay of the correlations. Power-law decay of correlations for the standard XY model was first established by McBryan and Spencer~\cite{McBSpe77} who used analytic function techniques. The following theorem which generalizes the result to $C^2$ potentials was subsequently proved
by Shlosman \cite{shlosman1978decrease} using methods developed by Dobrushin and Shlosman~\cite{dobrushin1975absence}.
\begin{theorem}\label{thm:Mermin_Wagner_with_potential}
Let $n\ge 2$. Let $U:[-1,1]\to\R$ be twice continuously differentiable.
Suppose that $\sigma:V(\T_L^2)\to\mathbb S^{n-1}$ is
randomly sampled from the two-dimensional spin $O(n)$ model with
potential $U$ (see \eqref{eq:spin_O_n_potential_U}). Then there exist $C_{n,U}, c_{n,U}>0$ such that
\begin{equation}\label{eq:Mermin_Wagner_to_prove}
  |\rho_{x,y}|=|\E(\left\langle\sigma_x,\sigma_y\right\rangle)| \le C_{n,U}\|x-y\|_1^{-c_{n,U}}\quad\text{for all $x,y\in V(\T_L^2)$}.
\end{equation}
\end{theorem}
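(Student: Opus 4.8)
The plan is to use the complex-translation (a.k.a. Gaussian-domination / McBryan--Spencer) method adapted to a $C^2$ potential, essentially following Shlosman's implementation of the Dobrushin--Shlosman approach. The idea is that the correlation $\E(\langle\sigma_x,\sigma_y\rangle)$ can be bounded by twisting the configuration along a family of harmonic-type angular shifts interpolating between $x$ and $y$, and then using the smoothness of $U$ to control the cost of the twist. Concretely, I would first reduce to controlling $\E(e^{i(\theta_x-\theta_y)})$ where $\theta_v$ is the angular coordinate of $\sigma_v$ in a fixed $2$-plane (using rotation invariance of the measure and the triangle/Cauchy--Schwarz-type bound $|\rho_{x,y}| \le C_n \max_{P} |\E(\langle \sigma_x, e_1^P\rangle \langle\sigma_y, e_1^P\rangle)|$, or just projecting onto a coordinate as in the proof of Theorem~\ref{thm:conditioning_gives_Ising}); writing $\sigma_v = (\cos\theta_v,\sin\theta_v,\rho_v \omega_v)$ with $\rho_v \in [0,1]$ and $\omega_v\in\S^{n-3}$ if $n\ge 3$, the relevant quantity is governed by the angular variables.

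The core step: let $a:V(\T_L^2)\to\R$ be a function with $a_x = 1$, $a_y=0$, and consider the change of variables $\theta_v \mapsto \theta_v + t\, a_v$ (for real $t$), which preserves $d\sigma$. Differentiating the identity $\E(e^{i(\theta_x-\theta_y)}) = \E(e^{i(\theta_x-\theta_y)} e^{it(a_x-a_y)} \cdot (\text{Radon--Nikodym factor}))$ — or more cleanly, shifting the contour to complex $t=i s$ — one gets
\[
  \E\big(e^{i(\theta_x-\theta_y)}\big) = \E\Big( \exp\Big[-\!\!\sum_{\{u,v\}\in E} \big( U(\langle\sigma_u^{s},\sigma_v^{s}\rangle) - U(\langle\sigma_u,\sigma_v\rangle)\big) \Big] \Big),
\]
where $\sigma^s$ denotes the configuration with angles shifted by $is(a_u - a_v)$ along each edge. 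Expanding $U$ to second order (here is where $U\in C^2$ enters crucially) and using that $\|U'\|_\infty, \|U''\|_\infty$ are finite on the compact interval $[-1,1]$, the exponent is bounded above by $\tfrac12 s^2 \sum_{\{u,v\}\in E}(a_u-a_v)^2 \cdot K_U + (\text{lower order})$, i.e. by a constant times $s^2 \cdot \mathcal{E}(a)$, where $\mathcal{E}(a)=\sum_{\{u,v\}\in E}(a_u-a_v)^2$ is the Dirichlet energy of $a$. This yields
\[
  |\rho_{x,y}| \le C_{n,U}\, \exp\big( C_U\, s^2\, \mathcal{E}(a) - s\big)
\]
after also tracking the real shift that produces the useful $-s$ term (or, more carefully, the imaginary part of the shift contributes the phase that, when optimized over, gives a linear-in-$s$ gain; the precise bookkeeping is the standard McBryan--Spencer computation). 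Optimizing over $s$ gives $|\rho_{x,y}| \le C_{n,U}\exp(-c/\mathcal{E}(a))$, and then the proof is completed by choosing $a$ to be (a torus-adapted version of) the discrete harmonic function / truncated logarithmic potential between $x$ and $y$, for which $\mathcal{E}(a) = O(1/\log\|x-y\|_1)$ in two dimensions. Substituting gives the claimed polynomial decay $\|x-y\|_1^{-c_{n,U}}$.

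The main obstacle — the genuinely two-dimensional input — is the choice of $a$ and the estimate $\mathcal{E}(a) \asymp 1/\log\|x-y\|_1$: this is exactly the logarithmic growth of the Green's function in $d=2$ and is what makes the bound polynomial rather than, say, a positive constant (in $d\ge 3$ the same scheme gives $\mathcal{E}(a)=O(1)$ and hence no decay, consistent with long-range order there). A secondary technical point to handle carefully is the extension to $n\ge 3$: when $n\ge 3$ one should either integrate out the "extra" components $(\rho_v,\omega_v)$ first and observe that, conditionally, one still has an angular model with a $C^2$ effective potential to which the same shift argument applies, or apply the twist only in a chosen $2$-plane and absorb the remaining components into constants; either way the compactness of $[-1,1]$ and finiteness of $\|U''\|_\infty$ keep all the error terms uniformly controlled. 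Finally, some care is needed at the boundary of the contour-shift region and in justifying the contour deformation in the complex $t$-plane (entirety and decay of the integrand), but these are routine given that $U$ is real-analytic in the inner product variable away from issues at $\pm 1$ — or can be bypassed by working directly with the real change of variables and a second-order Taylor estimate, which only uses $U\in C^2$.
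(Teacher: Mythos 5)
Your proposal is the McBryan--Spencer complex-translation argument, and the final bookkeeping (optimize $\exp(Cs^2\mathcal{E}(a)-s)$ over $s$, then choose $a$ with $\mathcal{E}(a)\asymp 1/\log\|x-y\|_1$ via the two-dimensional log potential) is the right way to finish \emph{that} argument. However, the core step has a genuine gap under the stated hypothesis $U\in C^2$. To shift the contour to $t=is$ you must analytically continue the Gibbs density, i.e.\ replace each $U(\cos(\theta_u-\theta_v))$ by $U(\cos(\theta_u-\theta_v+is(a_u-a_v)))$. The argument of $U$ then becomes a genuinely complex number of modulus larger than $1$, so this requires an extension of $U$ (equivalently of $\exp(-U(\cos\cdot))$) to a complex strip --- analyticity of some sort, not merely $C^2$ on $[-1,1]$. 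This is precisely why the McBryan--Spencer proof applies out of the box only to the standard XY interaction $U(r)=-\beta r$ (where $U(\cos z)$ is entire), and why, as the paper notes, Shlosman had to invent a different argument for $C^2$ potentials. Your final sentence acknowledges this but proposes to ``bypass it by working directly with the real change of variables and a second-order Taylor estimate,'' which is not a workable substitute as stated: the real change of variables $\theta\mapsto\theta+ta$ only produces an identity (the exponential prefactor has modulus~$1$), and a Taylor bound on the Radon--Nikodym factor does not by itself supply the $e^{-s}$ gain that the complex shift was providing. A single real spin-wave perturbation plus Cauchy--Schwarz gives a bound of the form ``the conditional correlation is at most $1-\eps$,'' not a decaying bound.

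The paper's actual proof is structurally different and is built exactly to repair this: it combines the Dobrushin--Shlosman layer decomposition with Pfister's real-perturbation bound. Step~1 divides the torus into dyadic annuli of radii $2^0,\dots,2^m$ and, using the domain Markov property and the fact that the interaction depends only on angle gradients, shows that conditionally on the gradients along the annuli boundaries, the increments $\theta_{(2^\ell,0)}-\theta_{(2^{\ell-1},0)}$ are mutually independent; the correlation then factors into a product of $m\asymp\log\|x-y\|_1$ conditional expectations. Step~2 bounds each factor by a uniform $1-\eps$ via a \emph{real} spin-wave $\tau$ with bounded Dirichlet energy (Pfister's trick with a geometric-mean/Cauchy--Schwarz inequality), which only requires $\tilde U(\alpha)=U(\cos\alpha)$ to have a bounded second derivative --- exactly $U\in C^2$. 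Multiplying the $m$ factors recovers $\|x-y\|_1^{-c}$. In other words, the essential two-dimensional input (``$\log\|x-y\|_1$ independent rotations fit between $x$ and $y$'') plays the role that $\mathcal{E}(a)\asymp 1/\log\|x-y\|_1$ plays in your scheme, but the decomposition into layers is exactly what lets one use only real perturbations, hence only $C^2$ regularity. To repair your proposal you would need to either (i) strengthen the hypothesis on $U$ so that the complex shift is legal, or (ii) replace the single global contour shift by the layer-by-layer factorization and a Pfister-type bound on each factor, which is the paper's route.
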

The proof presented below combines elements of the
Dobrushin--Shlosman~\cite{dobrushin1975absence} and Pfister~\cite{pfister1981symmetry} approaches to the Mermin--Wagner
theorem. The idea to combine the approaches is introduced in a
forthcoming paper of Gagnebin, Mi\l o\'s and Peled \cite{Gagnebin2018},
where it is pushed further to prove power-law decay of correlations
for \emph{any} measurable potential $U$ satisfying only very mild integrability
assumptions. The work \cite{Gagnebin2018} relies further on ideas used
by Ioffe--Shlosman--Velenik \cite{ioffe20022d}, Richthammer \cite{richthammer2007translation} and Mi\l o\'s--Peled \cite{milos2015delocalization}.

For simplicity, we will prove
Theorem~\ref{thm:Mermin_Wagner_with_potential} in the special case
that $n=2$, $x = (1,0)$ and $y = (2^m, 0)$ for some positive integer $m$
(assuming, implicitly, that $L\ge 2^m$). We briefly explain the
necessary modifications for the general case after the proof.

Fix a $C^2$ function $U:[-1,1]\to\R$.
Suppose that $\sigma:V(\T_L^2)\to \mathbb S^1$ is randomly sampled
from the two-dimensional spin $O(2)$ model with potential $U$. It is
convenient to parametrize configurations differently: Identifying
$\mathbb S^1$ with the unit circle in the complex plane, we consider
the angle $\theta_v$ that each vector $\sigma_v$ forms with respect
to the $x$-axis. Precisely, for the rest of the argument, we let
$\theta:V(\T_L^2)\to[-\pi,\pi)$ be randomly sampled from the
probability density
\begin{equation}\label{eq:density_for_theta}
  t(\phi) := \frac{1}{Z}\exp \left[-\sum_{\{u,v\}\in
E(\T_L^2)}U(\cos(\phi_u - \phi_v))\right]
  \prod_{v\in V(\T_L^2)} \one_{(\phi_v\in [-\pi,\pi))},
\end{equation}
with respect to the product uniform measure, where $Z$ is a normalization constant. One checks simply that then
$(\sigma_v)$ is equal in distribution to $(\exp(i\theta_v))$. Thus,
with our choice of the vertices $x$ and $y$, the estimate
\eqref{eq:Mermin_Wagner_to_prove} that we would like to prove
becomes
\begin{equation}\label{eq:correlations_with_theta}
  |\rho_{(1,0),(2^m,0)}| = |\E(\cos(\theta_{(2^m,0)}-\theta_{(1,0)}))| \le C_{n,U}\cdot 2^{-c_{n,U}\cdot m}.
\end{equation}

\medbreak
\noindent{\bf Step 1: Product of conditional correlations.} We start by pointing out a conditional independence
property inherent in the distribution of $\theta$, which is a
consequence of the domain Markov property and the fact that the
interaction term in \eqref{eq:density_for_theta} depends only on the
difference of angles in $\phi$ (the gradient of $\phi$). This part
of the argument is inspired by the technique of Dobrushin and
Shlosman~\cite{dobrushin1975absence}.

We divide the domain into ``layers'', where the $\ell$-th layer, $0 \le \ell \le m$, corresponds to distance $2^\ell$ from the origin. Denote the values and the gradients of $\theta$ on the $\ell$-th layer by
\[ \theta_{=\ell} := \left(\theta_v\colon \|v\|_1 = 2^\ell\right) \qquad\text{and}\qquad \nabla\theta_{=\ell} := \left(\theta_u - \theta_v \colon \|u\|_1,\|v\|_1 = 2^\ell\right) .\]
Similarly, we write $\theta_{\le \ell}$ and $\nabla\theta_{\le \ell}$ for the values/gradients of $\theta$ inside the $\ell$-th layer (i.e., for $u,v$ with $\|u\|_1,\|v\|_1 \le 2^\ell$) and $\theta_{\ge\ell}$ and $\nabla\theta_{\ge\ell}$ for the values/gradients outside (i.e., for $u,v$ with $\|u\|_1,\|v\|_1 \ge 2^\ell$).

\begin{proposition}\label{prop:independence_given_gradients}
	Conditioned on $\nabla\theta_{=\ell}$, we have that $\nabla\theta_{\le \ell}$ and $\theta_{\ge \ell}$ are independent.
\end{proposition}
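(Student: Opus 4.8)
The plan is to prove this as a consequence of two structural facts: the domain Markov property of the model on $\T_L^2$, and the gradient-only nature of the interaction in \eqref{eq:density_for_theta}. First I would note that the set $\{v : \|v\|_1 = 2^\ell\}$ separates the inside region $\{\|v\|_1 \le 2^\ell\}$ from the outside region $\{\|v\|_1 \ge 2^\ell\}$ in $\T_L^2$; so by the domain Markov property, conditioned on the full vector of angles $\theta_{=\ell}$ on the separating layer, the configuration $\theta_{\le\ell}$ inside and the configuration $\theta_{\ge\ell}$ outside are independent. This is just the statement that the density \eqref{eq:density_for_theta} factorizes over edges and every edge lies entirely inside, entirely outside, or within the separating layer itself.

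The key point is then to upgrade ``conditioned on $\theta_{=\ell}$'' to ``conditioned only on $\nabla\theta_{=\ell}$,'' and simultaneously weaken the inside $\sigma$-algebra from $\theta_{\le\ell}$ to $\nabla\theta_{\le\ell}$. Here I would exploit the global rotation symmetry of the density: \eqref{eq:density_for_theta} depends on $\phi$ only through the differences $\phi_u-\phi_v$ across edges, so adding a constant $c$ to \emph{all} angles in the inside-or-boundary region $\{\|v\|_1 \le 2^\ell\}$ leaves the inside interactions and the boundary-layer interactions unchanged, and it shifts $\theta_{=\ell}$ to $\theta_{=\ell}+c$ while fixing $\nabla\theta_{=\ell}$ and $\nabla\theta_{\le\ell}$. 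The clean way to package this: conditionally on $\nabla\theta_{=\ell}$, one may generate the pair $(\nabla\theta_{\le\ell}, \theta_{\ge\ell})$ by first sampling a ``reference'' value of $\theta_{=\ell}$ compatible with $\nabla\theta_{=\ell}$ (equivalently, sampling the single extra degree of freedom, say $\theta_{v_0}$ for a fixed $v_0$ on the layer, from its conditional law), then — conditionally on that full $\theta_{=\ell}$ — independently sampling $\theta_{\le\ell}$ and $\theta_{\ge\ell}$ by the domain Markov property, and finally reading off $\nabla\theta_{\le\ell}$ from $\theta_{\le\ell}$. Since $\nabla\theta_{\le\ell}$ is invariant under the global shift of the inside region, its conditional law given $(\nabla\theta_{=\ell}, \theta_{v_0})$ does not actually depend on $\theta_{v_0}$; hence $\nabla\theta_{\le\ell}$ is conditionally independent of $\theta_{v_0}$ given $\nabla\theta_{=\ell}$, and it is conditionally independent of $\theta_{\ge\ell}$ given $(\nabla\theta_{=\ell},\theta_{v_0})$. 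Combining these via the standard ``conditional independence chaining'' ($A\perp B \mid C$ and $A \perp D \mid (B,C)$ with $B\to$ everything needed $\Rightarrow A \perp (B,D)\mid C$, applied carefully) yields $\nabla\theta_{\le\ell} \perp \theta_{\ge\ell} \mid \nabla\theta_{=\ell}$.

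I expect the main obstacle to be the bookkeeping in the conditional-independence step: one must be careful that $\theta_{=\ell}$ is genuinely a function of $(\nabla\theta_{=\ell}, \theta_{v_0})$ for a single reference vertex $v_0$ on the layer (true only because the layer is connected in the relevant sense — here the layer is a single $\ell^1$-sphere, so this holds, but it should be stated), and that the density decomposes exactly as ``(law of $\theta_{v_0}$ given $\nabla\theta_{=\ell}$)$\,\times\,$(product of inside and outside conditional laws given $\theta_{=\ell}$).'' A slick alternative that avoids explicit chaining: directly compute the conditional density of $(\nabla\theta_{\le\ell},\theta_{\ge\ell})$ given $\nabla\theta_{=\ell}$ by integrating the full density over $\theta_{v_0}$ and over the inside values at fixed inside-gradients, and observe it factors as a function of $\nabla\theta_{\le\ell}$ times a function of $\theta_{\ge\ell}$ — the factorization being immediate from the edge-product form once the global-shift invariance has been used to decouple the boundary value $\theta_{v_0}$ from the inside gradients. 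Either route is routine once the symmetry is identified; the symmetry itself is the heart of the matter and is exactly the ``gradient of $\phi$'' remark flagged before the proposition.
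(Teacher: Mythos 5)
Your proof is correct and follows essentially the same route as the paper: both arguments hinge on the domain Markov property together with the fact that the conditional density of the inside given the boundary depends on $\theta$ only through gradients, the paper phrasing this a bit more tersely in terms of conditional densities while you make the conditional-independence chaining via the reference value $\theta_{v_0}$ explicit. (A small remark: no connectedness of the layer is actually needed for $\theta_{=\ell}$ to be recoverable from $(\nabla\theta_{=\ell},\theta_{v_0})$, since $\nabla\theta_{=\ell}$ as defined in the paper consists of \emph{all} pairwise differences on the layer, so $\theta_u - \theta_{v_0}$ is literally one of its coordinates.)
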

\begin{proof}
	Consider the random variables $\theta_{<\ell}$ and $\nabla\theta_{<\ell}$, defined in the obvious way. It is straightforward from the definition of the density~\eqref{eq:density_for_theta} of $\theta$ that, conditioned on $\theta_{\ge \ell}$, $\theta_{<\ell}$ almost surely has a density and that this density depends only on $\theta_{=\ell}$. In particular, conditioned on $\theta_{\ge \ell}$, $\nabla\theta_{<\ell}$ has a density which depends only on $\theta_{=\ell}$. Finally, since the interaction term in~\eqref{eq:density_for_theta} depends only on the gradients $\nabla\theta$, we conclude that the density of $\nabla\theta_{<\ell}$ given $\theta_{\ge \ell}$ depends only on the gradients $\nabla\theta_{=\ell}$. Therefore, $\nabla\theta_{<\ell}$, and hence $\nabla\theta_{\le\ell}$, is conditionally independent of $\theta_{\ge \ell}$ given $\nabla\theta_{=\ell}$.
\end{proof}

Proposition~\ref{prop:independence_given_gradients} implies in particular that, conditioned on $\nabla\theta_{=\ell}$, the gradients $\nabla\theta_{\le \ell}$ and $\nabla\theta_{\ge \ell}$ are independent. It follows from abstract arguments that, conditioned on $(\nabla\theta_{=k})_{1 \le k \le m}$, the gradients $\nabla\theta_{\le \ell}$ and $\nabla\theta_{\ge \ell}$ are independent. For convenience, we state this claim in a general form in the following exercise.

\medbreak
\noindent{\bf Exercise.} Suppose $X,Y,Z$ are random variables
satisfying that $X$ is conditionally independent of $Y$ given $Z$.
Then for every two measurable functions $f$ and $g$, $X$ is conditionally
independent of $Y$ given $(f(X), g(Y), Z)$.
\medbreak

In particular, conditioned on $(\nabla\theta_{=k})_{1 \le k \le m}$, the random variables $( \theta_{(2^k,0)}-\theta_{(2^{k-1},0)} )_{1 \le k \le \ell}$ and $( \theta_{(2^k,0)}-\theta_{(2^{k-1},0)} )_{\ell < k \le m}$ are independent. Since this holds for all $1 \le \ell \le m$, it follows again by abstract arguments that, conditioned on $(\nabla\theta_{=k})_{1 \le k \le m}$, the random variables $(\theta_{(2^\ell,0)}-\theta_{(2^{\ell-1},0)})_{1 \le \ell \le m}$ are mutually independent\footnote{In fact, more is true, conditioned on $(\nabla\theta_{=k})_{1 \le k \le m}$, the $\sigma$-algebras of $\nabla\theta_{\ell-1 \le \cdot \le \ell}$ are independent for $1 \le \ell \le m$, where $\nabla\theta_{\ell-1 \le \cdot \le \ell}$ is the collection of gradients $\theta_u - \theta_v$ with $2^{\ell-1} \le \|u\|_1,\|v\|_1 \le 2^\ell$.}. Once again, we state this general claim in an exercise.

\medbreak
\noindent{\bf Exercise.} Suppose $X_1,\dots,X_m,Z$ are random variables
satisfying that, for any $1 \le \ell \le m$, $(X_1,\dots,X_\ell)$ is conditionally independent of $(X_{\ell+1},\dots,X_m)$ given $Z$.
Then $(X_1,\dots,X_m)$ are mutually conditionally independent given $Z$.
\medbreak

The above conditional independence therefore allows us to reexpress the quantity of interest to us as a product of expectations in the following way:
\begin{multline}\label{eq:product_formula_for_estimates}
\E(\cos(\theta_{(2^m,0)}-\theta_{(1,0)}))
 = \Re\E\left(e^{i\left(\theta_{(2^m,0)}-\theta_{(1,0)}\right)}\right)
 = \Re\E\left(\prod_{\ell=1}^m e^{i\left(\theta_{(2^\ell,0)}-\theta_{(2^{\ell-1},0)}\right)} \right) \\
 = \Re\E\left( \prod_{\ell=1}^m \E\left(e^{i\left(\theta_{(2^\ell,0)}-\theta_{(2^{\ell-1},0)}\right)} \mid (\nabla\theta_{=k})_{1 \le k \le m} \right)\right) .
\end{multline}
This will be the starting point for our next step.

\medbreak
\noindent{\bf Step 2: Upper bound on the conditional correlations.} In this step, we estimate the individual conditional expectations in~\eqref{eq:product_formula_for_estimates}, proving that there exists an absolute constant $\eps>0$ for which, almost surely,
\begin{equation}\label{eq:fluctuations_of_gradient}
	\left|\E\left(e^{i\left(\theta_{(2^\ell,0)}-\theta_{(2^{\ell-1},0)}\right)} \mid (\nabla\theta_{=k})_{1 \le k \le m} \right)\right| \le
	1-\eps\quad\text{for all $1\le \ell\le m$},
\end{equation}
immediately implying the required bound~\eqref{eq:correlations_with_theta} as, from~\eqref{eq:product_formula_for_estimates},
\[ \left|\E(\cos(\theta_{(2^m,0)}-\theta_{(1,0)}))\right|
	\le \E\left( \prod_{\ell=1}^m \left| \E\left(e^{i\left(\theta_{(2^\ell,0)}-\theta_{(2^{\ell-1},0)}\right)} \mid (\nabla\theta_{=k})_{1 \le k \le m} \right) \right| \right) \le (1-\eps)^m .\]
This part of the argument is inspired by the technique of Pfister~\cite{pfister1981symmetry}, and the variants used in \cite{richthammer2007translation, milos2015delocalization}. The idea of introducing a \emph{spin wave} which rotates
slowly (our function $\tau$ below and its property
\eqref{eq:finite_energy_tau}) is at the heart of the Mermin--Wagner
theorem.

Define a vector-valued function $g$ on $\R^{V(\T_L^2)}$ by
\begin{equation*}
  g(\phi) := \left(\phi_u - \phi_v\colon u,v\in V(\T_L^2), ~\exists\, 1\le k\le m,~ \|u\|_1 =
  \|v\|_1=2^k \right),
\end{equation*}
so that $g(\theta)$ and $(\nabla\theta_{=k})_{1 \le k \le m}$ represent the same random variable.
Write $dm_{g_0}$ for the lower-dimensional Lebesgue measure supported on
the affine subspace of $\R^{V(\T_L^2)}$ where $g(\phi) = g_0$.
Standard facts (following from Fubini's theorem) imply that
conditioned on $g(\theta) = g_0$, for almost every value of $g_0$ (with respect to the distribution of
$g(\theta)$),
the density of $\theta$ exists with respect to $dm_{g_0}$ and is of
the form (as in \eqref{eq:density_for_theta})
\begin{align*}
  t_{g_0}(\phi) &= \frac{1}{Z_{g_0}}\exp \left[-\sum_{\{u,v\}\in
E(\T_L^2)}U(\cos(\phi_u - \phi_v))\right]
  \prod_{v\in V(\T_L^2)} \one_{(\phi_v\in [-\pi,\pi))}\\
  &= \frac{1}{Z_{g_0}}\exp \left[-\sum_{\{u,v\}\in
E(\T_L^2)}\tilde{U}(\phi_u - \phi_v)\right]
  \prod_{v\in V(\T_L^2)} \one_{(\phi_v\in [-\pi,\pi))},
\end{align*}
where $\tilde{U}:\R\to\R$ is the $2\pi$-periodic $C^2$
function defined by
\begin{equation}\label{eq:U_tilde_def}
  \tilde{U}(\alpha) := U(\cos(\alpha)) .
\end{equation}
In particular,
\begin{equation}\label{eq:tilde_U_Taylor_expansion}
  \tilde{U}(x + \delta) \le \tilde{U}(x) + \tilde{U}'(x)\delta + \frac{\sup_y \tilde{U}''(y)}{2}
  \delta^2\quad\text{for all $x,\delta\in\R$}.
\end{equation}

Fix $1\le \ell \le m$. Define a function $\tau:V(\T_L^2)\to \R$ by
\begin{equation}\label{eq:tau_def}
  \tau_v:=\begin{cases}
    1/2 & \|v\|_1\le 2^{\ell-1}\\
    1 - \frac{\|v\|_1}{2^\ell} & 2^{\ell-1} \le\|v\|_1 \le 2^\ell\\
    0 & \|v\|_1\ge 2^\ell
  \end{cases}
\end{equation}
and define for each $\phi:V(\T_L^2)\to[-\pi,\pi)$ its perturbations
$\phi^+, \phi^-:V(\T_L^2)\to[-\pi,\pi)$ by
\begin{equation}\label{eq:t_plus_minus_def}
  \phi^+_v := \phi_v + \tau_v \pmod{2\pi},\quad \phi^-_v := \phi_v -
  \tau_v \pmod{2\pi}.
\end{equation}
We shall need the following two simple properties of $\tau$ (which the reader may easily verify):
\begin{align}
  &g(\phi^+) = g(\phi^-) =
  g(\phi)\quad \text{for every $\phi:V(\T_L^2)\to\R$}, \label{eq:g_0_preserved}\\
  &\sum_{\{u,v\}\in E(\T_L^2)} (\tau_u - \tau_v)^2 \le C
  \label{eq:finite_energy_tau}
\end{align}
for some absolute constant $C$.

The following is the key calculation of the proof. For every
$\phi:V(\T_L^2)\to[-\pi,\pi)$, setting $g_0:=g(\phi)$,
\begin{multline}\label{eq:geometric_average_densities}
\sqrt{t_{g_0}(\phi^+)t_{g_0}(\phi^-)} = \frac{1}{Z_{g_0}}\exp
\Big[-\frac{1}{2}\sum_{\{u,v\}\in E(\T_L^2)}\tilde{U}(\phi_u - \phi_v +
\tau_u - \tau_v) + \tilde{U}(\phi_u - \phi_v - \tau_u +
\tau_v))\Big]\\
\stackrel{\eqref{eq:tilde_U_Taylor_expansion}}{\ge}\frac{1}{Z_{g_0}}\exp
\Big[-\sum_{\{u,v\}\in E(\T_L^2)}\tilde{U}(\phi_u - \phi_v) -
\frac{\sup_y \tilde{U}''(y)}{2} \sum_{\{u,v\}\in E(\T_L^2)} (\tau_u -
\tau_v)^2\Big] \stackrel{\eqref{eq:finite_energy_tau}}{\ge} c\cdot
t_{g_0}(\phi)
\end{multline}
for an absolute constant $c>0$.

We wish to convert the inequality
\eqref{eq:geometric_average_densities} into an inequality of
probabilities rather than densities. To this end, define for
$a\in\R$,
\begin{equation}\label{eq:E_a_def}
  E_a:=\left\{\phi:V(\T_n^2)\to[-\pi,\pi)\colon\big|\Re e^{i(\phi_{(2^\ell,0)}-\phi_{(2^{\ell-1},0)} - a)}\big|\ge \tfrac{9}{10}\right\},
\end{equation}
and, for almost every $g_0$ with respect to the distribution of
$g(\theta)$,
\begin{equation*}
  I_{a,g_0}:=\int_{E_a} \sqrt{t_{g_0}(\phi^+)t_{g_0}(\phi^-)}dm_{g_0}(\phi).
\end{equation*}
On the one hand, by~\eqref{eq:geometric_average_densities},
\begin{equation}\label{eq:prob_estimate_1}
  I_{a,g_0} \ge c\int_{E_a} t_{g_0}(\phi)dm_{g_0}(\phi) = c\cdot \P\left(\theta\in E_a\; |\; g(\theta)=g_0\right).
\end{equation}
On the other hand, the Cauchy--Schwartz inequality and a change of
variables using \eqref{eq:t_plus_minus_def} and
\eqref{eq:g_0_preserved} yields
\begin{equation}\label{eq:prob_estimate_2}
\begin{split}
  I_{a,g_0}^2&\le \int_{E_a} t_{g_0}(\phi^+) dm_{g_0}(\phi) \cdot \int_{E_a} t_{g_0}(\phi^-) dm_{g_0}(\phi)\\
  &= \P\big(\theta - \tau\in E_a\; |\; g(\theta)=g_0\big)\cdot \P\big(\theta + \tau\in E_a\; |\; g(\theta)=g_0\big).
\end{split}
\end{equation}
Putting together \eqref{eq:prob_estimate_1} and
\eqref{eq:prob_estimate_2} and recalling \eqref{eq:tau_def} and
\eqref{eq:E_a_def} we obtain that, almost surely,
\begin{align*}
&\P\left(\big|\Re e^{i(\theta_{(2^\ell,0)}-\theta_{(2^{\ell-1},0)} + 1/2 - a)}\big|\ge \tfrac{9}{10} \mid g(\theta)\right)\cdot
\P\left(\big|\Re e^{i(\phi_{(2^\ell,0)}-\phi_{(2^{\ell-1},0)} - 1/2 - a)}\big|\ge \tfrac{9}{10} \mid g(\theta)\right)\\
&\qquad\ge c^2\cdot\P\left(\big|\Re e^{i(\phi_{(2^\ell,0)}-\phi_{(2^{\ell-1},0)} - a)}\big|\ge \tfrac{9}{10} \mid g(\theta)\right)^2.
\end{align*}
As this inequality holds for any $a\in\R$, it implies that,
conditioned on $g(\theta)$,
$e^{i\left(\theta_{(2^\ell,0)}-\theta_{(2^{\ell-1},0)}\right)}$ cannot be
concentrated around any single value, proving the inequality
\eqref{eq:fluctuations_of_gradient} that we wanted to show.

\medbreak
\noindent{\bf General vertices $x$ and $y$ and larger values of $n$.}
The inequality~\eqref{eq:Mermin_Wagner_to_prove} for arbitrary vertices $x$ and $y$ follows easily from what we have already shown. Indeed, by symmetry, there is no loss of generality in assuming as before that $x=(1,0)$ and $y\neq(0,0)$. Set $m$ to be the integer satisfying that $2^m \le \|y\|_1 < 2^{m+1}$, so that it suffices to show that $|\rho_{x,y}| \le C_{n,U}\cdot 2^{-c_{n,U}\cdot m}$. Indeed, by Proposition~\ref{prop:independence_given_gradients},
\begin{equation*}
  \E(\cos(\theta_{y}-\theta_{x}))
 = \Re\E\left( \E\left(e^{i\left(\theta_y-\theta_{(2^{m},0)}\right)} \mid \nabla\theta_{=m} \right) \cdot \E\left(e^{i\left(\theta_{(2^m,0)}-\theta_x\right)} \mid \nabla\theta_{=m} \right)\right) .
\end{equation*}
Thus, the required estimate follows from~\eqref{eq:fluctuations_of_gradient} following the decomposition~\eqref{eq:product_formula_for_estimates} (done conditionally on $\nabla\theta_{=m}$).

Let us briefly explain how to adapt the proof to the case that $n\ge 3$. Write $(\sigma^1,\dots,\sigma^n)$ for the components of $\sigma$. The idea is to condition on $(\sigma^3,\dots,\sigma^n)$ and apply the previous argument to the conditional distribution of the remaining two coordinates $(\sigma^1,\sigma^2)$. In more detail, conditioned on $(\sigma^3,\dots,\sigma^n)=h$, the random variable $(\sigma^1,\sigma^2)$ almost surely has a density with respect to the product over $v \in V(\T_L^2)$ of uniform distributions on $r_v \S^1$, where $r_v:=\sqrt{1-\|h(v)\|_2^2}\in[0,1]$. Moreover, after passing to the angle representation $\theta_v$ for each $(\sigma^1_v,\sigma^2_v)$, this density has the form
\begin{equation*}
t_h(\theta) := \frac{1}{Z_h}\exp \left[-\sum_{\{u,v\}\in
	E(\T_L^2)}U\Big(r_u r_v \cos(\theta_u - \theta_v) + \langle h(u),h(v)\rangle\Big)\right]
\prod_{v\in V(\T_L^2)} \one_{(\theta_v\in [-\pi,\pi))}.
\end{equation*}
In particular, we see from this expression that, conditioned on $(\sigma^3,\dots,\sigma^n)$, the distribution of $(\sigma^1,\sigma^2)$ is invariant to global rotations and has the domain Markov property (just as in the $n=2$ case). This allows the first step of the proof to go through essentially without change. In the second step, the function $\tilde{U}$ defined in~\eqref{eq:U_tilde_def} should be replaced by a collection of functions $\tilde{U}_{\{u,v\}}$, one for each edge $\{u,v\} \in E(\T_L^2)$, defined by $\tilde{U}_{\{u,v\}}(\alpha) := U(r_u r_v \cos(\alpha) + \langle h(u),h(v)\rangle)$. It is straightforward to check that, since $U$ is a $C^2$ function and $r_u, r_v\in[0,1]$, the second derivative of $\tilde{U}_{\{u,v\}}$ is bounded above (and below) uniformly in $\{u,v\}$ and $h$. The argument in the second step of the proof now goes through as well, replacing each appearance of $\tilde{U}$ with the suitable $\tilde{U}_{\{u,v\}}$.

\subsection{Long-range order in dimensions $d\ge 3$ - the infra-red
bound}\label{sec:infra-red_bound} In this section we prove that the
spin $O(n)$ model in spatial dimensions $d\ge 3$ exhibits long-range
order at sufficiently low temperatures. This was first proved by
Fr\"ohlich, Simon and Spencer \cite{FroSimSpe76} who introduced the
method of the \emph{infra-red bound} to this end. Our exposition benefitted from the excellent `Marseille notes' of Daniel Ueltschi \cite[Lecture 2, part 2]{Ueltschi2013}, the recent book of Friedli and Velenik \cite[Chapter 10]{friedli2016statistical} and discussions with Michael Aizenman.
We prove the following result.
\begin{theorem}\label{thm:long_range_order}
For any $d\ge 3$ and any $n\ge 1$ there exists a constant $\beta_1(d,n)$ such that the following holds. Suppose $\sigma:V(\T_L^d)\to \mathbb
S^{n-1}$ is randomly sampled from the $d$-dimensional spin $O(n)$
model at inverse temperature $\beta \ge \beta_1(d,n)$. Then
\begin{equation*}
   \frac{1}{|V(\T_L^d)|^2}\sum_{x,y\in
V(\T_L^d)}\E(\left\langle\sigma_x,\sigma_y\right\rangle) \ge c_{d,n,\beta}.
\end{equation*}
Moreover, for any $d \ge 1$, $n \ge 1$ and $\beta>0$, we have the limiting inequality
\[ \liminf_{L \to \infty} \frac{1}{|V(\T_L^d)|^2} \sum_{x,y \in \Lambda} \E\left( \langle \sigma_x \sigma_y \rangle \right) \ge 1 - \frac{n}{2\beta} \int_{[0,1]^d} \frac{1}{\sum_{j=1}^d (1 - \cos(\pi t_j))}dt .\]
Lastly, the above integral is finite when $d \ge 3$ and is asymptotic to $1/d$ as $d \to \infty$.
\end{theorem}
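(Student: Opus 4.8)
The plan is to run the classical reflection-positivity argument of Fr\"ohlich--Simon--Spencer. The engine is \emph{reflection positivity} of the nearest-neighbour ferromagnetic measure on $\T_L^d$, which yields the \emph{Gaussian domination} bound and hence the \emph{infra-red bound} on the Fourier transform of the two-point function; the asserted lower bound on the average correlation is then extracted from the trivial identity $\E(\langle\sigma_0,\sigma_0\rangle)=1$. Concretely I would: (i) establish reflection positivity of the measure with respect to the hyperplanes perpendicular to the coordinate axes, both those passing through sites and those bisecting edges; (ii) deduce, for every $h\colon V(\T_L^d)\to\R^n$, the inequality $Z(h)\le Z(0)$, where $Z(h):=\int\exp\bigl[-\tfrac\beta2\sum_{\{x,y\}\in E(\T_L^d)}\|\sigma_x-\sigma_y-(h_x-h_y)\|^2\bigr]\,d\sigma$, so that $Z(0)=e^{-\beta|E(\T_L^d)|}Z^{\text{spin}}_{\T_L^d,n,\beta}$; (iii) expand $Z(th)\le Z(0)$ to second order in $t$ to obtain the infra-red bound $\widehat G(p)\le\tfrac{n}{2\beta\,\mathcal E(p)}$ for every nonzero $p$ in the dual torus, where $G(x):=\E(\langle\sigma_0,\sigma_x\rangle)$ and $\mathcal E(p):=\sum_{j=1}^d(1-\cos p_j)$; (iv) invoke the sum rule; (v) pass to the limit $L\to\infty$; and (vi) analyse the resulting integral.

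\textbf{From reflection positivity to the infra-red bound.} For (i), fix a coordinate direction and a reflection $\vartheta$ exchanging two halves of $V(\T_L^d)$; since the Hamiltonian is a sum of edge terms, each lying within one half or crossing the reflection hyperplane, and the a priori measure factorises over sites, one checks directly the inequality $\E(F\cdot G\circ\vartheta)^2\le\E(F\cdot F\circ\vartheta)\,\E(G\cdot G\circ\vartheta)$ for $F,G$ measurable with respect to the spins in one half. For (ii), viewing $h\mapsto Z(h)$ and applying this inequality across the successive hyperplanes bisecting edges gives $Z(h)^2\le Z(\vartheta_+h)\,Z(\vartheta_-h)$, where $\vartheta_\pm h$ reflect $h$ about the chosen hyperplane; iterating over all hyperplanes in all directions reduces $h$ to a global constant, where the gradients $h_x-h_y$ vanish and $Z(h)=Z(0)$. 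For (iii), using $\|\sigma_x-\sigma_y\|^2=2-2\langle\sigma_x,\sigma_y\rangle$ and $\sum_{\{x,y\}\in E(\T_L^d)}\|h_x-h_y\|^2=\langle h,-\Delta h\rangle$ (with $\Delta$ the combinatorial Laplacian) one rewrites $Z(h)=Z(0)\,e^{-\tfrac\beta2\langle h,-\Delta h\rangle}\,\E\bigl(e^{\beta\langle-\Delta\sigma,h\rangle}\bigr)$, so that $Z(h)\le Z(0)$ becomes $\E\bigl(e^{\beta\langle-\Delta\sigma,h\rangle}\bigr)\le e^{\tfrac\beta2\langle h,-\Delta h\rangle}$; replacing $h$ by $th$, the terms of orders $t^0$ and $t^1$ match (the order-$t^1$ term vanishing because $\E(\sigma_x)=0$ by symmetry) and comparing the order-$t^2$ terms yields $\E\bigl(\langle-\Delta\sigma,h\rangle^2\bigr)\le\tfrac1\beta\langle h,-\Delta h\rangle$. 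Taking $h=\phi\,e_1$ with $\phi$ a real linear combination of the plane waves $e^{\pm ip\cdot x}$, and using rotation invariance in the form $G=n\,G^1$ with $G^1(x)=\E(\sigma_0^1\sigma_x^1)$, this converts, via Parseval and the Laplacian symbol $\widehat{-\Delta\phi}(p)=2\mathcal E(p)\hat\phi(p)$, into $\widehat G(p)=n\,\widehat{G^1}(p)=\tfrac{n}{|V(\T_L^d)|}\,\E\bigl(|\widehat{\sigma^1}(p)|^2\bigr)\le\tfrac{n}{2\beta\,\mathcal E(p)}$ for $p\neq0$.

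\textbf{Sum rule and passage to the limit.} By translation invariance on the torus, $\tfrac{1}{|V(\T_L^d)|^2}\sum_{x,y}\E(\langle\sigma_x,\sigma_y\rangle)=\tfrac{1}{|V(\T_L^d)|}\widehat G(0)$, while Fourier inversion at $x=0$ gives $1=\E(\langle\sigma_0,\sigma_0\rangle)=G(0)=\tfrac{1}{|V(\T_L^d)|}\sum_p\widehat G(p)$; isolating the $p=0$ term and using the infra-red bound on the rest yields
\[ \frac{1}{|V(\T_L^d)|^2}\sum_{x,y\in V(\T_L^d)}\E(\langle\sigma_x,\sigma_y\rangle)\;=\;\frac{\widehat G(0)}{|V(\T_L^d)|}\;\ge\;1-\frac{n}{2\beta}\cdot\frac{1}{|V(\T_L^d)|}\sum_{p\neq0}\frac{1}{\mathcal E(p)}. \]
For $d\ge3$ the sum $\tfrac{1}{|V(\T_L^d)|}\sum_{p\neq0}\tfrac{1}{\mathcal E(p)}$ is bounded above, uniformly in $L$, by a finite constant (it converges as $L\to\infty$ to $I_d:=\int_{[0,1]^d}\bigl(\sum_{j=1}^d(1-\cos\pi t_j)\bigr)^{-1}dt<\infty$), which proves the first assertion: for $\beta$ above $\tfrac n2$ times this uniform bound the right-hand side is a positive constant $c_{d,n,\beta}$. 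Letting $L\to\infty$, the Riemann sum converges to $(2\pi)^{-d}\int_{[-\pi,\pi]^d}\bigl(\sum_{j=1}^d(1-\cos p_j)\bigr)^{-1}dp$, which after the substitution $p_j=\pi t_j$ and using evenness in each coordinate equals $I_d$; this gives the limiting inequality for every $d\ge1$ (the bound being vacuous when $d\le2$, where $I_d=\infty$).

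\textbf{The integral, and the main obstacle.} Near the origin $\sum_{j=1}^d(1-\cos\pi t_j)$ is comparable to $|t|^2$, while the integrand is bounded away from the origin; hence $I_d$ is comparable to $\int_{|t|<\varepsilon}|t|^{-2}dt$ plus a bounded term, which is finite precisely when $d\ge3$. For the asymptotics I would use $a^{-1}=\int_0^\infty e^{-sa}ds$ to write $I_d=\int_0^\infty\phi(s)^d\,ds$ with $\phi(s):=\int_0^1 e^{-s(1-\cos\pi t)}\,dt=\E(e^{-sX})$, $X:=1-\cos(\pi U)$, $U\sim\mathrm{Unif}[0,1]$; here $\phi(0)=1$, $\phi'(0)=-\E(X)=-1$, and $\phi(s)<1$ for $s>0$. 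The range $s\ge1$ contributes at most $\phi(1)^{d-3}\int_1^\infty\phi(s)^3\,ds$, which is exponentially small in $d$ (using $\phi(s)\sim c\,s^{-1/2}$, so $\int_1^\infty\phi^3<\infty$), while for $\int_0^1\phi(s)^d\,ds$ the substitution $s=u/d$ together with the pointwise limit $\phi(u/d)^d\to e^{-u}$ and the Hoeffding-type domination $\phi(u/d)^d\le e^{-u/2}+e^{-d/8}$ (valid since $X\in[0,2]$ and $\E(X)=1$) give $d\int_0^1\phi(s)^d\,ds\to\int_0^\infty e^{-u}\,du=1$; hence $I_d\sim1/d$. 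The hard part will be steps (i)--(ii): correctly setting up reflection positivity on the torus, which carries two families of reflection hyperplanes (through sites, and bisecting edges), and running the reflection iteration proving Gaussian domination, since this is where the lattice symmetries are genuinely used and where care is needed with edge orientations and with the reduction to constant $h$. Everything after that is essentially bookkeeping, the only other place requiring a short genuine limiting argument being the Laplace-type evaluation $I_d\sim1/d$.
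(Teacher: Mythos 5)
Your plan is the Fr\"ohlich--Simon--Spencer argument that the paper itself uses: reflection positivity across coordinate hyperplanes, Gaussian domination $Z(h)\le Z(0)$, a second-order Taylor expansion to obtain the infra-red bound $\widehat G(p)\le n/(2\beta\,\mathcal E(p))$ for $p\neq 0$, the Parseval sum rule isolating the zero mode, and the Riemann-sum passage to the Green's-function integral, whose finiteness and $1/d$ asymptotics you treat cleanly via the Laplace representation $a^{-1}=\int_0^\infty e^{-sa}\,ds$ (the paper merely asserts the $1/d$ asymptotics ``by the law of large numbers''). The one genuine implementation difference is how Gaussian domination is extracted from reflection positivity: you propose the classical direct iteration $Z(h)^2\le Z(\vartheta_+h)\,Z(\vartheta_-h)$ pushed until $h$ becomes constant, which is correct but needs the maximizer-minimality argument (exactly the mechanism behind the chessboard estimate) rather than literal ``iteration'', and one must handle the non-compactness of the space of background fields $h$; the paper instead lifts to an auxiliary spin system on pairs $\bar\sigma=(\sigma,\tau)$ with $\tau$ constrained to a bounded open set, applies its chessboard estimate (Proposition~\ref{prop:chessboard_estimate_edges} and Corollary~\ref{cor:chessboard_estimate_edges_simplified}) to shrinking events $\{|\tau_v-t_v|<\eps\}$, and sends $\eps\to0$, which sidesteps the compactness issue entirely. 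Both routes are standard and correct; the paper's detour via events is a bit more roundabout to set up but makes the passage from reflection positivity to the inequality on $Z$ essentially mechanical.
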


Of course, long-range order for the Ising model ($n=1$) has already been established in
Section~\ref{sec:low_temperature_Ising} so that our main interest is
in the case of continuous spins, when $n\ge 2$. The last part of the theorem establishes long-range order in the spin $O(n)$ model for $\beta\ge \frac{n}{2d}(1+o(1))$ as $d\to\infty$. Comparing with Remark~\ref{rem:Fisher_bound}, we see that this bound has the correct asymptotic dependence on both $n$ and $d$.

The proof presented below, relying on the original paper of Fr\"ohlich, Simon and Spencer \cite{FroSimSpe76} and making use of reflection positivity, remains the main method to establish Theorem~\ref{thm:long_range_order}. Fr\"ohlich and Spencer \cite{frohlich1982massless} developed an alternative approach for the XY model ($n=2$) which relies on the duality transformation explained in Section~\ref{sec:exact_representations} below; see also \cite[Section 5.5]{Bauerschmidt2016}. Kennedy and King~\cite{kennedy1986spontaneous} provided a second alternative approach for the XY model.
However, these alternatives do not apply to the model with larger values of $n$ ($n\ge 3$) where the symmetry group acting on the spins is non-Abelian. For these larger values the only alternative to reflection positivity is due to Balaban who has made rigorous elements of the renormalization group approach to the problem in a formidable series of papers, starting with \cite{balaban1995low}; see also Dimock's review starting with \cite{dimock2013renormalization}. Nevertheless, it would be highly desirable to have additional approaches to prove continuous-symmetry breaking as many questions in this direction are still open, most prominently to establish a phase transition for the \emph{quantum} Heisenberg ferromagnet in dimensions $d\ge3$ (current techniques allow to prove this only for the \emph{antiferromagnet}; See Dyson, Lieb and Simon \cite{dyson1978phase}).

In our treatment we provide additional background on reflection positivity than strictly necessary for the proof of Theorem~\ref{thm:long_range_order} in order to place the arguments in a wider context and to highlight the use of reflection positivity as a general-purpose tool applicable in many settings. The reader is referred to \cite[Lecture 2, part 2]{Ueltschi2013} for a direct route to the proof.

\subsubsection{Introduction to reflection positivity}\label{sec:introduction_to_reflection_positivity}
In this section we provide an introduction to reflection positivity for rather general nearest-neighbor models. Extensions of the theory to certain next-nearest-neighbor and certain long-range interactions are possible and the reader is referred to \cite{frohlich1978phase, frohlich1980phase, biskup2009reflection} and \cite[Chapter 10]{friedli2016statistical} for alternative treatments.

We allow spins to take values in an arbitrary measure space $(S,\cS, \lambda)$. We also consider a general interaction between different values of spins, prescribed by a symmetric measurable function $h \colon S \times S \to [0,\infty)$ which is not essentially zero.
Here symmetric means that
\begin{equation*}
  h(a,b)=h(b,a)\quad \text{for all $a,b \in S$}
\end{equation*}
and not essentially zero means that $\iint h(a,b)d\lambda(a)d\lambda(b)>0$. For simplicity, we assume $(S,\cS, \lambda)$ to be a finite measure space and $h$ to be bounded.

The spin space $(S,\cS, \lambda)$ and the interaction $h$ define a spin model on a finite graph $G$ as follows. The set of configurations is $S^{V(G)}$ and the density of a configuration $\sigma\colon V(G) \to S$ with respect to $d\lambda(\sigma) := \prod_{v \in V(G)} d\lambda(\sigma_v)$ is
\begin{equation}\label{eq:def_density_general_interaction}
\frac{1}{Z_{G,S,\lambda,h}} \prod_{\{u,v\} \in E(G)} h(\sigma_u,\sigma_v),
\end{equation}
where the normalizing constant is given by
\begin{equation*}
  Z_{G,S,\lambda,h} := \int \prod_{\{u,v\} \in E(G)} h(\sigma_u,\sigma_v)d\lambda(\sigma).
\end{equation*}
For this definition to make sense it is required that $0 < Z_{G,S,\lambda,h} < \infty$. The upper bound follows from our assumptions that $\lambda$ is finite and $h$ is bounded. For bipartite $G$, the case of interest to us here, the lower bound follows from the assumption that $h$ is not essentially zero\footnote{It suffices to show that $\iint \prod_{i,j=1}^n h(s_i,t_j)d\lambda(s_i)d\lambda(t_j)>0$ for $n\ge 1$. Fubini's theorem reduces this to $\iint \prod_{i=1}^n h(s_i,t)d\lambda(s_i)d\lambda(t)>0$, which then follows from Fubini's theorem and the assumption on $h$.} (the assumption does not suffice for general graphs).

The spin $O(n)$ model with potential $U$ can be recovered in this setting by taking the spin space $(S,\cS,\lambda)$ to be the uniform probability space on $\S^{n-1}$ and defining the interaction $h$ by $h(a,b):=\exp(-U(\left\langle a,b\right\rangle))$.

In order to discuss reflections, the graph $G$ should have suitable symmetries. From here on, we consider only the torus graph $G=\T_L^d$. Denote the vertices of the torus by
\begin{equation*}
  \Lambda := V(\T_L^d) = \{-L+1,\dots,L\}^d.
\end{equation*}
The torus graph admits hyperplanes of reflection which pass through vertices and hyperplanes of reflection which pass through edges. We discuss these two cases separately.

\medbreak
\noindent{\bf Reflections through vertices.}
We split the vertices of the torus into two partially overlapping subsets $V_0$ and $V_1$ of vertices, the `left' and `right' halves, by defining
\[ V_0 := \big\{ v \in \Lambda : v_1 \notin \{1,\dots,L-1 \} \big\} \qquad\text{and}\qquad V_1 := \big\{ v \in \Lambda : v_1 \in \{0,1,\dots,L \} \big\},\]
where we write each $v\in \Lambda$ as $(v_1,v_2,\ldots, v_d)$. Note that $V_0 \cup V_1 = \Lambda$ and
\[ V_0 \cap V_1 = \big\{ v \in \Lambda : v_1 \in \{0,L\} \big\} .\]
Define a function $R \colon \Lambda \to \Lambda$ by
\[ Rv := \begin{cases}
	(-v_1,v_2,\dots,v_n) &\text{if }v_1 \neq L\\
	(L,v_2,\dots,v_n) &\text{if }v_1 = L
\end{cases}.\]
Thus, $R$ is the reflection through the vertices $V_0 \cap V_1$.
Note in particular that $R$ is an involution which fixes $V_0 \cap V_1$.
Geometrically, the reflection is done across the hyperplane
orthogonal to the $x$-axis which passes through the vertices having
$x$-coordinate $0$ (or equivalently, the
hyperplane passing through the vertices having $x$-coordinate $L$). One may similarly consider reflections through other planes orthogonal to one of the coordinate axes, however, for concreteness, we focus on the reflection above. We denote by $R$ also the naturally induced mapping on configurations $\sigma \in S^\Lambda$ which is defined by $(R\sigma)_v := \sigma_{Rv}$.

Let $\cF$ denote the set of bounded measurable functions $f \colon S^\Lambda \to \C$.
Let $\cF_0 \subset \cF$ be the subset of functions $f$ which depend only on the values of the spins in $V_0$, i.e., $f(\sigma)$ is determined by $\sigma|_{V_0}$.
We define a bilinear form on $\cF_0$ by
\[ (f,g) := \E\left(f(\sigma) \overline{g(R\sigma)}\right) \qquad\text{for }f,g \in \cF_0 .\]

\begin{proposition}[Reflection positivity through vertices]
	The bilinear form defined above is positive semidefinite, i.e.,
\begin{equation}\label{eq:reflection_positivity_vertices}
  (f,f) \ge 0\quad\text{ for all $f \in \cF_0$}.
\end{equation}
\end{proposition}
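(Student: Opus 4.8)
The plan is to exploit the product structure of the measure across the reflection hyperplane. The key observation is that the torus graph $\T_L^d$ decomposes along the reflection $R$ into three parts: edges lying entirely inside $V_0$, edges lying entirely inside $V_1 = RV_0$, and edges crossing the hyperplane (connecting $V_0 \cap V_1$ to its neighbours on the $R$-image side). Write $W := V_0 \cap V_1$ for the fixed vertices. The density \eqref{eq:def_density_general_interaction} factors as a product of $h$ over these three edge-sets, and because there are no edges joining $V_0 \setminus W$ directly to $V_1 \setminus W$ (the hyperplane separates them), the ``inside'' part and the ``reflected inside'' part interact only through the spins on $W$.

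**First I would** make this precise by conditioning on $\sigma|_W$. Given the values on $W$, the configuration on $V_0 \setminus W$ and the configuration on $V_1 \setminus W$ become conditionally independent, and moreover — using that $R$ maps the edge-structure of $V_1$ onto that of $V_0$ — the conditional law of $\sigma|_{V_1}$ given $\sigma|_W = w$ equals the push-forward under $R$ of the conditional law of $\sigma|_{V_0}$ given $\sigma|_W = w$. Concretely, there is a finite measure space (the conditional configuration space on $V_0 \setminus W$ together with the spins on $W$) and a function $K(w, \cdot)$ such that, for $f, g \in \cF_0$,
\[
(f,g) = \int \left( \int f(\sigma) \, d\nu_w(\sigma|_{V_0\setminus W}) \right) \overline{\left( \int g(R\sigma) \, d\nu_w(\sigma|_{V_1\setminus W}) \right)} \, \rho(w)\, d\lambda^{\otimes W}(w),
\]
where $\rho(w) \ge 0$ is the marginal density on $W$ and $\nu_w$ is the (unnormalized) conditional law, which is the \emph{same} kernel for both factors precisely because $h$ is symmetric and $R$ is a graph isomorphism fixing $W$. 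Since $g \in \cF_0$ depends only on $\sigma|_{V_0}$, the quantity $\int g(R\sigma)\,d\nu_w$ is literally the complex conjugate-free analogue of $\int f(\sigma)\,d\nu_w$ evaluated at $g$, so defining $\Phi_f(w) := \int f(\sigma)\,d\nu_w(\sigma|_{V_0 \setminus W})$ we get $(f,g) = \int \Phi_f(w)\,\overline{\Phi_g(w)}\,\rho(w)\,d\lambda^{\otimes W}(w)$.

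**The conclusion then is immediate:** taking $g = f$ gives $(f,f) = \int |\Phi_f(w)|^2 \rho(w)\, d\lambda^{\otimes W}(w) \ge 0$, since $\rho \ge 0$ and $h \ge 0$ guarantees $\nu_w$ is a genuine (non-negative) measure. **The main obstacle** I expect is not any inequality but the bookkeeping required to justify the factorization of the density and the identification of the two conditional kernels as one and the same: one must carefully check that every edge of $\T_L^d$ falls into exactly one of the three classes (inside-$V_0$, inside-$V_1$, or crossing), that crossing edges have one endpoint in $W$ and the other in $V_1 \setminus W = R(V_0 \setminus W)$, and that $R$ restricted to these structures is the required isomorphism. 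The torus wrap-around must be handled — note that the definition of $R$ treats $v_1 = L$ specially precisely to accommodate this, and there are in fact \emph{two} reflection hyperplanes (through $x$-coordinate $0$ and through $x$-coordinate $L$), with crossing edges near both; once this combinatorial picture is set up correctly, reflection positivity drops out as a ``sum of squares'' identity with no further work.
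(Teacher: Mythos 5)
Your proposal is correct and is essentially the paper's proof, just spelled out at the level of explicit density factorization rather than in the paper's more compact conditional-expectation notation. The paper conditions on $\sigma|_{V_0\cap V_1}$, observes (via the domain Markov property) that $\sigma|_{V_0}$ and $(R\sigma)|_{V_0}$ are then conditionally i.i.d., and writes $(f,f)=\E\left(\left|\E\left(f(\sigma)\mid\sigma|_{V_0\cap V_1}\right)\right|^2\right)\ge 0$; your $\Phi_f(w)$ is exactly that inner conditional expectation, and your integral against $\rho(w)\,d\lambda^{\otimes W}$ is the outer one.
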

\begin{proof}
	The domain Markov property implies that after conditioning on $\sigma|_{V_0 \cap V_1}$ the random variables $\sigma|_{V_0}$ and $(R\sigma)|_{V_0}$ become independent and identically distributed. Thus,
	\begin{align*}
	(f,f) = \E\left(f(\sigma) \overline{f(R\sigma)}\right)
	 &= \E\left(\E\left(f(\sigma) \overline{f(R\sigma)} \mid \sigma|_{V_0 \cap V_1} \right)\right) \\
	 &= \E\left(\E\left(f(\sigma) \mid \sigma|_{V_0 \cap V_1} \right) \cdot \overline{\E\left(f(R\sigma) \mid \sigma|_{V_0 \cap V_1} \right)} \right) \\
	 &= \E\left(\left|\E\left(f(\sigma) \mid \sigma|_{V_0 \cap V_1} \right)\right|^2  \right) \ge 0 . \qedhere
	\end{align*}
\end{proof}

The reflection positivity property \eqref{eq:reflection_positivity_vertices} (used for all hyperplanes of reflection passing through vertices) implies a version of the important ``chessboard estimate''. We do not state this estimate here, as a version of it for reflections through edges is given in Proposition~\ref{prop:chessboard_estimate_edges} below, and refer instead to \cite{biskup2009reflection} and \cite[Chapter 10]{friedli2016statistical} for more details.

\medbreak
\noindent{\bf Reflections through edges.}
We split the vertices of the torus into two non-overlapping subsets $V_0$ and $V_1$ of vertices, the `left' and `right' halves, by
\[ V_0 := \big\{ v \in \Lambda : v_1 \le 0 \big\} \qquad\text{and}\qquad V_1 := \big\{ v \in \Lambda : v_1 \ge 1 \big\} .\]
Note that $V_0 \cup V_1 = \Lambda$ and that $V_0 \cap V_1 = \emptyset$.
Define a function $R \colon \Lambda \to \Lambda$ by
\[ Rv := (1-v_1,v_2,\dots,v_n) .\]
Thus, $R$ is the reflection through the edges between $V_0$ and $V_1$.
Note in particular that $R$ is an involution with no fixed points.
Geometrically, the reflection is done across the hyperplane
orthogonal to the $x$-axis which passes through the edges between
$x$-coordinate $0$ and $x$-coordinate $1$ (or equivalently, the
hyperplane passing through the edges between $x$-coordinate $L$ and
$x$-coordinate $-L+1$). One may similarly consider reflections
through other planes orthogonal to one of the coordinate axes,
however, for concreteness, we focus on the reflection above.
We again denote by $R$ also the naturally induced mapping on configurations $\sigma \in S^\Lambda$ which is defined by $(R\sigma)_v := \sigma_{Rv}$.

Let $\cF$ denote the set of bounded measurable functions $f \colon S^\Lambda \to \C$.
Let $\cF_0 \subset \cF$ be the subset of functions $f$ which depend only on the values of the spins in $V_0$, i.e., $f(\sigma)$ is determined by $\sigma|_{V_0}$.
We define a bilinear form on $\cF_0$ by
\begin{equation}\label{eq:bilinear_form_edge_reflection}
(f,g) := \E\left(f(\sigma) \overline{g(R\sigma)}\right) \qquad\text{for }f,g \in \cF_0 .
\end{equation}

\begin{proposition}[Reflection positivity through edges]\label{prop:reflection_positivity_through_edges}
	Suppose that the interaction $h$ may be written as follows: there exists a measure space $(T,\mathcal{T},\nu)$, where $\nu$ is a finite (non-negative) measure, and a bounded measurable function $\alpha \colon T \times S \to \C$ such that
\begin{equation}\label{eq:h_semidefinite_representation}
  h(a,b) = \int \alpha(t,a) \overline{\alpha(t,b)} d\nu(t) \qquad\text{for $(\lambda\times\lambda)$-almost every $a,b \in S$}.
\end{equation}
	Then the bilinear form defined above is positive semidefinite, i.e., $(f,f) \ge 0$ for all $f \in \cF_0$.
\end{proposition}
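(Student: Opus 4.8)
The plan is to follow the now-standard argument for reflection positivity of edge reflections: expand $(f,f)$ against the unnormalized density, split the product over edges according to the reflection $R$, use the representation \eqref{eq:h_semidefinite_representation} on the crossing edges to introduce auxiliary integration variables, and recognize the resulting spin-integral, for each fixed value of those variables, as the squared modulus of a single integral.

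First I would partition the edge set of $\T_L^d$ into the edges $E_0$ internal to $V_0$, the edges $E_1$ internal to $V_1$, and the set $E_{\mathrm{cr}}$ of crossing edges. Since $R$ is a graph automorphism interchanging $V_0$ and $V_1$, it maps $E_0$ bijectively onto $E_1$ via $\{u,v\}\mapsto\{Ru,Rv\}$, while each crossing edge has the form $\{u,Ru\}$ for a unique $u\in V_0$ (one with first coordinate $0$ or $L$). Using the density \eqref{eq:def_density_general_interaction}, that $f\in\cF_0$ depends only on $\sigma|_{V_0}$, that $f(R\sigma)$ therefore depends only on $\sigma|_{V_1}$ (as $(R\sigma)|_{V_0}=(\sigma_{Rv})_{v\in V_0}$ with $Rv\in V_1$), and that $V_0\cap V_1=\emptyset$ so $d\lambda(\sigma)$ splits as a product over $V_0$ and over $V_1$, I obtain
\[ (f,f)=\frac{1}{Z}\int f(\sigma)\,\overline{f(R\sigma)}\;\Big(\prod_{\{u,v\}\in E_0}h(\sigma_u,\sigma_v)\Big)\Big(\prod_{\{u,v\}\in E_1}h(\sigma_u,\sigma_v)\Big)\prod_{\{u,Ru\}\in E_{\mathrm{cr}}}h(\sigma_u,\sigma_{Ru})\,d\lambda(\sigma), \]
with $Z=Z_{\T_L^d,S,\lambda,h}\in(0,\infty)$.

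Next I would insert \eqref{eq:h_semidefinite_representation} into each factor $h(\sigma_u,\sigma_{Ru})$, $\{u,Ru\}\in E_{\mathrm{cr}}$, introducing one auxiliary variable $t_e\in T$ per crossing edge with law $\nu$. As $f,h,\alpha$ are bounded and $\lambda,\nu$ finite, Fubini lets me pull the $\nu$-integrals outside, leaving, for each fixed $t=(t_e)_{e\in E_{\mathrm{cr}}}$, an integrand that factorizes as a function of $\sigma|_{V_0}$ times a function of $\sigma|_{V_1}$. Concretely, writing
\[ G_t(\xi):=f(\xi)\Big(\prod_{\{u,v\}\in E_0}h(\xi_u,\xi_v)\Big)\prod_{e=\{u,Ru\}\in E_{\mathrm{cr}}}\alpha(t_e,\xi_u),\qquad\xi\in S^{V_0}, \]
the $V_0$-factor is $G_t(\sigma|_{V_0})$, and the claim is that the $V_1$-factor equals $\overline{G_t\big((R\sigma)|_{V_0}\big)}$. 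This identity is checked termwise after substituting $\xi_v=\sigma_{Rv}$: the $f$-term gives $\overline{f(R\sigma)}$; the $E_0$-product gives $\prod_{\{u,v\}\in E_0}h(\sigma_{Ru},\sigma_{Rv})=\prod_{\{u',v'\}\in E_1}h(\sigma_{u'},\sigma_{v'})$, using that $h$ is real and symmetric and that $R$ carries $E_0$ onto $E_1$; and the $\alpha$-terms give $\prod_e\overline{\alpha(t_e,\sigma_{Ru})}$, the evaluation at the $V_1$-endpoints. Hence, for each $t$, since $V_0,V_1$ partition $\Lambda$ and the product measure $\lambda^{\otimes\Lambda}$ is invariant under the relabeling $\sigma|_{V_1}\leftrightarrow(R\sigma)|_{V_0}$ induced by $R$,
\[ \int G_t(\sigma|_{V_0})\,\overline{G_t\big((R\sigma)|_{V_0}\big)}\,d\lambda(\sigma)=\int\!\!\int G_t(\xi)\,\overline{G_t(\eta)}\,d\lambda(\xi)\,d\lambda(\eta)=\Big|\int G_t(\xi)\,d\lambda(\xi)\Big|^2\ge0, \]
and integrating over $t$ against the non-negative measure $\nu^{\otimes E_{\mathrm{cr}}}$ and dividing by $Z>0$ gives $(f,f)\ge0$.

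I expect the only real difficulty to be bookkeeping: attaching each auxiliary variable $t_e$ to the correct $\alpha$-factor on each side of the reflection, and verifying the identity for the $V_1$-factor edge by edge (in particular the rewriting of the $E_0$-product as the $E_1$-product under $R$). The analytic ingredients — Fubini, invariance of the product measure under the coordinate relabeling given by $R$, and positivity of $\nu$ and of $Z$ — are routine under the stated boundedness and finiteness hypotheses.
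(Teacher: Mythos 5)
Your proof is correct and follows the same route as the paper: split the edges into those interior to $V_0$, those interior to $V_1$ (identified with the former via $R$), and the crossing edges; insert the representation \eqref{eq:h_semidefinite_representation} on each crossing edge and pull the $\nu$-integrals outside by Fubini; recognize the inner spin-integral as $\bigl|\int G_t\,d\lambda\bigr|^2$ by the product structure of $\lambda^{\otimes\Lambda}$ over the partition $V_0\sqcup V_1$; then integrate against the non-negative measure $\nu^{\otimes E_{\mathrm{cr}}}$. (One trivial slip: the $V_0$-endpoints of crossing edges have first coordinate $0$ or $-L+1$, not $0$ or $L$; the latter belongs to $V_1$.)
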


We remark that for finite spin spaces $S$ the assumption in the proposition holds if and only if the interaction $h$, regarded as a real symmetric $S\times S$ matrix, is positive semidefinite. Indeed, if $h$ has eigenvalues $(\lambda_t)$ and associated (real) eigenvectors $(\alpha_t)$ then $h(a,b) = \sum_t \alpha_t(a)\alpha_t(b)\lambda_t$ so that being positive semidefinite yields a representation of the form \eqref{eq:h_semidefinite_representation}. Conversely, having such a representation implies that $\sum_{a,b\in S} v(a)h(a,b)v(b)\ge 0$ for all $v:S\to\R$ whence $h$ is positive semidefinite. This argument may be viewed as saying that, for finite spin spaces, having a representation of the form \eqref{eq:h_semidefinite_representation} is a \emph{necessary} condition for the conclusion that $(\cdot, \cdot)$ is positive semidefinite when the graph $G$ is the single-edge graph $\T_1^1$. Further details on necessary and sufficient conditions for reflection positivity may be found in \cite{frohlich1978phase, frohlich1980phase, biskup2009reflection} and \cite[Chapter 10]{friedli2016statistical}.

\begin{proof}[Proof of Proposition~\ref{prop:reflection_positivity_through_edges}]
	By the definition~\eqref{eq:def_density_general_interaction} of the density of $\sigma$, we have
	\begin{align*}
	\E\left(f(\sigma) \overline{f(R\sigma)}\right) = \int f(\sigma) \overline{f(R\sigma)} h_0(\sigma) h_0(R\sigma) \prod_{\{u,v\} \in E(V_0,V_1)} h(\sigma_u,\sigma_v) d\lambda(\sigma) ,
	\end{align*}
	where $h_0 \in \cF_0$ accounts for the part of the interaction coming from edges within $V_0$, and $E(V_0,V_1)$ denotes the set of edges between $V_0$ and $V_1$. Using the assumption \eqref{eq:h_semidefinite_representation} and writing $\alpha_t := \alpha(t,\cdot)$, we see that the above is equal to
\begin{equation*}
  \begin{split}
    \iint f(\sigma) \overline{f(R\sigma)} h_0(\sigma) h_0(&R\sigma) \prod_{\{u,v\} \in E(V_0,V_1)} \alpha_{t_{\{u,v\}}}(\sigma_u) \overline{\alpha_{t_{\{u,v\}}}(\sigma_v)} d\nu(t_{\{u,v\}}) d\lambda(\sigma) =\\
    \int \prod_{\{u,v\} \in E(V_0,V_1)}d\nu(t_{\{u,v\}})&\int f(\sigma)h_0(\sigma)\prod_{\substack{\{u,v\} \in E(V_0,V_1)\\u\in V_0}} \alpha_{t_{\{u,v\}}}(\sigma_u)\prod_{u\in V_0} d\lambda(\sigma_u)\\
    \cdot&\int \overline{f(R\sigma)}h_0(R\sigma)\prod_{\substack{\{u,v\} \in E(V_0,V_1)\\v\in V_1}} \overline{\alpha_{t_{\{u,v\}}}(\sigma_v)}\prod_{v\in V_1} d\lambda(\sigma_v) =\\
    \int \prod_{\{u,v\} \in E(V_0,V_1)}d\nu(t_{\{u,v\}})&\bigg|\int f(\sigma)h_0(\sigma)\prod_{\substack{\{u,v\} \in E(V_0,V_1)\\ u\in V_0}} \alpha_{t_{\{u,v\}}}(\sigma_u)\prod_{u\in V_0} d\lambda(\sigma_u)\bigg|^2\ge 0,
  \end{split}
\end{equation*}
where in the second equality we used the fact that $Rv=u$ when $\{u,v\} \in E(V_0,V_1)$ to write
\begin{multline*}
  \int \overline{f(R\sigma)}h_0(R\sigma)\prod_{\substack{\{u,v\} \in E(V_0,V_1)\\v\in V_1}} \overline{\alpha_{t_{\{u,v\}}}(\sigma_v)}\prod_{v\in V_1} d\lambda(\sigma_v)=\\
  \overline{\int f(\sigma)h_0(\sigma)\prod_{\substack{\{u,v\} \in E(V_0,V_1)\\u\in V_0}} \alpha_{t_{\{u,v\}}}(\sigma_u)\prod_{u\in V_0} d\lambda(\sigma_u)}
\end{multline*}
and in the last inequality we used that $\nu$ is a non-negative measure.
\end{proof}

Let us consider an important example of a representation of the form \eqref{eq:h_semidefinite_representation}.

\smallskip
\noindent{\bf Example.} Let $S = \R^n$. Let $\tilde{h}:\R^n\to[0,\infty)$ be a continuous positive-definite function (in particular, $\tilde{h}(-x)=\tilde{h}(x)$). This is equivalent, by Bochner's theorem, to $\tilde{h}$ being the Fourier transform of a finite (non-negative) measure $\nu$ on $\R^n$. Suppose that $h \colon \R^n \times \R^n \to [0,\infty)$ is given by $h(a,b) := \tilde{h}(a-b)$. Then we may write
\begin{equation*}
  h(a,b) = \tilde{h}(a-b) = \int e^{ita}\overline{e^{itb}}d\nu(t),\quad a,b\in\R^n,
\end{equation*}
yielding a representation of the form \eqref{eq:h_semidefinite_representation}. We remark that the example generalizes to the case that $S$ is a locally compact Abelian group.

A particular function $\tilde{h}$ which we will be interested in later on is the one arising from the Gaussian interaction, namely, $\tilde{h}(x) := e^{-\frac{\beta}{2}\|x\|_2^2}$. In this case, the Fourier transform of $\tilde{h}$ is itself a scaled Gaussian density which is, in particular, non-negative. Thus $h(a,b):=\tilde{h}(a-b)$ admits a representation of the form \eqref{eq:h_semidefinite_representation}.
\medskip

The reflection positivity property allows to prove the following ``chessboard estimate''.

\begin{proposition}[Chessboard estimate]\label{prop:chessboard_estimate_edges}
	Let $\sigma$ be sampled from the density~\eqref{eq:def_density_general_interaction} and suppose that the bilinear form defined in~\eqref{eq:bilinear_form_edge_reflection} is positive semidefinite.
	Then for any collection of real-valued bounded measurable functions $(f_v)_{v \in \Lambda}$ on $(S,\cS)$, we have
	\[ \left|\E\left(\prod_{v \in \Lambda} f_v(\sigma_v) \right)\right|^{|\Lambda|} \le \prod_{w \in \Lambda} \E\left(\prod_{v \in \Lambda} f_w(\sigma_v)\right) .\]
\end{proposition}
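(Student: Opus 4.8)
The plan is to derive the chessboard estimate from reflection positivity by an iterated reflection argument, reflecting through every hyperplane of the torus that passes through edges, one coordinate direction at a time. The key abstract fact is that a positive semidefinite bilinear form satisfies the Cauchy--Schwarz inequality: for $f,g \in \cF_0$ we have $|(f,g)|^2 \le (f,f)(g,g)$. Combined with the fact that reflection through an edge-hyperplane maps $\cF_0$ to the analogous space of functions depending on the spins in $V_1$, this will let us bound $\E(\prod_v f_v(\sigma_v))$ by a geometric average of expectations of ``reflected'' products, and iterating over all hyperplanes will homogenize the product so that every vertex carries the same function $f_w$, one term in the right-hand side for each choice of $w$.

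In more detail, I would first record the Cauchy--Schwarz consequence of Proposition~\ref{prop:reflection_positivity_through_edges} (and its translates: by the symmetry of the torus the same positivity holds for reflection through the hyperplane between $x$-coordinates $k$ and $k+1$ for every $k$, and likewise in every coordinate direction). Given functions $(f_v)_{v\in\Lambda}$, write $F(\sigma) := \prod_{v\in V_0} f_v(\sigma_v) \in \cF_0$ and $G(R\sigma)$ for the part coming from $V_1$; applying Cauchy--Schwarz to the form $(\cdot,\cdot)$ gives
\[
  \left|\E\left(\prod_{v\in\Lambda} f_v(\sigma_v)\right)\right|^2 \le \E\left(\prod_{v\in V_0} f_v(\sigma_v)\, \overline{f_v(\sigma_{Rv})}\right) \cdot \E\left(\prod_{v\in V_1} \overline{f_{Rv}(\sigma_v)}\, f_{Rv}(\sigma_{Rv})\right).
\]
Each factor on the right is again an expectation of a product over all of $\Lambda$ of single-site functions, but now the family of functions is symmetric under $R$. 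Iterating this bound over a generating set of reflections (there are order $\log|\Lambda|$ reflection hyperplanes per coordinate, but the standard argument uses all $2L$ of them in each of the $d$ directions) progressively forces the function family to be invariant under the full group generated by all the reflections. That group acts transitively on $\Lambda$ — reflecting through successive hyperplanes moves any vertex to any other — so after the iteration every resulting expectation is of the form $\E(\prod_{v\in\Lambda} f_w(\sigma_v))$ for a single $w$, and keeping track of the exponents (each Cauchy--Schwarz step halves the exponent while doubling the number of factors) yields exactly $|\E(\prod_v f_v(\sigma_v))|^{|\Lambda|} \le \prod_{w\in\Lambda}\E(\prod_v f_w(\sigma_v))$.

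The main obstacle, and the part requiring genuine care, is the bookkeeping of the iteration: one must verify that after reflecting through all hyperplanes in all $d$ coordinate directions the ``labels'' $f_v$ attached to the vertices become constant — i.e. that the orbit of any vertex under the generated reflection group is all of $\Lambda$ — and that the exponents multiply to precisely $|\Lambda| = (2L)^d$ rather than something larger. A clean way to organize this is to do one coordinate at a time: first show that after reflecting through all hyperplanes orthogonal to $e_1$, the bound reduces to the case where $f_v$ depends on $v$ only through $(v_2,\dots,v_d)$, with the product over the first coordinate already homogenized; then induct on the dimension. One also needs the (routine) observation that real-valuedness of the $f_v$ makes the complex conjugates in the Cauchy--Schwarz step harmless, so that all the expectations appearing are of the claimed form. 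Everything else — the positivity input, the Cauchy--Schwarz inequality, the torus symmetries — is already in hand from the preceding discussion.
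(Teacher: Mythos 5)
Your strategy---iterated reflections plus Cauchy--Schwarz---is a recognized route to the chessboard estimate, but you have identified and then deferred exactly the part that is genuinely hard, and that deferral is a real gap. The claim that ``keeping track of the exponents \dots yields exactly $|\E(\prod_v f_v(\sigma_v))|^{|\Lambda|}\le\prod_w\E(\prod_v f_w(\sigma_v))$'' is the whole theorem, and transitivity of the reflection group does \emph{not} by itself deliver it. Already in one dimension, the naive iteration of Cauchy--Schwarz through successive edge-hyperplanes terminates cleanly only when $2L$ is a power of $2$: each step doubles the number of factors and halves the exponent, and for general $2L$ the branches do not all collapse to constant labels after finitely many steps with exponents summing to the right thing. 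Try $2L=6$ by hand and you will see branches that still carry two distinct labels after each reflection round. To make your route rigorous for general $L$ one either needs a limiting/averaging argument (iterate indefinitely and show the fraction of non-constant branches vanishes while the exponents equidistribute, then pass to the limit in the logarithm) or some other device; your ``induct on the coordinate direction'' sketch does not supply this.

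The paper sidesteps the bookkeeping entirely with a variational argument, and it is worth internalizing because it is shorter and cleaner. First it proves the weaker \emph{max} form
\[
\Big|\E\Big(\prod_{v}f_v(\sigma_v)\Big)\Big|\le\max_{w\in\Lambda}\,\E\Big(\prod_v f_w(\sigma_v)\Big),
\]
by introducing $P(\tau):=\E(\prod_v f_{\tau(v)}(\sigma_v))$ for label assignments $\tau\colon\Lambda\to\Lambda$ and $M(\tau):=$ the number of edges on which $\tau$ disagrees. One picks a maximizer $\tau_*$ of $P$ with $M(\tau_*)$ minimal; if $M(\tau_*)\ge1$, a single application of reflection positivity and Cauchy--Schwarz across a mismatched edge produces two symmetrized assignments $\tau_0,\tau_1$ with $P(\tau_*)\le\sqrt{P(\tau_0)P(\tau_1)}$ and $M(\tau_0)+M(\tau_1)<2M(\tau_*)$, a contradiction. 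Thus some constant assignment is a maximizer, which is the max form. The geometric-mean form then follows by the normalization trick: replace $f_v$ by $f_v/(a_v+\eps)^{1/|\Lambda|}$ with $a_v:=\E(\prod_u f_v(\sigma_u))$, apply the max form, and let $\eps\to0$. This two-step structure (max form, then normalization) is the idea you are missing; it converts an a priori delicate exponent count into a single extremal choice plus an elementary rescaling.

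Two smaller remarks. You correctly note that real-valuedness of the $f_v$ makes the conjugates harmless. You should also be explicit that the translation and rotation symmetries of $\T_L^d$ are what let you reduce any mismatched edge to one crossing the fixed reflecting hyperplane---the paper invokes this, and without it the single-hyperplane positivity statement does not reach every edge.
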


More general versions of the chessboard estimate are available and we refer the reader once again to \cite{biskup2009reflection} and \cite[Chapter 10]{friedli2016statistical} for more details.
\begin{proof}[Proof of Proposition~\ref{prop:chessboard_estimate_edges}]
	Let $(f_v)_{v \in \Lambda}$ be real-valued bounded measurable functions on $(S,\cS)$. We first prove the following weaker inequality:
\begin{equation}\label{eq:chessboard_estimate_edges_max_form}
\left|\E\left(\prod_{v \in \Lambda} f_v(\sigma_v) \right)\right| \le \max_{w \in \Lambda}\, \E\left(\prod_{v \in \Lambda} f_w(\sigma_v)\right) .
\end{equation}
	For every $\tau \colon \Lambda \to \Lambda$, denote
	\[ P(\tau) := \E\left(\prod_{v \in \Lambda} f_{\tau(v)}(\sigma_v)\right) \quad\text{and}\quad M(\tau) := \left|\big\{ \{u,v\} \in E(\T_L^d) : \tau(u) \neq \tau(v) \big\} \right| .\]
Note that $\prod_{v \in \Lambda} f_v(\sigma_v)$ changes sign under the substitution $f_v \mapsto -f_v$ for any single $v$, while $\prod_{v \in \Lambda} f_w(\sigma_v)$ remains unchanged (since $|\Lambda|$ is even). Thus, \eqref{eq:chessboard_estimate_edges_max_form} amounts to showing that some minimizer of $M$ is a maximizer of $P$, i.e., that there exists $\tau_*$ such that $M(\tau_*)=0$ and $P(\tau) \le P(\tau_*)$ for all $\tau$. Let $\tau_*$ be a maximizer of $P$ having $M(\tau_*)$ as small as possible among such maximizers. Assume towards a contradiction that $M(\tau_*) \ge 1$. Then there exist $u,w \in \Lambda$ such that $\tau(u) \neq \tau(w)$ and $\{u,w\} \in E(\T_L^d)$. Since rotations and
	translations of $\T_L^d$ preserve the distribution of $\sigma$ we may assume without loss of generality that $u \in V_0$ and $w \in V_1$. Define two functions $F^0_\tau(\sigma) := \prod_{v \in V_0} f_{\tau(v)}(\sigma_v)$ and $F^1_\tau(\sigma) := \prod_{v \in V_1} f_{\tau(v)}(\sigma_{Rv})$, and observe that both functions belong to $\cF_0$. Thus,
	\[ P(\tau) = \E\Big(F^0_\tau(\sigma) \cdot F^1_\tau(R\sigma) \Big) = (F^0_\tau, F^1_\tau) .\]
	Since the above bilinear form is positive semidefinite by assumption, the Cauchy--Schwartz inequality and the fact that $\tau_*$ is a maximizer of $P$ imply that
	\[ P(\tau_*) \le \sqrt{(F^0_{\tau_*},F^0_{\tau_*}) \cdot (F^1_{\tau_*},F^1_{\tau_*})} = \sqrt{P(\tau_0) \cdot P(\tau_1)} \le P(\tau_*),\]
	where $\tau_0$ and $\tau_1$ are defined by $\tau_i|_{V_i}=\tau_*|_{V_i}$ and $\tau_i = \tau_i \circ R$. Thus, both $\tau_0$ and $\tau_1$ are maximizers of $P$. Since $\tau_0(u')=\tau_0(w')$ and $\tau_1(u')=\tau_1(w')$ for $\{u',w'\} \in E(\T_L^d)$, $u'\in V_0$ and $w'\in V_1$, we see that $M(\tau_0)+M(\tau_1) < 2M(\tau_*)$, which is a contradiction to the choice of $\tau_*$.
	
We now show how to obtain the proposition from~\eqref{eq:chessboard_estimate_edges_max_form}. For $w \in \Lambda$, define
\[ a_w := \E\left(\prod_{v \in \Lambda} f_w(\sigma_v)\right) ,\]
and note that~\eqref{eq:chessboard_estimate_edges_max_form} implies that $a_w \ge 0$ (as can be seen by taking all functions to be equal). Let $\eps>0$ and define functions $(g_v)_{v \in \Lambda}$ on $(S,\cS)$ by
\[ g_v(s) := \frac{f_v(s)}{(a_v+\eps)^{1/|\Lambda|}} .\]
As these functions are bounded and measurable, \eqref{eq:chessboard_estimate_edges_max_form} implies that
\[ \left|\E\left(\prod_{v \in \Lambda} f_v(\sigma_v) \right)\right| \le \prod_{v \in \Lambda} (a_v+\eps)^{1/|\Lambda|} \cdot \max_{w \in \Lambda}\, \frac{a_w}{a_w + \eps} \le \prod_{v \in \Lambda} (a_v+\eps)^{1/|\Lambda|} .\]
Letting $\eps$ tend to zero now yields the proposition.
\end{proof}

In the special case when each $f_v$ is taken to be the indicator of some $E_v \in \cS$, the chessboard estimate implies that the probability that ``$E_v$ occurs at $v$ for all $v$'' is maximized when all the sets $\{E_v\}_v$ are equal. For convenience, and as this is the only use we make of the chessboard estimate in the next section, we state this explicitly as a corollary.

\begin{cor}\label{cor:chessboard_estimate_edges_simplified}
	Let $\sigma$ be sampled from the density~\eqref{eq:def_density_general_interaction} and suppose that the bilinear form defined in~\eqref{eq:bilinear_form_edge_reflection} is positive semidefinite.
	Then for any collection of measurable sets $(E_v)_{v \in \Lambda}$ in $(S,\cS)$, we have
	\[ \Pr\Big(\sigma_v \in E_v\text{ for all }v \in \Lambda \Big) \le \max_{w \in \Lambda}\, \Pr\Big(\sigma_v \in E_w\text{ for all }v \in \Lambda \Big) .\]
\end{cor}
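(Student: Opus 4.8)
\subsection*{Proof proposal}

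The plan is to deduce the corollary directly from the chessboard estimate, Proposition~\ref{prop:chessboard_estimate_edges}, by specializing the test functions to indicators. First I would fix, for each $v \in \Lambda$, the function $f_v := \one_{E_v}$. Since $E_v \in \cS$, each $f_v$ is a real-valued bounded measurable function on $(S,\cS)$, so Proposition~\ref{prop:chessboard_estimate_edges} applies verbatim under the hypothesis of the corollary (positive semidefiniteness of the bilinear form \eqref{eq:bilinear_form_edge_reflection}). With this choice, the product $\prod_{v \in \Lambda} f_v(\sigma_v)$ equals $1$ on the event $\{\sigma_v \in E_v \text{ for all } v\}$ and $0$ otherwise, so $\E\!\big(\prod_{v \in \Lambda} f_v(\sigma_v)\big) = \Pr(\sigma_v \in E_v \text{ for all } v \in \Lambda)$, and likewise $\E\!\big(\prod_{v \in \Lambda} f_w(\sigma_v)\big) = \Pr(\sigma_v \in E_w \text{ for all } v \in \Lambda)$ for each fixed $w \in \Lambda$.

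Substituting these identities into the conclusion of Proposition~\ref{prop:chessboard_estimate_edges} gives
\[ \Pr\big(\sigma_v \in E_v \text{ for all } v\big)^{|\Lambda|} \le \prod_{w \in \Lambda} \Pr\big(\sigma_v \in E_w \text{ for all } v\big) . \]
The right-hand side is a product of $|\Lambda|$ non-negative numbers, each of which is at most $\max_{w \in \Lambda} \Pr(\sigma_v \in E_w \text{ for all } v)$, hence it is at most $\big(\max_{w \in \Lambda} \Pr(\sigma_v \in E_w \text{ for all } v)\big)^{|\Lambda|}$. Taking $|\Lambda|$-th roots, which preserves the inequality because $t \mapsto t^{1/|\Lambda|}$ is increasing on $[0,\infty)$ and all quantities are non-negative, yields exactly the claimed bound.

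There is essentially no obstacle here: all the content sits in Proposition~\ref{prop:chessboard_estimate_edges}, and the only things to verify are the trivial facts that indicators are admissible inputs, that the stated probabilities arise as the relevant expectations, and that $\prod_{w} a_w \le (\max_w a_w)^{|\Lambda|}$ for non-negative $a_w$. Alternatively, one could bypass the product form entirely and run the same substitution in the weaker inequality \eqref{eq:chessboard_estimate_edges_max_form}, which already reads $\big|\E(\prod_v f_v(\sigma_v))\big| \le \max_{w} \E(\prod_v f_w(\sigma_v))$; taking $f_v = \one_{E_v}$ there gives the corollary in a single line.
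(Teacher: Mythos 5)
Your proof is correct and matches the paper's intended derivation: the paper explicitly introduces the corollary as the special case of the chessboard estimate with $f_v = \one_{E_v}$, which is exactly the substitution you make. Your alternative observation that one can instead substitute directly into the intermediate inequality \eqref{eq:chessboard_estimate_edges_max_form} is a valid simplification, though it amounts to the same content.
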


\subsubsection{Gaussian domination}
Recall that $\Lambda$ denotes the set of vertices of $\T_L^d$.
For $\tau \colon \Lambda \to \R^n$, denote
\begin{equation}\label{eq:W_def}
W(\tau) := \exp\left[-\frac{\beta}{2}\sum_{\{u,v\}\in E(\T_L^d)}
\|\tau_u - \tau_v\|_2^2\right] ,
\end{equation}
where $\|\cdot\|_2$ denotes the Euclidean norm of a vector.
Recall that $\Omega = \left(\S^{n-1}\right)^\Lambda$ denotes the
space of configurations of the spin $O(n)$ model on $\T_L^d$, and note that since $\|\sigma_v\|_2^2=1$ at each vertex $v$ for $\sigma\in\Omega$, the function $W$ is closely related to the density of the spin $O(n)$ model (see~\eqref{eq:spin_O_n_def}), namely,
\begin{equation}\label{eq:W_and_spin_O(n)}
\exp \left[\beta \sum_{\{u,v\}\in
	E(G)}\left\langle\sigma_u,\sigma_v\right\rangle\right] = e^{-\beta d|\Lambda|}\cdot W(\sigma) \qquad\text{for all }\sigma \in \Omega .
\end{equation}
A key part of the argument is the study of the function
$Z:\left(\R^n\right)^\Lambda\to\R$ defined by
\begin{equation*}
Z(\tau) := \int_\Omega W(\sigma+\tau) d\sigma.
\end{equation*}
Using~\eqref{eq:W_and_spin_O(n)}, we see that the
function $Z(\tau)$ at the zero function $\tau=0$ is closely related to the
partition function of the spin $O(n)$ model (see \eqref{eq:Z_def}), namely,
\begin{equation}\label{eq:Z_zero_and_spin_O(n)}
Z(0) = e^{\beta d|\Lambda|}\cdot Z^{\text{spin}}_{\T_L^d,n,\beta}.
\end{equation}
The main step in the proof of Theorem~\ref{thm:long_range_order} is
the verification of the following \emph{Gaussian domination}
inequality,
\begin{equation}\label{eq:Gaussian_domination_inequality}
Z(\tau) \le Z(0)\quad\text{for all $\tau:\Lambda\to\R^n$},
\end{equation}
which may be reinterpreted as an inequality of expectations in the spin $O(n)$ model. Indeed, if $\sigma$ is sampled from the spin $O(n)$ model on $\T_L^d$ at inverse temperature $\beta$, then, by~\eqref{eq:W_and_spin_O(n)}, \eqref{eq:Z_zero_and_spin_O(n)} and~\eqref{eq:Gaussian_domination_inequality},
\begin{equation}\label{eq:Gaussian_domination_inequality_expectation}
\E\left(\frac{W(\sigma+\tau)}{W(\sigma)}\right) = \frac{Z(\tau)}{Z(0)} \le 1 .
\end{equation}

We establish~\eqref{eq:Gaussian_domination_inequality} using the method of reflection positivity as
described in the previous section, or, more precisely, using the chessboard estimate given in Proposition~\ref{prop:chessboard_estimate_edges} and Corollary~\ref{cor:chessboard_estimate_edges_simplified}. To this end we first define a suitable spin system specified by a finite measure space $(S,\cS, \lambda)$ and bounded symmetric interaction $h \colon S \times S \to [0,\infty)$ to which we can apply the results of the previous section.

Consider the spin system on $\T_L^d$ whose configurations are pairs $\bar{\sigma} = (\sigma,\tau)$, where for each $v \in \Lambda$, the spins $\sigma_v$ and $\tau_v$ take values in $\R^n$. Let $\eta_0$ be the Lebesgue measure on a bounded open set in $\R^n$ containing the origin. Denote $S := \R^n \times \R^n$ (with Borel $\sigma$-algebra) and let $\lambda$ be the product of the uniform probability measure on $\S^{n-1}$ and $\eta_0$. Let the interaction $h \colon S \times S \to [0,\infty)$ be $h((a,a'),(b,b')) := \tilde{h}(a+a'-b-b')$, where $\tilde{h}(x) := e^{-\frac{\beta}{2}\|x\|_2^2}$. Suppose $\bar{\sigma}=(\sigma,\tau)$ is sampled according to the density~\eqref{eq:def_density_general_interaction} with respect to $d\lambda(\bar{\sigma})$, and observe that this density is exactly given by $W(\sigma+\tau)/Z_{\T_L^d,S,\lambda,h}$. In particular, the marginal distribution of $\tau$ has density $Z(\tau)/Z_{\T_L^d,S,\lambda,h}$ with respect to $\eta := \prod_{v \in V(\T_L^d)} \eta_0$. For a function $t \colon \Lambda \to \R^n$ and $\eps>0$, define the event
\[ E_{t,\eps} := \left\{ |\tau_v - t_v| < \eps\text{ for all }v \in \Lambda \right\} .\]
It follows that, for $\eta$-almost every $t$, we have
\begin{equation}\label{eq:Z_t_as_limit}
 \frac{Z(t)}{Z_{\T_L^d,S,\lambda,h}} = \lim_{\eps \downarrow 0} ~ \frac{\Pr(E_{t,\eps})}{\eta(E_{t,\eps})} = \lim_{\eps \downarrow 0} ~ (C_n\eps^n)^{-|\Lambda|} \cdot \Pr(E_{t,\eps}) ,
\end{equation}
where $C_n$ is a positive constant depending only on $n$ and the second equality uses that $\eta_0$ is supported on an open set.

Proposition~\ref{prop:reflection_positivity_through_edges} and the example following its proof imply that the bilinear form defined by \eqref{eq:bilinear_form_edge_reflection}, with $\bar{\sigma}$ substituted for $\sigma$, is positive semidefinite. Thus Corollary~\ref{cor:chessboard_estimate_edges_simplified} may be used for the distribution of $\bar{\sigma}$. It implies that for each $t \colon \Lambda \to \R^n$, $\Pr(E_{t,\eps})\le \Pr(E_{c(t),\eps})$, where $c(t):\Lambda\to\R^n$ is a constant function. Combining this with \eqref{eq:Z_t_as_limit} and using that $Z(0)=Z(c)$ for any constant $c$, we obtain $Z(t) \le Z(0)$ for $\eta$-almost every $t$. The Gaussian domination inequality~\eqref{eq:Gaussian_domination_inequality} now follows from the continuity of $Z$ and the fact that $\eta_0$ had an arbitrarily large support.

\medbreak
\noindent{\bf Where the name ``Gaussian domination'' comes from.}
Let us give a short explanation as to the why~\eqref{eq:Gaussian_domination_inequality} is referred to as Gaussian domination. In the previous section, we considered a general spin model with density~\eqref{eq:def_density_general_interaction} with respect to the product of some a priori finite measure space. In fact, even when the a priori measure is not finite, in certain cases one can still make sense of the same density. For instance, if this a priori measure space is the Lebesgue measure on $\R^n$ and the interaction $h$ is of the same form as considered above, i.e., $h(a,b) : = e^{-\frac{\beta}{2}\|a-b\|_2^2}$, then the distribution of $\sigma$ is well-defined when considered up to a global addition of a constant (i.e., $\sigma$ takes values in the quotient space $(\R^n)^\Lambda/\R^n$ in which two configurations are equivalent if they differ by a constant; alternatively, one could introduce a boundary condition by normalizing $\sigma$ to be $0$ at some vertex). This model is called the \emph{discrete Gaussian free field} (see also Section~\ref{sec:heuristic_for_Berezinskii-Kosterlitz-Thouless} below). Since the Lebesgue measure is invariant to translations, it follows that the  function $Z$ corresponding to this model satisfies $Z(\tau)=Z(0)$ for all $\tau$. For this reason, \eqref{eq:Gaussian_domination_inequality} may be viewed as a comparison to the Gaussian case. Indeed, \eqref{eq:Gaussian_domination_inequality} implies that certain quantities in the spin $O(n)$ model are dominated by the corresponding quantities in the discrete Gaussian free field. For instance, the infra-red bound given by \eqref{eq:infra_red_bound} below becomes equality in the Gaussian case.

\subsubsection{The infra-red bound}
In this section, we prove an upper bound on the Fourier transform of the correlation function.

Recall that $\Lambda = \{-L+1,\dots,L\}^d$ is the set of vertices of $\T_L^d$.
We begin by introducing the \emph{discrete Laplacian operator} $\Delta$ on $\C^\Lambda$ defined by
\[ (\Delta f)_u := \sum_{v\colon \{u,v\}\in E(\T_L^d)} (f_v - f_u), \qquad\text{for }f \in \C^\Lambda .\]
Thus, one may regard $\Delta$ as a $\Lambda \times \Lambda$ matrix given by
\[ \Delta_{xy} := \begin{cases} -2d &\text{if }x=y\\ 1 &\text{if }\{x,y\} \in E(\T_L^d) \\0 &\text{otherwise} \end{cases} .\]
Denote the inner-product on $\C^\Lambda$ by $(\cdot,\cdot)$, i.e.,
\[ (f,g) := \sum_{u \in \Lambda} f_u \overline{g_u} ,\qquad\text{for }f,g \in \C^\Lambda .\]
Recall now the \emph{discrete Green identity}:
\begin{equation*}
\sum_{\{u,v\}\in E(\T_L^d)} (f_u  - f_v) \overline{(g_u - g_v)} = (f, -\Delta g) ,\qquad\text{for }f,g \in \C^\Lambda.
\end{equation*}
With a slight abuse of notation, we also write $\Delta$ and $(\cdot,\cdot)$ for the Laplacian and inner-product on $(\C^n)^\Lambda$, so that $\Delta f = (\Delta f^1,\dots, \Delta f^n)$ and $(f,g) = \sum_{j=1}^n (f^j,g^j)$ for $f,g \in (\C^n)^\Lambda$.
Using this notation, we can rewrite~\eqref{eq:W_def} as
\[ W(\tau) = \exp\left[\tfrac{1}{2} \beta (\tau, -\Delta \tau) \right] , \qquad\text{for }\tau \in (\R^n)^\Lambda .\]
Thus, if $\sigma$ is sampled from the $d$-dimensional spin $O(n)$ model on $\T_L^d$ at inverse temperature $\beta$, then the Gaussian domination inequality~\eqref{eq:Gaussian_domination_inequality_expectation} becomes
\[ \E\left(\exp\left[-\frac{\beta}{2}\Big((\sigma+\tau, - \Delta\sigma-\Delta\tau) - (\sigma,-\Delta\sigma) \Big) \right]\right) \le 1 , \qquad\text{for }\tau \in (\R^n)^\Lambda, \]
or, equivalently, using that $\sigma$ and $\tau$ are real-valued and that $\Delta$ is symmetric,
\begin{equation}\label{eq:Gaussian_domination_inequality_laplacian_form}
\E\Big(\exp\big[\beta (\sigma, \Delta\tau)\big]\Big) \le \exp\Big[-\tfrac{1}{2} \beta (\tau,\Delta \tau) \Big] ,\qquad\text{for }\tau \in (\R^n)^\Lambda .
\end{equation}
Substituting $\alpha \tau$ in~\eqref{eq:Gaussian_domination_inequality_laplacian_form} for $\alpha>0$, and expanding both sides of the inequality using the Taylor's series for $e^t$, yields
\[ 1 + \alpha \beta \cdot \E\left((\sigma,\Delta\tau)\right) + \tfrac12 \alpha^2 \beta^2 \cdot \E\left((\sigma,\Delta\tau)^2\right) + O(\alpha^3) \le 1 - \tfrac12 \alpha^2 \beta (\tau,\Delta\tau) + O(\alpha^4) .\]
Letting $\alpha$ tend to zero, and using that $\E((\sigma,\Delta\tau))=0$ by the invariance of the measure to the transformation $\sigma \mapsto -\sigma$, we obtain
\begin{equation}\label{eq:Gaussian_domination_inequality_final_form}
\E\left((\sigma,\Delta\tau)^2\right) \le \frac{(\tau,-\Delta\tau)}{\beta} ,\qquad\text{for }\tau \in (\R^n)^\Lambda .
\end{equation}

At this point, it seems reasonable that diagonalizing the Laplacian may prove useful, and indeed we proceed to do so. As the Laplacian matrix $\Delta$ is cyclic, it is diagonalized in the \emph{Fourier basis}, which we now define. Let $\Lambda^* := \frac{\pi}{L} \Lambda$ denote the vertices of the dual torus. The Fourier basis elements are $\{ F^k \}_{k \in \Lambda^*}$, where
\[ F^k_v := e^{i \langle k,v \rangle} , \qquad k \in \Lambda^*,~v \in \Lambda ,\]
and where we use the notation $\langle \cdot,\cdot\rangle$ also for the inner-product on $\R^d$. A straightforward computation now yields that each $F^k$ is an eigenvector of $(-\Delta)$ with eigenvalue $\lambda_k$ given by
\begin{equation}\label{eq:Laplacian_eigenvalues}
\lambda_k := 2 \sum_{j=1}^d (1 - \cos(k_j)) .
\end{equation}
It is also straightforward to check that $(F^k,F^k) = |\Lambda|$ and that $(F^k,F^{k'})=0$ for $k \neq k'$, so that the Fourier basis is an orthogonal basis.
Thus, we may write any $f \in \C^\Lambda$ in this basis as
\[ f_v = \frac{1}{|\Lambda|} \sum_{k \in \Lambda^*} \hat{f}_k e^{i\langle k,v\rangle} , \qquad v \in \Lambda ,\]
where $\{\hat{f}_k\}_{k \in \Lambda^*}$ are the Fourier coefficients of $f$ given by
\[ \hat{f}_k := (f,F^k) = \sum_{v \in \Lambda} f_v e^{-i\langle k,v\rangle}, \qquad k \in \Lambda^* .\]

Returning to the inequality~\eqref{eq:Gaussian_domination_inequality_final_form}, we now substitute a particular choice for $\tau$. Let $k \in \Lambda^*$ and let $j \in \{1,\dots,n\}$. Define $\tau := e_j F^k$, i.e., $\tau_u = e^{i\left\langle k, u\right\rangle} \cdot e_j$ for all $u \in \Lambda$, where $e_j$ is the $j$-th standard basis vector in $\R^n$. Then, since $\Delta \tau = -\lambda_k \tau$, $(\tau,\tau)=|\Lambda|$ and $(\sigma,\tau)=\hat{\sigma}^j_k$, applying~\eqref{eq:Gaussian_domination_inequality_final_form} to both the real and imaginary parts of $\tau$, we obtain
\[ \E\left(|(\sigma,\Delta\tau)|^2\right) = \lambda_k^2 \cdot \E\left(|\hat{\sigma}^j_k|^2\right) \le \frac{(\tau,-\Delta\tau)}{\beta} = \frac{\lambda_k |\Lambda|}{\beta} .\]
Hence,
\begin{equation}\label{eq:infra_red_bound}
\E\left(|\hat{\sigma}^j_k|^2\right) \le \frac{|\Lambda|}{\beta \lambda_k} ,\qquad\text{for any }k \in \Lambda^* \setminus \{0\},~1 \le j \le n .
\end{equation}
This inequality is called the infra-red bound.

The inequality \eqref{eq:infra_red_bound} can be expressed as an upper bound on the Fourier transform of the two-point correlation function $\rho_{x-y}:=\E(\left\langle\sigma_x,\sigma_y\right\rangle)$. Indeed, for any $k \in \Lambda^*$,
\begin{equation*}
  \hat{\rho}_k = \sum_{v \in \Lambda} \rho_v e^{-i\langle k,v\rangle} = \frac{1}{|\Lambda|}\sum_{x,y\in \Lambda} \E(\left\langle\sigma_x,\sigma_y\right\rangle)e^{-i\langle k,(x-y)\rangle} = \sum_{j=1}^n\E\Big|\sum_{x\in \Lambda}\sigma^j_x e^{-i\langle k,x\rangle}\Big|^2 = \sum_{j=1}^n \E\left(|\hat{\sigma}^j_k|^2\right).
\end{equation*}
As we will see in the next section, it implies that at low temperature, the Fourier transform of the two-point function in the infinite-volume limit must have a non-trivial atom at $k=0$, implying long-range order.

\subsubsection{Long-range order}

By Parseval's identity,
\[ \|f\|_2^2 = (f,f) = \frac{1}{|\Lambda|} \sum_{k \in \Lambda^*} |\hat{f}_k|^2 ,\qquad\text{for }f \in \C^\Lambda .\]
Thus,
\[ 1 = \frac{1}{|\Lambda|} \sum_{v \in \Lambda} \|\sigma_v\|_2^2 = \frac{1}{|\Lambda|} \sum_{j=1}^n \|\sigma^j\|_2^2 = \frac{1}{|\Lambda|^2} \sum_{j=1}^n \sum_{k \in \Lambda^*} (\hat{\sigma}^j_k)^2 , \qquad\text{for }\sigma \in \left(\S^{n-1}\right)^\Lambda .\]
Therefore, the infra-red bound~\eqref{eq:infra_red_bound} implies that
\begin{equation}\label{eq:long_range_order_bound_on_zero_momenta}
\frac{1}{|\Lambda|^2} \sum_{j=1}^n \E\left((\hat{\sigma}^j_0)^2\right) \ge 1 - \frac{1}{|\Lambda|} \sum_{k \in \Lambda^* \setminus \{0\}} \frac{n}{\beta \lambda_k} .
\end{equation}
Note that the left-hand side of~\eqref{eq:long_range_order_bound_on_zero_momenta} is precisely the quantity we want to estimate, i.e., the quantity appearing in the statement of Theorem~\ref{thm:long_range_order}, as can be seen from
\[ \sum_{j=1}^n \E\left((\hat{\sigma}^j_0)^2\right) = \sum_{j=1}^n \E\left( \left( \sum_{v \in \Lambda} \sigma^j_v \right)^2 \right) = \sum_{j=1}^n \sum_{x,y \in \Lambda} \E\left( \sigma^j_x \sigma^j_y \right) = \sum_{x,y \in \Lambda} \E\left( \langle \sigma_x \sigma_y \rangle \right) .\]
Plugging in the value of $\lambda_k$ from~\eqref{eq:Laplacian_eigenvalues} into the right-hand side of~\eqref{eq:long_range_order_bound_on_zero_momenta}, we identify a Riemann sum, and thus obtain
\begin{align*}
\liminf_{L \to \infty} \frac{1}{|\Lambda|^2} \sum_{x,y \in \Lambda} \E\left( \langle \sigma_x \sigma_y \rangle \right) &\ge 1 - \frac{n}{2\beta (2\pi)^d} \int_{[-\pi,\pi]^d} \frac{1}{\sum_{j=1}^d (1 - \cos(t_j))}dt \\&= 1 - \frac{n}{2\beta} \int_{[0,1]^d} \frac{1}{\sum_{j=1}^d (1 - \cos(\pi t_j))}dt .
\end{align*}
This completes the proof of the moreover part of Theorem~\ref{thm:long_range_order}. To deduce the first part of the theorem,
note that the integral is finite in dimensions $d\ge 3$, since $1 - \cos(t)$ is of order $t^2$ when $|t|$ is small. Thus, in dimensions $d \ge 3$, when $\beta$ is sufficiently large, the quantity of interest, $|\Lambda|^{-2} \sum_{x,y \in \Lambda} ( \langle \sigma_x \sigma_y \rangle )$, is bounded from below uniformly in $L$ (for bounded values of $L$, we appeal directly to~\eqref{eq:long_range_order_bound_on_zero_momenta} without taking a limit).
Finally, we note that the latter integral is asymptotic to $1/d$ as $d \to \infty$, as one can deduce using the law of large numbers.

\medskip
As a final remark we note that the proof of Theorem~\ref{thm:long_range_order} adapts verbatim to other a priori single-site measures (other than the uniform measure on $\S^{n-1}$), with the only change being the bound on $\sum_j \E((\hat{\sigma}^j_0)^2)$ in~\eqref{eq:long_range_order_bound_on_zero_momenta}, due to the fact that we can no longer use Parseval's identity to obtain a simple deterministic bound on the sum of squares of the Fourier coefficients of $\sigma$. See, e.g., \cite[Section 3.2]{Bauerschmidt2016} for details.

\subsection{Slow decay of correlations in spin $O(2)$ models - heuristic for the Berezinskii--Kosterlitz--Thouless transition and a theorem of Aizenman}\label{sec:Berezinskii-Kosterlitz-Thouless}
In this section we consider the question of proving a
\emph{power-law lower bound} on the decay of correlations in the
two-dimensional spin $O(2)$ model. As described in
Section~\ref{sec:spin_O_n_results_and_conjectures}, this was
achieved for the XY model at sufficiently low temperatures in the
celebrated work of Fr\"ohlich and Spencer on the
Berezinskii--Kosterlitz--Thouless transition \cite{FroSpe81}. The proof is too
difficult to present within the scope of our notes (see~\cite{kharash2017fr} for a recent presentation) and instead we
start by giving a heuristic reason for the existence of the
transition. The heuristic suggests that a power-law lower bound on
correlations will \emph{always} hold in the spin $O(2)$ model with a
potential $U$ of bounded support (as explained below). We then
proceed by presenting a theorem of Aizenman \cite{Aiz94}, following
earlier predictions by Patrascioiu and Seiler \cite{PatSei92}, who
made rigorous a version of the last statement.

\subsubsection{Heuristic for the Berezinskii--Kosterlitz--Thouless transition and vortices in the XY model}
\label{sec:heuristic_for_Berezinskii-Kosterlitz-Thouless}
To motivate the result, let us first give a heuristic
argument for the Berezinskii--Kosterlitz--Thouless phase transition.
Let $h:V(\T_L^2)\to\R$ be a randomly sampled \emph{discrete Gaussian
free field}. By this, we mean that $h((0,0)) := 0$ and $h$ is
sampled from the probability measure
\begin{equation}\label{eq:discrete_Gaussian_free_field_def}
  \frac{1}{Z^{\text{DGFF}}_{\T_L^2,\beta}} \exp \left[-\beta \sum_{\{u,v\}\in
E(G)} (h_u - h_v)^2\right] \prod_{\substack{v\in V(\T_L^2)\\v\neq (0,0)}} dm(h_v),
\end{equation}
with $Z^{\text{DGFF}}_{\T_L^2,\beta}$ a suitable normalization constant and $dm$ standing for the Lebesgue measure on $\R$. As the expression in the exponential is a quadratic form in $h$, it follows that $h$ has a multi-dimensional Gaussian distribution with zero mean. Moreover, the matrix of this quadratic form is proportional to the graph Laplacian of $\T_L^2$, whence the covariance structure of $h$ is proportional to the Green's function of $\T_L^2$. In particular,
\begin{equation}\label{eq:var_DGFF}
  \var(h_x) = \var(h_x - h_0)\approx \frac{a}{\beta}\log\|x-y\|_1
\end{equation}
for large $\|x-y\|_1$, with a specific constant $a>0$. Now consider the random configuration $\sigma:V(\T_L^2)\to \mathbb S^1$, with $\mathbb S^1$ identified with the unit circle in the complex plane, obtained by setting
\begin{equation}\label{eq:sigma_exponential_def}
  \sigma_v := \exp(i h_v).
\end{equation}
This configuration has some features in common with a sample of the XY model (normalized to have $\sigma_{(0,0)} = 1$). Although its density is not a product of nearest-neighbor terms, one may imagine that the main contribution to it does come from nearest-neighbor interactions, at least for large $\beta$ when the differences $h_u - h_v$ of nearest neighbors tend to be small. The interaction term $-\beta(h_u-h_v)^2$ in \eqref{eq:discrete_Gaussian_free_field_def} is then rather akin to an interaction term of the form $\frac{\beta}{2}\left\langle\sigma_u,\sigma_v\right\rangle$ as in the XY model (as $\left\langle s,t\right\rangle$ is the cosine of the difference of arguments between $s$ and $t$ and one may consider its Taylor expansion around $s=t$). The main advantage in this definition of $\sigma$ is that it allows a precise calculation of correlation. Indeed, as $h_x$ has a centered Gaussian distribution with variance given by \eqref{eq:var_DGFF}, it follows that
\begin{equation}\label{eq:algebraic_decay_from_height_function}
  \rho_{x,(0,0)}:=\E(\left\langle\sigma_x,\sigma_{(0,0)}\right\rangle) = \E(\cos(h_x)) = e^{-\frac{\var(h_x)}{2}} \approx \|x-y\|_1^{-\frac{a}{\beta}},
\end{equation}
and thus $\sigma$ exhibits power-law decay of correlations.

There are many reasons why the analogy between the definition \eqref{eq:sigma_exponential_def} and samples of the XY model should not hold. Of these, the notion of vortices has been highlighted in the literature. Suppose now that $\sigma:V(\T_L^2)\to\mathbb S^1$ is an \emph{arbitrary} configuration. Associate to each directed edge $(u,v)$, where $\{u,v\}\in E(\T_L^2)$, the difference $\theta_{(u,v)}$ in the arguments of $\sigma_u$ and $\sigma_v$, with the convention that $\theta_{(u,v)} \in [-\pi, \pi)$. Call a $2\times 2$ `square' in the graph $\T_L^2$ a \emph{plaquette} (these are exactly the simple cycles of length $4$ in $\T_L^2$). For a plaquette $P$, set $s_P$ to be the sum of $\theta_{(u,v)}$ on the edges around the plaquette going in `clockwise' order, say. We necessarily have that $s_P\in\{-2\pi, 0,2\pi\}$ and one says that there is a \emph{vortex} at $P$ if $s_P\neq 0$, with charge plus or minus according to the sign of $s_P$. Vortices form an obstruction to defining a \emph{height function} $h$ for which \eqref{eq:sigma_exponential_def} holds, as one would naturally like the differences of this $h$ to be the $\theta_{(u,v)}$, but then one must have $s_P = 0$ for all plaquettes. Existence of vortices means that $h$ needs to be a multi-valued function, with a non-trivial \emph{monodromy} around plaquettes with $s_P\neq 0$.

Now take $\sigma$ to be a sample of the $XY$ model on $\T_L^2$ at
inverse temperature $\beta$. When $\beta$ is small, the model is
disordered as one may deduce from the high-temperature expansion
(Section~\ref{sec:high-temperature_expansion}) and there are
vortices of both charges in a somewhat chaotic fashion (a `plasma'
of vortices), making the analogy with the definition
\eqref{eq:sigma_exponential_def} rather weak. Indeed, in this case
there is exponential decay of correlations violating
\eqref{eq:algebraic_decay_from_height_function}. However, when
$\beta$ is large, it can be shown (e.g., by a version of the chessboard
estimate, see Section~\ref{sec:introduction_to_reflection_positivity}) that large differences $\theta_{(u,v)}$ in the angles are
rare, whence vortices are rare too. Thus, one may hope vortices to
\emph{bind} together, coming in structures of small diameter of
overall neutral charge (the smallest structure is a \emph{dipole},
having one plus and one minus vortex). When this occurs, the height
function $h$ can be defined as a single-valued function at most
vertices and one may hope that the analogy
\eqref{eq:sigma_exponential_def} is of relevance so that, in
particular, power-law decay of correlations holds. This gives a
heuristic reason for the Berezinskii--Kosterlitz--Thouless transition.

\subsubsection{Slow decay of correlations for Lipschitz spin $O(2)$
models}\label{sec:Aizenman_slow_decay_of_correlations}

The above heuristic suggests the consideration of the spin $O(2)$ model with a potential $U$ of \emph{bounded support}. By this we mean a measurable $U:[-1,1]\to(-\infty,\infty]$ (allowing here $U(r) = \infty$) which satisfies
\begin{equation*}
  U(r) = \infty\quad\text{when $r<r_0\in(-1,1)$}.
\end{equation*}
This property constrains the corresponding $O(2)$ model so that
adjacent spins have difference of arguments at most $\arccos(r_0)$.
Such a spin configuration may naturally be called \emph{Lipschitz}
(as in a Lipschitz function). If $r_0\ge 0$, the maximal difference
allowed is at most $\frac{\pi}{2}$ which implies that the spin
configuration is \emph{free of vortices} with probability one. If
indeed vortices are the reason behind the
Berezinskii--Kosterlitz--Thouless transition, then one may expect such
models to always exhibit power-law decay of correlations.
Patrascioiu and Seiler \cite{PatSei92} predicted, based on rigorous
mathematical statements and certain yet unproven conjectures, that a
phenomenon of this kind should hold. Aizenman \cite{Aiz94} then
gave a beautiful proof of a version of the above statement, which we
now proceed to present.

\begin{theorem}\label{thm:Aizenman_lower_bound_on_correlations}
Let $U:[-1,1]\to(-\infty,\infty]$ be non-increasing and satisfy
\begin{equation}\label{eq:Aizenman_theorem_angle_constraint}
  U(r) = \infty\quad\text{when $r<\frac{1}{\sqrt{2}}$}.
\end{equation}
Suppose that $\sigma:V(\T_L^2)\to \mathbb S^1$ is
randomly sampled from the two-dimensional spin $O(2)$ model with
potential $U$. Then, for any integer $1\le \ell\le L$,
\begin{equation}\label{eq:Aizenman_lower_bound_on_correlations}
  \max_{\substack{x,y\in V(\T_L^2)\\\|x-y\|_1\ge \ell}} \rho_{x,y} = \max_{\substack{x,y\in V(\T_L^2)\\\|x-y\|_1\ge \ell}} \E(\left\langle\sigma_x,\sigma_y\right\rangle) \ge \frac{1}{2\ell^2}.
\end{equation}
\end{theorem}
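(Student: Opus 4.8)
The plan is to use the constraint \eqref{eq:Aizenman_theorem_angle_constraint} in two ways. First, it forces every nearest-neighbour pair of spins to have angular difference at most $\arccos(1/\sqrt2)=\pi/4$; summing the increments (taken in $[-\pi/4,\pi/4]$) around any plaquette of $\T_L^2$ gives a number in $[-\pi,\pi]$ which must be a multiple of $2\pi$, hence $0$, so every admissible configuration is \emph{vortex-free}. Second, $U$ is non-increasing, so Theorem~\ref{thm:conditioning_gives_Ising} applies: all correlations are non-negative, and conditioning on one Cartesian component of the spins leaves an Ising model with non-negative couplings on the signs of the other component, to which Theorem~\ref{thm:Griffiths_first_inequality} applies. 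I would combine a \emph{geometric} statement (vortex-freeness produces a macroscopic nearly-constant region) with this \emph{Ising} input (used to turn that region into a correlation bound whose validity does not depend on knowing where the region is).

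\textbf{The geometric core.} Partition the vertices of $\T_L^2$ into four classes $N,S,E,W$ according to whether the spin lies within $\pi/4$ of $+e_2$, $-e_2$, $+e_1$, $-e_1$; by the $\pi/4$-Lipschitz property adjacent vertices lie in equal or cyclically-adjacent classes, so this is a cyclically-constrained $4$-colouring. The key topological claim is that any such colouring of the torus has a monochromatic cluster of $\ell^1$-diameter at least $\ell$ (for every $\ell\le L$): heuristically the coarse-grained field $\T_L^2\to S^1$ either has trivial homotopy and small oscillation — so one colour covers most of the torus — or it winds, so a "level strip" of one colour wraps around; the clean way to prove it should be a contour/separation argument of the type behind the Peierls lemma, adapted to the torus. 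Note that membership in $N\cup S$ is measurable with respect to the vertical components $\sigma^2$ and membership in $E\cup W$ with respect to $\sigma^1$. Assuming (after relabelling the axis) two vertices $x,y$ with $\|x-y\|_1\ge\ell$ in a common component of $N$, we get $\sigma^2_x\sigma^2_y\ge 1/2$, and conditioning on $\sigma^2$ and applying Theorem~\ref{thm:Griffiths_first_inequality} to the induced Ising model on the signs of $\sigma^1$ gives $\E(\langle\sigma_x,\sigma_y\rangle\mid\sigma^2)=\E(\sigma^1_x\sigma^1_y\mid\sigma^2)+\sigma^2_x\sigma^2_y\ge 0+1/2$, because $|\sigma^1_\cdot|$ is $\sigma^2$-measurable and $\E(\eps_x\eps_y\mid\sigma^2)\ge0$.

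\textbf{From a random aligned pair to $1/(2\ell^2)$.} The difficulty is that $x=x(\sigma)$ and $y=y(\sigma)$ are random, and indeed any fixed pair can be anti-correlated for some configurations, so one cannot simply take expectations. I would localize: a monochromatic cluster of diameter $\ge\ell$ meets some axis-aligned box of side $2\ell$, and — transporting the crossing structure through a box of bounded size, and using translation/rotation invariance — the pair $(x,y)$ can be taken among a family of at most $2\ell^2$ pairs, each at distance $\ge\ell$, with one endpoint on each of two opposite faces of a fixed box. One then combines three facts: the deterministic conditional bound $\E(\langle\sigma_x,\sigma_y\rangle\mid\sigma^j)\ge\sigma^j_x\sigma^j_y$ valid for \emph{every} pair (Theorems~\ref{thm:conditioning_gives_Ising} and~\ref{thm:Griffiths_first_inequality}); the fact that the selected pair contributes at least $1/2$ on its (positive-probability) favourable, $\sigma^j$-measurable event; and the non-negativity (Theorem~\ref{thm:conditioning_gives_Ising}) of all $\le 2\ell^2$ candidate correlations — to conclude that the sum of the candidate correlations is at least $1$, hence their maximum, and therefore $\max_{\|x-y\|_1\ge\ell}\rho_{x,y}$, is at least $1/(2\ell^2)$. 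I expect the real work to be twofold: proving the torus topological lemma cleanly, and arranging the localization so that the non-negativity of the remaining $O(\ell^2)$ correlations genuinely absorbs the error from the randomness of $(x,y)$ — this bookkeeping, rather than any single estimate, is the heart of the argument.
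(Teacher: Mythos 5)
Your first two inputs match the paper's exactly: conditioning on $\sigma^2$ and applying Theorem~\ref{thm:conditioning_gives_Ising} to get a non-negative Ising model on the signs, and using the $\tfrac{1}{\sqrt2}$ constraint to force sign alignment on aligned neighbours. But the geometric core you propose is genuinely different from the paper's, and it is left unproven in exactly the place where the paper's argument is both cleaner and local. You ask for a \emph{torus-topological} lemma (every cyclically-constrained $4$-colouring of $\T_L^2$ has a monochromatic cluster of diameter $\ge\ell$), and then face two further problems you yourself flag: proving that lemma, and localizing the random cluster onto a fixed $O(\ell^2)$-sized family of candidate pairs. Both are real obstacles. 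In particular the localization step (``transporting the crossing structure through a box of bounded size'') does not obviously work, because the bounding box of the cluster is itself random, and translation invariance of the law does not let you assume it sits in a fixed box without paying a volume factor.

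The paper avoids all of this by working on a \emph{fixed} square $R=\{1,\dots,\ell\}^2$. The geometric input is a purely planar crossing duality for arbitrary subsets $R_0\subseteq R$ (Tim\'ar's fact: either a top-bottom $\boxtimes$-crossing of $R_0$, or a left-right nearest-neighbour crossing of $R\setminus R_0$), applied to $R_0 = V_0\cap R$ where $V_0=\{v:|\sigma^1_v|\ge1/\sqrt2\}$. The step you are missing is how the paper gets probability $\ge\tfrac12$ for the $V_0$-crossing: one applies the \emph{rotational invariance of the spin distribution} (rotate all spins by $90^\circ$) to identify the probability of a nearest-neighbour crossing of $V_0^c$ with that of a nearest-neighbour crossing of $V_0$ itself, and then uses that $\boxtimes$-connectivity is easier than nearest-neighbour connectivity in a square. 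This gives $\P(E)\ge\P(F)$, and with $\P(E\cup F)=1$ from the duality one gets $\P(E)\ge\tfrac12$; the union bound over the $\ell^2$ candidate top-bottom pairs, combined with $\rho_{x,y}\ge\P(E_{x,y})$, finishes the proof. Spin-rotation symmetry, not torus topology, is doing the work — which is why the argument needs neither a global cluster lemma nor any relocation of a random region.
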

We make a few remarks regarding the statement. First, one would
expect that $\rho_{x,y}$ is at least a power of $\|x-y\|_1$ for all
$x,y\in V(\T_L^2)$. The bound
\eqref{eq:Aizenman_lower_bound_on_correlations} is a little weaker
in that it only shows existence of a pair $x,y$ with this property
(the proof actually gives a slightly stronger statement, see \eqref{eq:stronger_conclusion_Aizenman_theorem}
below), but is still enough to rule out exponential decay of
correlations in the sense we saw occurs at high temperatures (see
Section~\ref{sec:high-temperature_expansion}). Second, the bound
\eqref{eq:Aizenman_lower_bound_on_correlations} can be said to hold
at all temperatures in that it will continue to hold if we multiply
the potential $U$ by any constant. Third, the constraint
\eqref{eq:Aizenman_theorem_angle_constraint} is stronger than the
constraint discussed above which would prohibit vortices
($U(r)=\infty$ if $r<0$). This stronger assumption is used in the
proof and it remains open to understand the behavior with other
versions of the constraint. Lastly, the fact that correlations decay \emph{at least} as fast as a power-law under the assumptions of the theorem is a special case of the results of \cite{Gagnebin2018}.

We proceed to the proof of Theorem~\ref{thm:Aizenman_lower_bound_on_correlations}. Let $U$ be a potential as in the theorem and $\sigma:V(\T_L^2)\to \mathbb S^1$ be
randomly sampled from the two-dimensional spin $O(2)$ model with potential $U$.

\medbreak
\noindent{\bf Step 1: Passing to $\{-1,1\}$-valued random variables.}
A main idea in the proof, suggested in the work of Patrascioiu and Seiler \cite{PatSei92}, is to consider the configuration $\sigma$ conditioned on the $y$ coordinate of each spin and identify an Ising-type model which is embedded in the configuration. In fact, we have already used this same idea in Section~\ref{spin_nonnegativity_of_correlations} when proving the non-negativity of correlations for the spin $O(n)$ model with $n \ge 2$.
Recall the definitions of the signs $\eps = (\eps_v)_{v \in V(\T_L^2)}$ and the coordinate spin values $(\sigma^1,\sigma^2) = (\sigma^1_v,\sigma^2_v)_{v \in V(\T_L^2)}$ given just prior to Theorem~\ref{thm:conditioning_gives_Ising}.
Recall also that $|\sigma^1|$ is determined by $\sigma^2$ and that $\sigma$ is determined by $(\eps,\sigma^2)$.
By Theorem~\ref{thm:conditioning_gives_Ising}, we have
 \begin{equation}\label{eq:non-negativity_of_conditional_correlations}
   \E(\eps_x \eps_y\; |\; \sigma^2)\ge 0\quad\text{for every $x,y\in V(\T_L^2)$, almost surely.}
\end{equation}
Moreover, as in the proof of Theorem~\ref{thm:conditioning_gives_Ising}, we have
\begin{equation}\label{eq:correlations_in_term_of_sign_variables}
  \rho_{x,y} = 2\E(|\sigma_x^1|\cdot |\sigma_y^1|\cdot\E(\eps_x \eps_y\; |\; \sigma^2)),
\end{equation}

\medbreak
\noindent{\bf Step 2: A lower bound on correlations in terms of connectivity.} A key idea in the analysis of Aizenman \cite{Aiz94} is the consideration of the following random set of vertices
\begin{equation*}
  V_0 := \left\{v\in V(\T_L^2)\colon |\sigma_v^1| \ge \frac{1}{\sqrt{2}}\right\}.
\end{equation*}
Note that this set is measurable with respect to $\sigma^2$. Let us consider the relevance of this set to the conditional correlations $\E(\eps_x \eps_y\; |\; \sigma^2)$ discussed above.

For reasons that will become clear in the next step, we introduce a second adjacency relation on the vertices $V(\T_L^2)$. We say that $u,v\in V(\T_L^2)$ are $\boxtimes$-adjacent if $\{u,v\}\in E(\T_L^2)$ or $u,v$ are next-nearest-neighbors in $\T_L^2$ which differ in both coordinates (they are diagonal neighbors). Now observe that, almost surely,
\begin{equation*}
\text{if $u,v$ are $\boxtimes$-adjacent and both $u,v\in V_0$ then $\eps_u = \eps_v$}.
\end{equation*}
This is a consequence of the bounded support constraint \eqref{eq:Aizenman_theorem_angle_constraint} and it is here that the number $\frac{1}{\sqrt{2}}$ in that constraint is important (as we are allowing next-nearest-neighbors). Together with the non-negativity property \eqref{eq:non-negativity_of_conditional_correlations}, it follows that
\begin{equation*}
  \E(\eps_x \eps_y\; |\; \sigma^2) \ge \one(E_{x,y})\quad\text{for every $x,y\in V(\T_L^2)$, almost surely,}
\end{equation*}
where we write $\one(E_{x,y})$ for the indicator function of the event
\begin{equation*}
  E_{x,y}:=\{\text{$x$ and $y$ are connected in the graph on $V_0\subseteq V(\T_L^2)$ with the $\boxtimes$-adjacency}\}.
\end{equation*}
Plugging this relation back into the identity \eqref{eq:correlations_in_term_of_sign_variables} for the correlation $\rho_{x,y}$ shows that
\begin{equation}\label{eq:rho_lower_bound_by_E_x_y}
  \rho_{x,y} \ge 2\E(|\sigma_x^1|\cdot |\sigma_y^1| \cdot \one(E_{x,y})) \ge \P(E_{x,y}),
\end{equation}
where we used that $|\sigma_x^1|\cdot |\sigma_y^1|\ge \frac{1}{2}$ when $x,y\in V_0$. We now proceed to deduce Theorem~\ref{thm:Aizenman_lower_bound_on_correlations} from this lower bound.

\medbreak
\noindent{\bf Step 3: Duality for vertex crossings.}
Fix an integer $1\le \ell\le L$ and define the discrete square $R := \{1, \ldots, \ell\}^2 \subseteq V(\T_L^2)$.

\medbreak
\noindent{\bf Geometric fact:}
For any subset $R_0\subseteq R$, either there is a top-bottom crossing of $R$ with vertices of $R_0$ and the $\boxtimes$-adjacency or there is a left-right crossing of $R$ with vertices of $R\setminus R_0$ and the standard nearest-neighbor adjacency (that of $\T_L^2$).

\smallskip\noindent The fact is intuitive, though finding a simple proof
requires some ingenuity. We refer the reader to Tim\'ar \cite{Tim13}
for this and related statements.

Now consider the two events
\begin{align*}
  E &:= \{\text{there is a top-bottom crossing of $R$ with vertices of $V_0$ and the $\boxtimes$-adjacency}\},\\
  F &:= \{\text{there is a left-right crossing of $R$ with vertices of $V(\T_L^2)\setminus V_0$ and the standard adjacency}\}.
\end{align*}
By rotational-symmetry of the configuration $\sigma$ (its distribution is invariant under applying a global rotation of the spins), we have $\P(F) = \P(\tilde{F})$, where
\begin{equation*}
  \tilde{F}:=\{\text{there is a left-right crossing of $R$ with vertices of $V_0$ and the standard adjacency}\}.
\end{equation*}
In particular, as $R$ is a square and since it easier to be connected in the $\boxtimes$-adjacency than in the nearest-neighbor adjacency, we conclude that
\[
  \P(E) \ge \P(F).
\]
Lastly, the geometric fact implies that $\P(E\cup F) = 1$, whence
\begin{equation}\label{eq:stronger_conclusion_Aizenman_theorem}
  1 = \P(E\cup F) \le \P(E) + \P(F) \le 2\P(E) \le 2\sum_{\substack{x=(a,1),\, 1\le a\le \ell\\ y=(b,\ell),\, 1\le b\le \ell}}\P(E_{x,y})\stackrel{\eqref{eq:rho_lower_bound_by_E_x_y}}{\le} 2\sum_{\substack{x=(a,1),\, 1\le a\le \ell\\ y=(b,\ell),\, 1\le b\le \ell}}\rho_{x,y},
\end{equation}
from which Theorem~\ref{thm:Aizenman_lower_bound_on_correlations} follows.

\subsection{Exact representations}\label{sec:exact_representations}

In this section, we show that the XY model in two dimensions admits an exact representation as an integer-valued height function. Such representations are sometimes called \emph{dual models}. We mention as another example that the dual model of the \emph{Villain model} is the integer-valued (discrete) Gaussian free field. The reader may also consult~\cite[Appendix~A]{FroSpe81} or~\cite[Section 6.1]{kharash2017fr} for additional details.
We mention also in this regard that the loop $O(n)$ model, discussed in Section~\ref{sec:loop-model} below, may be regarded as an \emph{approximate} (graphical) representation for the spin $O(n)$ model (an exact representation for $n=1$); See Section~\ref{sec:loop-spin-relation} for details. Another exact representation for the spin $O(n)$ model, which is not discussed here, is the Brydges--Fr\"ohlich--Spencer random walk representation, inspired by pioneering work of Symanzik~\cite{Symanzik1968}; see~\cite{brydges1982random, fernandez2013random} for details.

We begin the treatment here in the general context of the spin $O(n)$ model with $n=2$ and potential $U$ on an arbitrary finite graph $G$ as defined in~\eqref{eq:spin_O_n_potential_U}.
As the spins take values in the unit circle, we may reparameterize the spin variables according to their angle, to obtain
\begin{equation}\label{eq:spin_exact_Z}
Z^{\text{spin}}_{G,n,U} = \int_\Omega \prod_{\{u,v\}\in
    E(G)} \exp \Big[-U(\left\langle\sigma_u,\sigma_v\right\rangle)\Big]
d\sigma = \int_{\Omega'} \prod_{\{u,v\}\in
    E(G)} g(\theta_u - \theta_v)
d\theta ,
\end{equation}
where $d\theta$ is the Lebesgue measure on $\Omega' := [0,1)^{V(G)}$ and $g \colon \R \to \R$ is the $1$-periodic function defined by $g(t) := \exp[-U(\cos(2\pi t))]$.
When the potential $U$ is sufficiently nice, $g$ has a Fourier expansion:
\[ g(t) = \sum_{k=-\infty}^\infty \hat{g}(k) e^{2\pi ikt}, \qquad\text{where}\quad \hat{g}(k) := \int_0^1 g(t) e^{-2\pi ikt} dt .\]
Note that, since $g$ is real and even, we have that $\hat{g}$ is real and symmetric.
Having in mind that we want to plug the Fourier series of $g$ into~\eqref{eq:spin_exact_Z}, we note that $\theta_u - \theta_v$ is defined for $\{u,v\} \in E(G)$ up to its sign. For this reason, it is convenient to work with the directed edges of $G$, which we denote by $\vec{E} := \{ (u,v) : \{u,v\} \in E(G) \}$. We say a function $k \colon \vec{E} \to \Z$ is \emph{anti-symmetric} if $k_{(u,v)}=-k_{(v,u)}$ for all $(u,v) \in \vec{E}$. Note that for such a function, $k_{(u,v)}(\theta_u - \theta_v)$ is well-defined for any undirected edge $\{u,v\} \in E(G)$.
Now, plugging in the Fourier series of $g$ into~\eqref{eq:spin_exact_Z} yields
\[ Z^{\text{spin}}_{G,n,U} = \sum_{\substack{k \colon \vec{E} \to \Z\\k\text{ anti-symmetric}}} \int_{\Omega'} \prod_{\{u,v\}\in E(G)} \hat{g}(k_{(u,v)}) e^{2\pi ik_{(u,v)}(\theta_u - \theta_v)}
d\theta = \sum_{\substack{k \colon \vec{E} \to \Z\\k\text{ anti-symmetric}}} \omega_k I_k ,\]
where
\[ \omega_k := \prod_{\{u,v\}\in E(G)} \hat{g}(k_{(u,v)}) \qquad\text{and}\qquad I_k := \int_{\Omega'} \prod_{\{u,v\}\in E(G)} e^{2\pi ik_{(u,v)}(\theta_u - \theta_v)} d\theta .\]
Denoting $k_u := \sum_{\{u,v\} \in E(G)} k_{(u,v)}$ for $u \in E(G)$, we may rewrite $I_k$ as
\[ I_k = \prod_{u\in V(G)} \int_{\Omega'} e^{2\pi i k_u \theta_u} d\theta .\]
From this we see that $I_k$ is either 1 or 0 according to whether $k$ is a \emph{flow}, i.e., it satisfies $k_u=0$ for all $u \in V(G)$. Therefore, we have shown that
\[ Z^{\text{spin}}_{G,n,U} = \sum_{\substack{k \colon \vec{E} \to \Z\\k\text{ flow}}} \omega_k .\]
When the weights $\omega_k$ are non-negative, we interpret this relation as prescribing a probability measure on flows, where the probability of a flow $k$ is proportional to $\omega_k$.

In order to obtain a model of height functions, we henceforth assume that $G$ is a finite planar graph (embedded in the plane). In this case, the set of flows on $G$ are in a `natural' bijection with (suitably normalized) integer-valued height functions on the dual graph of $G$.
The dual graph of $G$, denoted by $G^*$, is the planar graph obtained by
placing a vertex at the center of every face of $G$, so
that each (directed) edge $e$ in $G$ corresponds to the unique (directed) edge $e^*$ in $G^*$ which intersects $e$ (and is rotated by 90 degrees in the clockwise direction). Note that $G^*$ has a distinguished vertex $x_0$ corresponding to the unique infinite face of $G$. Let $\cF$ be the set of functions $f \colon V(G^*) \to \Z$ having $f(x_0)=0$, which we call \emph{height-functions}. Given a function $f \in \cF$, define $k^f \colon \vec{E} \to \Z$ by $k^f_{(x,y)^*} := f(x)-f(y)$. It is straightforward to check that $k^f$ is a flow and that $f \mapsto k^f$ is injective. It remains to show that any flow is obtained in this manner. Let $k$ be a flow and define $f \colon V(G^*) \to \Z$ as follows.
For any path $p=(x_0,x_1,\dots,x_m)$ in $G^*$ starting at $x_0$, we define $f(x_m) := \phi(p)$, where $\phi(p) := \sum_{i=1}^m k_{(x_{i-1},x_i)^*}$. To show that $f$ is well-defined, we must check that $\phi(p)=\phi(p')$ for any two paths $p$ and $p'$ starting at $x_0$ and ending at the same vertex. This in turn, is the same as checking that $\phi(q)=0$ for any path $q$ starting and ending at $x_0$. It is easy to see that it suffices to check this only for any cycle $q$ in a set of cycles $Q$ which generates the \emph{cycle space} of $G^*$. To this end, we use the fact that the cycle space of a planar graph is generated by the basic cycles which correspond to the faces. Thus, we may take $Q$ to be the basic cycles in $G^*$ corresponding to the \emph{vertices} of $G$. That is, for every vertex $v \in V(G)$, we have a cycle $q_v \in Q$ which consists of the dual edges $e^*$ of the edges $e$ incident to $v$.
Finally, the property $\phi(q_v)=0$ is precisely the defining property $k_u=0$ of a flow.
It is now straightforward to verify that $k^f=k$.
Thus, when $\hat{g}$ is non-negative, we obtain a probability measure on height-functions, where the probability of $f \in \cF$ is proportional to
\[ \omega_f := \prod_{\{x,y\} \in E(G^*)} \hat{g}(f(x)-f(y)) .\]

We now specialize to the XY model, i.e., the ordinary spin $O(2)$ model as defined in~\eqref{eq:spin_O_n_def}, in which case the relevant potential is $U(t) = -\beta t$ so that $g(t) = \exp(\beta \cos(2\pi t))$. In this case, the Fourier coefficients are given by the modified Bessel functions:
\[ \hat{g}(k) = I_k(\beta) := \sum_{m=0}^\infty \frac{1}{m!\cdot (m+k+1)!} (\beta/2)^{k+2m} .\]
Since these are positive, we have indeed found a random height-function representation for the XY model in two dimensions.

As mentioned above, the Villain model also admits a similar representation. The model is defined through~\eqref{eq:spin_exact_Z} by taking the function $g$ to be the ``periodized Gaussian'' given by
\[ g(t) := \sum_{k=-\infty}^\infty e^{-\beta (t+k)^2/2} .\]
In this case, the Fourier coefficients are themselves Guassian,
\[ \hat{g}(k) = \sqrt{2\pi/\beta} \cdot e^{-2 \pi^2 k^2 / \beta} ,\]
thus yielding a height-function representation for the Villain model in two dimensions.

\section{The Loop $O(n)$ model}
\label{sec:loop-model}

\subsection{Definitions}
\label{sec:loop-model-def}

Let $\HH$ denote the hexagonal lattice.
A \emph{loop} is a
finite subgraph of $\HH$ which is isomorphic to a simple cycle. A
\emph{loop configuration} is a spanning subgraph of $\HH$ in which
every vertex has even degree; see Figure~\ref{fig:loop-configs}. The
\emph{non-trivial finite} connected components of a loop
configuration are necessarily loops, however, a loop configuration
may also contain isolated vertices and infinite simple paths. We
shall often identify a loop configuration with its set of edges,
disregarding isolated vertices. A \emph{domain} $H$ is
a non-empty finite connected induced subgraph of $\HH$ whose
complement $V(\HH) \setminus V(H)$ induces a connected subgraph of
$\HH$ (in other words, it does not have ``holes''). Given a domain $H$, we
denote by $\LC(H)$ the collection of all loop configurations
$\omega$ that are contained in $H$. Finally,
for a loop configuration $\omega$, we denote by
$L(\omega)$ the number of loops in $\omega$ and by $o(\omega)$ the number of edges of $\omega$.

Let $H$ be a domain and let $n$ and $x$ be positive real numbers. The loop $O(n)$ measure on $H$ with edge
weight $x$ is the probability measure
$\Pr_{H,n,x}$ on $\LC(H)$ defined by
  \begin{equation}\label{eq:loop O n model def}
  \Pr_{H,n,x}(\omega) := \frac{x^{o(\omega)} n^{L(\omega)}}{Z_{H,n,x}^{\text{loop}}},
  \end{equation}
  where $Z_{H,n,x}^{\text{loop}}$, the partition function, is given by
  \[ Z_{H,n,x}^{\text{loop}} := \sum_{\omega \in \LC(H)} x^{o(\omega)} n^{L(\omega)} .\]

\medbreak \noindent {\bf The $x=\infty$ model.} We also
consider the limit of the loop $O(n)$ model as the edge weight $x$
tends to infinity. This means restricting the model to `optimally
packed loop configurations', i.e., loop configurations having the
maximum possible number of edges.

  Let $H$ be a domain and let $n>0$. The loop $O(n)$ measure on $H$ with edge weight $x = \infty$ is the probability measure on $\LC(H)$ defined by
\[
\Pr_{H,n,\infty}(\omega) := \lim_{x \to \infty} \Pr_{H,n,x}(\omega) =
 \begin{cases} \frac{n^{L(\omega)}}{Z_{H,n,\infty}} &\text{if }o(\omega)=o_H \\ 0 &\text{otherwise} \end{cases} ,
\]
where $o_H := \max \{ o(\omega) : \omega \in \LC(H) \}$ and $Z_{H,n,\infty}$ is the unique constant making
$\Pr_{H,n,\infty}$ a probability measure.
We note that if a loop configuration $\omega\in\LC(H)$ is
\emph{fully-packed}, i.e., every vertex in $V(H)$ has degree $2$,
then $\omega$ is optimally packed, i.e., $o(\omega)=o_H$.
In particular, if such a configuration exists for the domain $H$, then the measure $\Pr_{H,n,\infty}$ is supported on fully-packed loop configurations.

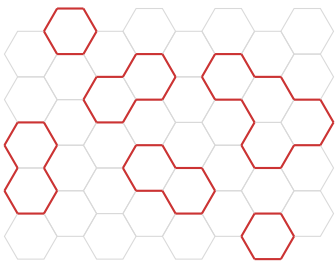
\begin{figure}
	\centering
	
	\begin{tikzpicture}[scale=0.35, every node/.style={scale=0.35}]
	\hexagonGrid[7][4][0];
	\trivialLoop[1][4];
	\trivialLoop[6][-3];
	\doubleLoopUp[0][1];
	\doubleLoopRight[2][2];
	\doubleLoopLeft[3][-1];
	\hexagonEdges[edge-on][5][0][0/1/-3,0/1/-2,0/1/-1,0/1/0,0/1/1,1/0/0,1/0/1,2/-1/0,2/-1/1,2/-1/2,2/-1/3,1/-1/2,1/-1/3,1/-1/4,1/-1/5,1/0/-2];
	\end{tikzpicture}
	
	\caption{A loop configuration is a subgraph of the hexagonal lattice in which every vertex has even degree.}
	\label{fig:loop-configs}
\end{figure}

\bigbreak
Like in the spin model, special cases of the loop $O(n)$ model have names of their own:
\begin{itemize}[noitemsep,topsep=0.5em]
\setlength\itemsep{0.25em}
\item
When $n=0$, one formally obtains the self-avoiding walk (SAW); see Section~\ref{sec:saw}.

\item
When $n=1$, the model is equivalent to the Ising model on the triangular lattice under the correspondence $x = e^{-2\beta}$ (the loops represent the interfaces between spins of different value), which in turn is equivalent via the Kramers--Wannier duality \cite{KraWan41} to an Ising model on the dual hexagonal lattice.
\begin{itemize}[label=$\circ$]	
  \item The special case $x=1$, corresponding to the Ising model at infinite temperature, is critical site percolation on the triangular lattice.
  \item The special case $x=\infty$, corresponding to the anti-ferromagnetic Ising model at zero temperature, is a uniformly picked fully-packed loop configuration, whence its complement is a uniformly picked perfect matching of the vertices in the domain. The model is thus equivalent to the dimer model.
\end{itemize}

\item
When $n \ge 2$ is an integer, the model is a marginal of a discrete random Lipschitz function on the triangular lattice. When $n=2$ this function takes integer values and when $n \ge 3$ it takes values in the $n$-regular tree. See Section~\ref{sec:loop-o-n-height} for more details.
The special case $n=2$ and $x=\infty$ is equivalent to uniform proper $4$-colorings of the triangular lattice~\cite{baxter1970colorings} (the loops are obtained from a proper coloring with colors $\{0,1,2,3\}$ as the edges bordering hexagons whose colors differ by $\pm 1$ modulo 4).

\item
When $n=\infty$ and $nx^6 = \text{const}$, the model becomes the hard-hexagon model. See Section~\ref{sec:hard-hex} for more details.

\item When $n$ is the square root of a positive integer, the model is a marginal of the dilute Potts model on the triangular lattice. See Section~\ref{sec:loop-O-n-exact-representations} for more details.

\end{itemize}

\subsection{Relation to the spin $O(n)$ model}
\label{sec:loop-spin-relation}

We reiterate that the loop $O(n)$ model is defined for any positive \emph{real}
$n$, whereas the spin $O(n)$ model is only defined for positive \emph{integer} $n$. For integer $n$, there is a connection between the
loop and the spin $O(n)$ models on a domain $H
\subset \HH$. Rewriting the partition function
$Z^{\text{spin}}_{H,n,\beta}$ given by \eqref{eq:Z_def} using the
approximation $e^t \approx 1+t$ gives
\[
\begin{split}
Z^{\text{spin}}_{H,n,\beta} =\int\displaylimits_{\Omega} \prod_{\{u,v\}\in E(H)} \exp \left[\beta \langle  \sigma_u,\sigma_v \rangle \right] \,d\sigma
\\\approx \int\displaylimits_{\Omega} \prod_{\{u,v\}\in E(H)} (1 + \beta \langle  \sigma_u,\sigma_v \rangle) \,d\sigma &= \sum_{\omega \subset E(H)} (\beta/n)^{o(\omega)} \int\displaylimits_{\Omega} \prod_{\{u,v\} \in E(\omega)} \langle \sqrt{n} \cdot \sigma_u, \sqrt{n} \cdot \sigma_v \rangle \,d\sigma,\\
& = \sum_{\omega \in \LC(H)} (\beta/n)^{o(\omega)} n^{L(\omega)},
\end{split}
\]
where the last equality follows by splitting the integral into a product of integrals on each connected component of $\omega$ and then using the following calculation.

\medbreak
\noindent {\bf Exercise.}
Let $E \subset E(\HH)$ be finite and connected. Show that
\[ \int\displaylimits_{\Omega} \prod_{\{u,v\} \in E} \langle \sqrt{n} \cdot \sigma_u, \sqrt{n} \cdot \sigma_v \rangle \,d\sigma = \begin{cases}n&\text{if $E$ is a loop}\\0&\text{otherwise}\end{cases} .\]
(see~\cite[Appendix A]{DCPSS14} for the calculation) \medbreak

Therefore, substituting $x$ for $\beta/n$, we obtain
\[
Z_{H,n,nx}^{\text{spin}} \approx Z_{H,n,x}^{\text{loop}}.
\]
In the same manner, the correlation $\rho_{u,v}$ for $u,v\in V(H)$ in the spin $O(n)$ model at inverse temperature $\beta = nx$ may be approximated as follows.
\begin{equation}\label{eq:relation spin loop}
    \rho_{u,v} = \frac{ \displaystyle\int_{\Omega} \langle \sigma_u, \sigma_v \rangle \prod_{\{w,z\}\in E(H)} \exp \left[\beta \langle \sigma_w,\sigma_z \rangle \right]}{Z_{H,n,\beta}^{\text{spin}}} \approx n\cdot \frac{\displaystyle \sum_{\lambda \in \LC(H,u,v)} x^{o(\lambda)} n^{L'(\lambda)} J(\lambda)}{Z_{H,n,x}^{\text{loop}}},
\end{equation}
where $\LC(H,u,v)$ is the set of spanning subgraphs of $H$
in which the degrees of $u$ and $v$ are odd and the degrees of all
other vertices are even. Here, for $\lambda \in \LC(H,u,v)$, $o(\lambda)$ is the number of edges of $\lambda$, $L'(\lambda)$ is the number of loops in $\lambda$ after removing an arbitrary simple path in $\lambda$ between $u$ and $v$, and
$J(\lambda):=\tfrac{3n}{n+2}$ if there are three disjoint paths in
$\lambda$ between $u$ and $v$ and $J(\lambda):=1$ otherwise (in which
case, there is a unique simple path in $\lambda$ between $u$ and
$v$).

\medbreak \noindent {\bf Exercise.} Use the approximation $e^t
\approx 1+t$ to obtain the asserted representation
in~\eqref{eq:relation spin loop} (see~\cite[Appendix A]{DCPSS14} for
the calculation). \medbreak

We remark that for $n=1$, since $e^{\beta s} = \cosh(\beta)(1 + s\cdot \tanh(\beta))$ for $s\in\{-1,1\}$, the above expansion can be made exact by choosing $x=\tanh(\beta)$. This yields an exact duality between the ferromagnetic Ising model on the hexagonal lattice and the ferromagnetic Ising model on the triangular lattice; a special case of the so-called Kramers--Wannier duality \cite{KraWan41}. Such duality maps the high-temperature region to the low-temperature region providing a self-dual point on self-dual lattices such as $\Z^2$ (and also for the hexagonal-triangular lattice pair, using an auxiliary star-triangle transformation), which turns out to be the critical point~\cite{Ons44}.

Unfortunately the above approximation is not justified for any
$x>0$ when $n>1$. Nevertheless, \eqref{eq:relation spin loop} provides a
heuristic connection between the spin and the loop $O(n)$ models and
suggests that both these models reside in the same universality
class. For this reason, it is natural to ask whether the prediction
about the absence of phase transition is valid for the loop $O(n)$
model.

\medbreak
\noindent{\bf Question:}
\label{quest:loop-On-quest}
Does the quantity on the right-hand side of~\eqref{eq:relation spin loop} decay exponentially fast in the distance between $u$ and $v$, uniformly in the domain $H$, whenever $n>2$ and $x>0$?
\medbreak

This question is partially answered in~\cite{DCPSS14}, where it is shown that for all sufficiently
large $n$ and any $x>0$, the quantity on the right-hand side
of~\eqref{eq:relation spin loop} decays exponentially fast for a
large class of domains $H$. The result is a consequence of a more
detailed understanding of the loop $O(n)$
model with large $n$, which we elaborate on in Section~\ref{sec:large-n}.

\subsection{Conjectured phase diagram and rigorous results}\label{sec:loop_O_n_phase_diagram}

\begin{figure}[!ht]
	\centering
	\begin{subfigure}[t]{.49\textwidth}
		\includegraphics[width=\textwidth]{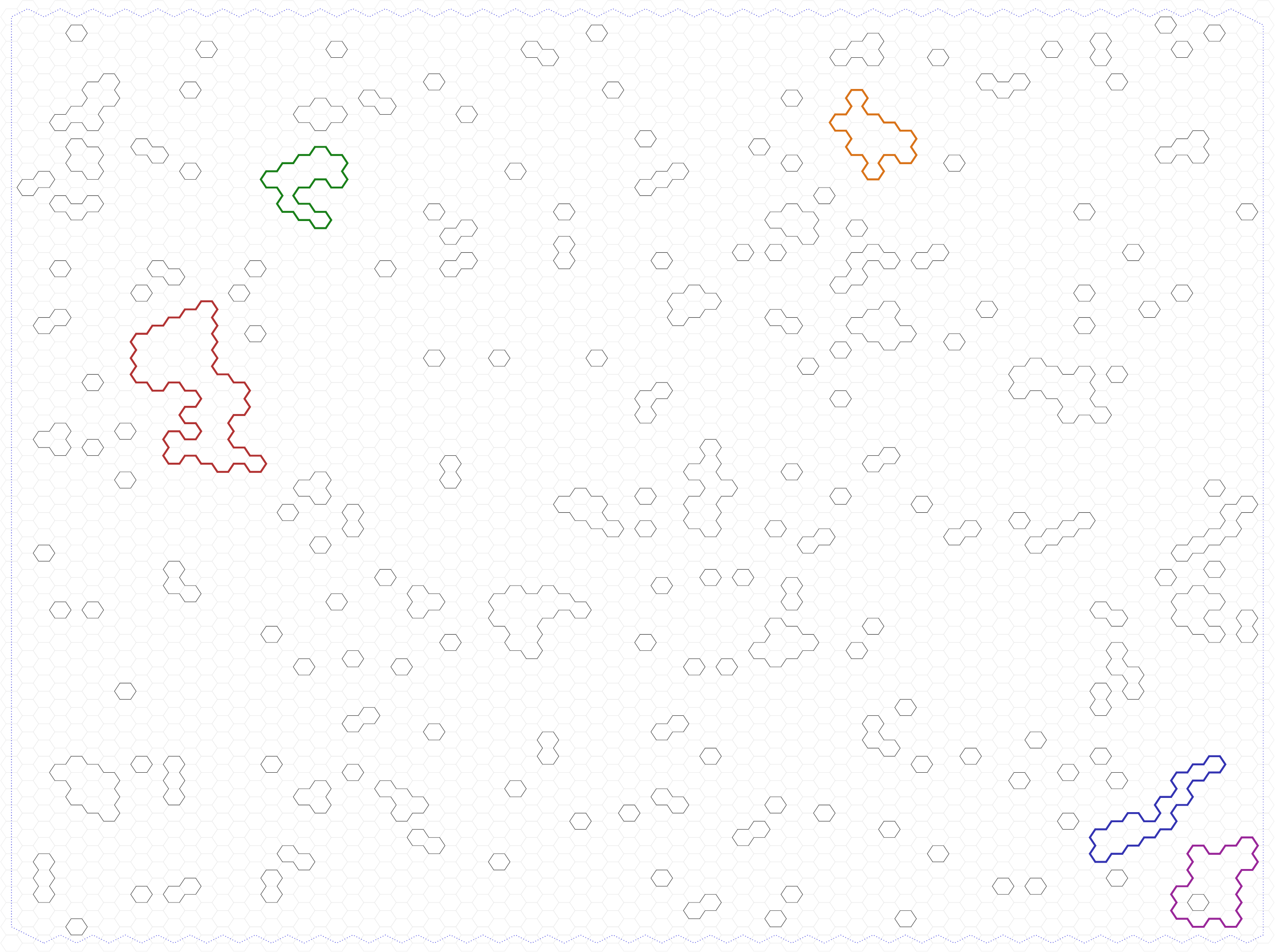}
		\caption{$n=1.4$ and $x=0.57<x_c(n)$.}
		\label{fig:loop-sample-n1.4-x0.57}
	\end{subfigure}%
	\begin{subfigure}{10pt}
		\quad
	\end{subfigure}%
	\begin{subfigure}[t]{.49\textwidth}
		\includegraphics[width=\textwidth]{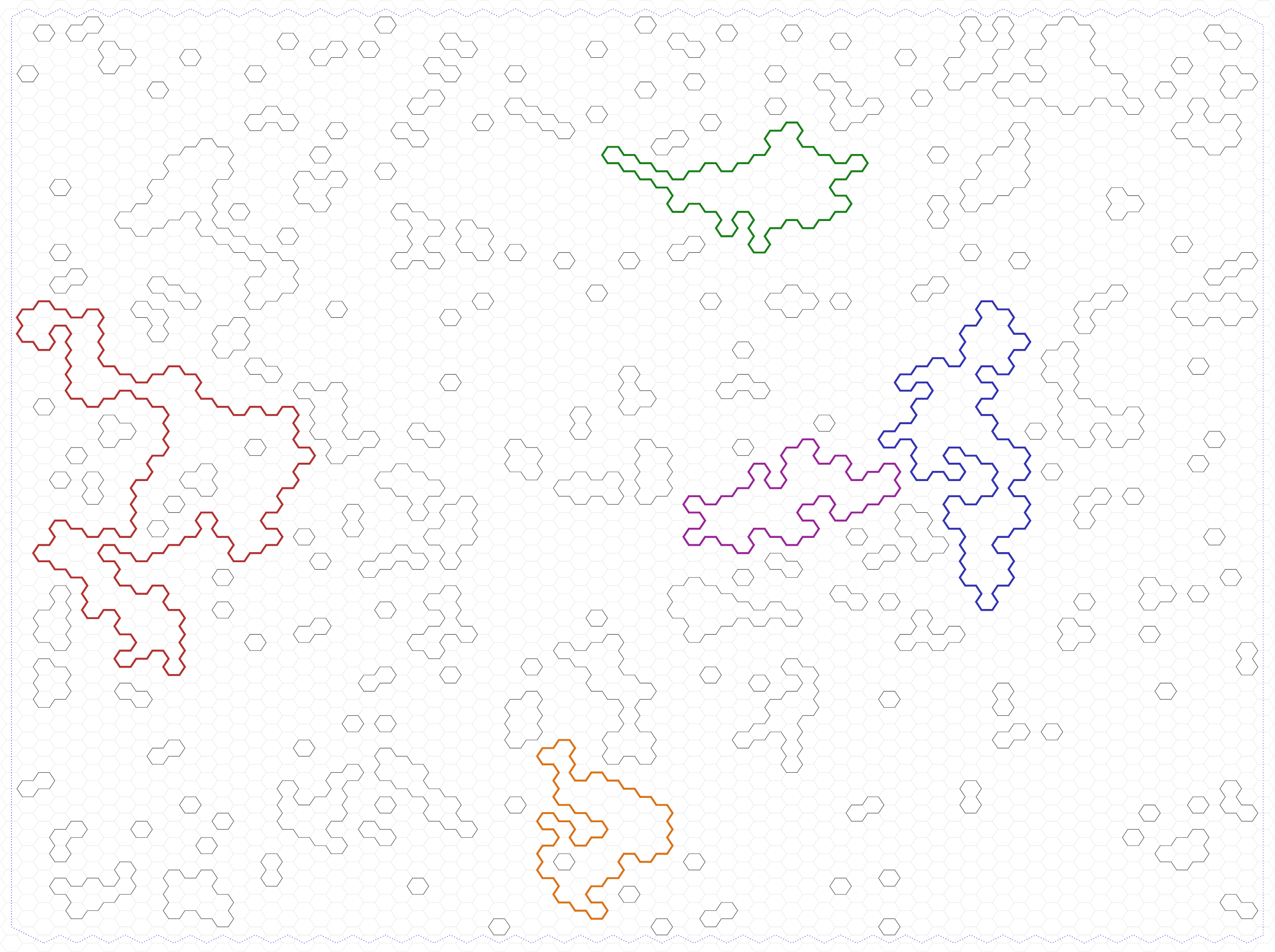}
		\caption{$n=1.4$ and $x=x_c(n) \approx 0.6$.}
		\label{fig:loop-sample-n1.4-x0.6}
	\end{subfigure}
	\medbreak
	\begin{subfigure}[t]{.49\textwidth}
		\includegraphics[width=\textwidth]{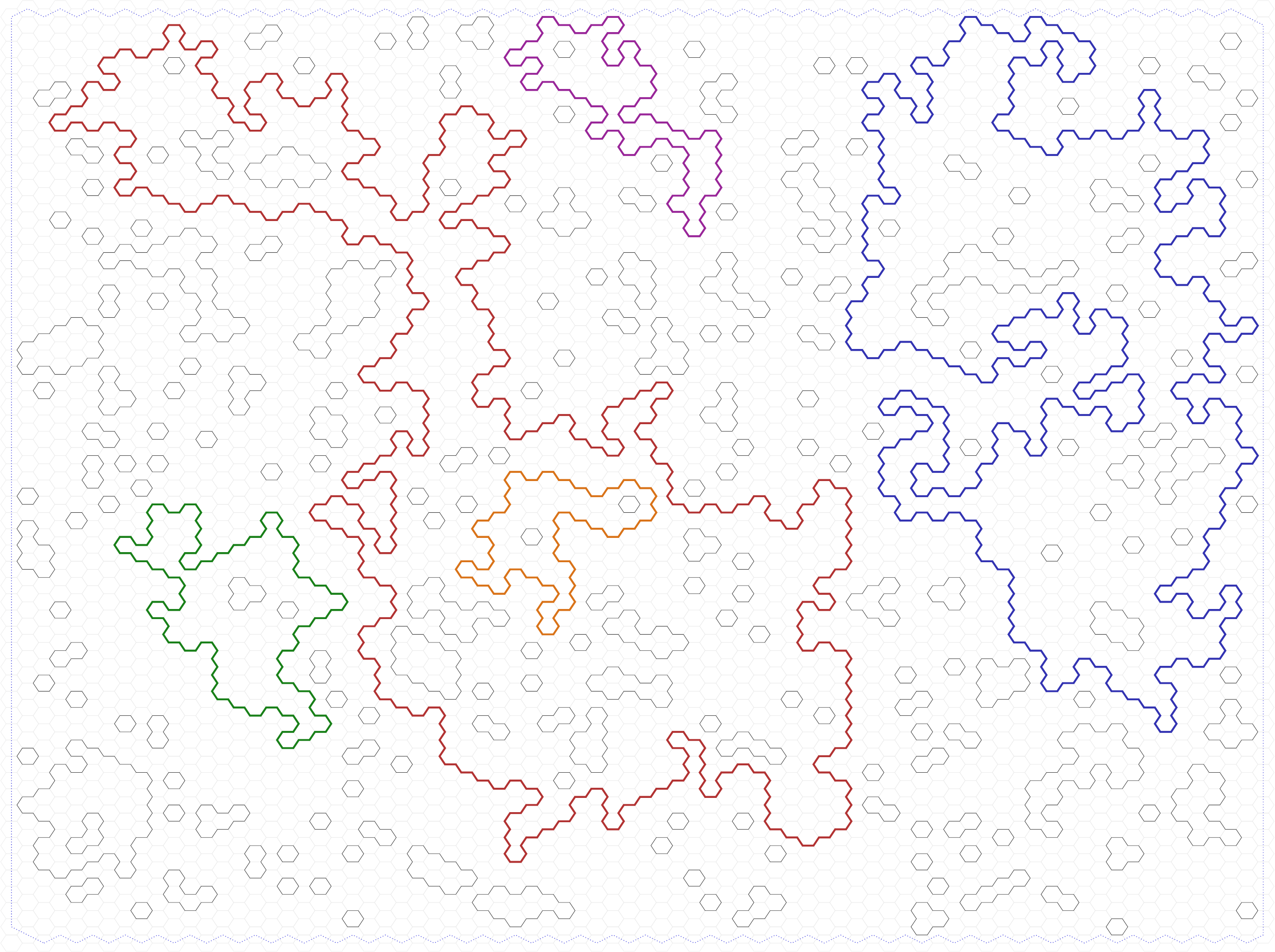}
		\caption{$n=1.4$ and $x=0.63>x_c(n)$.}
		\label{fig:loop-sample-n1.4-x0.63}
	\end{subfigure}%
	\begin{subfigure}{10pt}
		\quad
	\end{subfigure}%
	\begin{subfigure}[t]{.49\textwidth}
		\includegraphics[width=\textwidth]{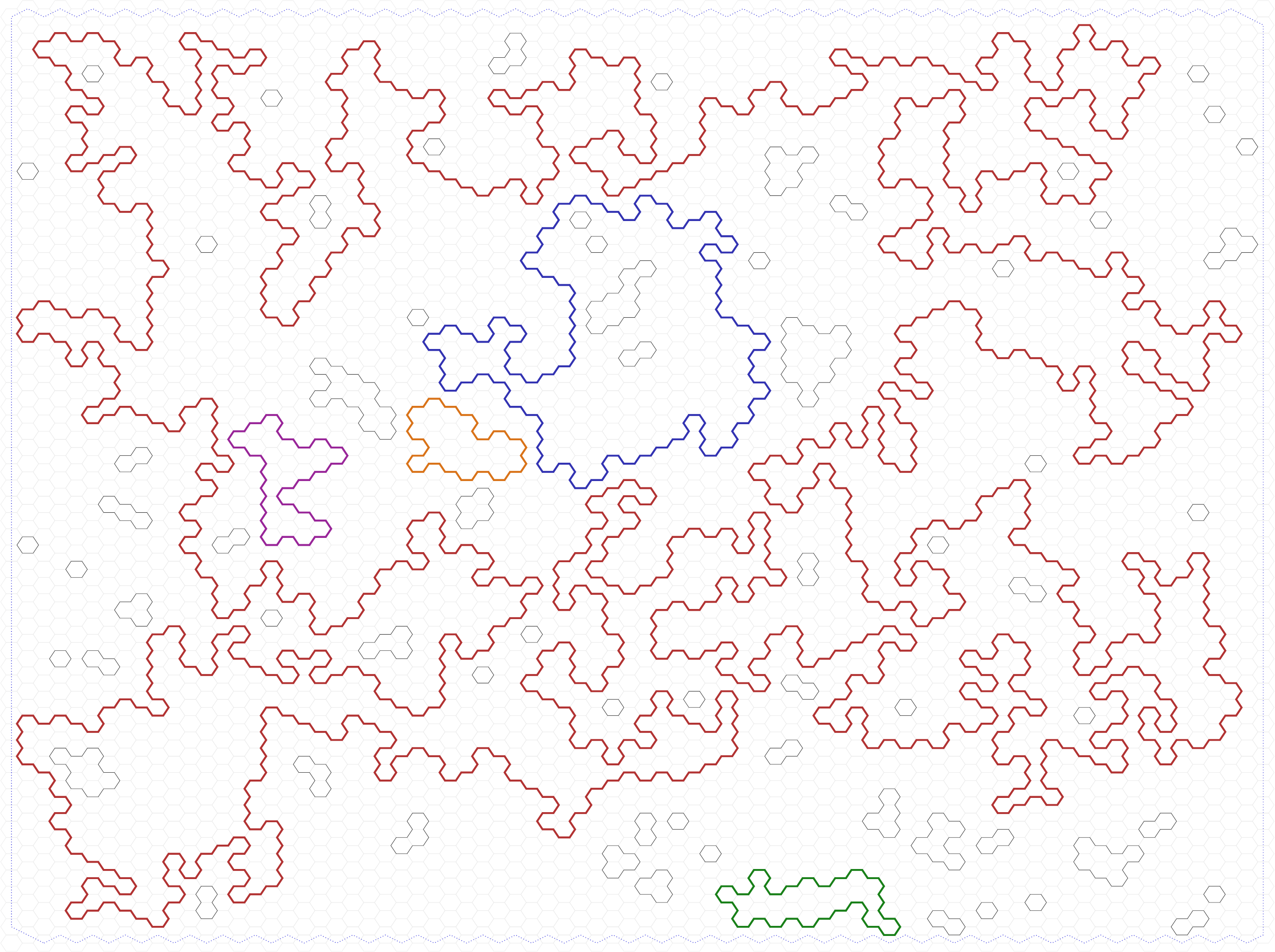}
		\caption{$n=0.5$ and $x=0.6>x_c(n)$.}
		\label{fig:loop-sample-n0.5-x0.6}
	\end{subfigure}
	\caption{Samples of random loop configurations on and around the critical line. Configurations are on a $80\times60$ rectangular-shaped domain and are sampled via Glauber dynamics for 100 million iterations started from the empty configuration. The longest loops are highlighted (from longest to shortest: red, blue, green, purple, orange).}
	\label{fig:loop-samples}
\end{figure}

\begin{figure}[!ht]
	\centering
	\begin{subfigure}[t]{.49\textwidth}
		\includegraphics[width=\textwidth]{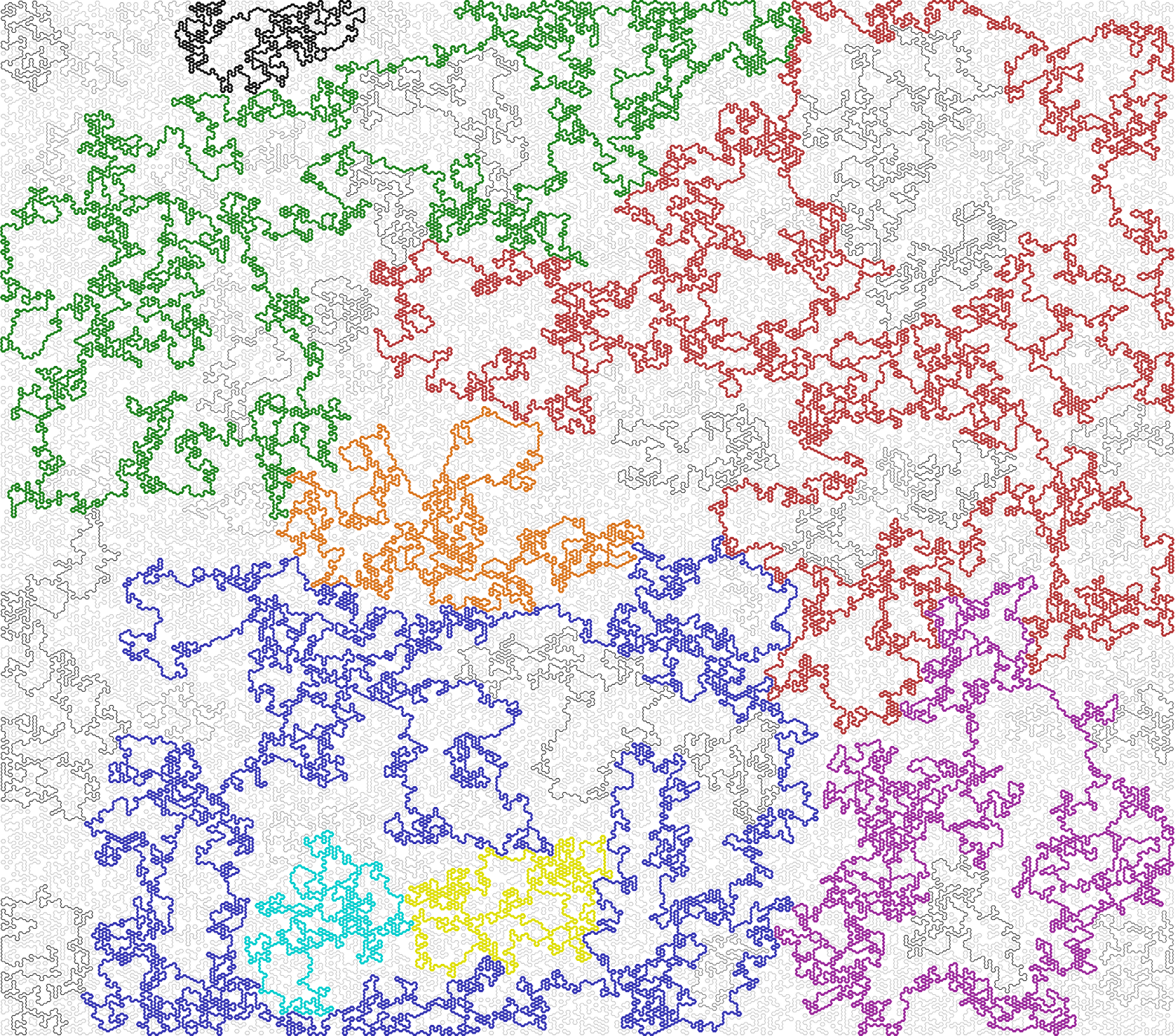}
		\caption{$n=1.5$ and $x=1$.}
		\label{fig:loop-sample-n1,5-x1}
	\end{subfigure}%
	\begin{subfigure}{10pt}
		\quad
	\end{subfigure}%
	\begin{subfigure}[t]{.49\textwidth}
		\includegraphics[width=\textwidth]{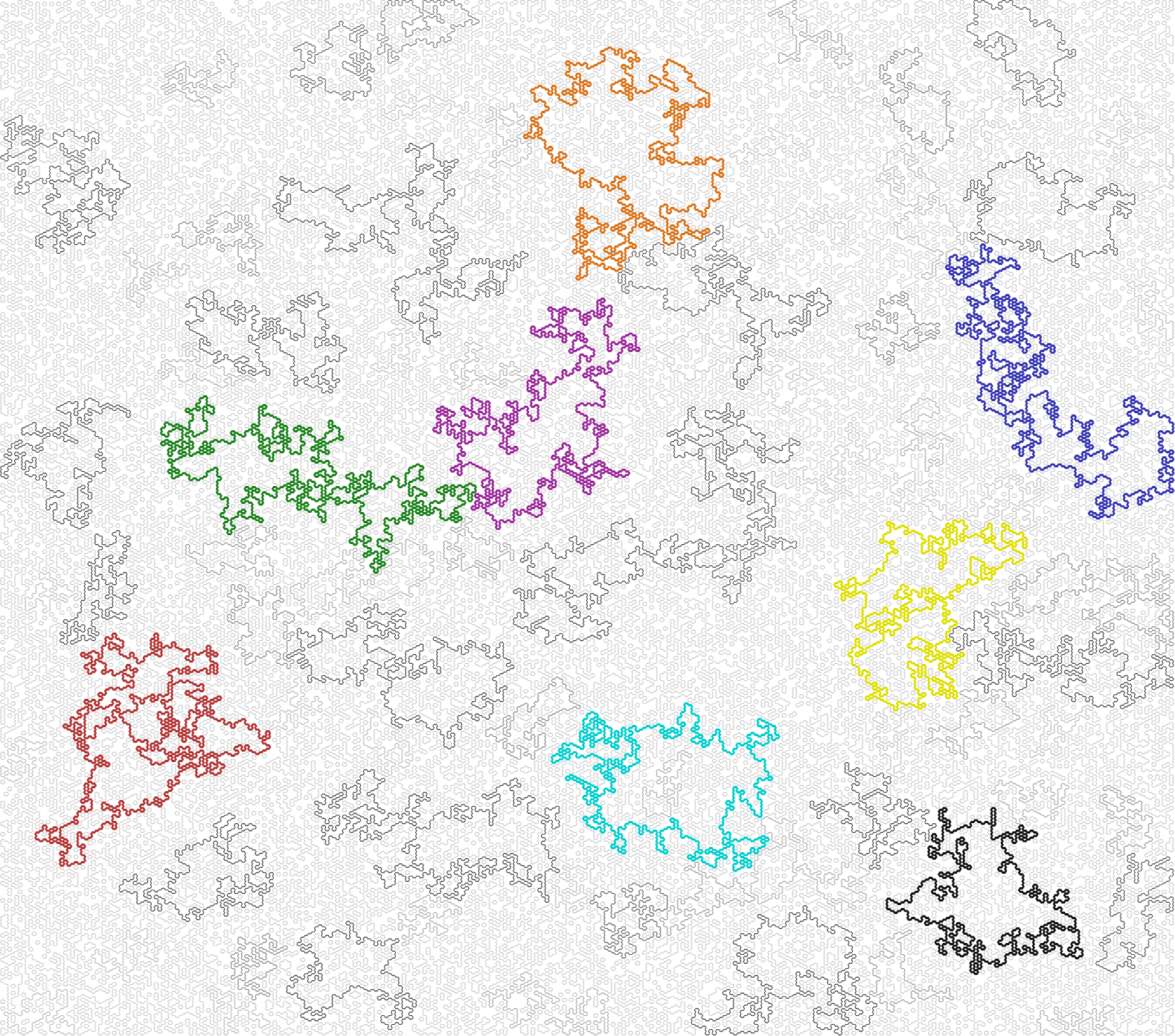}
		\caption{$n=2.5$ and $x=1$.}
		\label{fig:loop-sample-n2,5-x1}
	\end{subfigure}
	\caption{Samples of random loop configurations on a $340\times300$ rectangular-shaped domain.}
	\label{fig:loop-samples-large-scale}
\end{figure}

It is predicted~\cite{Nie82,KagNie04,Smi06} that the loop $O(n)$ model exhibits critical behavior when $n\in [0,2]$; see Figure~\reffig{fig:loop-samples}. In this regime, the model should have a critical value $x_c(n)$ with the formula
\begin{equation}\label{eq:x c def}
  x_c(n) := \frac{1}{\sqrt{2 + \sqrt{2 - n}}}.
\end{equation}
The prediction is that for $x<x_c$ the model is \emph{sub-critical} in the sense that the probability that a loop passing through a given point has length longer than $t$ decays \emph{exponentially} in~$t$. When $x\ge x_c$, the model should be \emph{critical}, with the same probability decaying only as a \emph{power-law} in $t$ and with the model exhibiting a conformally-invariant scaling limit. Furthermore, there should be two critical regimes: when $x = x_c$ and $x>x_c$, each characterized by its own conformally-invariant scaling limit (the same one for all $x>x_c$ and a different one for $x = x_c$). Kager and Nienhuis~\cite[Section 5.6]{KagNie04} predict that in both cases, the loops should scale in a suitable limit to random \emph{Schramm L\"owner evolution} (SLE) curves, introduced by Schramm~\cite{Sch00}, with parameter~$\kappa$ satisfying
\begin{equation}\label{eq:kappa and n relation}
  n = -2\cos\left(\frac{4\pi}{\kappa}\right),
\end{equation}
where, however, we take the solution of the above equation to
satisfy $\kappa\in [\frac{8}{3},4]$ when $x=x_c$ and
$\kappa\in[4,8]$ when $x>x_c$. When the parameter $n$ satisfies
$n>2$, it is predicted that the model is always sub-critical in the
sense of exponential decay of loop lengths described above. These predictions have been mathematically validated only in very special cases. See Figure~\reffig{fig:loop-samples} and Figure~\reffig{fig:loop-samples-large-scale} for samples from the loop $O(n)$ model. See also the two bottom figures on the cover page which show samples of the model with $n=0.5$ and $x=0.6$.

The physics literature considers the loop $O(n)$ model also with negative $n$, where the model is still defined by~\eqref{eq:loop O n model def} but is now a signed measure. Critical behavior is then predicted for $n\in [-2,2]$, with the same critical value~\eqref{eq:x c def} for $x$; see~\cite{Nie82}. Presumably formula~\eqref{eq:kappa and n relation} continues to describe the parameter $\kappa$ of the scaling limit of the model throughout this range. However, the precise meaning of these predictions for negative $n$ is less clear.

\begin{figure}[t]
	\begin{center}
		\includegraphics[scale=1.5]{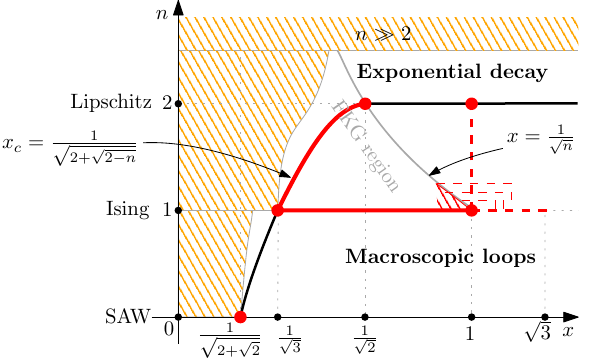}
	\end{center}
	\caption{The predicted phase diagram for the loop $O(n)$ model. The critical line $x_c$ separating the regime of exponential decay from the regime of macroscopic loops is plotted. The region where a dichotomy between the two behaviors is proved is denoted FKG regime. Orange lines illustrate regions where exponential decay is proved. Red dots or lines mark regions where macroscopic loops are proven to occur. Dotted red lines denote regions where exponential decay is ruled out. Picture adapted from Glazman--Manolescu~\cite{glazman2018exponential}.}
	\label{fig:phase diagram}
\end{figure}

\medbreak
We list the main rigorous results on the loop $O(n)$ model.

In the critical percolation case, $n= x = 1$, Smirnov~\cite{Smi01} proved that crossing probabilities have a conformally-invariant
scaling limit (given by Cardy's formula) and sketched a proof~\cite{Smi01,Smi06} for convergence of the exploration path to SLE(6), following an argument of Schramm~\cite{Sch00}. Camia and Newman~\cite{CamNew07} proved this latter convergence and further showed~\cite{camia2004continuum,camia2005full,CamNew06} that the full scaling limit is CLE(6), a member of the family of \emph{conformal loop ensembles} introduced by Sheffield~\cite{sheffield2009exploration}.

In the Ising model case, $n=1$, it is known that $x = x_c(1) = \frac{1}{\sqrt{3}}$ is critical~\cite{Ons44} with its interface scaling to SLE(3)~\cite{Smi10,CheSmi12,HonKyt13,CheDumHon14,izyurov2015smirnov} and its loops scaling to CLE(3)~\cite{benoist2016scaling}.

In the self-avoiding walk case, $n=0$, it was proved by Duminil-Copin and Smirnov~\cite{DumSmi12} that $x=x_c(0) = 1/\sqrt{2 + \sqrt{2}}$ is critical (it is the inverse of the connective constant of the hexagonal lattice; see Section~\ref{sec:saw}), though conformal invariance and convergence
to SLE have not been established. Furthermore, it was shown that for $x > x_c(0)$ the self-avoiding walk is space-filling~\cite{DumKozYad11}.

For large values of $n$, it has been shown by Duminil-Copin--Peled--Samotij--Spinka~\cite{DCPSS14} that there is exponential decay of loop lengths for all values of $x$ (see Section~\ref{sec:large-n} for details).

For $n \in [1,2]$, it has been shown by Duminil-Copin--Glazman--Peled--Spinka~\cite{macroscopicloops2017} that the model exhibits macroscopic loops at the critical point $x=x_c(n)$. A main tool in the proof is the observation that the Ising-type \emph{spin representation} of the loop $O(n)$ model, in which the loops form the interfaces between $\pm1$ spins on the triangular lattice, satisfies the FKG lattice condition (i.e., has strong positive association) when $n\ge 1$ and $nx^2\le 1$. Based on this and ideas from~\cite{DumSidTas13} the authors deduce a \emph{dichotomy} theorem when $n$ and $x$ are in this range: Either the length of the loop passing through a given vertex has exponential tail decay, or the model satisfies Russo--Seymour--Welsh (RSW) type estimates, i.e., for some $c\in(0,1)$ and any given annulus whose outer radius is twice its inner radius, the probability to see a loop which winds around the annulus is between $c$ and $1-c$. In this range of parameters, using a technique of~\cite{georgii2000percolation}, it is further shown that the loop $O(n)$ model has a unique Gibbs measure.

The high-temperature (ferromagnetic) Ising case, when $n=1$ and $\frac{1}{\sqrt{3}}<x<1$, can be shown to exhibit RSW type estimates using the techniques of \cite{tassion2016crossing}. Another possibility is to rely on the aforementioned dichotomy: Exponential tail decay can be ruled out by noting that the Ising model has a unique Gibbs measure in this range~\cite[Theorem 3.25]{friedli2016statistical}, which then cannot have an infinite connected component of $+$'s, nor an infinite connected component of $-$'s (by an argument of Zhang which rules out coexistence of the infinite components; see, e.g.,~\cite[Theorem 14.3]{haggstrom2006uniqueness}), so that every vertex is surrounded by infinitely many loops (domain walls).
Remarkably, the question of convergence of the loops to CLE(6) for the high-temperature Ising model remains open.

We briefly mention some very recent developments:
Beffara--Gayet~\cite{beffara2017percolation} prove that RSW type estimates hold for the (very-)high-temperature antiferromagnetic Ising model, when $n=1$ and $1<x<1+\eps$ for some small $\eps>0$. Glazman--Manolescu~\cite{glazman2018uniform} prove RSW type estimates for $n=2$ and $x=1$, where the loops can be viewed as the level lines of a uniform Lipschitz function (see Section~\ref{sec:loop-O-n-exact-representations} below). Crawford--Glazman--Harel--Peled~\cite{crawford2019} rule out the possibility that the length of loops has exponential tail decay for $1 \le n \le 1+\eps$ and $1 - \eps \le x \le 1+\eps$ for some small $\eps>0$. This implies RSW type estimates for $1\le n\le 1+\eps$ and $1-\eps\le x\le \frac{1}{\sqrt{n}}$ by the aforementioned dichotomy. They further show that when $1\le n\le 2$ and $x=1$ the model on a toroidal domain has a \emph{non-contractible} loop with non-negligible probability. Lastly, it is proved there that when $n=1$ and $1<x\le\sqrt{3}$ (antiferromagnetic Ising) the model has loops of large diameter (comparable to that of the domain) with non-negligible probability.
On the other side, Taggi~\cite{taggi2018shifted} established exponential decay of loop lengths when $n>0$ and $x\le (\sqrt{2+\sqrt{2}})^{-1} + \eps(n)$, with $\eps(n)>0$ some function of $n$. Glazman--Manolescu~\cite{glazman2018exponential} further showed exponential decay for any $n > 1$ and $x<\frac{1}{\sqrt{3}}+\eps(n)$, with $\eps(n)>0$ another function of~$n$.

Many interesting questions remain open for the loop $O(n)$ model, with some of the more notable ones being: proving conformal invariance at any point except $n=1, x=\frac{1}{\sqrt{3}}$ and $n=x=1$, and showing the existence of large loops in the remaining parts of the phase diagram: for $0<n<1$ and any $x$, or for $n=1$ and any $x\in(\sqrt{3},\infty]$ (it is unknown even for the dimer model case, $x=\infty$), or for $1<n\le 2$ and $x>x_c(n)$ (apart from the case $n=2, x=1$ and from the neighborhood of $n=x=1$ mentioned above) .

\subsection{Equivalent models}
The equivalent models discussed below do not reside on the hexagonal lattice, but rather on its dual, the triangular lattice $\T$, which is obtained by placing a vertex at the center of every face (hexagon) of $\HH$, so that each edge $e$ of $\HH$ corresponds to the unique edge $e^*$ of $\T$ which intersects $e$. Since vertices of $\T$ are identified
with faces of $\HH$, they will be called \emph{hexagons} instead of
vertices. We also say that a vertex or an edge of $\HH$
\emph{borders} a hexagon if it borders the corresponding face of
$\HH$.

\subsubsection{The hard-hexagon model}
\label{sec:hard-hex}

As noted already in the paper~\cite{DomMukNie81} where the loop $O(n)$ model was introduced,
taking the limit $n\to\infty$ and $nx^6\to\lambda$ leads formally to
the hard-hexagon model. As non-trivial loops (loops having length longer than $6$) become
less and less likely in this limit, hard-hexagon configurations
consist solely of trivial loops, with each such loop contributing a
factor of $\lambda$ to the weight. Thus, the hard-hexagon model is
the hard-core lattice gas model on the triangular lattice $\T$ with
fugacity $\lambda$. For this model, Baxter \cite{Bax80} (see also
\cite[Chapter 14]{Bax89}) computed the critical fugacity
\begin{equation*}
\lambda_c = \left(2\cos\left(\frac{\pi}{5}\right)\right)^5 = \frac{1}{2}\left(11 + 5\sqrt{5}\right) \approx 11.09017,
\end{equation*}
and showed that as $\lambda$ increases beyond the threshold
$\lambda_c$, the model undergoes a fluid-solid phase transition, from
a homogeneous phase in which the sublattice occupation frequencies
are equal, to a phase in which one of the three sublattices is
favored. Additional information is obtained on the critical behavior,
including the fact that the mean density of hexagons is equal for
each of the three sublattices \cite[Equation (13)]{Bax80} and the
fact that the transition is of second order \cite[Equation
(9)]{Bax80}. Baxter's arguments use certain assumptions on the model
which appear not to have been mathematically justified. Still, this
exact solution may suggest that the loop $O(n)$ model with large $n$
will also have a unique transition point $x_c(n)$, that $nx_c(n)^6$
will converge to $\lambda_c$ as $n$ tends to infinity and that the
transition in $x$ is of second order, with the model having a unique
Gibbs state when $x = x_c(n)$.

\subsubsection{Exact representations as spin models with local interactions}
\label{sec:loop-O-n-exact-representations}

As explained in the previous section, the loop $O(n)$ is an \emph{approximation} of the spin $O(n)$ model, a spin model on $\HH$ with local interactions. Here we develop \emph{exact} representations of the loop $O(n)$ model as spin models on $\T$ with local interactions (see also~\cite{cardy2008conformal}).

The spin space here will always be a discrete set $S$ (finite or countably infinite) and we shall restrict ourselves to the set $\Phi$ of spin configurations $\varphi \in S^\T$ satisfying the condition that $|\{ \varphi(y),\varphi(z),\varphi(w)\}| \le 2$ for any three mutually adjacent hexagons $y,z,w \in \T$.
Define the `domain walls' of a configuration $\varphi \in \Phi$ by
\begin{equation*}
\omega_\varphi := \big\{e \in E(\HH) \,:\, \text{the edge $e$ borders hexagons $y,z\in\T$ satisfying $\varphi(y)\neq \varphi(z)$}\big\},
\end{equation*}
and observe that $\omega_\varphi$ is a loop configuration.
For a domain $H \subset \HH$ and a fixed $s_0 \in S$, let $\Phi(H)$ be the set of $\varphi\in \Phi$ satisfying the boundary condition $\varphi(z)=s_0$ for any hexagon $z \in \T$ which is not entirely contained in $H$. Note that $\omega_\varphi\in\LC(H)$ for $\varphi\in\Phi(H)$.

We now define a spin model in a similar manner as in Section~\ref{sec:spin-O-n}, with one important difference: as we are now working on the triangular lattice, rather than the square lattice, it is natural to consider triangular interactions, rather than pairwise interactions. Precisely, given a (non-zero) symmetric interaction $h \colon S^3 \to [0,\infty)$, i.e., $h=h \circ \tau$ for any permutation $\tau \in S_3$, we consider the probability distribution on $\Phi(H)$ in which the probability of a configuration $\varphi \in \Phi(H)$ is proportional to
\begin{equation}\label{eq:triangle-interaction}
\prod_{\{y,z,w\}} h(\varphi(y),\varphi(z),\varphi(w)) ,
\end{equation}
where the product is over triples $\{y,z,w\}$ of mutually adjacent hexagons $y,z,w \in \T$, at least one of which has an edge in $H$. We note that in order for this distribution to be well-defined when $S$ is infinite, one must impose an implicit condition on $h$ to ensure that the sum of the above weights is finite. Note also that this distribution is entirely defined by the collection of numbers $(h_a)_{a \in S}$ and $(h_{a,b})_{a,b \in S, a \neq b}$, where $h_a := h(a,a,a)$ and $h_{a,b} := h(a,b,b)$ ($h_{a,b}$ need not equal $h_{b,a}$ in general).

Any such choice of spin space $S$ and interaction $h$, gives rise via the map $\varphi \mapsto \omega_\varphi$ to a probability distribution on $\LC(H)$. The goal is then to choose $S$ and $h$ in such a manner that this distribution coincides with the loop $O(n)$ measure $\Pr_{H,n,x}$.
As we now show, there is in fact a general recipe for constructing such examples.

Let $G$ be a simple graph on vertex set $S$. We focus on the case that $h$ imposes the hard-core constraint that $h_{a,b}=0$ unless $\{a,b\}$ is an edge of $G$. In order words, the corresponding distribution on $\Phi(H)$ is supported on $\text{Lip}(G)$, the set of configurations $\varphi \in S^\T$ satisfying the Lipschitz condition: if $y,z \in \T$ are adjacent hexagons then either $\varphi(y)=\varphi(z)$ or $\varphi(y)$ is adjacent to $\varphi(z)$ in $G$. We note that, in general, neither $\Phi$ nor $\text{Lip}(G)$ is contained in the other. However, in the case that $G$ contains no triangles, we have $\text{Lip}(G) \subset \Phi$.

For simplicity, we now restrict ourselves to the case that $S$ is finite.
Let $A$ be the adjacency matrix of the graph $G$, i.e., $A$ is a real symmetric matrix, indexed by the set $S$ and defined by $A_{a,b} := \one_{\{\{a,b\} \in E(G)\}}$ for $a,b \in S$. Let $\psi$ be the Perron--Frobenius eigenvector corresponding to the largest eigenvalue $\lambda$ of $A$, i.e., the components of $\psi$ are non-negative and $A\psi=\lambda\psi$. We now choose the interaction to be $h_a := 1$ for all $a\in S$ and $h_{a,b} := x (\psi_a/\psi_b)^{1/6}$ for all adjacent $a,b \in S$.

Let us now show that if $\varphi$ is a random spin configuration sampled according to the distribution corresponding to the above choice of $h$, then $\omega_\varphi$ is distributed according to $\Pr_{H,\lambda,x}$.
To this end, we must show that, for any fixed $\omega \in \LC(H)$, the sum of weights in~\eqref{eq:triangle-interaction} over configurations in $\varphi \in \Phi(H)$ having $\omega_\varphi = \omega$ is proportional to $x^{o(\omega)}n^{L(\omega)}$. To see this, observe that $\omega$ may have nested loops and that by considering these loops one-by-one, from the innermost to the outermost, it suffices to show that the contribution of any single loop $\ell$ is $x^{|\ell|}\lambda$.
More precisely, for any $a \in S$, the sum of weights in~\eqref{eq:triangle-interaction} over configurations $\varphi$ which equal $a$ on the exterior side of $\ell$ and satisfy $\omega_\varphi=\ell$ is $x^{|\ell|}\lambda$. Indeed, this sum is precisely $\sum_{b \in S} h_{b,a}^m h_{a,b}^{m'}$, where $m$ and $m'$ are the number of vertices of $\ell$ which are incident to an edge in the exterior and interior sides of~$\ell$, respectively. Geometrically, if one traverses $\ell$ in counterclockwise direction, then $m$ and $m'$ are the number of left-hand and right-hand turns, respectively. In particular, it always holds that $m=m'+6$. Thus,
\[ \sum_{b \in S} h_{b,a}^m h_{a,b}^{m'} = x^{m+m'} \sum_{b:\{a,b\} \in E(G)} (\psi_b/\psi_a)^{(m-m')/6} = x^{|\ell|} \frac{(A \psi)_a}{\psi_a} = x^{|\ell|} \lambda .\]

We have thus shown that if there exists a finite graph whose adjacency matrix has maximum eigenvalue $n$, then one may find an exact representation of the loop $O(n)$ model with any value of the edge-weight $x$ as a spin model with local interactions (and finite spin space). Not all values of $n>0$ are obtainable as such.
The set of possible $n$ in $(0,2)$ is known; They are the eigenvalues of the ADE diagrams and form an infinite set in $[1,2)$ having 2 as its sole accumulation point.

We remark that the above construction can sometimes be extended to the case when $G$ is an infinite, locally finite graph (i.e., all vertices have finite degrees). In this case, the Perron--Frobenius eigenvector $\psi$ is replaced by a non-zero element $\psi \in \R^S$ such that $\psi \ge 0$ and $\lambda \psi_a = \sum_{b:\{a,b\} \in E(G)} \psi_b$ for some $\lambda>0$ and all $a\in S$. If such a $\psi$ exists, then the arguments above continue to hold without change.

\medbreak
\noindent{\bf Lipschitz functions.}
\label{sec:loop-o-n-height}
When $n$ is a positive integer, the loop $O(n)$ model admits a height function representation~\cite{DomMukNie81}. Let $G=T_n$ be the $n$-regular tree (so that $T_1 = \{+,-\}$ and $T_2=\Z$) rooted at an arbitrary vertex $\rho$. Here, $\text{Lip}(T_n)$ is the set of 1-Lipschitz functions from $\T$ to $T_n$ (where the metrics are the graph distances), and moreover, $\text{Lip}(T_n) \subset \Phi$ as $T_n$ does not contain triangles. In this case, one may regard $\omega_\varphi$ as the `level lines' of the height function $\varphi \in \text{Lip}(T_n)$.
Since $\lambda=n$ is an eigenvalue of $T_n$ (in the sense discussed above; the eigenvector $\psi$ is the constant function), we see that if one samples a random function $\varphi\in \text{Lip}(T_n) \cap \Phi(H)$ with probability proportional to $x^{|\omega_\varphi|}$, then $\omega_\varphi$ is distributed according to $\Pr_{H,n,x}$. In particular, the height function representation of the loop $O(1)$ model is an Ising model (which may be either ferromagnetic or antiferromagnetic according to whether $x<1$ or $x>1$) and the height function representation of the loop $O(2)$ model is a restricted Solid-On-Solid model (an integer-valued Lipschitz function). Andrews--Baxter--Forrester~\cite{andrews1984eight} studied a related type of restricted Solid-On-Solid models.

\medbreak
\noindent{\bf The dilute Potts model.}
Let $q \ge 1$ be an integer and set $S := \{0,1,\dots,q\}$.
Let $G$ be star graph on $S$ in which $0$ is the center, i.e., the edges of $G$ are $\{0,i\}$ for $1 \le i \le q$. Here, $\text{Lip}(G) \subset \Psi$ and the elements of this set can be thought of as configurations in a \emph{dilute Potts model}: the value $0$ represents a vacancy and a positive value $i$ represent a particle/spin of type $i$. In this case, the Perron--Frobenius eigenvector is given by $\psi(0) := \sqrt{q}$ and $\psi(i) := 1$ for $1 \le i \le q$, and its corresponding eigenvalue is $\lambda = \sqrt{q}$. Thus, this model gives a representation of the loop $O(n)$ model for any $n$ which is the square root of an integer.

Nienhuis~\cite{nienhuis1991locus} proposed a slightly different version of the dilute Potts model, similar to the above representation. A configuration of this model in a domain of the triangular lattice is an
assignment of a pair $(s_z, t_z)$ to each vertex $z$ of the domain,
where $s_z\in\{1,\ldots, q\}$ represents a spin and $t_z\in\{0,1\}$
denotes an occupancy variable. The probability of configurations
involves a hard-core constraint that nearest-neighbor occupied sites
must have equal spins (reminiscent of the Edwards--Sokal coupling of
the Potts and random-cluster models) and single-site,
nearest-neighbor and triangle interaction terms involving the
occupancy variables as in~\eqref{eq:triangle-interaction}. With a certain choice of coupling constants,
the marginal of the model on the product variables $(s_z t_z)_z$ has the same distribution as $\varphi$ (with the above choice of $G$), and thus, the marginal on the occupancy variables is equivalent to the loop $O(n)$ model (with $n=\sqrt{q}$). Nienhuis
predicts this choice of parameters to be part of the critical
surface of the dilute Potts model. This prediction is partially confirmed in~\cite{macroscopicloops2017} for the loop $O(n)$ model with parameters $n\ge 1$ and $nx^2\le 1$.

\subsection{Self-avoiding walk and the connective constant}
\label{sec:saw}

The loop $O(n)$ model as defined in Section~\ref{sec:loop-model-def} is said to have \emph{vacant boundary conditions}. In this case, the probability of any non-empty loop configuration tends to zero as $n$ tends to zero. Thus, under vacant boundary conditions, the $n=0$ model is trivial.
However, as can be done for the spin $O(n)$ model, here too one may impose different \emph{boundary conditions} on the model, where the states of certain edges are pre-specified. Taking boundary conditions for which precisely two edges $e_1$ and $e_2$ on the boundary of the domain $H$ are present, one forces a self-avoiding path between these two edges within the domain (in addition to possible loops). Under such boundary conditions, in the limit as $n \to 0$, one obtains a random self-avoiding walk. The probability of such a given self-avoiding walk $\gamma$ is proportional to $x^{\text{length}(\gamma)}$.
The partition function, $Z_{H,x,e_1,e_2}^{\text{saw}}$, is given by
\[ Z_{H,x,e_1,e_2}^{\text{saw}} := \sum_{\substack{\gamma: e_1 \to e_2\\\gamma \subset H}} x^{\text{length}(\gamma)} = \sum_{k=0}^\infty s_{H,e_1,e_2,k} x^k ,\]
where $s_{H,k,e_1,e_2}$ is the number of self-avoiding walks of length $k$ from $e_1$ to $e_2$ in $H$.

We consider the related partition function of all self-avoiding walks starting at a fixed vertex $v$, given by
\[ Z_x^{\text{saw}} := \sum_{\gamma:\,\gamma_0=v} x^{\text{length}(\gamma)} = \sum_{k=0}^\infty s_k x^k ,\]
where $s_k$ is the number of self-avoiding walks of length $k$ starting at $v$. The series defining $Z_x^{\text{saw}}$ has a radius of convergence $x_c \in [0,\infty]$ so that $Z_x^{\text{saw}}<\infty$ when $0<x<x_c$ and $Z_x^{\text{saw}}=\infty$ when $x>x_c$. This is the critical point of the model. The critical value $x_c$ is directly related to the exponential rate of growth of $s_k$.

An important and simple observation is that $s_k$ is sub-multiplicative. That is,
\[ s_{k+m} \le s_k s_m .\]
It follows that the limit
\[ \mu := \lim_{k \to \infty} s_k^{1/k} \]
exists and is finite. The number $\mu$, called the \emph{connective constant} of the hexagonal lattice, clearly relates to the critical value via $\mu = 1/x_c$.

\medbreak
\noindent {\bf Exercise.}
Show that $\mu$ is well-defined and that $\mu = \inf_k s_k^{1/k}$.
\medbreak
\noindent {\bf Exercise.}
Show that $2^{k/2} \le s_k \le 3 \cdot 2^{k-1}$ and deduce that $\sqrt{2} \le \mu \le 2$.
\medbreak

Recently, Duminil-Copin and Smirnov~\cite{DumSmi12} showed the following remarkable result.

\begin{theorem}
The connective constant of the hexagonal lattice is
\[ \mu = \sqrt{2 + \sqrt{2}} .\]
\end{theorem}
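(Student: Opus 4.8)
The plan is to run the \emph{parafermionic observable} argument of Duminil--Copin and Smirnov. Write $x_c := 1/\sqrt{2+\sqrt2}$; since $1/\mu$ is the radius of convergence of $Z_x^{\mathrm{saw}}=\sum_k s_k x^k$, the theorem is equivalent to the two statements $\sum_k s_k x_c^k=\infty$ (giving $\mu\ge\sqrt{2+\sqrt2}$) and $\sum_k s_k x^k<\infty$ for $x<x_c$ (giving $\mu\le\sqrt{2+\sqrt2}$). Throughout, regard the mid-edges of $\HH$ (midpoints of edges) as complex numbers, fix a domain $\Omega\subset\HH$ and a boundary mid-edge $a$, and for a self-avoiding walk $\gamma\subset\Omega$ from $a$ to a mid-edge $z$ let $|\gamma|$ be its number of edges and $W(\gamma)\in\tfrac{\pi}{3}\Z$ its total turning. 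With $\sigma:=5/8$, define the observable
\[
 F(z) := \sum_{\gamma:\, a\to z,\ \gamma\subset\Omega} x_c^{\,|\gamma|}\, e^{-i\sigma W(\gamma)} .
\]

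First I would prove the \emph{local relation}: for every vertex $v$ in the interior of $\Omega$, with $p,q,r$ the three mid-edges incident to $v$,
\[
 (p-v)F(p)+(q-v)F(q)+(r-v)F(r)=0 .
\]
This is a finite case-check: the walks contributing to the three terms are grouped by how they interact with $v$; those avoiding $v$ cancel because $p-v,q-v,r-v$ are unit vectors at mutual angle $2\pi/3$ summing to zero, while those visiting $v$ occur in pairs whose two contributions cancel \emph{exactly} — and this last cancellation holds precisely when $x=x_c$ and $\sigma=5/8$. This is the single miraculous input; it is where the value $\sqrt{2+\sqrt2}$ enters. Summing the local relation over all interior vertices makes the interior mid-edges cancel telescopically and leaves a discrete contour identity $\sum_{z\in\partial\Omega}\lambda_\Omega(z)F(z)=0$, with explicit unit-modulus weights $\lambda_\Omega(z)$ determined by the geometry of $\partial\Omega$ at $z$.

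Next I would specialise $\Omega$ to the symmetric trapezoid $S_{T,L}$: a horizontal bottom side of $\sim 2L$ edges, two $60^\circ$ slanted sides, height $T$, and $a$ the midpoint of the central bottom edge. Its boundary mid-edges split into the bottom piece (containing $a$), the top $\alpha$, and the union $\beta$ of the two slanted sides. The geometric lemma one needs is that the winding $W(\gamma)$ of a self-avoiding walk in $S_{T,L}$ from $a$ to $z$ depends only on which of these three pieces contains $z$ — the walk meets $\partial S_{T,L}$ only at its endpoints and lives in a simply connected region, so it cannot accumulate spurious turning. Substituting the three resulting constant phases, combining with the $\lambda_\Omega$'s and taking real parts turns the contour identity into a relation among nonnegative reals,
\[
 1 = \cos\!\Big(\tfrac{3\pi}{8}\Big)\, A_{T,L} + \cos\!\Big(\tfrac{\pi}{4}\Big)\, B_{T,L},
\]
where $A_{T,L}:=\sum_{\gamma:\,a\to\alpha}x_c^{\,|\gamma|}$ and $B_{T,L}:=\sum_{\gamma:\,a\to\beta}x_c^{\,|\gamma|}$ count self-avoiding walks from $a$ ending on the top and on the sides, and the ``$1$'' collects the bottom-piece contribution, which reduces to the empty walk at $a$.

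Finally I would extract $\mu$. A walk counted in $A_{T,L}$ has at least $cT$ edges and one counted in $B_{T,L}$ has at least $cL$ edges. If $\sum_k s_k x_c^k<\infty$, then $A_{T,L}\le\sum_{k\ge cT}s_k x_c^k\to 0$ as $T\to\infty$, while $B_{T,L}$ increases in $T$ to some $B_{\infty,L}$; the identity then forces $\cos(\pi/4)\,B_{\infty,L}=1$ for every $L$, contradicting $B_{\infty,L}\le\sum_{k\ge cL}s_k x_c^k\to 0$ as $L\to\infty$. Hence $\sum_k s_k x_c^k=\infty$, i.e.\ $\mu\ge\sqrt{2+\sqrt2}$. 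For the reverse inequality, the same identity gives $A_{T,L}\le 1/\cos(3\pi/8)$ uniformly in $T,L$; a Hammersley--Welsh style concatenation of half-plane walks reaching a prescribed height converts this uniform bound into finiteness of the bridge generating function $\sum_n b_n x_c^{\,n}$, where $b_n$ is the number of $n$-step self-avoiding bridges. Since the connective constant equals the bridge growth rate, $\mu=\lim_n b_n^{1/n}$, finiteness at $x_c$ gives $x_c\le 1/\mu$, i.e.\ $\mu\le\sqrt{2+\sqrt2}$. Combining, $\mu=\sqrt{2+\sqrt2}$. I expect the main obstacle to be the local relation together with the winding lemma: both depend delicately on the hexagonal geometry and on the exact choice $\sigma=5/8$, $x=1/\sqrt{2+\sqrt2}$ that makes every cancellation exact; the second half of the extraction additionally relies on the classical identification of $\mu$ with the bridge constant.
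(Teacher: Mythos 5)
The paper does not prove this theorem; it states it and refers the reader to Duminil-Copin and Smirnov~\cite{DumSmi12}, so your proposal is to be measured against the DCS proof itself, which it reproduces in outline: the parafermionic observable with $\sigma=5/8$ at $x=x_c$, the local relation whose exact cancellation pins down $x_c$, the telescoping to a boundary (contour) identity on a trapezoidal domain, the winding lemma, and the two extractions for the two inequalities on $\mu$. This is the right proof and you have its ingredients.

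There is, however, a gap in how you wrote the boundary identity and, consequently, in the lower-bound extraction. In DCS, after taking real parts, the identity has the form $1 = \cos(3\pi/8)\,A + B + \cos(\pi/4)\,E$, where $A$ counts walks returning to the boundary piece containing $a$ (at a mid-edge other than $a$), $B$ counts walks reaching the opposite side, and $E$ those reaching the remaining two sides. Your version $1 = \cos(3\pi/8)A_{T,L} + \cos(\pi/4)B_{T,L}$ drops the contribution of walks back to the $a$-side (which is \emph{not} just the empty walk) and gives the opposite side the coefficient $\cos(3\pi/8)$ instead of $1$. This matters: with the correct identity, letting $T\to\infty$ kills the opposite-side term as you say, but then letting $L\to\infty$ kills only the remaining-sides term and \emph{not} the $a$-side term, which converges to the finite half-plane return generating function; you do not directly get the contradiction ``$\cos(\pi/4)B_{\infty,L}=1$ yet $B_{\infty,L}\to 0$''. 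DCS close the argument after the first limit by a more delicate combinatorial comparison of the half-plane generating functions across consecutive $L$ (using monotonicity and the strict positivity of the perpendicular-side contribution), and you should fill that step in rather than rely on a second vanishing limit. The upper-bound extraction --- a uniform bound on one boundary generating function from the identity, finiteness of the bridge generating function at $x_c$, and the Hammersley--Welsh identification of $\mu$ with the bridge growth rate --- is in the right spirit and survives once the coefficient is corrected.
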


We do not give the proof in these notes and refer the interested reader to~\cite{DumSmi12}.

\subsection{Large $n$}
\label{sec:large-n}

It is believed that the loop $O(n)$ model, although only an
approximation of the spin $O(n)$ model, resides in the same
universality class as the spin $O(n)$ model. Thus, as in the case of
the spin $O(n)$ model, it has been conjectured that the loop $O(n)$
model exhibits exponential decay of correlations when $n>2$. Duminil-Copin, Peled, Samotij and
Spinka~\cite{DCPSS14} established this for large $n$, showing
that long loops are exponentially unlikely to occur, uniformly in
the edge weight $x$. This result is the content of the first
theorem below.

We begin with some definitions (see Figure~\reffig{fig:loop-config-ground-state} for their illustration).
Recall that the triangular lattice $\T$ is the dual of the hexagonal lattice.
Fix a proper 3-coloring of $\mathbb{T}$ (there is a unique such coloring up to permutations of the colors), and let $\mathbb{T}^0$, $\mathbb{T}^1$ and $\mathbb{T}^2$ denote the color classes of this coloring.
The $0$-phase ground state $\ground^0$ is defined to be the (fully-packed) loop configuration consisting of trivial loops (loops of length 6) around each hexagon in $\mathbb{T}^0$.
A domain $H \subset \HH$ is said to be \emph{of type 0} if no edge on its boundary belongs to $\ground^0$, or equivalently, if every edge bordering a hexagon in $\mathbb{T}^0$ has either both or neither of its endpoints in $V(H)$.
Finally, we say that a loop surrounds a vertex $u$ of $\HH$ if any infinite simple path in $\HH$ starting at $u$ intersects a vertex of this loop. In particular, if a loop passes through a vertex then it surrounds it as well.

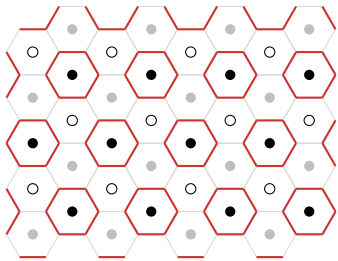
\begin{figure}
	\centering
	
	\begin{tikzpicture}[scale=0.35, every node/.style={scale=0.35}]
	\groundState[7][4][1];
	\hexagonEdges[edge-on][0][0][0/0/-1,0/1/-2,0/3/-1,0/4/-2];
	\hexagonEdges[edge-on][0][0][0/0/3,2/-1/3,4/-2/3,6/-3/3];
	\hexagonEdges[edge-on][7][-2][0/0/1,0/1/2];
	\hexagonEdges[edge-on][0][5][0/0/3,1/-1/-1,1/-1/1,2/-1/3,3/-2/-1,3/-2/1,4/-2/3,5/-3/-1,5/-3/1,6/-3/3,7/-4/-1,7/-4/1];
	\end{tikzpicture}
	
	\caption{A proper $3$-coloring of the triangular lattice $\mathbb{T}$ (the dual of the hexagonal lattice $\HH$), inducing a partition of $\mathbb{T}$ into three color classes $\mathbb{T}^0$, $\mathbb{T}^1$, and $\mathbb{T}^2$. The $0$-phase ground state $\ground^0$ is the (fully-packed) loop configuration consisting of trivial loops around each hexagon in~$\mathbb{T}^0$.}
	\label{fig:loop-config-ground-state}
\end{figure}

\begin{theorem}\label{thm:no-large-loops}
  There exist $n_0,c > 0$ such that for any $n \ge n_0$, any $x\in(0,\infty]$ and any domain $H$ of type 0 the following holds.
  Suppose $\omega$ is sampled from the loop $O(n)$ model in domain $H$ with edge weight $x$. Then, for any vertex $u \in V(H)$ and any integer $k > 6$,
  \[
  \Pr(\text{there exists a loop of length $k$ surrounding $u$}) \le n^{-ck}.
  \]
\end{theorem}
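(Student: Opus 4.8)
The plan is to run a Peierls-type argument, but with the "contours" being the loops themselves rather than interfaces of a spin system. Fix a domain $H$ of type $0$, a vertex $u \in V(H)$, an integer $k > 6$, and let $\omega$ be sampled from $\Pr_{H,n,x}$. I would estimate $\Pr(\exists \text{ loop of length } k \text{ surrounding } u)$ by a union bound over all possible such loops, so it suffices to show that for any fixed loop $\ell$ of length $k$ surrounding $u$, the probability $\Pr(\ell \subset \omega)$ is small, and that the number of such loops grows at most exponentially in $k$ with a base that is beaten by the per-loop probability bound. The latter is standard: a loop of length $k$ surrounding $u$ must come within distance $O(k)$ of $u$, and the number of length-$k$ cycles through a given region of the hexagonal lattice is at most $C^k$ for an absolute constant $C$ (each loop is a self-avoiding polygon, and the hexagonal lattice has bounded degree). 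So the crux is a bound of the form $\Pr(\ell \subset \omega) \le (C'/n)^{k}$, or more precisely something that, combined with the entropy factor $C^k$, gives $n^{-ck}$ for $n \ge n_0$.

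The key step, then, is the per-loop estimate. To bound $\Pr(\ell \subset \omega)$ I would exhibit an injective (or bounded-to-one) map $\Phi$ from loop configurations containing $\ell$ to loop configurations not containing $\ell$, obtained by "erasing $\ell$ and locally repairing". Concretely, given $\omega \ni \ell$, one removes the edges of $\ell$ and then switches the configuration inside the region bounded by $\ell$ (or in a collar around $\ell$) so as to restore the even-degree condition; because $H$ is of type $0$, there is a canonical way to do this using the ground state $\ground^0$ as a reference — roughly, one XORs with $\ground^0$ restricted to a neighborhood of $\ell$, or one slides the loops of $\ground^0$ across $\ell$. The effect of $\Phi$ on the weight $x^{o(\omega)} n^{L(\omega)}$ is what must be controlled: erasing $\ell$ changes the edge count by roughly $-k$ (costing or gaining a factor $x^{\pm k}$, which must be absorbed — this is why the statement is uniform in $x \in (0,\infty]$ and forces the repair to be edge-count-neutral up to $O(1)$, so one should design $\Phi$ to preserve $o(\omega)$ up to an additive constant, or better, to preserve it exactly), and it changes the loop count $L$ by at most an additive constant while removing the factor $n$ contributed by $\ell$ itself. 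If $\Phi$ is $D(k)$-to-one with $D(k) \le C''^{\,k}$ (coming from the choices made in the repair — e.g.\ which incident edges to re-route), then
\[
\Pr(\ell \subset \omega) \;\le\; D(k) \cdot \sup_{\omega' \in \mathrm{Im}(\Phi)} \frac{\Pr_{H,n,x}(\omega')}{\min_{\omega \mapsto \omega'} \Pr_{H,n,x}(\omega)^{-1}\!\!} \;\le\; C''^{\,k} \cdot \frac{1}{n} \cdot n^{O(1)},
\]
and the surplus $n$-factors are what give the decay.

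I would organize the argument as: (i) reduce to the per-loop bound via the entropy/counting estimate; (ii) define the surgery map $\Phi$ carefully, using type-$0$-ness of $H$ to guarantee the repair stays inside $H$ and stays a valid loop configuration; (iii) verify that $\Phi$ is at most $C''^{\,k}$-to-one and that it changes $o(\omega)$ by a bounded amount and $L(\omega)$ by at most an additive constant beyond deleting $\ell$; (iv) combine to get $\Pr(\ell \subset \omega) \le (C/n)^k \cdot n^{O(1)}$ and then sum over $\ell$. The main obstacle I expect is step (ii)–(iii): the local repair must simultaneously restore evenness of all degrees, not create new long loops, not merge too many loops (to keep $\Delta L$ bounded), keep the edge count controlled, and remain inside $H$ — all of this is exactly where the "type $0$" hypothesis and the rigidity of $\ground^0$ do the real work, and getting a clean injective-up-to-$C^k$ map rather than a messy correspondence is the delicate part. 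A secondary subtlety is handling loops of $\omega$ that cross $\ell$ or are nested with $\ell$: the surgery must be defined consistently when other loops interact with the region being repaired, which may require doing the repair in a thin annular neighborhood of $\ell$ rather than the whole interior.
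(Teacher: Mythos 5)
Your strategic skeleton --- union bound over $\lesssim C^k$ loops surrounding $u$, then a bounded-to-one surgery map whose probability gain beats the entropy --- is indeed the engine of the paper's proof (via Lemma~\ref{lem:prob-inequality-tool}), and your instinct that the surgery must trade the long loop for ground-state structure is correct. However, the quantitative plan as written cannot close, and the ``collar'' picture is missing a genuine obstruction.

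First, your step (iii) and step (iv) are mutually inconsistent. You claim $\Phi$ changes $L(\omega)$ ``by at most an additive constant beyond deleting $\ell$,'' i.e.\ $\Delta L = O(1)$, and $o(\omega)$ by a bounded amount. Then $\Pr(\Phi(\omega))/\Pr(\omega) = x^{\Delta o}\, n^{\Delta L} = x^{O(1)} n^{O(1)}$, which is nowhere near enough to beat the $C^k$ entropy and does not yield $(C/n)^k$. The crucial missing quantitative point is that the repair must \emph{gain on the order of $k/6$ new loops}: the region freed up by the long loop (and by the other ``bad'' edges near it) is tiled with trivial hexagonal loops of $\ground^0$, and it is this that produces a gain of $n^{\Omega(k)}$. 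In the paper's repair map the bookkeeping identity is $\Delta L = |V|/6 - L\big(\omega\cap\BadEdgesBefore(\omega)\big)$, where $|V|$ is the size of the bad set (which is $\ge k$ when a length-$k$ loop is present), and one deduces $\Delta L \ge |V|/15 + |\Delta o|/10$.

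Second, a purely local surgery in a collar of $\ell$ does not in general produce a valid loop configuration. Any loop of $\omega$ that enters the collar leaves dangling odd-degree vertices just outside it once you overwrite the collar's contents, and there is no canonical local fix --- exactly the subtlety you flag at the end, but it is fatal rather than secondary. The paper's remedy is to abandon locality: by Lemma~\ref{lem:existence-of-good-outer-circuit}, one finds a circuit $\sigma\subset\T\setminus\T^{\clr}$ that is \emph{vacant} in $\omega$, whose interior contains the loop, and for which $V(\ell)\subset V(\omega,\sigma)$. The repair (shift the $1$- and $2$-clusters into the $0$-phase, overwrite bad edges with $\ground^0$) acts on all of $\Int\sigma$, which can be far larger than any collar of $\ell$. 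Crucially the preimage count is still only $C^{|V(\omega,\sigma)|}$, because the map forgets only the bad edges, not the whole interior; that is how the bound stays exponential in $k$ rather than in area.

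Third, the paper does not try to build one map that works uniformly in $x$; it splits into two regimes. When $n x^6 < n^{1/50}$ (so $x$ is small), the naive Peierls step --- simply erase $\ell$, no repair --- already works by Corollary~\ref{cl:no-large-loops-for-small-x}, since removing $k$ edges gains $x^{-k}$. When $n x^6 \ge n^{1/50}$, one runs the repair-map argument, whose gain is $(n\min\{x^6,1\})^{\Omega(k)}$ and is then non-trivial. Your attempt to make $\Phi$ ``edge-count-neutral so as to work for all $x$'' is how the paper's map is designed \emph{within the second regime}, but the first regime is handled separately by an elementary argument. Finally, note that the hypothesis $k>6$ is not a cosmetic constant to be absorbed: for large $x$ the measure concentrates near $\ground^0$, so length-$6$ loops are ubiquitous and the statement fails without the strict inequality.
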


\begin{figure}[t!]
   \centering
   \begin{subfigure}[t]{.5\textwidth}
       \includegraphics[scale=0.46]{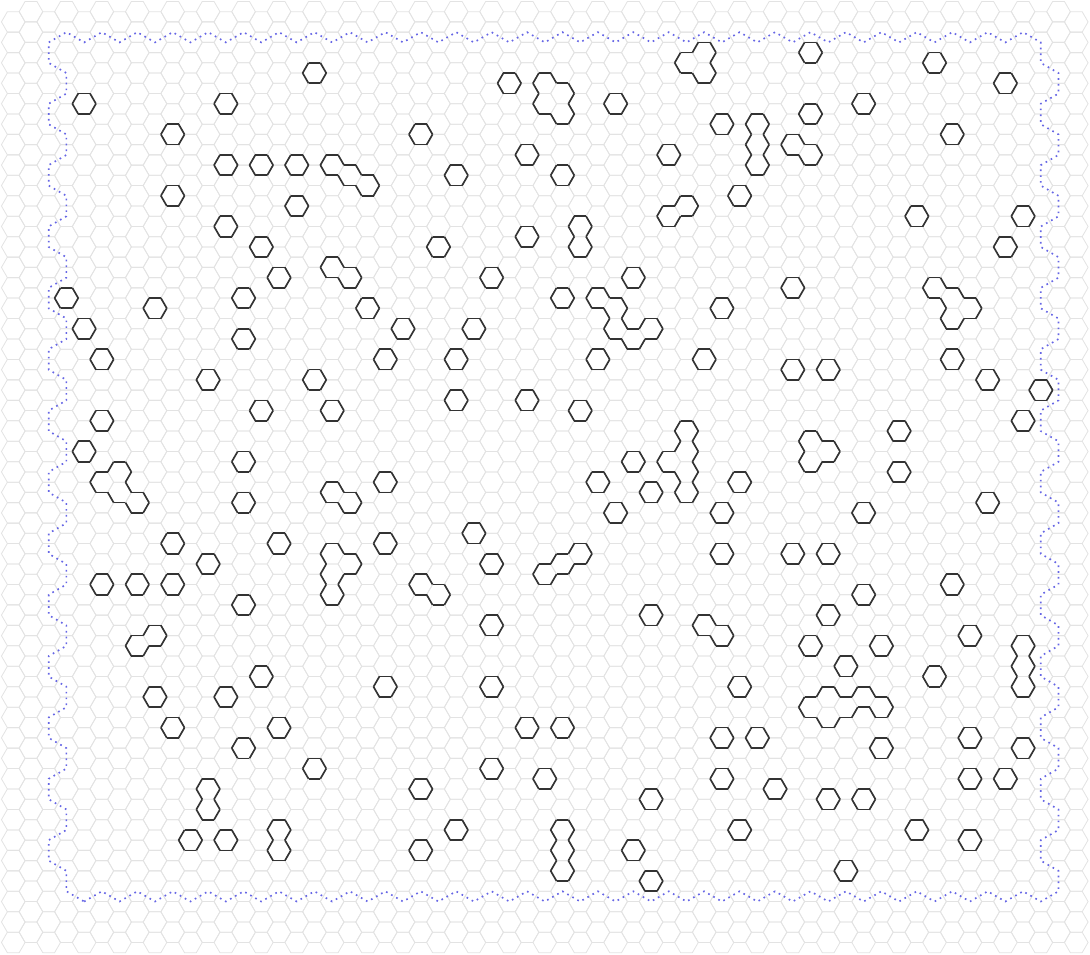}
       \caption{$n=8$ and $x=0.5$. When $x$ is small, the limiting measure is unique for domains with vacant boundary conditions, and the model is in a dilute, disordered phase.}
       \label{fig:loop-sample-n=8,x=0.5}
   \end{subfigure}%
   \begin{subfigure}{20pt}
       \quad
   \end{subfigure}%
   \begin{subfigure}[t]{.5\textwidth}
       \includegraphics[scale=0.46]{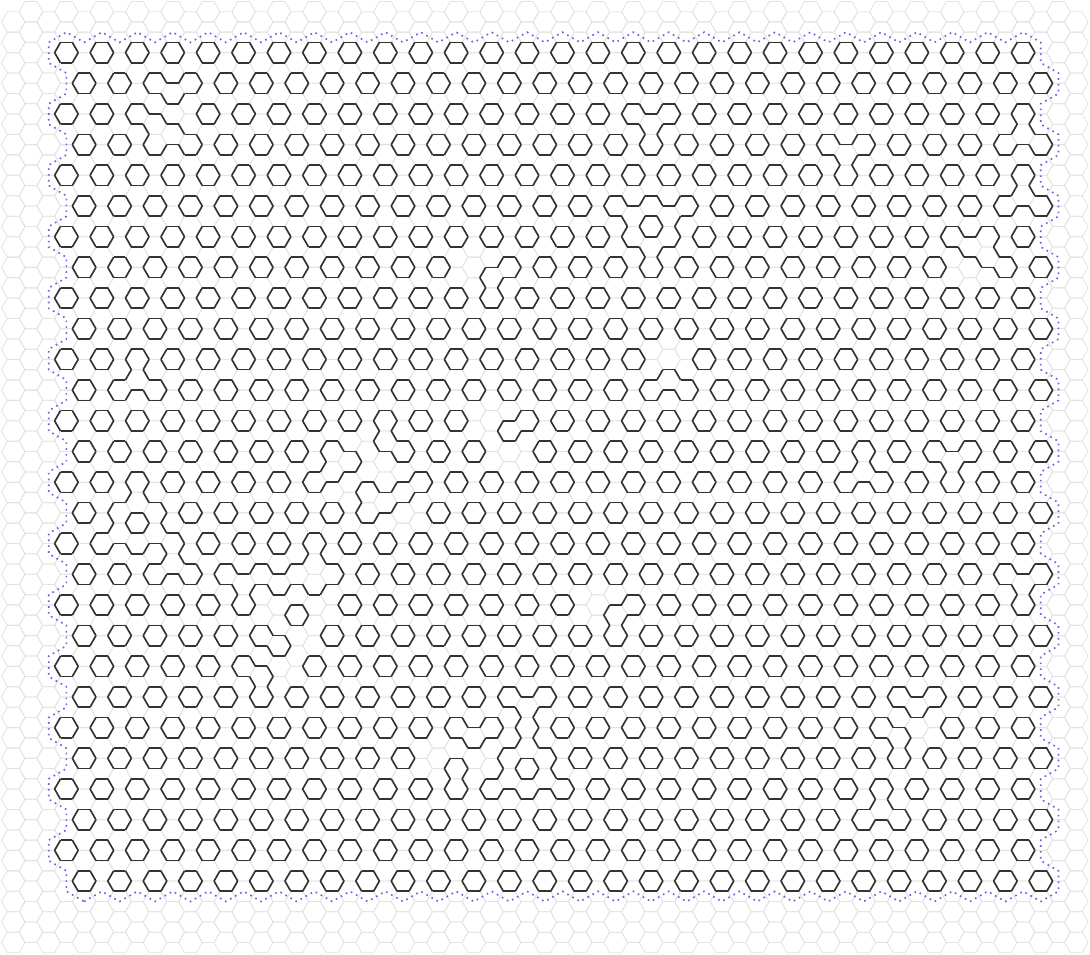}
       \caption{$n=8$ and $x=2$. When $n$ is large and $x$ is not small, the model is in an ordered phase where typical configurations are small perturbations of the ground state.}
       \label{fig:loop-sample-n=8,x=2}
   \end{subfigure}
   \caption{Two samples of random loop configurations with large $n$. Configurations are on a $60\times45$ domain of type $0$ and are sampled via Glauber dynamics for 100 million iterations started from the empty configuration.}
   \label{fig:loop-samples-large-n}
\end{figure}

The reasons behind this exponential decay are quite different when $x$ is small or large. While there is no transition to slow decay of loop lengths as $x$ increases, there is a different kind of transition in terms of the structure of the random loop configuration and, in particular, in how the loops pack in the domain.
When $x$ is small, the model is dilute and disordered, whereas, when $x$ is large, the model is dense and ordered (a small perturbation of the $0$-phase ground state $\ground^0$); these behaviors are depicted in Figure~\ref{fig:loop-samples-large-n}.
We remark that it is this latter behavior that makes the assumption that $k>6$ necessary in the above theorem. The next theorem makes these statements precise.

Given a loop configuration $\omega$ and two vertices $u$ and $v$ in $\HH$, we say that $u$ and $v$ are \emph{loop-connected} if there exists a path between $u$ and $v$ consisting only of vertices which belong to loops in $\omega$, and we say that $u$ and $v$ are \emph{ground-connected} if there exists a path between $u$ and $v$ consisting only of vertices which belong to loops in $\omega \cap \ground^0$.

\begin{theorem}\label{thm:large-n-transition}
    There exist $C,c>0$ such that for any $n>0$, any $x\in(0,\infty]$ and any domain $H$ of type 0 the following holds. Suppose $\omega$ is sampled from the loop $O(n)$ model in domain $H$ with edge weight $x$.
    Then, for any vertex $u \in V(H)$, on the one hand,
    \[ \Pr(\text{$u$ is loop-connected to a vertex at distance $k$ from $u$}) \le (C (n+1) x^6)^{ck} , \qquad k \ge 1, \]
    and, on the other hand,
    \[ \Pr(\text{$u$ is ground-connected to a vertex on the boundary of }H) \ge 1 - C(n \min\{x^6,1\})^{-c} .\]
\end{theorem}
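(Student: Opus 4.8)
The plan is to prove the two bounds by essentially independent arguments, both based on Peierls-type (contour/cluster expansion) estimates in the spin representation of the loop $O(n)$ model.

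For the first bound (on loop-connectivity), I would work directly with loop configurations. If $u$ is loop-connected to a vertex at distance $k$, then there is a self-avoiding path $P$ of length at least $k$ in $\HH$, starting at $u$, all of whose vertices lie on loops of $\omega$. Such a path meets $\Omega(k)$ distinct loops, or alternatively it is covered by loops whose total length is $\Omega(k)$; in either case the union of these loops is a set of edges $E_0 \subseteq \omega$ with $|E_0| \ge c_0 k$ and which is "connected along $P$". The key estimate is a switching/surgery lemma: given a loop configuration $\omega$ and a loop $\ell \subseteq \omega$, erasing $\ell$ yields another valid loop configuration $\omega \setminus \ell$, and $\frac{\Pr_{H,n,x}(\omega)}{\Pr_{H,n,x}(\omega\setminus\ell)} = x^{|\ell|} n^{L(\omega)-L(\omega\setminus\ell)} \le x^{|\ell|} n$ (erasing a loop can only decrease $L$, and by at most\ldots in fact $L$ changes by exactly $1$ if $\ell$ is outermost, so one peels loops from innermost to outermost). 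Summing over the choice of the "spine" path $P$ and then over all ways to decorate it with loops, the number of configurations of loops with total length $m$ anchored on a path of length $\le m$ is at most $C_1^{m}$ by a standard entropy count on the bounded-degree lattice $\HH$. Hence the probability is at most $\sum_{m \ge c_0 k} C_1^{m} (C_2 (n+1) x^6)^{m/6}$-type bound; the power $x^6$ appears because a loop of length $\ell$ contributes $x^\ell$ but also, per six edges, can be "charged" to roughly one hexagon, and the entropy is organized hexagon-by-hexagon. Rewriting, this is $\sum_{m\ge c_0 k}(C(n+1)x^6)^{c m}$, which sums to $(C(n+1)x^6)^{ck}$ once $C(n+1)x^6$ is small; when $C(n+1)x^6 \ge 1$ the claimed bound is trivial since probabilities are at most $1$. (One must be slightly careful that the surgery is done peeling from the inside out so that the factor of $n$ per erased loop is legitimate, and that the geometric sum is dominated by its first term; these are routine.)

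For the second bound (ground-connectivity to the boundary), I would pass to the spin representation on $\mathbb{T}$ described in Section~\ref{sec:loop-O-n-exact-representations}: a configuration $\varphi$ with the Lipschitz/triangle constraint, boundary spin $s_0$, whose domain walls $\omega_\varphi$ have the right law. Take the coloring so that the $0$-phase ground state $\ground^0$ corresponds to the "all $s_0$ on $\mathbb{T}^0$, and forced values on $\mathbb{T}^1,\mathbb{T}^2$" reference configuration. A vertex $u\in V(H)$ failing to be ground-connected to the boundary means the ground-connected cluster of the boundary does not reach $u$; this forces the existence of a \emph{contour} — a connected set of hexagons/edges separating $u$ from $\partial H$ on which $\omega$ differs from $\ground^0$ — and such a contour, being a closed surface around $u$, has some length $\ell \ge 1$. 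The Peierls estimate is that the probability of any fixed contour of length $\ell$ is at most $(n\min\{x^6,1\})^{-c\ell}$: this is exactly the content of the large-$n$ cluster expansion in \cite{DCPSS14}, where one shows that deviating from the ground state over an area/length $\ell$ costs a factor behaving like $n^{-c\ell}$ when $x$ is large (loops must be shortened relative to fully-packed, losing loop-weight $n$) and like $x^{6c\ell}$, i.e.\ $(\text{small})^{c\ell}$, when $x$ is small (domain walls cost edge-weight). Combining with the entropy bound $e^{C\ell}$ on the number of contours of length $\ell$ through a fixed location and summing $\sum_{\ell\ge 1} e^{C\ell}(n\min\{x^6,1\})^{-c\ell}$ gives $\le C(n\min\{x^6,1\})^{-c}$ once $n\min\{x^6,1\}$ is large enough (and the bound is vacuous otherwise, after shrinking $c$ and enlarging $C$).

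The main obstacle is the Peierls weight estimate in the second part, i.e.\ showing that a contour of length $\ell$ (a region where the configuration departs from $\ground^0$) has probability at most $(n\min\{x^6,1\})^{-c\ell}$ \emph{uniformly in $x$}. This requires the interpolation between the small-$x$ (dilute/disordered) and large-$x$ (dense/ordered) mechanisms: in the ordered regime one compares the number of loops and total edge-count of the contoured configuration with the reference, using that $\ground^0$ is fully packed and maximizes loop count locally, while in the dilute regime one uses that any domain wall of length $\ell$ carries a factor $x^{\le \ell}$ that beats the local entropy; a single unified surgery — "resample the interior of the contour to agree with $\ground^0$" — must dominate both. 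This is precisely the technical heart of \cite{DCPSS14}, and here one should invoke it, checking only that the hypotheses (domain of type $0$, $x\in(0,\infty]$) match and that the estimate is stated at the level of individual contours separating $u$ from $\partial H$. Everything else — the geometric/topological reduction to the existence of a separating contour, the entropy counts on $\HH$ and $\mathbb{T}$, and the geometric summation — is standard and should be handled briefly.
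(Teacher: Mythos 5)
Your proposal matches the paper's structure in both halves: the first bound is obtained exactly as you describe, via the loop-removal estimate (Lemma~\ref{lem:given-loops-are-unlikely}, packaged as the ``moreover'' part of Corollary~\ref{cl:no-large-loops-for-small-x}) and a decomposition of the loop-connected path into a chain of distinct loops $\ell_1,\dots,\ell_m$ with total length $\ell\ge k$, each of length $\ge 6$, which produces the factor $((2n+1)(2x)^6)^{\ell/6}$; and the second bound follows the same Peierls-type scheme you outline, reducing to a weight estimate for a circuit enclosing the deviation from $\ground^0$ around $u$ (the main lemma, Lemma~\ref{lem:prob-outer-circuit}) and summing over such circuits with an exponential entropy bound. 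One caveat on the second bound: the spin/Lipschitz representation you propose to pass through is only available when $n$ is a Perron eigenvalue of some graph (e.g.\ integer $n$, or $n=\sqrt q$), so it cannot serve as the vehicle for a proof valid for all $n>0$; the paper instead works directly with loop configurations, defining the \emph{breakup} as the connected component of $\HH\setminus B(\omega)$ containing $u$ (where $B(\omega)$ is the infinite ground-connected cluster containing the boundary), and showing its enclosing circuit $\Gamma\subset\T\setminus\T^0$ is vacant with $\partial\IntVert\Gamma\subset V(\omega,\Gamma)$, which is precisely what Lemma~\ref{lem:prob-outer-circuit} controls. The contour argument you sketch does not actually need the spin representation, so this is a framing detour rather than a substantive gap.
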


Note that the first bound is non-trivial when both $x$ and $nx^6$ are sufficiently small, while the second bound is non-trivial when both $n$ and $nx^6$ are sufficiently large.
Thus, when $n$ is large, the theorem establishes a change in behavior as $nx^6$ transitions from small to large values.
In particular, when $nx^6$ is small, any fixed vertex is unlikely to be surrounded by a loop (of any size). On the other hand, when $nx^6$ is large, any fixed hexagon in $\mathbb{T}^0$ is very likely to be surrounded by a trivial loop.
The proof for small $x$ is very similar in nature to the high-temperature case of the spin $O(n)$ model, as described in Section~\ref{sec:high-temperature_expansion}, while the proof for large $x$ is more intricate.

\medbreak
We remark that several rigorous results on the behavior of general loop models on the $\Z^d$ lattice were obtained by Chayes, Pryadko and Shtengel~\cite{chayes2000intersecting}. These include theorems of a similar nature to our Theorem~\ref{thm:no-large-loops} and Theorem~\ref{thm:large-n-transition}. The proofs there rely on reflection positivity and are thus tied to the $\Z^d$ lattice structure and require as well that $n$ be integer (which is a built-in feature of the loop models studied in~\cite{chayes2000intersecting}). As we have not found a representation for the loop $O(n)$ model (even with integer $n$) which is reflection positive for large values of $n$ and $x$, our proofs proceed by different means.
\medbreak

In these notes, we give an extended overview of the proofs of Theorem~\ref{thm:no-large-loops} and Theorem~\ref{thm:large-n-transition}, omitting most of the technical details.
The techniques of the proofs are combinatorial in nature and rely on a general principle captured by the following simple lemma.
\begin{lemma}
    \label{lem:prob-inequality-tool}
    Let $p,q>0$ and let $E$ and $F$ be two events in a discrete probability space. If there exists a map $\sfT \colon E \to F$ such that $\Pr(\sfT(e))\ge p\cdot\Pr(e)$ for every $e\in E$, and $|\sfT^{-1}(f)|\le q$ for every $f\in F$, then
    \[ \Pr(E) \leq \frac{q}{p}\cdot\Pr(F) .\]
\end{lemma}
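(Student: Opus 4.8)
The plan is to estimate $\Pr(E) = \sum_{e \in E} \Pr(e)$ by pushing the sum forward through $\sfT$ and comparing with $\Pr(F)$. First I would use the hypothesis $\Pr(\sfT(e)) \ge p \cdot \Pr(e)$ to write $\Pr(e) \le \tfrac{1}{p}\Pr(\sfT(e))$ for each $e \in E$, so that
\[ \Pr(E) = \sum_{e \in E} \Pr(e) \le \frac{1}{p}\sum_{e \in E} \Pr(\sfT(e)). \]
Next I would reorganize the right-hand sum according to the value $f = \sfT(e) \in F$: grouping the terms by fibers of $\sfT$ gives
\[ \sum_{e \in E} \Pr(\sfT(e)) = \sum_{f \in F} |\sfT^{-1}(f)| \cdot \Pr(f). \]
Since $|\sfT^{-1}(f)| \le q$ for every $f \in F$ and all probabilities are non-negative, this is at most $q \sum_{f \in F} \Pr(f) = q\,\Pr(F)$. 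Combining the two displays yields $\Pr(E) \le \tfrac{q}{p}\Pr(F)$, as claimed.

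There is essentially no obstacle here — the statement is a clean double-counting argument and the only point requiring a word of care is that the space is discrete, so that $E$ is countable and the sum $\sum_{e \in E}\Pr(e)$ genuinely equals $\Pr(E)$ with no measurability issues, and Tonelli/rearrangement of the (non-negative) double sum is automatic. I would simply remark that $\sfT$ need not be injective and its range need not be all of $F$, which is exactly why the factor $q$ (bounding fiber sizes) and the inequality (rather than equality) with $\Pr(F)$ appear. This lemma will then be applied repeatedly in the combinatorial arguments for Theorems~\ref{thm:no-large-loops} and~\ref{thm:large-n-transition}, where $E$ is an event witnessing a long loop or a long loop-connection, $F$ is an event witnessing some ``cheaper'' configuration obtained by a local surgery (e.g.\ erasing or rerouting part of the configuration), $\sfT$ is that surgery map, $p$ accounts for the change in weight under the surgery, and $q$ bounds the number of ways to invert it (roughly, the number of ways to reinsert the erased piece).
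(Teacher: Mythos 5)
Your proposal is correct and follows essentially the same double-counting argument as the paper: bound $p\cdot\Pr(E)$ by $\sum_{e\in E}\Pr(\sfT(e))$, regroup the sum over fibers of $\sfT$ to get $\sum_{f\in F}|\sfT^{-1}(f)|\Pr(f)$, and then apply the fiber-size bound. The paper simply presents it as a single chain of inequalities rather than separating the two steps, but the content is identical.
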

\begin{proof}
    We have
    \[ p\cdot \Pr(E)  \leq \sum_{e \in E} \Pr(\sfT(e))
    = \sum_{e \in E} \sum_{f \in F} \Pr(f) \mathbf{1}_{\{\sfT(e)=f\}}
    = \sum_{f \in F} |\sfT^{-1}(f)|\cdot\Pr(f) \leq q\cdot \Pr(F). \qedhere \]
\end{proof}

The results for small $x$ are obtained via a fairly standard, and short, Peierls argument, by applying the above lemma to a map which removes loops (see Lemma~\ref{lem:given-loops-are-unlikely} below). Thus, the primary focus here lies in the study of the loop $O(n)$ model for large
$x$. In this regime, the main idea is to identify the region having an atypical structure (which is called the breakup) and apply the above lemma to a suitably defined `repair map'.
This map takes a configuration $\omega$ sampled in a domain of type 0 and having a large breakup, and
returns a `repaired' configuration in which the breakup is significantly reduced (see Figure~\reffig{fig:proof-illustration}).
In order to use Lemma~\ref{lem:prob-inequality-tool}, it is important that
the number of preimages of a given loop configuration is exponentially smaller than the probability gain. This yields the main lemma, Lemma~\ref{lem:prob-outer-circuit}, from which the results for large $x$ are later deduced.

\medbreak
\noindent{\bf Basic definitions.}
A {\em circuit} is a simple closed path in $\T$ of length at least $3$. We may view a circuit $\gamma$ as a sequence of hexagons $(\gamma_0,\dots,\gamma_m)$ with $\gamma_0=\gamma_m$.
Define $\gamma^*$ to be the set of edges
$\{\gamma_i,\gamma_{i+1}\}^* \in \EH$ for $0\le i<m$.
We now state two standard geometric facts regarding circuits and domains, which may be seen as a discrete version of the Jordan curve theorem. Proofs of these facts can be found in~\cite[Appendix B]{DCPSS14}.

\begin{fact}\label{fact:gamma-int-ext}
If $\gamma$ is a circuit then the removal of $\gamma^*$ splits $\HH$ into exactly two connected components, one of which is infinite, denoted by $\Ext\gamma$, and one of which is finite, denoted by $\Int\gamma$. Moreover, each of these are induced subgraphs of $\HH$.
\end{fact}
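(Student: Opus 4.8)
The plan is to read this off from the topological Jordan curve theorem. First I would fix the standard planar drawings of $\HH$ and of its dual $\T$, placing each vertex of $\T$ at the centre of its hexagon and drawing each edge $\{y,z\}\in E(\T)$ as the straight segment between the two centres; in this drawing that segment meets $\bigcup\EH$ in exactly one point, the midpoint of the dual edge $\{y,z\}^*$, and it is disjoint from $\VH$. Writing $\gamma=(\gamma_0,\dots,\gamma_m)$ with $\gamma_m=\gamma_0$ and $m\ge 3$, let $\Gamma\subset\R^2$ be the closed polygonal curve obtained by concatenating the segments $[\gamma_i,\gamma_{i+1}]$ for $0\le i<m$. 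Since $\gamma$ is a \emph{simple} closed path and the drawing of $\T$ has no edge crossings, two of these segments can meet only at a shared endpoint $\gamma_i$, so $\Gamma$ is a Jordan curve; moreover $\Gamma$ is disjoint from $\VH$ and meets $\bigcup\EH$ precisely in the midpoints of the edges of $\gamma^*$, crossing each of them transversally.

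Next I would invoke the Jordan curve theorem: $\R^2\setminus\Gamma$ has exactly two connected components, one bounded, $U_{\mathrm{in}}$, and one unbounded, $U_{\mathrm{out}}$, both open. As $\Gamma$ avoids $\VH$, every vertex of $\HH$ lies in exactly one of $U_{\mathrm{in}},U_{\mathrm{out}}$, and I define $\Int\gamma$ (resp.\ $\Ext\gamma$) to be the subgraph of $\HH$ \emph{induced} on $\VH\cap U_{\mathrm{in}}$ (resp.\ $\VH\cap U_{\mathrm{out}}$). The one elementary fact driving everything is: an edge $e\in\EH$ with $e\notin\gamma^*$ is disjoint from $\Gamma$, hence lies in a single component of $\R^2\setminus\Gamma$, so its endpoints lie on the same side of $\Gamma$; whereas an edge $e\in\gamma^*$ is crossed by $\Gamma$ at its midpoint, so its endpoints lie on opposite sides. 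From this I get for free that no edge of $\gamma^*$ joins $\Int\gamma$ to $\Ext\gamma$, that neither $\Int\gamma$ nor $\Ext\gamma$ contains any edge of $\gamma^*$ --- so both are subgraphs of $\HH\setminus\gamma^*$, and since they are induced subgraphs of $\HH$ they are exactly the unions of those components of $\HH\setminus\gamma^*$ lying on each side --- and that every connected component of $\HH\setminus\gamma^*$ is contained in $U_{\mathrm{in}}$ or in $U_{\mathrm{out}}$.

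It then remains to show that $\Int\gamma$ and $\Ext\gamma$ are non-empty and connected, and that $\Int\gamma$ is finite. Finiteness of $\Int\gamma$ and infiniteness (hence non-emptiness) of $\Ext\gamma$ follow from the boundedness of $U_{\mathrm{in}}$. For non-emptiness of $\Int\gamma$: the bounded region enclosed by the cycle $\gamma$ in the planar graph $\T$ is a non-empty union of faces of $\T$ (triangles) because $m\ge 3$, and the centre of any such triangle is a vertex of $\HH$ sitting in $U_{\mathrm{in}}$. For connectedness of (say) $\Int\gamma$, given $u,v\in\VH\cap U_{\mathrm{in}}$ I would take a polygonal arc from $u$ to $v$ inside the open connected set $U_{\mathrm{in}}$, put it in general position with respect to the drawing of $\HH$ (meeting $\VH$ only at its endpoints and crossing $\bigcup\EH$ transversally at finitely many non-vertex points), read off from it the sequence of hexagonal faces of $\HH$ it passes through, and then walk along the boundaries of those faces staying on the interior side of $\Gamma$, detouring around any crossing of a $\gamma^*$-edge via an endpoint of that edge known to lie in $U_{\mathrm{in}}$, to produce an $\HH\setminus\gamma^*$ edge-path from $u$ to $v$. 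Combined with the previous paragraph this yields that $\HH\setminus\gamma^*$ has exactly the two components $\Int\gamma$ and $\Ext\gamma$, each an induced subgraph of $\HH$, as claimed.

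I expect the last step to be the main obstacle: promoting a continuous arc in $U_{\mathrm{in}}$ to a graph path requires care near the hexagons $\gamma_i$ that $\Gamma$ actually passes through, since the arc is allowed to cross edges of $\gamma^*$, and one must argue that the detour vertices genuinely lie in $U_{\mathrm{in}}$ (they do, since the open half of such an edge on the crossing side, being disjoint from $\Gamma$, lies in $U_{\mathrm{in}}$, forcing its endpoint into the open set $U_{\mathrm{in}}$ as well). A cleaner, purely combinatorial alternative that sidesteps the topology is to induct on the number of faces of $\T$ enclosed by $\gamma$: in the base case $\gamma$ is a single triangle, $\gamma^*$ is the triple of $\HH$-edges meeting at one vertex $w$, and $\HH\setminus\gamma^*$ splits visibly into $\{w\}$ and the connected graph on $\VH\setminus\{w\}$ (using that $\HH$ is $2$-connected); the inductive step peels off an ``ear'' triangle from $\gamma$ and applies the induction hypothesis to the shorter circuit, reattaching the peeled hexagon to the appropriate side.
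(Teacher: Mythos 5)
Note first that the paper itself does not prove this fact: it defers to \cite[Appendix B]{DCPSS14} for the argument, so there is no in-paper proof for you to have matched. Your primary route via the Jordan curve theorem is the standard one for discrete Jordan-curve statements and is sound in outline: realize $\gamma$ as a polygonal Jordan curve $\Gamma$ in the dual drawing, split the plane into $U_{\mathrm{in}}$ and $U_{\mathrm{out}}$, note that an $\HH$-edge off $\gamma^*$ lies entirely on one side while a $\gamma^*$-edge straddles $\Gamma$, and then discretize an arc in one region to an $\HH\setminus\gamma^*$ path. You correctly single out the discretization as the crux, and your observation that the open half of a $\gamma^*$-edge on the interior side of its midpoint is disjoint from $\Gamma$ and therefore lies in the open set $U_{\mathrm{in}}$ is exactly the detail that makes the detours legal.

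Two caveats on the ``cleaner'' combinatorial alternative, which as written does not go through. First, not every circuit admits an ear triangle sharing two edges with $\gamma$: the length-$6$ circuit around a single hexagon of $\T$ encloses six triangles, each meeting $\gamma$ in only one edge, so peeling a boundary triangle typically \emph{lengthens} the circuit; the induction must be on the number of enclosed faces, not on circuit length as your ``shorter circuit'' phrasing suggests. Second, if the third vertex of the peeled triangle already lies on $\gamma$ but is not a $\gamma$-neighbour of the peeled edge's endpoints, the resulting boundary walk is no longer a simple circuit and must be split into two, requiring a divide-and-conquer step that the sketch omits. Both fixes are routine, but the combinatorial sketch is not complete and ends up no shorter than the topological argument, whose only black box (the polygonal Jordan curve theorem) is elementary.
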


\begin{fact}\label{fact:circuit-domain-bijection}
    Circuits are in one-to-one correspondence with domains via $\gamma \leftrightarrow \Int\gamma$.
\end{fact}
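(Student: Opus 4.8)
The plan is to show that $\Phi(\gamma) := \Int\gamma$ is a bijection from circuits onto domains by exhibiting its inverse: the map $\Psi$ sending a domain $H$ to the circuit whose dual edge set is the edge-boundary $\partial^{E} H := \{e \in \EH : e \text{ has exactly one endpoint in } V(H)\}$. Before anything else I would record two elementary consequences of Fact~\ref{fact:gamma-int-ext}. First, since $\Int\gamma$ and $\Ext\gamma$ are the \emph{only} two connected components of $\HH$ after deleting $\gamma^*$ (and deleting edges does not delete vertices), their vertex sets partition $V(\HH)$; in particular $V(\Int\gamma)\neq\emptyset$. Second, since both $\Int\gamma$ and $\Ext\gamma$ are \emph{induced} subgraphs of $\HH$, no edge of $\gamma^*$ can have both endpoints in $V(\Int\gamma)$ (such an edge would belong to the induced subgraph $\Int\gamma$ yet has been deleted), nor both in $V(\Ext\gamma)$; hence every edge of $\gamma^*$ joins the two sides, and conversely an edge not in $\gamma^*$ stays within one component. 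Together these give the identity $\gamma^* = \partial^{E}(\Int\gamma)$.

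Next I would verify that $\Phi$ maps circuits to domains and is injective. By Fact~\ref{fact:gamma-int-ext}, $\Int\gamma$ is a finite connected induced subgraph of $\HH$ with non-empty vertex set, and its complement $V(\HH)\setminus V(\Int\gamma) = V(\Ext\gamma)$ induces the connected subgraph $\Ext\gamma$; so $\Int\gamma$ is a domain. Injectivity is immediate from the identity $\gamma^* = \partial^{E}(\Int\gamma)$: if $\Int{\gamma_1} = \Int{\gamma_2}$ then $\gamma_1^* = \partial^{E}(\Int{\gamma_1}) = \partial^{E}(\Int{\gamma_2}) = \gamma_2^*$, and since $e\mapsto e^*$ is a bijection between $\EH$ and the edge set of $\T$, and a simple cycle in $\T$ is determined by its edge set, this forces $\gamma_1 = \gamma_2$. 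The same identity gives $\Psi\circ\Phi = \mathrm{id}$: the circuit dual to $\partial^{E}(\Int\gamma) = \gamma^*$ is $\gamma$ itself.

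It remains to prove surjectivity, equivalently $\Phi\circ\Psi = \mathrm{id}$. Fix a domain $H$ and set $F := \partial^{E} H$. Because $H$ is connected and $V(\HH)\setminus V(H)$ induces a connected subgraph $H^{c}$ (the two defining properties of a domain), deleting $F$ from $\HH$ leaves precisely the two connected components $H$ and $H^{c}$ — indeed these are connected, they cover $V(\HH)$, and no remaining edge joins them since all such edges lie in $F$ — while adding back any single edge of $F$ rejoins $H$ to $H^{c}$; thus $F$ is a bond (a minimal edge cut) of $\HH$. \emph{The crux is the planar-duality fact that the dual of a bond of a connected planar graph is the edge set of a simple cycle}: applying it yields that $\gamma := F^*$ is a circuit of $\T$ (its length is automatically at least $3$ as $\T$ is simple). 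This is the one genuinely delicate point, and the only place where infiniteness of $\HH$ must be handled with care — either by restricting to a large finite hexagonal patch containing $H$ and a neighbourhood of $\partial^E H$, or by invoking the corresponding statement of \cite[Appendix~B]{DCPSS14}. An alternative route avoiding a black-box appeal is to note directly that every hexagonal face of $\HH$ has an even number of boundary edges in $F$ (membership in $V(H)$ flips an even number of times around the bounding $6$-cycle), so $F^*$ is a disjoint union of cycles, and then use minimality of the cut $F$ to rule out more than one cycle and any vertex of degree $4$ or $6$. Granting that $\gamma$ is a circuit with $\gamma^* = F$, Fact~\ref{fact:gamma-int-ext} says $\HH\setminus\gamma^* = \HH\setminus F$ has exactly the two components $\Int\gamma$ (finite) and $\Ext\gamma$ (infinite); but we have just identified these components as $H$ and $H^{c}$. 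Since $H$ is finite, $V(H) = V(\Int\gamma)$, and as both are induced subgraphs, $H = \Int\gamma = \Phi(\gamma)$.

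I expect the bond–cycle planar-duality step (with the minor adjustment for the infinite lattice) to be the only real obstacle; everything else is bookkeeping with Fact~\ref{fact:gamma-int-ext} and the involutive correspondence $e\leftrightarrow e^*$.
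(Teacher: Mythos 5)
The paper itself does not prove this Fact; it and Fact~\ref{fact:gamma-int-ext} are stated as known results with a pointer to \cite[Appendix~B]{DCPSS14}, so there is no in-text argument to compare your proposal against. Judged on its own terms, your proposal is correct and complete in outline, and you correctly isolate the one genuinely non-trivial point, namely that the dual $F^*$ of the edge-boundary $F=\partial^E H$ of a domain is a single simple cycle of $\T$.

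A small remark on that last step: once you have established the parity claim (every hexagon meets $F$ in an even number of edges), so that $F^*$ is an even subgraph of $\T$, the ``rule out vertices of degree $4$ or $6$'' phrasing can be replaced by a cleaner appeal to what you already have. Decompose the even subgraph $F^*$ into edge-disjoint simple cycles $C_1,\dots,C_k$; each $C_i$ has length $\ge 3$ (girth of $\T$), hence is a circuit, so by Fact~\ref{fact:gamma-int-ext} the set $C_i^* \subset F$ is already an edge cut of $\HH$. If $k\ge 2$ this is a proper subset of $F$ that cuts $\HH$, contradicting the minimality of $F$ that you established earlier; hence $k=1$ and $F^*$ is a single simple cycle. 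This also sidesteps the issue that an even subgraph decomposes into \emph{edge}-disjoint, not necessarily vertex-disjoint, cycles. With that adjustment, the remaining bookkeeping --- the identity $\gamma^*=\partial^E(\Int\gamma)$, the verification that $\Int\gamma$ is a domain, injectivity, and the identification $\Int\gamma = H$ at the end via matching induced subgraphs on the same (finite) vertex set --- all check out.
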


Hence, every domain $H$ may be written as $H=\Int\gamma$ for some circuit $\gamma$.
Note also that $H$ is of type $0$ if and only if $\gamma \subset \T \setminus \T^0$.
We denote the vertex sets and edge sets
of $\Int\gamma,\Ext\gamma$ by $\IntVert\gamma,\ExtVert\gamma$ and
$\IntEdge\gamma,\ExtEdge\gamma$, respectively. Note that $\{
\IntVert\gamma, \ExtVert\gamma \}$ is a partition of $\VH$ and
that $\{ \IntEdge\gamma, \ExtEdge\gamma, \gamma^* \}$ is a partition
of $\EH$. We also define $\IntHex\gamma$ to be the set of faces
of $\Int\gamma$, i.e., the set of hexagons $z\in\T$ having all their
six bordering vertices in $\IntVert\gamma$. Since $\Int\gamma$ is
induced, this is equivalent to having all six bordering edges in
$\IntEdge\gamma$.

\begin{definition}[$\clr$-flower, $\clr$-garden, $\clr$-cluster, vacant circuit; see Figure~\reffig{fig:garden}]
  Let $\clr\in\{0,1,2\}$ and let $\omega$ be a loop configuration. A hexagon $z\in\T^\clr$ is a {\em $\clr$-flower of} $\omega$ if it is surrounded by a trivial loop in $\omega$. A subset $E \subset \EH$ is a \emph{$\clr$-garden of $\omega$} if there exists a circuit $\sigma \subset \T \setminus \T^\clr$ such that $E=\IntEdge\sigma \cup \sigma^*$ and every $z\in \T^\clr\cap\partial \IntHex\sigma$ is a $\clr$-flower of $\omega$. In this case, we denote $\sigma(E) :=
  \sigma$. A garden of $\omega$ is a $\clr$-garden of $\omega$ for some $\clr\in\{0,1,2\}$. A subset $E \subset \EH$ is a {\em $\clr$-cluster} of $\omega$ if
it is a $\clr$-garden of $\omega$ and it is not contained in any
other garden of $\omega$. A cluster of $\omega$ is a $\clr$-cluster of $\omega$ for some $\clr\in\{0,1,2\}$. A circuit $\sigma$ is {\em vacant} in $\omega$ if $\omega\cap\sigma^*=\emptyset$.
\end{definition}

We stress the fact that a garden/cluster is a {\em subset of the edges of $\HH$}.
We remark that distinct clusters of $\omega$ are edge disjoint and that, moreover, distinct $\clr$-clusters (for some $\clr$) are slightly separated from one another.
Here and below, when $A$ is a subset of vertices of a graph $G$, we use $\partial A$ to denote the \emph{(vertex) boundary} of $A$, i.e.,
\[
\partial A := \big\{u\in A ~:~ \{u,v\}\in E(G)\text{ for some }v \not\in A \big\}.
\]

%%%%%%%%%%%%%%%%%%%%%%%%%%%%%%%%%%%%%%%%%%%%%%%%%%%%%%%%%%%%%%%%%%%%%%%%%%%%%%%%

\begin{figure}
   \centering

   \begin{tikzpicture}[scale=0.35, every node/.style={scale=0.45}]
    \hexagonGridOneClass[12][6][2][1];
    \begin{scope}[yshift=1*0.866cm, xshift=1.5cm]
    \trivialLoop[3][-1];\trivialLoop[5][-2];\trivialLoop[7][-3];
    \trivialLoop[2][1];\trivialLoop[1][3];\trivialLoop[2][4];
    \trivialLoop[4][3];\trivialLoop[6][2];\trivialLoop[8][1];
    \trivialLoop[9][-1];\trivialLoop[8][-2];
    \trivialLoop[6][0];
    \doubleLoopUp[0][1];
    \trivialLoop[10][0];
    \doubleLoopRight[9][-4];

    \hexagonEdges[edge-on][3][1][0/1/-3,0/1/-2,0/1/-1,0/1/0,0/1/1,1/0/0,1/0/1,2/-1/-1,2/-1/-2,1/-1/2,1/-1/3,1/-1/4,1/-1/5,1/0/-2];
    \end{scope}
    \begin{scope}[yshift=1*0.866cm, xshift=1.5cm, xscale=1.5, yscale=0.866]
    \draw [domain-path]
    {(0,6)--(0,8)--(1,9)--(1,11)--(2,12)--(3,11)--(4,12)--(5,11)--(6,12)--(7,11)--(8,12)--(9,11)--(9,9)--(10,8)--(10,6)--(9,5)--(9,3)--(8,2)--(8,0)--(7,-1)--(6,0)--(5,-1)--(4,0)--(3,-1)--(2,0)--(2,2)--(1,3)--(1,5)--cycle };
    \end{scope}
   \end{tikzpicture}

   \caption{A garden. The dashed line denotes a vacant circuit $\sigma \subset \T \setminus \T^\clr$, where $\clr\in\{0,1,2\}$. The edges inside $\sigma$, along with the edges crossing $\sigma$, then comprise a $\clr$-garden of $\omega$, since every hexagon in $\T^\clr \cap \partial \IntHex\sigma$ is surrounded by a trivial loop.}
   \label{fig:garden}
\end{figure}
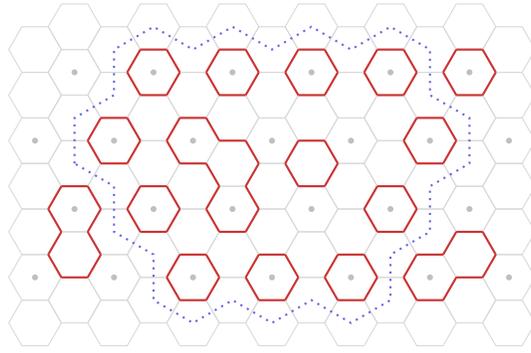

%%%%%%%%%%%%%%%%%%%%%%%%%%%%%%%%%%%%%%%%%%%%%%%%%%%%%%%%%%%%%%%%%%%%%%%%%%%%%%%%

\medbreak
\noindent{\bf Statement of the main lemma.}
\label{sec:main lemma}
For a loop configuration $\omega$ and a vacant circuit $\gamma$ in
$\omega$, denote by $V(\omega,\gamma)$ the set of vertices $v \in
\IntVert\gamma$ such that the three edges of $\HH$ incident to $v$
are not all contained in the same cluster of $\omega \cap \IntEdge\gamma$. One may check that a vertex $v\in\IntVert\gamma$ satisfies $v\in V(\omega, \gamma)$ if and only if $v$ is incident to an edge which is not in any such cluster or each of its incident edges lies in a different such cluster.
The set $V(\omega, \gamma)$ specifies the deviation in $\omega$ from the $0$-phase ground state along the interior boundary of $\gamma$. The main lemma shows that having a large deviation is exponentially unlikely.

\begin{lemma}
    \label{lem:prob-outer-circuit}
    There exists $c>0$ such that for any $n>0$, any $x \in (0,\infty]$ and any circuit $\gamma\subset \T\setminus\T^0$ the following holds. Suppose $\omega$ is sampled from the loop $O(n)$ model in domain $\Int\gamma$ with edge weight $x$. Then, for any positive integer $k$,
    \[
    \Pr\big(\partial\IntVert\gamma\subset V(\omega,\gamma)\text{ and }|V(\omega,\gamma)|\ge k \big) \le (cn \cdot \min\{x^6,1\})^{-k/15} .
    \]
\end{lemma}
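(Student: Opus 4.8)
The plan is to apply Lemma~\ref{lem:prob-inequality-tool} to a \emph{repair map} that removes the breakup near the boundary of $\gamma$ and reinstates the $0$-phase ground state there. First I would slice the event in the statement by the size of the breakup: writing $E_j$ for $\{\partial\IntVert\gamma\subset V(\omega,\gamma)\text{ and }|V(\omega,\gamma)|=j\}$, it suffices to bound $\Pr(E_j)$ for each $j\ge k$ and sum. Fix $j$ and $\omega\in E_j$. The first substantive step is a structural description of the breakup: because $V(\omega,\gamma)$ contains all of $\partial\IntVert\gamma$, it is a connected region (in a next-nearest-neighbor adjacency) that hugs the boundary circuit $\gamma$ and separates it from the remainder of $\Int\gamma$; from it one extracts a vacant circuit $\sigma=\sigma(\omega)\subset\T\setminus\T^0$ whose interior misses the breakup, so that the annular edge set $A=\IntEdge\gamma\setminus(\IntEdge\sigma\cup\sigma^*)$ contains the breakup and has $|A|=O(j)$. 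Define $\sfT_j(\omega)$ by deleting $\omega\cap A$ and inserting in $A$ the trivial loops of $\ground^0$ around the hexagons of $\T^0$ enclosed by $A$ (after, if necessary, first enlarging $A$ slightly so that it is ``thick enough'' to carry such loops). Since $\gamma$ bounds a type-$0$ domain and $\sigma$ is vacant, the edges of $\gamma^*$ and $\sigma^*$ are absent in $\omega$ and untouched by the surgery, and no trivial loop of $\ground^0$ uses an edge of $\sigma^*$, so $\sfT_j(\omega)\in\LC(\Int\gamma)$.

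Next I would establish the probability gain, i.e.\ $\Pr(\sfT_j(\omega))\ge(cn\min\{x^6,1\})^{j/15}\,\Pr(\omega)$. Let $m=m(\omega)$ be the number of trivial loops of $\ground^0$ created in $A$; they are pairwise edge-disjoint, so $\sfT_j(\omega)$ gains $m$ loops and $6m$ edges on $A$, and agrees with $\omega$ off $A$. From $\Pr_{\Int\gamma,n,x}(\omega')/\Pr_{\Int\gamma,n,x}(\omega)=x^{o(\omega')-o(\omega)}\,n^{L(\omega')-L(\omega)}$, two estimates are needed. The edge-weight estimate $x^{o(\sfT_j(\omega))-o(\omega)}\ge\min\{x^6,1\}^{O(m)}$ follows for $x\le1$ from $|o(\sfT_j(\omega))-o(\omega)|\le|A|=O(m)$, and for $x\ge1$ from the fact that the surgery does not decrease the edge count on $A$ --- which is where one genuinely uses that every breakup vertex certifies a local failure of $\omega\cap\IntEdge\gamma$ to form a single $0$-cluster, forcing $\omega$ to be packing-deficient on $A$. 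The loop-weight estimate $n^{L(\sfT_j(\omega))-L(\omega)}\ge n^{\delta m}$ for some fixed $\delta>0$ follows because the loops of $\omega$ destroyed by the surgery all meet $A$ and, by the same structural input together with the choice of $A$, number at most $(1-\delta)m$. Finally, a geometric count using that $\T^0$ is a positive-density sublattice and that the repaired region hugs the $j$ breakup vertices gives $j\le 15 m$, and combining the three estimates gives the claimed gain.

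I would then bound the preimages. A configuration $\omega'$ in the image of $\sfT_j$ determines any preimage $\omega\in E_j$ once $(A,\omega|_A)$ is known. The set $A$, regarded as an edge subgraph of $\HH$, is connected, has maximum degree $3$, has $O(j)$ edges, and contains an edge incident to a fixed vertex $v_0\in\partial\IntVert\gamma$; hence by Lemma~\ref{lem:count_connected_subgraphs} there are at most $C^{j}$ possibilities for $A$. Given $A$, there are at most $2^{|A|}\le 2^{Cj}$ possibilities for $\omega|_A$, so $|\sfT_j^{-1}(\omega')|\le C^{j}$. Lemma~\ref{lem:prob-inequality-tool} then gives $\Pr(E_j)\le C^{j}(cn\min\{x^6,1\})^{-j/15}$, and summing over $j\ge k$ --- a convergent geometric series once $n\min\{x^6,1\}$ exceeds a fixed constant, the claimed bound holding trivially otherwise --- yields $\Pr(\bigcup_{j\ge k}E_j)\le(c'n\min\{x^6,1\})^{-k/15}$ after absorbing $C$ and the summation into the base.

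The main obstacle is the construction of the repair map together with the probability gain, and within this the loop-weight estimate when $x$ is large. For small $x$ the gain is morally $(nx^6)^{m}$ and easy to see, but for large $x$ the ground state is no more densely packed than a generic locally fully-packed configuration, and inserting it on $A$ can destroy nearly as many loops as it creates; one must use the precise combinatorial meaning of the breakup --- a failure of $\omega\cap\IntEdge\gamma$ to consist of a single $0$-cluster up to the boundary --- to pick $A$ so that it contains the whole breakup, is bounded by a vacant circuit, is thick enough to carry $\Omega(j)$ new flowers, and is nonetheless ``loop-poor'' relative to its size, all while respecting the constraint that $\gamma$ bounds a type-$0$ domain. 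Pinning down this quantitative trade-off between flowers created and loops and edges destroyed is the technical heart and is what fixes the exponent $1/15$.
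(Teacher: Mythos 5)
Your framework is the right one: apply Lemma~\ref{lem:prob-inequality-tool} to a repair map, trade the gain in loops and edges against a bound on the number of preimages, and sum over the possible ``breakup'' sets using a connected-subgraph count. That is the paper's scaffolding. But the repair map you propose is substantively different from the one the paper uses, and I believe the difference is fatal.

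The paper's map $\shiftFunc$ does not delete anything on an annulus. It keeps the $0$-clusters in place, \emph{shifts} the $1$-clusters down and the $2$-clusters up into the $0$-phase, and only then fills the leftover ``bad'' edges (those not covered by any shifted cluster) with $\ground^0$. Because clusters are translated, not erased, the loops inside them survive intact. This makes the change in loop and edge counts a clean identity in terms of the bad edges alone, namely $\Delta L = |V|/6 - L(\omega\cap\BadEdgesBefore(\omega))$ and $\Delta o = |V| - |\omega\cap\BadEdgesBefore(\omega)|$, from which $\Delta L \ge |V|/15 + |\Delta o|/10$ follows, and from which both the probability gain in $|V|$ and the $(2\sqrt 2)^{|V|}$ preimage bound follow, since the only information destroyed by $\shiftFunc$ is $\omega\cap E(V)$.

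Your map instead deletes $\omega$ on the full annulus $A = \IntEdge\gamma\setminus(\IntEdge\sigma\cup\sigma^*)$ (with $\sigma\subset\T\setminus\T^0$ a vacant circuit missing the breakup) and refills with $\ground^0$. Two claims underpinning this are false in general. First, $|A|$ need \emph{not} be $O(j)$. Recall $v\in V(\omega,\gamma)$ iff the three edges at $v$ are not all in one cluster; the interior of a large $1$- or $2$-cluster is therefore \emph{excluded} from $V(\omega,\gamma)$. A configuration with a thin empty ring against $\gamma$ (so $\partial\IntVert\gamma\subset V(\omega,\gamma)$), then a very large $1$-cluster, then a central $0$-cluster, has $V(\omega,\gamma)$ confined essentially to the ring and the thin interfaces between clusters, while the innermost vacant circuit in $\T\setminus\T^0$ bounds only the central $0$-cluster. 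Then $A$ swallows the entire $1$-cluster and can dwarf $j=|V(\omega,\gamma)|$. This wrecks the preimage bound $2^{|A|}\le 2^{Cj}$ and the enumeration of possible $A$'s. Second, the loop gain fails on the same configurations: deleting $\omega$ on a fully-packed $1$-cluster destroys about one trivial loop per six edges, and inserting $\ground^0$ there creates about the same number. On that part of $A$, $\Delta L\approx 0$ and $\Delta o\approx 0$, so your claim that the destroyed loops ``number at most $(1-\delta)m$'' does not hold --- a $1$-cluster is exactly as loop-rich as the $0$-phase ground state of the same area. The shift-based map avoids this by never paying for loops inside a cluster, and it is this feature, not an annular surgery, that makes the quantitative trade-off close.

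Two smaller points: (i) the paper sums over the possible \emph{sets} $V=V(\omega,\gamma)$ rather than over their sizes, because $V$ is what indexes the extra information $\omega\cap E(V)$ in the injectivity argument; your summation over $j$ together with a union over connected $A$'s would play the analogous role if $|A|=O(j)$ held, but it doesn't; and (ii) it is not clear why the extracted circuit $\sigma$ can always be taken inside $\T\setminus\T^0$ --- if the inner core is, say, a $1$-cluster, the natural vacant boundary circuit lies in $\T\setminus\T^1$.
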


\medbreak
\noindent{\bf Definition of the repair map.}
\label{sec:repair-map}
Fix a circuit $\gamma\subset \T\setminus \T^0$ and set $H := \Int\gamma$.
 Consider a loop configuration $\omega$ such that $\gamma$ is vacant in $\omega$. The idea of the repair map is to modify $\omega$ in the interior of $\gamma$, keeping the configuration unchanged in the exterior of $\gamma$, as follows (see Figure~\reffig{fig:proof-illustration} for an illustration):
\begin{itemize}[noitemsep,topsep=0.5em]
\setlength\itemsep{0.25em}
\item
  Edges in $1$-clusters are shifted down ``into the $0$-phase''.
\item
  Edges in $2$-clusters are shifted up ``into the $0$-phase''.
\item
  Edges in $0$-clusters are left untouched.
\item
  The remaining edges which are not inside (the shifted) clusters, but are in the interior of $\gamma$ (these edges will be called {\em bad}), are overwritten to ``match'' the $0$-phase ground state, $\ground^0$.
\end{itemize}

In order to formalize this idea, we need a few definitions. A
\emph{shift} is a graph automorphism of $\T$ which maps every
hexagon to one of its neighbors. We henceforth fix a shift $\DIRup$
which maps $\T^0$ to $\T^1$ (and hence, maps $\T^1$ to $\T^2$ and
$\T^2$ to $\T^0$), and denote its inverse by $\DIRdown$. A shift
naturally induces mappings on the vertices and
edges of $\HH$. We shall use the same symbols, $\DIRup$ and
$\DIRdown$, to denote these mappings. Endow $\T$ with the coordinate system given by
$(0,2)\Z + (\sqrt3,1)\Z$ and recall that $(\T^0, \T^1, \T^2)$ are the color
classes of an arbitrary proper $3$-coloring of $\T$. In the figures,
we make the choice that $(0,0) \in \T^0$ and $(0,2) \in \T^1$ so
that $\DIRup$ is the map $(a,b) \mapsto (a,b+2)$.

For a loop configuration $\omega \in \LC(H)$ and $\clr \in
\{0,1,2\}$, let $E^\clr(\omega)\subset \EH$ be the union of all
$\clr$-clusters of $\omega$, and define
\begin{align}
  \label{eq:def-bad-edges}
  \BadEdges(\omega) &:= \big(\IntEdge\gamma \cup \gamma^*\big) \setminus \big( E^0(\omega) \cup E^1(\omega)^{\DIRdown} \cup E^2(\omega)^{\DIRup} \big),
\\
  \label{eq:def-bad-edges-star}
  % \BadEdges_*(\omega,\gamma)
  \BadEdgesBefore(\omega) &:= \big(\IntEdge\gamma\cup\gamma^*\big)\setminus \big( E^0(\omega) \cup E^1(\omega) \cup E^2(\omega) \big).
\end{align}
One may check that $\{ E^0(\omega),
E^1(\omega), E^2(\omega), \BadEdgesBefore(\omega) \}$ is a partition
of $\IntEdge\gamma \cup \gamma^*$ so that $\omega \cap E^0(\omega)$, $\omega \cap E^1(\omega)$, $\omega \cap E^2(\omega)$ and $\omega \cap \BadEdgesBefore(\omega)$ are pairwise disjoint loop configurations.
Finally, we define the
\emph{repair map}
\[ \shiftFunc \colon \LC(H) \to \LC(H) \]
by
  \begin{equation*}
    \label{eq:def-repair-map}
      \shift\omega :=  \big(\omega \cap E^0(\omega)\big)  \cup \big(\omega \cap E^1(\omega)\big)^{\DIRdown} \cup \big(\omega \cap E^2(\omega)\big)^{\DIRup}\cup\big(\ground^0 \cap \BadEdges(\omega)\big)  .
  \end{equation*}
The fact that the mapping is well-defined, i.e., that $\shift\omega$
is indeed in $\LC(H)$, is not completely straightforward.
However, it is indeed well-defined and, moreover,
  \[ \omega \cap E^0(\omega), \quad (\omega \cap E^1(\omega))^{\DIRdown}\cup(\omega \cap E^2(\omega))^{\DIRup} \quad\text{and}\quad\ground^0 \cap \BadEdges(\omega) \]
  are pairwise disjoint loop configurations in $\LC(H)$.

%
% proof illustration
%

\afterpage{
\newgeometry{left=15mm,bottom=40mm,top=10mm}
\pagestyle{empty}
\begin{figure}
    \centering
    \begin{subfigure}[t]{.5\textwidth}
        \includegraphics[scale=0.74]{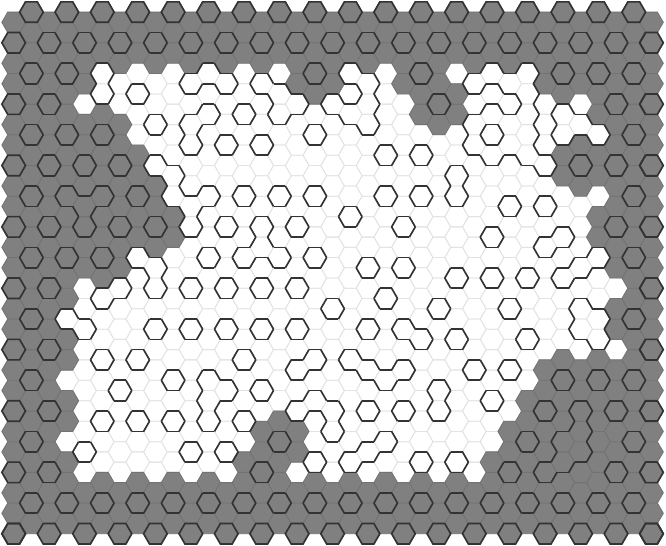}
        \caption{The breakup is found by exploring $0$-flowers from the boundary.}
        \label{fig:proof-illustration-1}
    \end{subfigure}%
    \begin{subfigure}{15pt}
        \quad
    \end{subfigure}%
    \begin{subfigure}[t]{.5\textwidth}
        \includegraphics[scale=0.74]{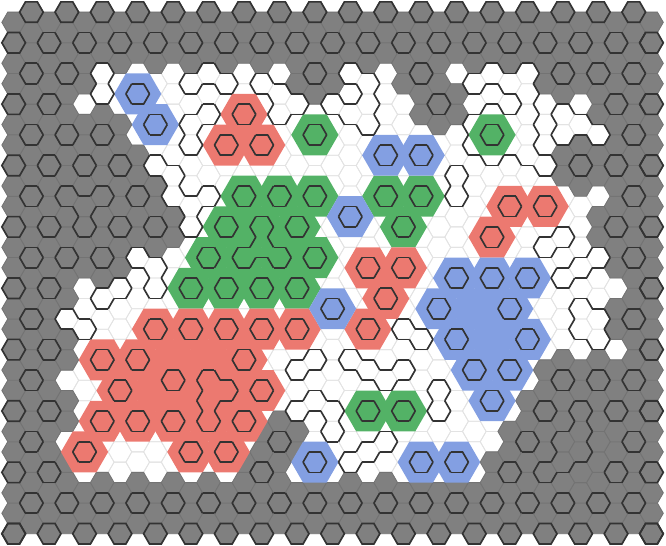}
        \caption{The clusters are found within the breakup (with 0/1/2-clusters shown in green/red/blue).}
        \label{fig:proof-illustration-2}
    \end{subfigure}
    \medbreak
    \begin{subfigure}[t]{.5\textwidth}
        \includegraphics[scale=0.74]{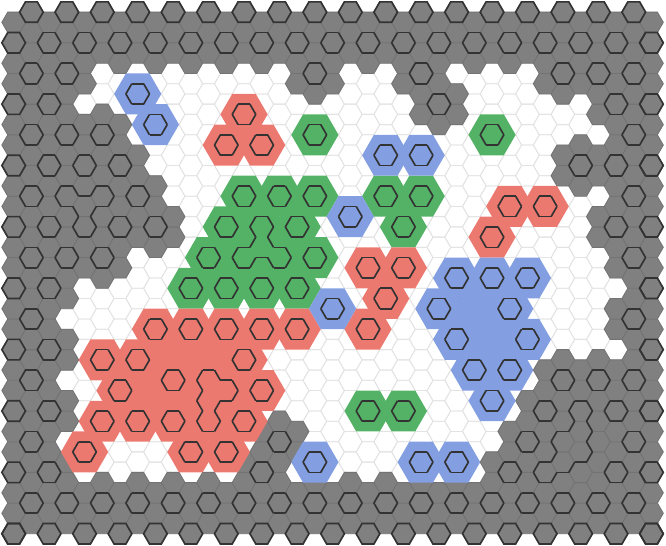}
        \caption{Bad edges are discarded.}
        \label{fig:proof-illustration-3}
    \end{subfigure}%
    \begin{subfigure}{15pt}
        \quad
    \end{subfigure}%
    \begin{subfigure}[t]{.5\textwidth}
        \includegraphics[scale=0.74]{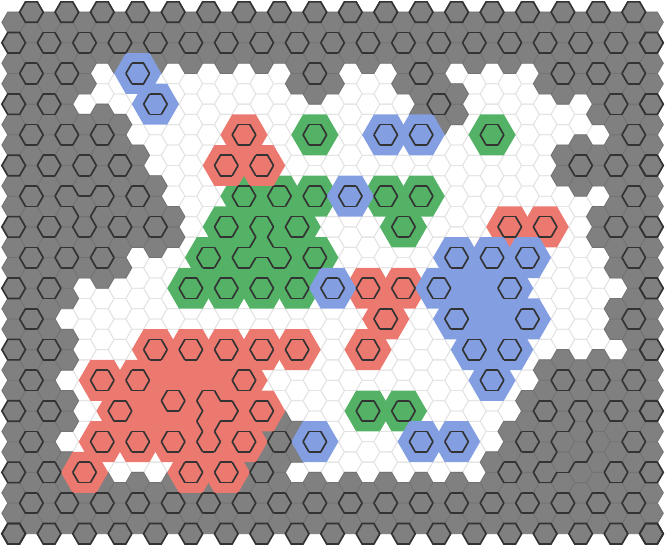}
        \caption{The clusters are shifted into the $0$-phase.}
        \label{fig:proof-illustration-4}
    \end{subfigure}
    \medbreak
    \begin{subfigure}[t]{.5\textwidth}
        \includegraphics[scale=0.74]{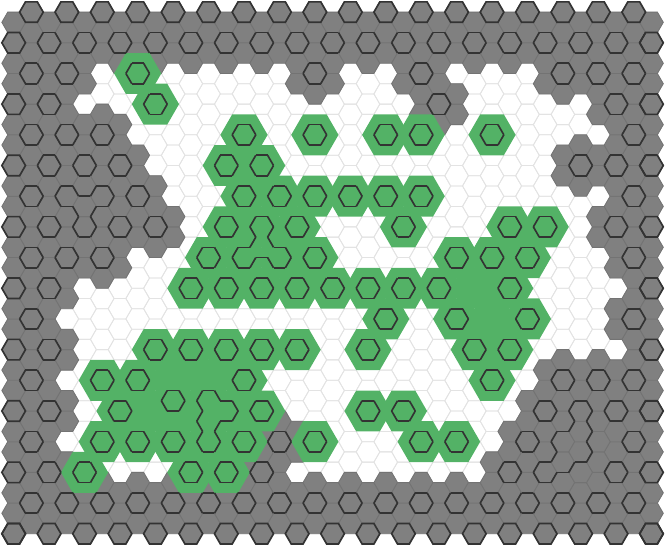}
        \caption{The empty area outside the shifted clusters is now compatible with the $0$-phase ground state.}
        \label{fig:proof-illustration-5}
    \end{subfigure}%
    \begin{subfigure}{15pt}
        \quad
    \end{subfigure}%
    \begin{subfigure}[t]{.5\textwidth}
        \includegraphics[scale=0.74]{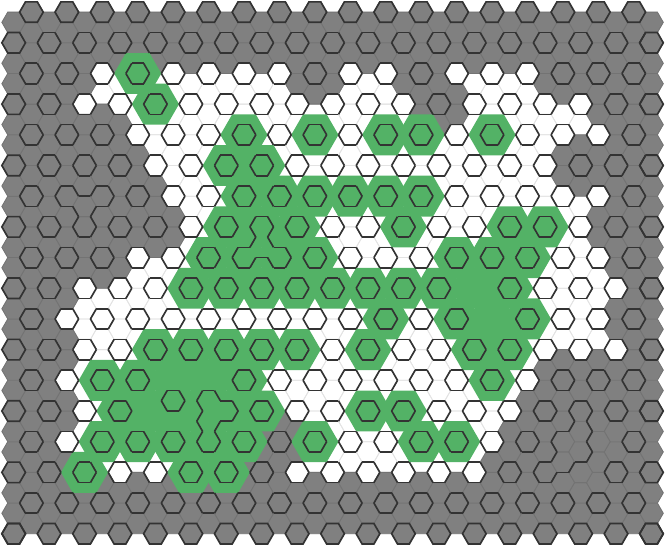}
        \caption{Trivial loops are packed in the empty area outside the shifted clusters.}
        \label{fig:proof-illustration-6}
    \end{subfigure}
    \caption{An illustration of finding the breakup and applying the repair map in it. The initial loop configuration is modified step-by-step, resulting in a loop configuration with many more loops and at least as many edges.}
    \label{fig:proof-illustration}
\end{figure}
\restoregeometry\clearpage }

\begin{proof}[{\bf Proof of Lemma~\ref{lem:prob-outer-circuit}.}]
\label{sec:proof of lemma}
Let $V$ be such that $\partial \IntVert\gamma \subset V \subset \IntVert\gamma$. We first bound the probability of the event
\[ E_V := \{ \omega \in \LC(H) ~:~ V(\omega,\gamma)=V \} .\]
To do so, we wish to apply Lemma~\ref{lem:prob-inequality-tool} to the repair map. To this end, we must estimate the gain in probability (parameter $p$ in Lemma~\ref{lem:prob-inequality-tool}) and the number of preimages of a given configuration (parameter $q$ in Lemma~\ref{lem:prob-inequality-tool}). Let $n>0$ and $x>0$. We may assume that $n \ge 1$ and $nx^6 \ge 1$ as otherwise the lemma is trivial. Then
\begin{align}
    &\Pr(\shift\omega) \ge (n \cdot \min\{x^6,1\})^{|V|/15} \cdot \Pr(\omega) &\quad\text{ for }\omega \in E_V ,\label{eq:repair_prob_gain}\\
&|E_V \cap \shiftFunc^{-1}(\omega')| \le (2 \sqrt{2})^{|V|} &\quad\text{ for }\omega' \in \LC(H) .\label{eq:repair_preimage_size}
\end{align}

The proof of~\eqref{eq:repair_prob_gain} is based on a precise understanding of the change in the number of edges $\Delta o := o(\shift\omega) - o(\omega)$ and in the number of loops $\Delta L := L(\shift\omega) - L(\omega)$. Namely, one may show (see Figure~\reffig{fig:proof-illustration}) that
\[ \Delta o = |V| - |\omega \cap \BadEdgesBefore(\omega)| \qquad\text{and}\qquad \Delta L = |V| / 6 - L(\omega \cap \BadEdgesBefore(\omega)) .\]
  Using this, one deduces that
  \[ 0 \le \Delta o \le |V| \qquad\text{and}\qquad \Delta L \ge \tfrac{|V|}{15} + \tfrac{|\Delta o|}{10} ,\]
  from which~\eqref{eq:repair_prob_gain} easily follows.

The proof of~\eqref{eq:repair_preimage_size} relies on the fact that the only loss of information incurred by the repair map is in the bad edges (see Figure~\reffig{fig:proof-illustration-3}). More precisely, the mapping $\omega \mapsto (\shift\omega, \omega \cap E(V))$ is injective on $E_V$. Thus, the size of $E_V \cap \shiftFunc^{-1}(\omega')$ is at most the number of subsets of $E(V)$. Since $|E(V)| \le 3|V|/2$, we obtain~\eqref{eq:repair_preimage_size}.

Now, using~\eqref{eq:repair_prob_gain} and~\eqref{eq:repair_preimage_size}, Lemma~\ref{lem:prob-inequality-tool} implies that
\[ \Pr(E_V) \le (2\sqrt{2})^{|V|} \cdot (n \cdot \min\{x^6,1\})^{-|V|/15} .\]
To complete the proof, we must sum over the possible choices for $V$.
For this, we use a connectivity property of $V(\omega,\gamma)$. Let $\HH^{\times}$ be the graph obtained from $\HH$ by adding an edge between each pair of opposite vertices of every hexagon, so that $\HH^{\times}$ is a $6$-regular non-planar graph.
One may show that $V(\omega,\gamma)$ is connected in $\HH^{\times}$ whenever $\partial\IntVert\gamma\subset V(\omega,\gamma)$. Thus, recalling Lemma~\ref{lem:count_connected_subgraphs}, when $n \cdot \min\{x^6,1\}$ is sufficiently large, we have
\begin{align*}
\Pr\big(\partial\IntVert\gamma\subset V(\omega,\gamma)\text{ and }|V(\omega,\gamma)|\ge k \big)
 &\le \sum_{\substack{V:~ |V| \ge k\\V\text{ connected in }\HH^\times\\\partial \IntVert\gamma \subset V \subset \IntVert\gamma}} \Pr(E_V) \\
&\le \sum_{\ell=k}^\infty C^\ell \cdot (2\sqrt{2})^\ell \cdot (n \cdot \min\{x^6,1\})^{-\ell/15} \\&\le (cn \cdot \min\{x^6,1\})^{-k/15} . \qedhere
\end{align*}
\end{proof}

%
%%
%%%
%%%%%%%%%%%%%%
%%%
%%
%

\medbreak
\noindent{\bf Proofs of main theorems.}
\label{sec:exponential_decay_loop_lengths}
The proofs of the theorems for large $x$ mostly rely on the main lemma, Lemma~\ref{lem:prob-outer-circuit}.
The results for small $x$ follow via a Peierls argument, the basis of which is given by the following lemma that gives an upper bound on the probability that a given collection of loops appears in a random
loop configuration.

\begin{lemma}\label{lem:given-loops-are-unlikely}
    Let $H$ be a domain, let $n,x>0$ and let $\omega$ be sampled from the loop $O(n)$ model in domain $H$ with edge weight $x$. Then, for any $A \in \LC(H)$, we have
    \[ \Pr(A \subset \omega) \le n^{L(A)}x^{o(A)} .\]
\end{lemma}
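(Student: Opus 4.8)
The plan is to apply Lemma~\ref{lem:prob-inequality-tool} with the map $\sfT$ that \emph{removes} the loops of $A$ from a configuration. Concretely, fix $A \in \LC(H)$ and let $E := \{\omega \in \LC(H) : A \subseteq \omega\}$ and $F := \LC(H)$. For $\omega \in E$, define $\sfT(\omega) := \omega \setminus A$, the configuration obtained by deleting exactly the edges of $A$ (equivalently, taking the symmetric difference with $A$, since $A \subseteq \omega$). First I would check that $\sfT(\omega) \in \LC(H)$: since $\omega$ and $A$ both have all degrees even, their symmetric difference $\omega \,\triangle\, A = \omega \setminus A$ also has all degrees even, so it is a valid loop configuration in $\LC(H)$.

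Next I would estimate the two parameters in Lemma~\ref{lem:prob-inequality-tool}. For the preimage bound: $\sfT$ is injective on $E$, since $\omega$ is recovered from $\sfT(\omega)$ as $\sfT(\omega) \cup A$; hence $|\sfT^{-1}(f)| \le 1 =: q$ for every $f \in F$. For the probability gain: writing $\omega' := \sfT(\omega) = \omega \setminus A$, we have $o(\omega) = o(\omega') + o(A)$ because $A$ and $\omega'$ are edge-disjoint. The number of loops is the more delicate point, but here we only need the \emph{inequality} $L(\omega) \le L(\omega') + L(A)$. Indeed, deleting the edge set $A$ from $\omega$ can destroy at most $L(A)$ loops in the following sense: one can remove the loops of $A$ one at a time, and removing a single simple cycle from a loop configuration decreases the number of connected-component loops by at most one (it may merge components or leave the count unchanged, but it cannot increase it by more than... actually it cannot increase it at all when we remove a whole cycle that was present). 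So $L(\omega') \ge L(\omega) - L(A)$, i.e. $L(\omega) \le L(\omega') + L(A)$. Therefore
\[
\Pr(\omega) = \frac{x^{o(\omega)} n^{L(\omega)}}{Z_{H,n,x}^{\text{loop}}} \le \frac{x^{o(\omega') + o(A)} n^{L(\omega') + L(A)}}{Z_{H,n,x}^{\text{loop}}} = x^{o(A)} n^{L(A)} \cdot \Pr(\omega'),
\]
so $\Pr(\sfT(\omega)) \ge \big(x^{o(A)} n^{L(A)}\big)^{-1} \Pr(\omega)$, giving $p := \big(x^{o(A)} n^{L(A)}\big)^{-1}$.

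Finally, Lemma~\ref{lem:prob-inequality-tool} yields $\Pr(E) \le (q/p)\,\Pr(F) = x^{o(A)} n^{L(A)}$, which is exactly the claimed bound $\Pr(A \subseteq \omega) \le n^{L(A)} x^{o(A)}$. The main obstacle, and the only step that requires genuine care rather than bookkeeping, is justifying the loop-count inequality $L(\omega \setminus A) \ge L(\omega) - L(A)$: one must argue combinatorially that deleting a single simple cycle (a loop of $A$) from an even subgraph decreases the loop count by at most one. This follows by tracking how the connected components of the subgraph change—deleting a cycle's edges can split one component into several or reduce a component's cycle rank, but a short case analysis (or an Euler-characteristic / cycle-space dimension count on each affected component) shows the number of \emph{loops}, i.e. nontrivial finite components, drops by at most one per removed cycle; iterating over the $L(A)$ loops of $A$ gives the bound. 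Everything else is immediate from the definitions and the edge-disjointness of $A$ and $\omega \setminus A$.
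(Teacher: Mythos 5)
Your proof is correct and follows essentially the same approach as the paper: the same removal map $\sfT(\omega):=\omega\setminus A$ fed into Lemma~\ref{lem:prob-inequality-tool}, with the same injectivity and edge-count bookkeeping. The one place you labour---establishing $L(\omega\setminus A)\ge L(\omega)-L(A)$ by iteratively removing cycles and invoking Euler-characteristic arguments---is actually a trivial \emph{equality}: since $\HH$ has maximum degree $3$, every vertex of a loop configuration has degree $0$ or $2$, so the loops of $\omega$ are vertex-disjoint simple cycles; any loop of $A\subseteq\omega$ therefore already saturates its vertices' degrees and must be a full connected component of $\omega$, whence deleting $A$ removes exactly $L(A)$ loops and leaves the rest untouched, giving $L(\sfT(\omega))=L(\omega)-L(A)$ directly (as the paper states without comment). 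Also note the parenthetical ``it may merge components'' cannot occur---deleting edges never merges components---though this does not affect your conclusion.
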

\begin{proof}
    Consider the map
    \[ \sfT \colon \{ \omega \in \LC(H) :  A \subset \omega \} \to \LC(H) \]
    defined by
    \[ \sfT(\omega) := \omega \setminus A .\]
    Clearly, $\sfT$ is well-defined and injective.
    Moreover, since $L(\sfT(\omega)) = L(\omega) - L(A)$ and $o(\sfT(\omega)) = o(\omega) - o(A)$, we have
    \[ \Pr(\sfT(\omega)) = \Pr(\omega) \cdot n^{-L(A)} x^{-o(A)} .\]
    Hence, the statement follows from Lemma~\ref{lem:prob-inequality-tool}.
\end{proof}

Recall the notion of a loop surrounding a vertex given prior to Theorem~\ref{thm:no-large-loops}.

\begin{cor}\label{cl:no-large-loops-for-small-x}
  Let $H$ be a domain, let $n,x>0$ and let $\omega$ be sampled from the loop $O(n)$ model in domain $H$ with edge weight $x$. Then, for any vertex $u \in V(H)$ and any positive integer $k$, we have
  \[
  \Pr(\text{there exists a loop of length $k$ surrounding $u$}) \le k n (2x)^k .
  \]
  Moreover, for any $u_1,\dots,u_m \in V(H)$ and $k_1,\dots,k_m \ge 1$ with $k=k_1+\cdots+k_m$, we have
    \[
    \Pr(\forall i~\text{there exists a distinct loop of length $k_i$ passing through $u_i$}) \le (2n)^m (2x)^k ,
    \]
\end{cor}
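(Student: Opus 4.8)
The plan is to prove Corollary~\ref{cl:no-large-loops-for-small-x} by combining the single-loop estimate from Lemma~\ref{lem:given-loops-are-unlikely} with a counting argument over the possible loops, handling the two statements separately but with the same basic idea.

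\textbf{First statement.} I would first observe that if a loop of length $k$ surrounds the vertex $u$, then that loop is a simple cycle in $\HH$ of length $k$ which passes through (i.e.\ uses as one of its vertices) at least one vertex on any fixed infinite simple path from $u$; in particular it passes through some vertex $w$ at graph distance at most $k$ from $u$. Rather than fixing $w$, the cleaner route is to bound directly the number of simple cycles of length $k$ in $\HH$ that surround $u$. Fix an infinite simple path $P$ in $\HH$ starting at $u$; any loop surrounding $u$ must intersect $P$, and the first intersection vertex is at distance at most $k$ from $u$ (since the loop has only $k$ vertices), so there are at most $k$ choices for it. Having fixed this vertex $w$, a simple cycle of length $k$ through $w$ is determined by a closed non-backtracking walk of length $k$ from $w$; since $\HH$ is $3$-regular, the number of such walks is at most $3 \cdot 2^{k-1} \le 2^k$. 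Hence the number of loops of length $k$ surrounding $u$ is at most $k\,2^k$. Now apply the union bound together with Lemma~\ref{lem:given-loops-are-unlikely}: for each such loop $\ell$ (which has $L(\ell)=1$ and $o(\ell)=k$) we have $\Pr(\ell \subset \omega) \le n x^k$, and there are at most $k 2^k$ of them, giving $\Pr(\exists\text{ loop of length }k\text{ surrounding }u) \le k n (2x)^k$.

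\textbf{Second statement.} Here I would proceed similarly but now select the loops to be pairwise distinct and to pass through the prescribed vertices $u_1,\dots,u_m$. For each $i$, a loop of length $k_i$ passing through $u_i$ corresponds to a closed non-backtracking walk of length $k_i$ based at $u_i$, of which there are at most $3\cdot 2^{k_i-1}\le 2^{k_i}$, so the number of choices for the ordered $m$-tuple of loops is at most $\prod_{i=1}^m 2^{k_i} = 2^k$. For a fixed such tuple $(\ell_1,\dots,\ell_m)$ of distinct loops, set $A := \ell_1 \cup \cdots \cup \ell_m$; then $A \in \LC(H)$ (a disjoint union of loops is a loop configuration — one must note the loops are vertex-disjoint as distinct loops in a loop configuration, which is why ``distinct'' here should be read as the loops being edge/vertex disjoint in $\omega$, as forced by $\omega$ being a loop configuration), with $L(A) = m$ and $o(A) = k$. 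Actually, to be careful about the disjointness: the event is that there exist $m$ distinct loops in $\omega$ with the stated properties, and distinct loops of a loop configuration are automatically vertex-disjoint, so $A$ is a genuine loop configuration with $L(A)=m$, $o(A)=k$. By Lemma~\ref{lem:given-loops-are-unlikely}, $\Pr(A \subset \omega) \le n^m x^k$. Summing over the at most $2^k$ choices of the ordered tuple (which overcounts the unordered event, hence is an upper bound) and over the additional factor absorbing the ordering, we get $\Pr(\cdots) \le 2^k n^m x^k \le (2n)^m (2x)^k$, using $n^m 2^k \le (2n)^m 2^k$ trivially since $n>0$ — wait, that inequality needs $2^m \ge 1$ which is fine, so $n^m 2^k \le (2n)^m 2^k$ holds, and then $2^k x^k = (2x)^k$.

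\textbf{Main obstacle.} The genuinely delicate point is the combinatorial bookkeeping in the second statement: correctly matching up the notion of ``$m$ distinct loops'' with a single configuration $A \in \LC(H)$ of total length $k$ and exactly $m$ loops, and making sure the union bound over choices of loops is set up so that the exponents in Lemma~\ref{lem:given-loops-are-unlikely} come out as $n^{L(A)} = n^m$ and $x^{o(A)} = x^k$ while the number of terms is at most $2^k$ (not something larger like $C^k$ with $C>2$). The bound $3\cdot 2^{k-1}\le 2^k$ for non-backtracking closed walks in the $3$-regular hexagonal lattice is exactly what keeps the base at $2$, and the first-intersection-with-$P$ trick is what keeps the polynomial prefactor at $k$ rather than something worse in the first statement. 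Everything else is a routine union bound.
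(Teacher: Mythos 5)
Your proposal takes the same route as the paper: bound the number of candidate loops, invoke Lemma~\ref{lem:given-loops-are-unlikely} for each one, and apply the union bound. The one flaw is an arithmetic slip that you repeat and even single out as the key point: the inequality $3\cdot 2^{k-1}\le 2^k$ is false, since $3\cdot 2^{k-1}=\tfrac{3}{2}\cdot 2^k$. The intended conclusion, that the number of \emph{loops} of length $k$ through a fixed vertex is at most $2^k$, is nevertheless correct: a closed non-backtracking walk of length $k$ from $w$ has at most $3\cdot 2^{k-2}$ possibilities (the final step is forced to return to $w$), and each loop is traversed by two such walks, one per orientation, giving at most $3\cdot 2^{k-3}\le 2^k$ loops. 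The paper reaches the same bound via $a_{k-1}\le 3\cdot 2^{k-2}\le 2^k$, where $a_\ell$ is the number of simple paths of length $\ell$ from $w$. Everything else matches: the first statement's first-intersection-with-a-ray trick gives the factor of $k$, and the second statement's reduction to a single loop configuration $A$ with $L(A)=m$ and $o(A)=k$ is exactly the paper's argument. Your tuple count of $2^k$ is in fact slightly tighter than the paper's $\prod_i a_{k_i}\le 2^{m+k}$, but both yield $(2n)^m(2x)^k$.
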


\begin{proof}
Denote by $a_k$ the number of simple paths of length $k$ in $\HH$
starting at a given vertex. Clearly, $a_k \leq 3 \cdot 2^{k-1}$. It
is then easy to see that the number of loops of length $k$
surrounding $u$ is at most $k a_{k-1} \le k 2^k$. Thus, the result
follows by the union bound and
Lemma~\ref{lem:given-loops-are-unlikely}.

The moreover part follows similarly from Lemma~\ref{lem:given-loops-are-unlikely} by noting that there are at most $a_{k_1}\cdots a_{k_m} \le 2^{m+k}$ loop configurations $A$ consisting of exactly $k$ loops with the $i$-th loop having length $k_i$ and passing through $u_i$.
\end{proof}

The main lemma, Lemma~\ref{lem:prob-outer-circuit}, shows that for a given circuit $\gamma$ (which is contained in $\T \setminus \T^\clr$ for some $\clr$), it is unlikely that the set $V(\omega,\gamma)$ is large. The set $V(\omega,\gamma)$ specifies deviations from the ground states which are `visible' from $\gamma$, i.e., deviations which are not `hidden' inside clusters. In Theorem~\ref{thm:no-large-loops}, we claim that it is unlikely to see long loops surrounding a given vertex. Any such long loop constitutes a deviation from all ground states. Thus, the theorem would follow from the main lemma (in the main case, when $x$ is large) if the long loop was captured in $V(\omega,\gamma)$. The next lemma (whose proof we omit) bridges the gap between the main lemma and the theorem, by showing that even when a deviation is not captured by $V(\omega,\gamma)$, there is necessarily a smaller circuit $\sigma$ which captures it in $V(\omega,\sigma)$.

\begin{lemma}\label{lem:existence-of-good-outer-circuit}
Let $\omega$ be a loop configuration, let $\gamma \subset \T\setminus\T^0$ be a vacant circuit in $\omega$ and let $L$ be a non-trivial loop of $\omega$ in $\Int\gamma$. Then there exists $\clr\in\{0,1,2\}$ and a circuit $\sigma \subset \T\setminus\T^{\clr}$ such that $\Int\sigma \subset \Int\gamma$, $\sigma$ is vacant in $\omega$ and $V(L) \cup \partial \IntVert\sigma \subset V(\omega,\sigma)$.
\end{lemma}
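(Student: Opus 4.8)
\textbf{Proof proposal for Lemma~\ref{lem:existence-of-good-outer-circuit}.}

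The plan is to produce the desired circuit $\sigma$ by an iterative ``pushing in'' procedure, starting from $\gamma$ and shrinking toward $L$ until the long loop $L$ becomes visible, i.e., until it is no longer hidden inside a single cluster. First I would observe what it means for $L$ not to be captured by $V(\omega,\gamma)$: if $V(L) \not\subset V(\omega,\gamma)$, then (since $L$ is a non-trivial loop and hence not a trivial loop around a single hexagon of any $\T^\clr$) every vertex of $L$ must lie in some cluster of $\omega \cap \IntEdge\gamma$ with all three incident edges in that same cluster. Because $L$ is connected and distinct clusters are edge-disjoint and in fact slightly separated, $L$ must be entirely contained in a \emph{single} cluster $E$, which is a $\clr$-cluster for some $\clr \in \{0,1,2\}$. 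By the definition of a $\clr$-cluster (a maximal $\clr$-garden), $E = \IntEdge{\sigma'} \cup (\sigma')^*$ for a circuit $\sigma' \subset \T \setminus \T^\clr$ with $\omega \cap (\sigma')^* = \emptyset$; that is, $\sigma'$ is vacant in $\omega$. Moreover $\Int{\sigma'} \subsetneq \Int\gamma$ (strictly, since $\gamma \subset \T \setminus \T^0$ is vacant and $L$ is nontrivial, so the cluster containing $L$ cannot be all of $\IntEdge\gamma \cup \gamma^*$ — one should check this boundary case, using that a garden's boundary hexagons of the right color carry trivial loops, which a nontrivial $L$ cannot be reconciled with if $\sigma' = \gamma$).

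The key step is then to iterate: replace $\gamma$ by $\sigma'$ and ask again whether $V(L) \cup \partial\IntVert{\sigma'} \subset V(\omega,\sigma')$. If yes, we are done with this $\sigma'$ (note the color class may have changed from $0$ to $\clr$, which is permitted by the statement). If not, the same dichotomy applies inside $\Int{\sigma'}$: some vertex of $L$ (or of $\partial\IntVert{\sigma'}$) fails to be in $V(\omega,\sigma')$, forcing $L$ into a single cluster of $\omega \cap \IntEdge{\sigma'}$, which again is cut out by a vacant circuit $\sigma''$ with $\Int{\sigma''} \subsetneq \Int{\sigma'}$. Since $|\IntVert{\cdot}|$ strictly decreases at each step and is a nonnegative integer, the process terminates after finitely many steps at a circuit $\sigma$ with $\Int\sigma \subset \Int\gamma$, vacant in $\omega$, such that $V(L) \cup \partial\IntVert\sigma \subset V(\omega,\sigma)$. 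The requirement $V(L) \subset V(\omega,\sigma)$ holds because the process only stops when $L$ is \emph{not} swallowed by a single cluster; combined with the fact that each vertex of $L$, having all incident edges present (as $L$ is a loop through it) and those edges not all in one cluster of $\omega \cap \IntEdge\sigma$, lies in $V(\omega,\sigma)$ by the characterization of $V(\omega,\sigma)$ recalled before Lemma~\ref{lem:prob-outer-circuit}. The requirement $\partial\IntVert\sigma \subset V(\omega,\sigma)$ holds at the terminal step by the stopping condition; one should verify it is genuinely part of the stopping criterion rather than an extra constraint — indeed, at a non-terminal step some vertex of $V(L) \cup \partial\IntVert\sigma$ is missing, and the analysis handles both possibilities (a boundary vertex missing is handled identically, as $\partial\IntVert{\sigma'}$ vertices that lie in a single cluster force structure as well).

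The main obstacle I expect is the careful bookkeeping at the boundaries of the shrinking circuits: one must make sure that when $L$ is contained in a $\clr$-cluster $E = \IntEdge{\sigma'} \cup (\sigma')^*$, the loop $L$ actually lies in $\Int{\sigma'}$ (not touching $(\sigma')^*$, which is vacant) so that the induction hypothesis — $L$ a non-trivial loop of $\omega$ in $\Int{\sigma'}$ — is restored verbatim; and that the color class $\clr$ of each successive circuit is tracked correctly so the final $\sigma \subset \T \setminus \T^\clr$ for the \emph{same} $\clr$ appearing in the conclusion. A secondary subtlety is ruling out the degenerate possibility that $\sigma' = \gamma$ in the very first step (which would not give $\Int\sigma \subsetneq \Int\gamma$ and could stall the induction); this is where the hypothesis that $\gamma \subset \T \setminus \T^0$ and the presence of trivial $0$-flowers around $\partial\IntHex\gamma$ in a $0$-garden is used to derive a contradiction with $L$ being non-trivial. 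Everything else is a routine finite descent argument, and all the geometric facts needed (Jordan-curve-type structure of circuits, edge-disjointness and separation of distinct clusters, the characterization of $V(\omega,\gamma)$) are already available from the definitions and Facts~\ref{fact:gamma-int-ext}--\ref{fact:circuit-domain-bijection} above.
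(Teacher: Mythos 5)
The overall strategy you propose, namely a finite descent to successively smaller vacant circuits, is in the right spirit and is consistent with the paper's one-sentence description of the (omitted) proof. However, there are concrete gaps, and the most important of them is exactly where the real content of the lemma lies. Your argument for the \emph{strict} decrease $\Int\sigma' \subsetneq \Int\gamma$ does not work: you assert that a nontrivial $L$ ``cannot be reconciled'' with $\sigma'=\gamma$ because a garden's boundary hexagons carry trivial loops, but the definition of a $\clr$-garden constrains only the hexagons in $\T^\clr\cap\partial\IntHex\sigma$ and says nothing about the interior, which can perfectly well contain a nontrivial loop deep inside. So $\IntEdge\gamma\cup\gamma^*$ \emph{can} be a garden of $\omega\cap\IntEdge\gamma$ while $L$ sits in its interior; when that happens $V(\omega,\gamma)=\emptyset$, the descent must continue, and yet the cluster you would shrink to is all of $\IntEdge\gamma\cup\gamma^*$, so the circuit does not shrink at all. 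To make progress at such a step one needs a genuinely different geometric construction that passes to the inside of the ring of flowers (in the spirit of the ``breakup'' used in the proof of Theorem~\ref{thm:large-n-transition}) and produces a strictly smaller vacant circuit still enclosing $L$; this is precisely the nontrivial content the lemma encapsulates, it recurs at every stage of the descent, and the proposal does not supply it.

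Two further, smaller gaps. First, the sentence ``every vertex of $L$ must lie in some cluster of $\omega\cap\IntEdge\gamma$ with all three incident edges in that same cluster'' is not established; what the hypothesis $V(L)\not\subset V(\omega,\gamma)$ gives is a \emph{single} $v\in V(L)$ all of whose edges lie in one cluster $E=\IntEdge\tau\cup\tau^*$, and the correct route to $L\subset\Int\tau$ is that the two $L$-edges at $v$ belong to $\omega\cap E\subset\IntEdge\tau$ (since $\tau$ is vacant) and $L$, being a connected subgraph of $\omega$, cannot cross the vacant $\tau^*$. Second, when the failing vertex lies in $\partial\IntVert\sigma'$ rather than in $V(L)$, the cluster swallowing that boundary vertex's three edges has no reason to contain $L$, so the claim that this case ``forces $L$ into a single cluster'' and ``is handled identically'' is unfounded; that case needs its own argument (and also involves the subtlety that one of the three edges of such a vertex lies in $(\sigma')^*$, not in $\IntEdge\sigma'$, so one must check carefully how it can sit inside a cluster of $\omega\cap\IntEdge\sigma'$ and whether the resulting cluster circuit remains inside $\sigma'$).
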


\begin{proof}[{\bf Proof of Theorem~\ref{thm:no-large-loops}.}]
\label{sec:11}
Suppose that $n_0$ is a sufficiently large constant, let $n \ge n_0$, let $x \in (0,\infty]$, let $H$ be a domain of type 0 and let $u \in V(H)$. Let $\omega$ be sampled from the loop $O(n)$ model in domain $H$ with edge weight $x$. We shall estimate the probability that $u$ is surrounded by a non-trivial loop of length $k$. We consider two cases, depending on the relative values of $n$ and $x$.

Suppose first that $nx^6 < n^{1/50}$. Since $n \ge n_0$, we may
assume that $2x \le n^{-4/25}$ and that $kn^{-k/120} \le 1$ for all
$k>0$. By Corollary~\ref{cl:no-large-loops-for-small-x}, for every
$k \ge 7$,
\[
\begin{split}
  \Pr(\text{there exists a loop of length $k$ surrounding $u$}) & \le k n (2x)^k \le k n^{1-4k/25} \\
  & \le k n^{-k/60} \le n^{-k/120}.
\end{split}
\]

Suppose now that $nx^6 \ge n^{1/50}$. Since $n \ge n_0$, we may assume that $n \cdot \min\{x^6,1\}$ is sufficiently large for our arguments to hold.
Let $L \subset H$ be a non-trivial loop of length $k$ surrounding $u$. Note that if $L \subset \omega$ then by Lemma~\ref{lem:existence-of-good-outer-circuit}, for some $\clr\in\{0,1,2\}$, there exists a circuit $\sigma \subset \T\setminus\T^{\clr}$ such that $\Int\sigma \subset H$, $\sigma$ is vacant in $\omega$ and $V(L) \cup \partial \IntVert\sigma \subset V(\omega,\sigma)$.
Using the fact that $H$ is of type 0, the domain Markov property and Lemma~\ref{lem:prob-outer-circuit} imply that for every fixed circuit $\sigma \subset \T\setminus\T^{\clr}$ with $\Int\sigma \subset H$,
\[
\Pr\big(\sigma\text{ vacant and }V(L)\cup\partial\IntVert\sigma \subset V(\omega,\sigma)\big) \le (c n \cdot \min\{x^6,1\})^{-|V(L)\cup \partial\IntVert\sigma|/15}.
\]
Thus, denoting by $\mathcal{G}(u)$ the set of circuits $\sigma$ contained in $\T\setminus\T^{\clr}$ for some $\clr\in\{0,1,2\}$ and having $u\in\IntVert\sigma$, we obtain
\[
\begin{split}
  \Pr(L \subset \omega)
  & \le \sum_{\sigma \in \mathcal{G}(u)} (c n \cdot \min\{x^6,1\})^{-|V(L)\cup\partial\IntVert\sigma|/15}\\
  & \le \sum_{\ell = 1}^{\infty} D^{\ell} (cn \cdot \min\{x^6,1\})^{-\max\{k,\ell\}/15} \le (c n \cdot \min\{x^6,1\})^{-k/15} ,
\end{split}
\]
where we used the facts that the length of a circuit $\sigma$ such that $|\partial \IntVert\sigma|=\ell$ is at most $3\ell$, that the number of circuits $\sigma$ of length at most $3\ell$ with $u \in \IntVert{\sigma}$ is bounded by $D^\ell$ for some sufficiently large constant $D$, and in the last inequality we used the assumption that $n \cdot \min\{x^6,1\}$ is sufficiently large.
Since the number of loops of length $k$ surrounding a given vertex is smaller than $k 2^k$, our assumptions that $nx^6 \ge n^{1/50}$ and $n \ge n_0$ yield
\[
  \Pr(\text{there exists a loop of length $k$ surrounding $u$})
   \le k 2^k (c n^{1/50})^{-k/15} \le n^{-k/800}. \qedhere
\]
\end{proof}

%
%%
%%%
%%%%%%%%%%%%%%
%%%
%%
%

\begin{proof}[{\bf Proof of Theorem~\ref{thm:large-n-transition}.}]
Let $n>0$, let $x \in (0,\infty]$, let $H$ be a domain of type 0 and let $u \in V(H)$.
Let $\omega$ be sampled from the loop $O(n)$ model in domain $H$ with edge weight $x$.

We first prove the upper bound on the probability that $u$ is loop-connected to distance~$k$. For this, we may assume that $x$ and $nx^6$ are sufficiently small as the bound is trivial otherwise. Denote $u_0 := u$ and observe that if $u$ is loop-connected to some vertex at distance $k$ from~$u$, then there exist integers $m \ge 1$, $\ell_1,\dots,\ell_m \ge 6$ and vertices $u_1,\dots,u_m \in V(H)$ such that $k \le \ell := \ell_1+\cdots+\ell_m$ and, for all $1 \le i \le m$, $\dist(u_i,u_{i-1}) \le \ell_i$ and $u_i$ belongs to a distinct loop of $\omega$ of length $\ell_i$. Thus, summing over the possible choices (for brevity, we omit the conditions on $\ell_i$ and $u_i$ in the sum below) and applying Corollary~\ref{cl:no-large-loops-for-small-x}, we obtain
\begin{align*}
    \Pr\Big(\substack{\text{\footnotesize{$u$ is loop-connected}}\\\text{\footnotesize{to distance $k$}}}\Big)
        &\le \sum_{\substack{\ell \ge k\\\ell/6 \ge m \ge 1}} \sum_{\substack{\ell_1,\dots,\ell_m\\u_1,\dots,u_m}} \Pr(\forall i~u_i\text{\small{ belongs to a distinct loop of length }}\ell_i) \\
        &\le \sum_{\ell \ge k} \ell \cdot 2^\ell \cdot 3^\ell \cdot (2n+1)^{\ell/6} (2x)^\ell \\
        &\le \sum_{\ell \ge k} (C(n+1)x^6)^{\ell/6}  \le C(C(n+1)x^6)^{k/6} .
\end{align*}

We now prove the lower bound on the probability that $u$ is ground-connected to the boundary of $H$, i.e., that $u$ and $v$ are ground-connected for some $v \in \partial V(H)$.
For this, we may assume that both $n$ and $nx^6$ are sufficiently large as the bound is trivial otherwise.
Assume that $u$ is not ground-connected to the boundary of $H$.
Let $A(\omega)$ be the set of vertices of $\HH$ belonging to loops in $\omega \cap \ground^0$ and let $B(\omega)$ be the unique infinite connected component of $A(\omega) \cup (V(\HH) \setminus V(H))$.
Note that $u \notin B(\omega)$ by assumption and define the \emph{breakup} $\breakup$ to be the connected component of $\HH \setminus B(\omega)$ containing $u$.
One may check that the subgraph induced by $\breakup$ is a domain of type 0, and that the enclosing circuit $\Gamma$ (i.e., the circuit satisfying $\breakup = \IntVert\Gamma$, which exists by Fact~\ref{fact:circuit-domain-bijection}) is vacant in $\omega$ and is contained in $\T\setminus \T^0$.
Furthermore, we have $\partial \IntVert\Gamma \subset V(\omega,\Gamma)$. Indeed, this follows as $\Gamma$ is vacant in $\omega$ and, by the definition of $B(\omega)$, no vertex of $\partial\IntVert{\Gamma}$ belongs to a trivial loop surrounding a hexagon in $\T^0$.
Thus, denoting by $\mathcal{G}$ the set of circuits $\gamma
\subset \T \setminus \T^0$ having $u\in\IntVert\gamma$,
Lemma~\ref{lem:prob-outer-circuit} implies that
\begin{align*}
\Pr\Big(\substack{\text{\footnotesize{$u$ is not ground-connected}}\\\text{\footnotesize{to the boundary of $H$}}}\Big)
% &\le \Pr(\text{there exists a breakup}) = \sum_{\gamma \in \mathcal{G}}\Pr(\Gamma=\gamma)\\
 & \le \sum_{\gamma \in \mathcal{G}}\Pr\big(\gamma\text{ vacant and }\partial\IntVert\gamma\subset V(\omega,\gamma)\big)\\
 & \le \sum_{\gamma \in \mathcal{G}} (c n \cdot \min\{x^6,1\})^{-|\partial\IntVert\gamma|/15} \\
 & \le \sum_{k \ge 1} D^{k}(c n \cdot \min\{x^6,1\})^{-k/15}
\le C(n \cdot \min\{x^6,1\})^{-c},
\end{align*}
where in the third inequality we used the facts that the length of a circuit $\gamma$ such that $|\partial \IntVert\gamma|=k$ is at most $3k$, and that the number of circuits of length at most $3k$ surrounding~$u$ is bounded by $D^k$ for some sufficiently large constant $D$.
\end{proof}

%
%%
%%%
%%%%%%%%%%%%%%
%%%
%%
%

\bibliographystyle{amsplain}
%\nocite{*}
\bibliography{biblicomplete}

\end{document}